\numberwithin{equation}{section}
\newtheorem{prop}{Proposition}[section]
\newtheorem{definition}[prop]{Definition}
\newtheorem{theorem}[prop]{Theorem}
\newtheorem{lemma}[prop]{Lemma}
\newtheorem{conj}[prop]{Conjecture}
\title{A Scattering Transform for Noncommutative Instantons}
\author{Spencer Tamagni}
\affil{\textit{Leinweber Institute for Theoretical Physics, University of California, Berkeley}}
\date{\today}
\begin{document}

\maketitle
\setcounter{tocdepth}{2}

\begin{abstract}
We give a detailed and mathematically rigorous analysis of the path integrals of chiral fermions supported on holomorphic curves on $T^* \mathbb{C}$ in a general noncommutative instanton background. It is shown that such path integrals can be interpreted as computing instanton analogs of matrix coefficients of monopole scattering matrices. Generalizing the known relation between monopole scattering matrices and $R$-matrices of (shifted) Yangians $\mathsf{Y}(\mathfrak{gl}_r)$, our formalism gives rise to a novel geometric method to calculate $R$-matrices of (shifted) affine Yangians $\mathsf{Y}(\widehat{\mathfrak{gl}}_r)$. This may also be viewed as an explicit description of double affine Grassmannian slices by $\infty \times \infty$ matrices, compatible with the factorization morphism of Braverman, Finkelberg, and Gaitsgory. Our approach unifies a number of earlier results in the literature, and also leads to interesting new results and conjectures.
\end{abstract}

\tableofcontents 

\section{Introduction}

The scattering transform for monopoles is a beautiful construction in classical gauge theory and differential geometry, first appearing in the work of Hitchin \cite{hitchinspectral} and Hurtubise \cite{hurtubise}. Historically, it was of interest because it gave a direct method to understand the geometry of the moduli space of monopoles on $\mathbb{R}^3$ (which is defined in terms of solutions to nonlinear partial differential equations) in simple, finite-dimensional terms, see for example the textbook \cite{atiyahhitchin}. More recently, it has been substantially generalized to establish a precise dictionary between moduli spaces of monopoles and (generalized) affine Grassmannian slices studied in the context of the geometric Langlands correspondence and the mathematical approach to Coulomb branches, see \cite{KW}, \cite{BDG}, \cite{bfnslice}. 

In this paper, we will reexamine and considerably generalize the notion of a monopole scattering matrix. Our focus will be on extending the construction of a scattering matrix from moduli spaces of monopoles on $\mathbb{R}^3$ to moduli spaces of (noncommutative) instantons on $\mathbb{R}^4$ and $\mathbb{R}^3 \times S^1$ (the significance of these backgrounds is that they are the simplest $S^1$-fibrations over $\mathbb{R}^3$). This requires introducing a number of new ideas, and reveals a hidden link between scattering transforms and other structures in mathematical physics and representation theory, such as the Maulik-Okounkov Yangians \cite{mo}, double affine Grassmannian and Langlands correspondence for affine Lie algebras \cite{Braverman_2010}, and $\mathscr{W}_{1 + \infty}$ symmetry of the $B$-model of topological string theory on the backgrounds studied in \cite{Aganagic_2005}. The instanton scattering construction admits a straightforward uplift into string theory, which gives a natural explanation for all these connections, building on earlier works such as \cite{DHS}, \cite{DHSV}, \cite{witten2009geometriclanglandsdimensions}, \cite{costello2016}. 

Despite the fact that the construction is motivated by string theory, and naturally takes place within quantum field theory, we are able to show a number of results in a mathematically rigorous manner. This is because the only piece of quantum field theory we need to make essential use of is the theory of determinants of Dirac-like operators on Riemann surfaces, and it has been well-known since the 1980s that this may be formalized using infinite Grassmannians (we call them Sato Grassmannians) developed in the study of  integrable systems and the KP hierarchy, see e.g. \cite{miwa2000solitons}, \cite{Mulase2002ALGEBRAICTO}, \cite{segal-wilson}, \cite{witten} or Appendix \ref{fermionreview} for a quick review. Because we hope the results will be of interest to both physicists and representation theorists, the paper has been organized to ideally remain accessible to both audiences. This means we have written theorems and proofs when possible, and have tried to clearly distinguish the parts of the analysis which rely on physics arguments from the rest of the paper. The mathematical arguments should be straightforward and self-contained, though may seem rather ad hoc without knowing the context from physics. 

\subsection{Results}
In this section we will establish some notations and conventions, then state the results of this paper likely to be of interest to mathematicians, and finally state the results likely to be of interest to physicists. 

\subsubsection{Notations and conventions}
Before stating results we need to set up some notation that will be used throughout the paper. $\hbar$ is a $\mathbb{C}$-valued parameter, nonzero unless explicitly stated otherwise. $\mathscr{D}_\hbar$ is the $\mathbb{C}$-algebra generated by $z$ and $w$ subject to the relation $\comm{w}{z} = \hbar$. $\mathscr{D}_\hbar$ is the ring of functions on the noncommutative cotangent bundle $T^*_\hbar \mathbb{C}$. For $\hbar \neq 0$ it is isomorphic to the ring of differential operators on $\mathbb{A}^1$; for $\hbar = 0$ it specializes to the coordinate ring of $T^* \mathbb{C}$. In this paper, when we say $\mathscr{D}_\hbar$-module we usually mean right $\mathscr{D}_\hbar$-module. 

By ``holomorphic curve in $T^*_\hbar \mathbb{C}$'' we mean the following. Fix $P(z, w) \in \mathscr{D}_\hbar$, nonvanishing modulo $\hbar$, and define the right $\mathscr{D}_\hbar$-module 
\begin{equation}
    \widehat{\mathscr{O}}_C := \mathscr{D}_\hbar/P \mathscr{D}_\hbar. 
\end{equation}
This is viewed as the noncommutative analog of the structure sheaf of the affine curve $C$ defined by $P(z, w) = 0$ in the $\hbar = 0$ limit. 

An essential role in this paper is played by the moduli space of instantons (of rank $r$ and charge $n$) on the noncommutative space $T^*_\hbar \mathbb{C}$, see e.g. \cite{bgk01}, \cite{Kapustin_2001}, \cite{Nekrasov_1998}. It shown in those articles that this moduli space, in its various guises, may be presented via an ADHM-like \cite{adhm} construction as a Nakajima quiver variety
\begin{equation}
    \widetilde{M}(n, r) := \{ (B_1, B_2, I, J) \in \text{End}(V) \oplus \text{End}(V) \oplus \text{Hom}(W, V) \oplus \text{Hom}(V, W) | \comm{B_1}{B_2} + IJ = \hbar \}/GL(V).
\end{equation}
Here $V$ and $W$ are vector spaces, with $\dim V = n$ and $\dim W = r$. As long as $\hbar \neq 0$, the affine/categorical quotient, GIT quotient for either choice of stability condition, and stack quotient by $GL(V)$ all coincide (see Lemma \ref{stabandcostab}) and produce a smooth affine algebraic variety of dimenson $2nr$. In this paper it will be convenient to be agnostic about the total instanton charge and work over the union 
\begin{equation}
    \widetilde{M}(r) := \bigsqcup_{n \geq 0} \widetilde{M}(n, r)
\end{equation}
of all instanton moduli of fixed rank. Over $\widetilde{M}(r)$ we have tautological bundles $\mathscr{W}$ and $\mathscr{V}$ of (locally constant) ranks $r$ and $n$ associated to the quiver data $W$ and $V$; note $\mathscr{W}$ is trivial and we have a canonical isomorphism $\mathscr{W} \simeq W \otimes \mathscr{O}_{\widetilde{M}(r)}$. 

Also over $\widetilde{M}(r)$, we have the universal instanton $\mathscr{E}$, the middle cohomology sheaf of the ADHM complex
\begin{equation}
\begin{tikzcd}
    \mathscr{V} \otimes \mathscr{D}_\hbar \arrow[r, "\alpha"] & (\mathscr{V} \oplus \mathscr{V} \oplus \mathscr{W}) \otimes \mathscr{D}_\hbar \arrow[r, "\beta"] & \mathscr{V} \otimes \mathscr{D}_\hbar 
\end{tikzcd}
\end{equation}
with maps (when $z$ and $w$ are written we understand the operators of left multiplication by them)
\begin{equation}
\begin{split}
    \alpha & = \begin{pmatrix} B_1 - z \\ B_2 - w \\ J \end{pmatrix} \\
    \beta & = \begin{pmatrix} -B_2 + w && B_1 - z && I  \end{pmatrix}
\end{split}
\end{equation}
where the quiver data $(B_1, B_2, I_, J)$ are interpreted as the corresponding tautological morphisms of tautological bundles. $\mathscr{E}$ is a sheaf of $\mathscr{O}_{\widetilde{M}(r)}$-modules valued in right $\mathscr{D}_\hbar$-modules; if its rank is counted over the base field $\mathbb{C}$ then it is infinite, but if its rank is counted over $\mathscr{D}_\hbar$ then it has rank $r$. 

In Section \ref{sect:monopolefrominstanton} we will have occasion to consider the singular affine varieties 
\begin{equation}
    M_0(n, r) := \text{Spec} \, \mathbb{C}[\mu^{-1}(0)]^{GL(V)}
\end{equation}
where $\mu(B_1, B_2, I, J) = \comm{B_1}{B_2} + IJ$ is the moment map for the $GL(V)$ action on $T^*( \text{End}(V) \oplus \text{Hom}(W, V))$ with the canonical symplectic form. We will call these the Uhlenbeck moduli spaces of instantons. 

We will also make use of various standard constructions related to the free fermion conformal field theory, semi-infinite wedge representation of $GL(\infty)$ and the corresponding infinite Grassmannian; we refer to Appendix \ref{fermionreview} for notations and review, or the references \cite{miwa2000solitons}, \cite{Mulase2002ALGEBRAICTO}, \cite{segal-wilson}, \cite{witten} for more thorough discussions from several points of view. 

\subsubsection{Statement of mathematical results}
In concrete terms, the mathematical results shown in this paper are the following.
\begin{enumerate}
    \item For the curves $P(z, w) = w$ and $P(z, w) = zw - u$ (corresponding to the $z$-axis and a generic orbit of the torus action preserving the symplectic form on $T^* \mathbb{C}$) we study the (infinite rank) vector bundles
    \begin{equation}
        \text{Ext}^1_{\mathscr{D}_\hbar}(\widehat{\mathscr{O}}_C, \mathscr{E})
    \end{equation}
    over $\widetilde{M}(r)$. We prove that these can be used to define maps from $\widetilde{M}(r)$ to Sato-type Grassmannians (see e.g. \cite{Mulase2002ALGEBRAICTO}, \cite{segal-wilson}), which may be lifted via the Pl\"{u}cker embedding to define certain tensors in the fermionic Fock space. This is the content of Propositions \ref{maptoGr}, \ref{vaccomponent}, \ref{prop:VmaptoGr}, \ref{prop:S(u)vaccoeff}. These tensors in the fermionic Fock space are what we call ``instanton scattering matrices''. By construction, all of their components in the standard basis are global functions on the instanton moduli space. Also by construction, they define $\tau$-functions of multicomponent KP hierarchies for every point of $\widetilde{M}(r)$. 

    \item Proposition \ref{taufnformula} expresses an equivariance property of the instanton scattering matrix associated to $P(z, w) = w$, intertwining the action of the subgroup $\Gamma_+$ of the loop group of $\mathbb{C}^\times$ on the Sato Grassmannian (see \cite{segal-wilson}) with the action of a certain commutative subgroup of symplectic diffeomorphisms of $T^* \mathbb{C}$ on $\widetilde{M}(1)$. This was anticipated in \cite{Aganagic_2005}. 

    \item Theorem \ref{thm:instantonfactor} shows that on a distinguished open subset of $\widetilde{M}(1)$ (corresponding to the situation where the instantons are, in an appropriate sense, well-separated along the $z$-axis), the instanton scattering matrix associated to $P(z, w) = w$ factors into elementary pieces. The individual pieces are identified as explicit representation-theoretic objects: they are (Poisson limits of) \textit{Miura operators}. The higher rank analog of this theorem is also straightforward to obtain by the methods of this paper and discussed at the end of Section \ref{section:scatterandmiura}.

    Theorem \ref{thm:instantonfactor} can be viewed as a very explicit expression of the compatibility of the instanton scattering matrix construction with the factorization morphism of \cite{bfg} (those authors do not work in the noncommutative setting, but upon translating their construction to ADHM matrices it generalizes straightforwardly). 

    \item In Theorem \ref{thm:Rmatrixcoeff} we prove that, in a similar Poisson limit as above, all matrix coefficients of an arbitrary product of Miura operators (in the trigonometric normalization) extend from an analogous open subset to global functions on the affine deformation $\text{Hilb}_n(T^*_\hbar \mathbb{C}^\times)$ of the Hilbert scheme of points on $T^* \mathbb{C}^\times$. We also get an extremely concise formula \eqref{formula:cylindermiuracoeff} for the generating function of all matrix elements, with a geometric proof. In view of the $RTT$ relations with the $R$-matrix of $\mathsf{Y}(\widehat{\mathfrak{gl}}_1)$ from \cite{mo} enjoyed by Miura operators \cite{Prochazka_2019}, this gives an explicit formula assigning $R$-matrix coefficients to functions on $\text{Hilb}_n(T^*_\hbar \mathbb{C}^\times)$, in the semiclassical limit.

    \item The instanton scattering matrix associated to $P(z, w) = zw - u$ is denoted by $\mathbb{S}(u)$ in the main body. Theorem \ref{thm:fermionpropagator} gives an explicit formula which fully determines all the matrix coefficients of $\mathbb{S}(u)$ in terms of the ADHM data. Conjecturally, $\mathbb{S}(u)$ is the semiclassical limit of an $R$-matrix of $\mathsf{Y}_{+1}(\widehat{\mathfrak{gl}}_r)$ (see discussion in \ref{subsubsect:scatterandR} below for remarks on why this is not yet a theorem). 

    \item We give a conjectural formula \eqref{eq:Smatrixfromprop} extracting an infinite family of monopole scattering matrices (also known as matrix descriptions of affine Grassmannian slices, see \cite{bfnslice}, \cite{krylov}) from $\mathbb{S}(u)$, compatible with the embedding of affine Grassmannian slices into instanton moduli as fixed loci of an $S^1$-action. In this sense instanton scattering matrices are direct generalizations of their monopole counterparts. 
\end{enumerate}

\subsubsection{Statement of physics results}
For a string theorist, what is achieved in this paper is the following. We revisit the string theory setup of \cite{DHS}, \cite{DHSV}, which studied Type IIA superstring theory on the background $\mathbb{R} \times T^* \mathbb{C} \times \mathbb{C} \times \mathbb{R}^3$ in the presence of a $B$-field on $T^* \mathbb{C}$, together with a D6 brane wrapping $\mathbb{R} \times T^* \mathbb{C} \times \mathbb{C} \times \{ 0 \}$ and a D4 brane wrapping $\text{pt} \times \{ C \subset T^* \mathbb{C} \} \times \{ 0 \} \times \mathbb{R}^3$. The novel ingredient considered here is a collection of $n$ D2 branes wrapping $\mathbb{R} \times \text{points} \times \mathbb{C} \times \{ 0 \}$; owing to the nonzero $B$-field, the D2 branes dissolve into the D6 brane and are described in the low-energy gauge theory on the D6 brane as noncommutative instantons \cite{Nekrasov_1998}, \cite{seibergwitten} (we also consider at times a stack of $r > 1$ D6 branes, but with the noncommutativity turned on this generalization is comparatively trivial). 

The D4-D6 intersection supports chiral fermions, studied in \cite{DHS}, \cite{DHSV} in relation to the description of $B$-model topological strings obtained in \cite{Aganagic_2005}. The perspective of this paper is rather different: we study these ``noncommutative chiral fermions'' as probes of the noncommutative instanton configuration of the bulk gauge theory. The interaction with the bulk turns out to clarify certain features of these fermions. Proceeding along these lines, we show the following.

\begin{enumerate}
    \item As was the case in \cite{Aganagic_2005}, \cite{DHS}, \cite{DHSV}, the path integral over these fermionic degrees of freedom can be characterized via the chiral Ward identities: the Ward identities determine that the resulting path integral is fully described by a holomorphic map from the instanton moduli space $\widetilde{M}(r)$ to an infinite Grassmannian, and we characterize this map explicitly for the two simplest choices of D4 brane support $C \subset T^* \mathbb{C}$. This holomorphic map may be regarded as ``scattering data'' assigned to noncommutative instantons, generalizing the assignment of scattering data to monopoles via the classical constructions in \cite{atiyahhitchin}, \cite{hitchinspectral}, \cite{hurtubise}, \cite{jarvis}. This seems related to a very early observation of Atiyah \cite{atiyahinstanton}. 
    
    Making this statement work relies heavily on the approach to free fermions advocated by Witten in \cite{witten}. We call the corresponding fermion path integrals ``instanton scattering matrices''. 

    \item Uplifting to M-theory and invoking a related analysis of Costello \cite{costello2016}, instanton scattering matrices can be interpreted as computing (in the semiclassical limit) intersections of line and surface observables in the five-dimensional Chern-Simons theory on $\mathbb{R} \times T^*_\hbar \mathbb{C}$ studied in \cite{costello2016}. Using the known relation between defect intersections in such Chern-Simons theories and representation theory of Yangians, it is in this sense natural that instanton scattering matrices give rise to $R$-matrices of (shifted) affine Yangians $\mathsf{Y}_{+ 1}(\widehat{\mathfrak{gl}}_r)$, in a certain semiclassical limit.

    $R$-matrices for shifted Yangians are not yet defined in the correct level of generality (see \ref{subsubsect:scatterandR} below), so in cases like that considered in Section \ref{section:newRmatrix} our formalism leads to predictions and constraints on the $RTT$ formalism for these Yangians. 

    \item Viewed in the 5d Chern-Simons context, the techniques of this paper give rise to new methods to compute such defect intersections. Two especially important features that should be highlighted are: 
    \begin{itemize}
        \item The methods are \textit{nonperturbative}, in the sense that they do not rely on a power series expansion about the trivial field configuration in the bulk Chern-Simons theory. Indeed, by construction they work over the whole instanton moduli space. 

        \item Our methods do not limit the support of the surface operator to be a linear subspace in $T^* \mathbb{C}$, the only case considered so far in the literature. Indeed the formalism of Appendix \ref{adhmdetails} works, at least in principle, equally well for arbitrary supports. Section \ref{section:newRmatrix} gives a detailed analysis of the simplest nontrivial example; the operator assigned to the intersection is denoted $\mathbb{S}(u)$.
    \end{itemize}

    The main limitation of the approach of this paper is that we must stay in the aforementioned semiclassical limit, but this is not as severe as it may seem, especially if one uses the (highly constraining) relation to the Yangian.
    
    \item In the approach of this paper, instanton scattering matrices are placed on equal footing with monopole scattering matrices. This is made precise in \eqref{eq:Smatrixfromprop} giving an explicit formula for the monopole scattering matrices in terms of the instanton ones, in certain situations. The embedding of monopole $S$-matrices into the four-dimensional Chern-Simons theory was discussed (in a slightly different language) in \cite{costelloQop}. 

    \item Theorem \ref{thm:instantonfactor} gives a new approach to the appearance of Miura operators in the M-theory setup of \cite{costello2016}, previously studied in \cite{miuraCSfeynman2}, \cite{gaiottorapcak2020}, \cite{gaiottorapcakzhou24}, \cite{haouzijeong}, \cite{miuraCSfeynman1}, \cite{zenkevich24}. It says that the instanton scattering matrix corresponding to an M5 brane supported on $w = 0$ intersecting an arbitrary number of M2 branes factors into Miura operators in an appropriate coordinate chart of the instanton moduli space.  

    This point of view allows us to prove results like Theorem \ref{thm:Rmatrixcoeff}, or \eqref{formula:cylindermiuracoeff} giving very precise characterizations of arbitrary correlation functions involving Miura operators, which are extremely striking from the purely two-dimensonal CFT perspective. In this sense we have placed Miura operators on equal footing with monopole $S$-matrices/affine Grassmannian slices, explaining their significance in a much wider context.
\end{enumerate}

\subsubsection{Instanton scattering matrices and $R$-matrices} \label{subsubsect:scatterandR}

A major motivation for this paper is the observation that monopole scattering matrices are closely linked with $R$-matrices of certain quantum groups called shifted Yangians (say of the Lie algebra $\mathfrak{g} = \mathfrak{gl}_r$). This observation goes back at least to \cite{Gerasimov_2005} and will be reviewed in Section \ref{sect:monopoleyangian} below; see also \cite{costelloQop} for a more recent discussion from the point of view of the four-dimensional Chern-Simons theory. Our instanton scattering matrices are constructed so that the analogous relation continues to hold if the shifted Yangians are replaced by their affine counterparts, i.e. shifted Yangians of $\widehat{\mathfrak{gl}}_r$. This is expected based on string theory considerations in Section \ref{section:string}, and in some situations (like Lemma \ref{lemma:RTTaffine} below) it is easy to verify the Yang-Baxter equation directly, justifying the $R$-matrix property independently of the string theory argument. 

Strictly speaking, it is \textit{quantizations} of scattering matrices that are $R$-matrices; scattering matrices themselves are semiclassical objects, though they detect the Yang-Baxter equation via the Poisson bracket on the relevant monopole/instanton moduli space (see e.g. the computation done in Appendix \ref{reviewscatter}). 

Despite the fact that we give in the main body a mathematically precise definition of the instanton scattering matrices, the relation to $R$-matrices remains, strictly speaking, conjectural for the following reason. Shifted Yangians $\mathsf{Y}_\mu(\mathfrak{g})$ depend on a choice of coweight $\mu$ of Lie algebra $\mathfrak{g}$ called the shift, and the existence of a Yang-Baxter formalism for the Yangian depends on if $\mu$ is dominant or antidominant. If $\mu$ is antidominant, the Yang-Baxter formalism exists and was constructed for $\mathfrak{g} = \mathfrak{gl}_r$ in \cite{frassekpestuntsymb}. If $\mu$ is dominant, the construction of \cite{frassekpestuntsymb} fails, there does not appear to be a Yang-Baxter equation, and the naively defined coproduct fails to be coassociative. This is a well-known technical problem in representation theory. 

What this means is, in the dominantly shifted case, it is not possible to prove that instanton scattering matrices are $R$-matrices because the statement is simply devoid of content in absence of an $RTT$ formalism for shifted Yangians. We prefer to read this correspondence in the other direction: because scattering matrices make sense in a higher level of generality than $R$-matrices, they should instead be made into a goalpost for the eventual development of the $RTT$ theory of shifted Yangians. Namely, one should formulate an $RTT$ formalism for the $1$-shifted affine Yangian $\mathsf{Y}_{+1}(\widehat{\mathfrak{gl}}_r)$ and show that the operators $\mathbb{S}(u)$ constructed in Section \ref{section:newRmatrix} satisfy the defintion of its $R$-matrices (in the semiclassical limit).

\subsection{Acknowledgements}
This paper would not exist without invaluable discussions with Mina Aganagic and Yegor Zenkevich. I am grateful to each of them for providing feedback and suggestions throughout the development of this project, and especially to Yegor for generously sharing his formulas and thoughts on $q$-deformed Miura operators. I am also grateful to Tommaso Botta and Sujay Nair for collaboration on related work which very much influenced my thinking and motivated me to pursue the constructions in this paper. 

Part of this work was completed during a visit to Columbia University, and I am grateful to the Department of Mathematics there for its hospitality. Preliminary results were announced at the Columbia Informal Mathematical Physics Seminar (January 2025). I am especially grateful to Andrei Okounkov for the invitation and for his feedback and interest in the project. 

\subsubsection{Outline of paper}
This paper is organized as follows. In Section \ref{sect:monopoleyangian} we give a brief review of monopole scattering matrices, emphasizing their relations to Yangians and the features that are paralleled by their instanton analogs. Section \ref{sect:monopoletoinst} explains, in physics language, the step-by-step generalization from monopole scattering matrices to their instanton analogs. Section \ref{sect:Smatrix1stpass} gives a mathematically rigorous treatment of the simplest instanton scattering matrices, associated to fermions on the zero section $\mathbb{C} \subset T^*_\hbar \mathbb{C}$ or $\mathbb{C}^\times \subset T^*_\hbar \mathbb{C}^\times$. Section \ref{section:string} uplifts the construction to string theory and explains the connections with other parts of the physics literature in more depth. Sections \ref{sect:defineS}, \ref{sect:determineS} give a mathematical treatment of the next simplest instanton scattering matrix (denoted $\mathbb{S}(u)$), associated to fermions on the curve $zw = u$, motivated by the discussion in Section \ref{section:string}. Section \ref{sect:monopolefrominstanton} closes the circle of ideas and gives an explicit recipe to recover monopole scattering matrices from their instanton analogs. 

Appendix \ref{reviewscatter} gives a much more detailed introduction to monopole scattering matrices, aimed at readers unfamiliar with the construction. Appendix \ref{fermionreview} collects notations, concepts, and results from the theory of free fermions and infinite Grassmannians that will be used throughout the paper. Appendix \ref{adhmdetails} contains various technical computations involving the ADHM complex. Some parts of the main body of the paper, especially the proofs, rely on notions or results introduced in Appendices \ref{fermionreview} and \ref{adhmdetails}. Readers less familiar with this background may wish to consult those appendices before reading the rest of the paper. 

\textit{Remark on differential geometry.} For conceptual purposes, and especially to make an honest connection with physics, it is very important that various moduli spaces under consideration in this paper have interpretations in differential geometry. However, none of the proofs in this paper depend on such a description, and the reader will note that they are carried out entirely in algebra-geometric terms. Still, we will comment at various points on the differential-geometric interpretation, without addressing any serious analytic complications that may or may not arise. 

\section{Monopole moduli spaces and Yangians} \label{sect:monopoleyangian}
The central observation that we aim to reexamine and generalize in this paper is the relationship between the monopole scattering transform and the $RTT$ formalism for Yangians. This connection goes back at least to \cite{Gerasimov_2005} and has reappeared in the modern literature in relation to the quantization of Coulomb branches of quiver gauge theories in various dimensions \cite{nekrasovpestun}, \cite{BDG}, \cite{costelloQop}. In this section we will survey the important aspects of monopole scattering that will motivate our main construction. While monopole moduli may be studied for arbitrary ADE Lie algebras $\mathfrak{g}$, attention will be restricted to $\mathfrak{g} = \mathfrak{gl}_r$ for concreteness. Explicit formulas will be written only for $\mathfrak{gl}_2$. 

Readers familiar with this material should just skim this section for notation; readers unfamiliar with it may wish to consult Appendix \ref{reviewscatter} for a more detailed and complete introduction to the subject. 

\subsection{Scattering transform and monopole moduli}

\subsubsection{General setup}
The setting for studying BPS monopole equations (also called Bogomolny equations) is the following. We study unitary connections $A$ on rank $r$ vector bundles $E \to \mathbb{R}^3$, and also real-valued sections $\phi$ of $\text{End}(E)$. We require that they solve the equation 
\begin{equation}
    F_A + \star D_A \phi = 0
\end{equation}
where $F_A = dA + A \wedge A$ is the curvature of the connection, $D_A \phi = d\phi + \comm{A}{\phi}$ is the covariant derivative, and $\star$ is the Hodge star determined by the Euclidean metric on $\mathbb{R}^3$. This will be referred to as the monopole equation in this paper. 

The classical setup is to study solutions of the above equation smooth throughout $\mathbb{R}^3$ and satisfying certain boundary conditions. It is also interesting to study solutions defined on $\mathbb{R}^3 \setminus \{ 0 \}$ with prescribed singularities near zero. We consider solutions with boundary conditions at $r \to \infty$ specified by equation \eqref{monopolebcinfty}, which involves the choice of a coweight $\mu$ of $GL_r$, and prescribed singularity \eqref{singularbcmonopole} at $r \to 0$ labeled by another choice of coweight $\lambda$. For concreteness, we write the boundary conditions on the scalar $\phi$ here, referring to Appendix \ref{reviewscatter} for more detail:
\begin{equation}
\begin{split}
    \phi(\vec{x}) & \to -i \text{diag}(\phi_1, \dots, \phi_r) - \frac{i\mu}{2r} + \dots \, \, \, \text{as $r \to \infty$} \\
    \phi(\vec{x}) & \to -\frac{i\lambda}{2r} + \dots \, \, \, \text{as $r \to 0$}.
\end{split}
\end{equation}
The $\phi_i$ are real numbers in the order $\phi_1 < \dots < \phi_r$. The gauge field $A$ is required to approach the standard Dirac monopole configurations at those points specified by the choice of coweights; the moduli space of solutions to $F_A + \star D_A \phi  = 0$ subject to these boundary conditions and considered up to natural isomorphisms (gauge equivalence, approaching $1$ at $r \to \infty$ and centralizing $\lambda$ near $0$) is denoted $\mathscr{M}^\lambda_\mu$. For the moduli space to be nonempty, $\mu - \lambda$ must be a coroot. When the space is nonempty it is of real dimension $4 | \langle \rho, \mu  - \lambda \rangle|$ ($\rho$ is the Weyl vector, half the sum of the positive roots); this can be seen differential-geometrically \cite{Moore_2014}. 

The $\mathscr{M}^\lambda_\mu$ fail to be compact for two reasons: one is due to the noncompactness of $\mathbb{R}^3$, and the other is the ``monopole bubbling effect'' \cite{KW}, which is in a precise sense analogous to effects related to small instantons in four dimensions. As is the case with instantons, there is an Uhlenbeck-type partial compactification of the monopole moduli space $\mathscr{M}^\lambda_\mu \subset \overline{\mathscr{M}}^\lambda_\mu$, which is in general singular (it is smooth if and only if $\lambda$ is minuscule). The construction of this compactification, at least in algebro-geometric terms, may be done using the scattering transform and is sketched in Appendix \ref{reviewscatter}. 

\subsubsection{Scattering matrix}
The differential-geometric description of $\mathscr{M}^\lambda_\mu$ is conceptually very nice, but rather difficult to work with because it involves solving partial differential equations. Historically, the monopole scattering matrix was of interest because it gave an accessible description of $\mathscr{M}^{\lambda = 0}_\mu$ in finite-dimensional terms \cite{jarvis}; the rank $r = 2$ case is discussed thoroughly in \cite{atiyahhitchin}. 

The basic idea is as follows: pick a splitting $\mathbb{R}^3 \simeq \mathbb{C} \times \mathbb{R} \ni (u, y)$. This induces a preferred choice of complex structure on $\mathscr{M}^\lambda_\mu$, see Appendix \ref{reviewscatter} or \cite{atiyahhitchin}. The equations $F_A + \star D_A \phi = 0 $ become 
\begin{equation}
    \comm{\overline{D}_u}{D_y + i \phi} = 0
\end{equation}
and another equation that can be interpreted as a real moment map condition. The quantity (Wilson line)
\begin{equation}
    S(u) = \mathscr{P}\exp{ - \int_{\gamma_{u}} dy(A_y + i \phi) }
\end{equation}
is the holonomy of the complexified connection $A_y + i \phi$ along the vertical line $\gamma_u$ given by $u = \text{const}$. It is covariantly holomorphic in $u$ as a consequence of the above equation. Since the restriction of the bundle $E$ to $y = \text{const}$ is a vector bundle on a Riemann surface with unitary connection, $\overline{D}_u$ gives it a natural holomorphic structure and when expressed with respect to holomorphic trivializations at $y \to \pm \infty$, the matrix elements of $S(u)$ are holomorphic in $u$. By design, they are also holomorphic functions on the monopole moduli space in the complex structure induced by the splitting $\mathbb{R}^3 \simeq \mathbb{C} \times \mathbb{R}$. As reviewed in Appendix \ref{reviewscatter}, the boundary conditions on the monopole equations can be used to determine the asymptotics of the matrix elements of $S(u)$ as $u \to \infty$ in some class of preferred asymptotic holomorphic frames, and this is strong enough to fix $\overline{\mathscr{M}}^\lambda_\mu$ completely (and moreover give it the structure of a singular affine algebraic variety). 

\subsubsection{Explicit scattering matrices in rank 2}
To make this more concrete, let us just state the outcome of this analysis for $U(2)$ monopoles, where the asymptotic charge is $\mu = (k, n - k)$ and the Dirac singularity at $0$ has charge $\lambda = (0, n)$ where $n, k \geq 0$ are integers. 

Then as detailed in Appendix \ref{reviewscatter}, the asymptotics of the monopole solution require the scattering matrix to be of the form 
\begin{equation}
    S(u) = \begin{bmatrix} Q(u) & U^+(u) \\ U^-(u) & \widetilde{Q}(u) \end{bmatrix}
\end{equation}
where all entries are polynomials in $u$. Moreover $Q(u)$ is a monic polynomial of degree $k$, $\deg U^\pm(u) \leq k - 1$ and 
\begin{equation}
    \det S(u) = Q(u) \widetilde{Q}(u) - U^+(u) U^-(u) = u^n. 
\end{equation}
Taking coefficients of powers of $u$ in this equation gives the complete set of equations cutting out $\overline{\mathscr{M}}^\lambda_\mu$ as an (in general singular) affine algebraic variety; ${\mathscr{M}}^\lambda_\mu$ may be recovered as its smooth locus. When $\lambda$ vanishes or is minuscule, there is no singularity. 

It is easy to see (Section 2(xi) in \cite{bfnslice}) that this agrees with the description of generalized affine Grassmannian slices in \cite{bfnslice}, which is of course well-known going back to \cite{BDG}.

\subsection{Integrable structures of monopole moduli}

\subsubsection{Symplectic structure of the monopole moduli space}
Recall that the moduli space of monopoles is a holomorphic symplectic variety with symplectic structure descending formally from the following two-form on the space of all gauge fields and scalars:
\begin{equation}
    \omega_{\mathbb{C}} = \int_{\mathbb{R}^3 \setminus \{ 0 \}} d^2u dy \Tr( \delta A_{\bar{u}} \wedge \delta(A_y + i \phi)). 
\end{equation}
The matrix elements of the $S$-matrix are holomorphic functions of $u$ with coefficients in holomorphic functions on the moduli space of monopoles. In Appendix \ref{reviewscatter} it is shown by direct computation that the Poisson bracket of $S$-matrix coefficients takes the form 
\begin{equation} \label{eq:monopoleRTT}
    \acomm{S\indices{^i_j}(u)}{S\indices{^k_\ell}(v)} = \frac{S\indices{^i_\ell}(u) S\indices{^k_j}(v) - S\indices{^i_\ell}(v) S\indices{^k_j}(u)}{u - v} + \Delta\indices{^{ik}_{j \ell}}(u, v)
\end{equation}
where the term $\Delta$ involves higher powers of $u$ and $v$ that we may avoid determining explicitly by the procedure sketched in Appendix \ref{reviewscatter} (this will also be illustrated in the next paragraph). If the $\Delta$ term vanishes or is ignorable, this agrees with the Poisson limit of the $RTT$ relation of the Yangian of $\mathfrak{gl}_r$ (see e.g. \cite{FADDEEV_1995} for an introduction, or \cite{frassekpestuntsymb} for a modern discussion in a closely related setting). Closer inspection shows that we in fact have a shifted Yangian, where the value of the shift is fixed by the asymptotic charge $\mu$, \cite{frassekpestuntsymb}. 

In the rank $r = 2$ case the complete list of Poisson brackets is, with notations for the $S$-matrix elements as above in our running example,
\begin{equation}
\begin{split}
    \acomm{Q(u)}{Q(v)} & = 0 \\ 
    \acomm{Q(u)}{ \frac{U^\pm(v)}{Q(v)}} & = \pm \Bigg[ \frac{Q(u)}{u - v} \Big( \frac{U^\pm(u)}{Q(u)} - \frac{U^\pm(v)}{Q(v)} \Big) \Bigg]_{-} \\
    \acomm{\frac{U^+(u)}{Q(u)}}{\frac{U^-(v)}{Q(v)}} & = \Bigg[ \frac{1}{u - v} \Big( \frac{P(v)}{Q(v)^2} - \frac{P(u)}{Q(u)^2} \Big) \Bigg]_-
\end{split}
\end{equation}
where the subscript $-$ means take the negative part in the expansion in $u$ or $v$ whenever there is a $U^\pm/Q$ on the LHS. The polynomial $P(u) = \det S(u) = u^n$. The $\Delta$ term vanishes whenever $n \leq 2k$ since in that case the ``$-$'' subscripts on the RHS are superfluous; when $n > 2k$ the term $\Delta$ is nonzero and its sole role is to cancel the positive degree terms appearing in the bracket between $U^\pm/Q$.

From this point of view, the appearance of Yangians appears rather surprising and unexpected. A more conceptual, but also more sophisticated, rationale for its appearance will be sketched in Section \ref{section:string}. For now we will take it for granted and deduce consequences. 

\subsubsection{Coordinate charts and comultiplication}
The relationship of monopole moduli spaces to Yangians, or equivalently the fact that monopole moduli turn out to be the same as the moduli space of $S$-matrices, gives them a considerable amount of rigidity and additional structure. This is because ``tautological'' structures defined on the Yangian can descend nontrivially to the monopole moduli spaces. 

For a simple example, suppose $S_1(u)$ is a monopole $S$-matrix corresponding to values of the charges $(\lambda_1, \mu_1)$ and likewise $S_2(u)$ is a  monopole $S$-matrix for charges $(\lambda_2, \mu_2)$. Assume further for simplicity that we are in the rank $r = 2$ case, and both $S_1$ and $S_2$ have $n \leq 2 k$ in notations above (corresponding to antidominantly shifted Yangians, so the analysis of \cite{frassekpestuntsymb} applies). Then it is easy to see that 
\begin{equation}
    S\indices{^i_j}(u) = \sum_k (S_1)\indices{^i_k}(u) (S_2)\indices{^k_j}(u)
\end{equation}
is another monopole $S$-matrix, associated to charges $(\lambda_1 + \lambda_2, \mu_1 + \mu_2)$. It is also easy to check that if $\Delta = 0$, this map is compatible with Poisson brackets. This describes a Poisson morphism 
\begin{equation}
    \overline{\mathscr{M}}^{\lambda_1}_{\mu_1} \times \overline{\mathscr{M}}^{\lambda_2}_{\mu_2} \longrightarrow \overline{\mathscr{M}}^{\lambda_1 + \lambda_2}_{\mu_1 + \mu_2}. 
\end{equation}
Moreover, in the $RTT$ formalism, matrix multiplication in the ``auxiliary'' space $\mathbb{C}^2$ is the \textit{comultiplication} on the Yangian, see \cite{FADDEEV_1995}, \cite{frassekpestuntsymb}. Thus the $S$-matrix description of monopole moduli spaces gives rise to a collection of canonical maps between monopole moduli spaces, simply induced by the coproduct in the underlying Yangian. This gives a systematic procedure to decompose monopole moduli spaces into elementary constituents. More precisely, these maps can be used to construct a system of coordinate charts on resolutions of $\overline{\mathscr{M}}^\lambda_\mu$ (see \cite{krylov} for details; the most basic example will be discussed below). The appearance of resolutions is unsurprising given the description of resolutions in \cite{KW}. 

If we allow $n > 2k$ in the rank 2 case, the reader may verify that there is a still well-defined map as above but it is no longer given by naive matrix multiplication. It is necessary to correct the matrix multiplication by further multiplying by strictly upper/lower triangular matrices to get the correct degrees in $u$ (this is traced back to the Gauss factorizations of $S$-matrices appearing in Appendix \ref{reviewscatter}). The correction introduces no new information because it is uniquely determined by the degree conditions. This basic observation is the source of difficulties extending the $RTT$ formalism of \cite{frassekpestuntsymb} to dominantly shifted Yangians.

\subsubsection{Action-angle variables}
The ``comultiplication'' coordinate charts discussed algebraically above can be interpreted more physically as follows. Splitting $\lambda = \lambda_1 + \lambda_2$ corresponds to partially resolving the singularities of $\overline{\mathscr{M}}^\lambda_\mu$ by splitting apart singular monopoles as in \cite{KW}. The locus where the $S$-matrix factors is the locus where the smooth monopoles are separated into two ``clusters'' along the $y$-axis, one which is allowed to bubble into the $\lambda_1$ Dirac monopole and one which is allowed to bubble into the $\lambda_2$ Dirac monopole. This is an open set in the partial resolution, isomorphic to a product of two smaller monopole moduli spaces. This intuitive picture is confirmed by the results in \cite{krylov}. 

There is another possible open subset in the monopole moduli space, in which we suppose the smooth monopoles are well-separated along the $u$-plane. This leads to a coordinate chart in which the integrable system structure of the monopole moduli space is manifest, and it gives rise to the ``action-angle'' coordinates of the integrable system. The ``factorized $S$-matrix'' coordinate charts instead correspond to the ``coordinate basis'' of the integrable system, in accordance with general features of algebraic Bethe ansatz \cite{FADDEEV_1995}. 

In the rank $2$ case with $\mu = (k, n - k)$, $\lambda = (0, n)$ $\overline{\mathscr{M}}^\lambda_\mu$ has dimension $2k$, and from the above Poisson brackets it follows that the coefficients of $Q(u)$ supply $k$ independent Poisson-commuting functions on it. Rephrased geometrically, we have a map 
\begin{equation}
    \overline{\mathscr{M}}^\lambda_\mu \to \text{Sym}^k(\mathbb{C})
\end{equation}
given by the assignment $S(u) \mapsto \text{zeroes of $Q(u)$}$, and it has Lagrangian fibers. Introducing coordinates $\varphi_i$ on the base, $Q(u) := \prod_i (u - \varphi_i)$, in the generic fiber we may introduce coordinates $x^\pm_i := U^\pm(\varphi_i)$, which due to the relation $\det S(u) = P(u)$ must satisfy $-x^+_i x^-_i = P(\varphi_i)$ for each $i = 1, \dots, k$; thus the generic fiber is identified with $(\mathbb{C}^\times)^k$. It is straightforward to see that the symplectic form is identified with $\Omega = \sum_i d\varphi_i \wedge d\log(x_i^+)$ in these coordinates. In this way one sees that the monopole moduli space is an integrable system and that the $(\varphi_i, x_i^+)$ provide an explicit set of action-angle coordinates; these are the coordinates naturally associated to abelianization of \cite{BDG}, \cite{bfnslice}. 

\subsubsection{Example: Open Toda chain}
Let us illustrate the above remarks in the rank $r = 2$ case with $\mu = (k, - k)$, $\lambda = 0$. In this case $\overline{\mathscr{M}}^0_\mu$ is a smooth affine variety of dimension $2k$; this is the classical case studied in \cite{atiyahhitchin}. The coordinate chart arising from comultiplication as above identifies it as the completed phase space of the open Toda lattice.

For $k = 1$, the scattering matrix is determined to be of the form 
\begin{equation}
S(u) = \begin{bmatrix} u - p && z \\ -z^{-1} && 0  \end{bmatrix}
\end{equation}
and the nontrivial Poisson bracket is $\acomm{p}{z} = z$; therefore we see that $\overline{\mathscr{M}}^0_{(1, -1)} \simeq T^* \mathbb{C}^\times$ with the standard symplectic form, and $(p, z)$ are the standard coordinates on the cotangent bundle. In this case the integrable system is the trivial one corresponding to the projection to the $p$-plane. 

Then we may write a factorized $S$-matrix via coordinates $(p_i, z_i)$, $i = 1, \dots, k$ as
\begin{equation} \label{eq:monopoleSfactor}
\begin{split}
    S(u) & = \begin{bmatrix} Q(u) && U^+(u) \\ U^-(u) && \widetilde{Q}(u) \end{bmatrix} \\
    & = \begin{bmatrix} u - p_k && z_k \\ -z_k^{-1} && 0  \end{bmatrix} \dots \begin{bmatrix} u -  p_1 && z_1 \\ -z_1^{-1} && 0 \end{bmatrix}.
\end{split}
\end{equation}
This equation defines a map 
\begin{equation}
    T^* (\mathbb{C}^\times)^k \simeq (\overline{\mathscr{M}}^0_{(1, - 1)})^{\times k} \to \overline{\mathscr{M}}^0_{(k, -k)}
\end{equation}
and it is not hard to check directly that it is actually the inclusion of an open subset. Moreover, by the compatibility with Poisson brackets discussed above, $(p_i, z_i)$ are Darboux coordinates with $\Omega = \sum_i dp_i \wedge d \log (z_i)$.

The Hamiltonians $H_k$ of the integrable system are the coefficients of $Q(u)$: $Q(u) = u^k - H_1 u^{k - 1} + H_2 u^{k - 2} + \dots$. From the above we read off 
\begin{equation}
\begin{split}
    H_1  & = \sum_i p_i \\
    H_2 & = \sum_{i < j} p_i p_j - \sum_{i = 1}^k z_{i + 1} z_i^{-1}
\end{split}
\end{equation}
and observe immediately that $H_2$ is the Hamiltonian of the $k$-particle open Toda lattice. 

The following are the essential features of this example that we would like to understand in more generality, in particular for instantons.

\begin{enumerate}
    \item There is an essentially canonical construction of an operator $S(u)$ from a monopole solution $(A, \phi)$; the matrix coefficients are holomorphic functions on the monopole moduli space and give a distinguished set of generators for its coordinate ring. (Instanton analogs: Propositions \ref{maptoGr}, \ref{vaccomponent}, \ref{prop:VmaptoGr}, \ref{prop:S(u)vaccoeff}). 

    \item The symplectic structure on the monopole moduli space is such that $S(u)$ may be viewed as a semiclassical limit of a transfer matrix for the Yangian $\mathsf{Y}(\mathfrak{g})$, where $\mathfrak{g}$ is the Lie algebra of the monopole gauge group, and we ignore details about the shift. (Instanton analog, in 0-shifted case: Lemma \ref{lemma:RTTaffine}). 

    \item The Yangian coproduct (multiplication of the $T$-matrices) induces canonical maps between the monopole moduli spaces; in particular there is a distinguished open set in the multi-monopole moduli space characterized by the complete factorization of $S(u)$ into elementary constituents. (Instanton analogs: Theorems \ref{thm:instantonfactor}, \ref{thm:Rmatrixcoeff}, \ref{thm:Scolumns factor}).
\end{enumerate}

\subsubsection{Quantization and $R$-matrices}
The next three sections have no direct relevance to the rest of the paper and exist only for the purpose of motivation and context; readers finding them confusing are encouraged to skip them.

In this paper, to be brief it will often be said that scattering matrices ``are'' $R$-matrices or $T$-matrices, but the precise statement is that this is only true upon passing to a quantization of the ring of functions on the monopole moduli space and choosing a module over the quantization to evaluate the $S$-matrix coefficients in. For simplicity, in the unshifted $\mathsf{Y}(\mathfrak{gl}_r)$ case (corresponding to $\mu = 0$), the quantized $S$-matrix is expected to give $R$-matrices of $\mathsf{Y}(\mathfrak{gl}_r)$ braiding 
\begin{equation}
    \widehat{S}(u) \in \text{End}(\mathbb{C}^r \otimes M)(u)
\end{equation}
where $M$ is some module over the quantized ring of global functions on $\overline{\mathscr{M}}^\lambda_{\mu = 0}$. The quantized coordinate ring of monopole moduli is a quotient of the corresponding Yangian and the Yangian action on $M$ is defined by pullback. If $M$ is a tensor product of $\mathsf{Y}(\mathfrak{gl}_r)$-modules then $\widehat{S}(u)$ factors according to the corresponding coordinate chart on $\overline{\mathscr{M}}^\lambda_{\mu = 0}$ induced by coproduct.

Using the identification of the monopole moduli space with an affine Grassmannian slice, and of the latter with a BFN-style Coulomb branch \cite{bfnslice} gives a canonical quantization of the $S$-matrices. That this quantization is compatible with factorization of $S$-matrices (i.e. quantized multiplication morphisms, or the honest Yangian coproduct) can be shown using enumerative geometry of quasimaps to flag varieties, see Section 4 of \cite{tamagnishiftop} and the forthcoming work \cite{nairtamagni}. This can be thought of heuristically as a ``stable envelope'' construction in the style of \cite{AO}. It has a parallel for monopole moduli on $\mathbb{R}^2 \times S^1$ in the context of instanton counting in 4d $\mathscr{N} = 2$ theories in the NS limit, see \cite{nekrasovlee}, quantizing the discussion from \cite{nekrasovpestun}.

\subsubsection{General story: multiplication morphisms of affine Grassmannian slices}
In the rank 2 case, the construction of coordinate charts on resolved monopole moduli can be done in a very ``by hand'' fashion by multiplying $2 \times 2$ matrices as above, and it is an instructive exercise to do so. These charts identify the monopole moduli spaces as completed phase spaces for ``spin chain'' like integrable systems. That a similar construction works in much higher generality (though always for Yangians of finite-dimensional Lie algebras) has been shown in the affine Grassmannian language by Krylov and Perunov \cite{krylov}. 

Resolutions of $\overline{\mathscr{M}}^\lambda_\mu$ corresponding to splitting Dirac charge as in \cite{KW} can be constructed in algebra-geometric language using the convolution diagram of the affine Grassmannian, and the authors of \cite{krylov} consider such partial resolutions of the generalized transversal slices introduced in \cite{bfnslice}. It is shown that when a full resolution exists (and when $\mu$ is ``sufficiently positive'', a condition analogous to $n > 2k$) that it can be covered by affine spaces $T^* \mathbb{A}^n$ in a fashion compatible with natural torus actions and arising from the image of morphisms defined by multiplication of $S$-matrices (perhaps up to triangular ambiguities as discussed above). The $T^* \mathbb{A}^n$ entering in the charts should be thought of as products of $\overline{\mathscr{M}}^\lambda_\mu$ for ``small enough'' $\lambda$ and $\mu$. The map on singular affine varieties considered above is the image of the corresponding chart in the blowdown (what is called $\overline{\mathscr{M}}^\lambda_\mu$ here is called $\overline{\mathscr{W}}^\lambda_\mu$ there, see also Appendix \ref{reviewscatter} for more discussion). 

\subsubsection{Instanton scattering and the double affine Grassmannian}
Naive generalization of the algebra-geometric description of $\overline{\mathscr{M}}^\lambda_\mu$ to the case where $\mathfrak{gl}_r$ is replaced by $\widehat{\mathfrak{gl}}_r$ would seem to involve the affine Grassmannian of a Kac-Moody group, called the double affine Grassmannian in \cite{Braverman_2010}. Unfortunately, it is notoriously challenging to reason with the double affine Grassmannian beyond the naive level, however in \cite{Braverman_2010} compelling evidence was presented that instanton moduli spaces should play the role of its transversal slices (this can also be seen from the point of view of Coulomb branches).  

In much the same way that monopole $S$-matrices can be viewed as ``matrix descriptions'' of affine Grassmannian slices in the terminology of \cite{bfnslice}, \cite{krylov}, instanton scattering matrices to be introduced below can be viewed as ``matrix descriptions'' of double affine Grassmannian slices. The caveat is that one must work with matrices of infinite size; this is roughly because the analog of the fundamental representation of $\mathfrak{gl}_r$ is the Fock module of $\widehat{\mathfrak{gl}}_r$ (at level $k = 1$). The sought-after ``instanton $S$-matrix'' has to be a semiclassical $R$-matrix of a shifted variant of $\mathsf{Y}(\widehat{\mathfrak{gl}}_r)$, which lacks a supply of finite-dimensional modules to use as the auxiliary space.  

From this perspective, it is clear why naive guesswork is insufficient to construct such a matrix description: the coordinate ring of instanton moduli is of the same size as the coordinate ring of monopole moduli (at given charge), but the Yangian $\mathsf{Y}(\widehat{\mathfrak{gl}}_r)$ is much larger than $\mathsf{Y}(\mathfrak{gl}_r)$, so the assignment of $S$-matrix coefficients to functions must factor through a much larger quotient. This problem will be circumvented below using a combination of geometry and field-theoretic reasoning. 

\section{Instanton scattering: first approach} \label{section:scatterandmiura}
In this section we will present a ``bottom-up'' approach to the definition and computation of instanton scattering matrices, by straightforwardly generalizing the field theory discussion of the previous section. In later parts of the paper, we will give a ``top-down'' approach to the problem using string theory, which will also explain how this construction is related to others in the literature. Section \ref{sect:monopoletoinst} is motivational and aimed at physicists, while Section \ref{sect:Smatrix1stpass} contains the precise mathematical statements. 

\subsection{Motivation and tentative definition} \label{sect:monopoletoinst}

\subsubsection{Instantons vs. monopoles} 
Let us first briefly recall the precise sense in which instantons may be regarded as an $S^1$-uplift of monopoles. We now study rank $r$ vector bundles $E \to \mathbb{R}^3 \times S^1$ with unitary connections $A$, assuming the standard flat metric on $\mathbb{R}^3 \times S^1$. In four dimensions, the Hodge star operator satisfies $\star^2 = 1$ and preserves the space $\Omega^2(\mathbb{R}^3 \times S^1, \text{End}(E))$ so we may impose the equation 
\begin{equation}
    F_A^+ = 0 
\end{equation}
on the curvature of $A$, meaning that $F_A = - \star F_A$. Let us write $(x^1, x^2, x^3, \theta)$ for the coordinates on $\mathbb{R}^3 \times S^1$. Solutions to $F_A^+ = 0$ approaching a flat connection on $S^1$ sufficiently rapidly as $|x| \to \infty$ in $\mathbb{R}^3$ are called instantons on $\mathbb{R}^3 \times S^1$ (in defining the asymptotic boundary condition, one also turns on some Dirac flux through the asymptotic $S^2$ boundary of $\mathbb{R}^3$ as in the monopole case). 

The smooth solutions (i.e. $\lambda = 0$ in the notation of the previous section) to the monopole equations on $\mathbb{R}^3$ are precisely the $S^1$-invariant instantons on $\mathbb{R}^3 \times S^1$. To see this, write an $S^1$-invariant connection in coordinates as $A = A_i(x) dx^i + A_\theta(x) d \theta$ and observe that if we identify $\phi(x) = A_\theta(x)$, then $F_A^+ = 0$ reduces in components to $F_{ij} + \epsilon_{ijk} D_k \phi = 0$. The asymptotic eigenvalues $\phi_1, \dots, \phi_r$ of $\phi(x)$ from the previous section specify the asymptotic flat connection on $S^1$. 

There is a version of this correspondence that incorporates Dirac singularities, essentially due to Kronheimer \cite{kronheimer} that will play a role in Section \ref{sect:monopolefrominstanton} but we will not need it in this section. 

\subsubsection{Free fermions}
The goal is to define the instanton analog of the quantity $S(u)$. To do this effectively it will be convenient to first obtain an alternate formula for $S(u)$ which generalizes more straightforwardly. 

The basic idea to obtain this description is the following. By definition, $S(u)$ is the classical value of a Wilson line of $A_y + i  \phi$ in the background of a monopole configuration. Wilson lines may be interpreted as the worldlines of heavy charged test particles, and this is implemented in practice by the possibility of writing the Wilson line as a path integral over certain auxiliary degrees of freedom supported on the line defect itself. It is this path integral description of the Wilson line that will generalize in a straightforward way to instantons.

To express a Wilson line in a representation $R$ of the gauge group as a path integral over auxiliary degrees of freedom, one wants the Hilbert space arising from quantization of these degrees of freedom to be the representation $R$ itself. A canonical way to arrange for this in general is to use the Borel-Weil-Bott theorem (see e.g. page 1218-19 of \cite{qftias}), but any choice of degrees of freedom with Hilbert space $R$ will do (see e.g. Section 2.1.3 of \cite{tonggaugethy}). $S$-matrices are Wilson lines in the fundamental representation $\mathbb{C}^r$ of $U(r)$; of course we lose or gain no information by looking instead at Wilson lines in the exterior algebra
\begin{equation}
    R = \Lambda^\bullet \mathbb{C}^r = \bigoplus_{k = 0}^r \Lambda^k (\mathbb{C}^r).  
\end{equation}
Now $R$ arises naturally via the canonical quantization of a system of $r$ free fermions in one dimension, parameterized by the coordinate $y$, with Lagrangian 
\begin{equation}
    L = \sum_{\alpha = 1}^r \widetilde{\psi}_\alpha(y) \partial_y \psi^\alpha(y)
\end{equation}
and canonical anticommutation relations $\acomm{\psi^\alpha}{\widetilde{\psi}_\beta} = \delta\indices{^\alpha_\beta}$. This action corresponds to fermions with zero Hamiltonian. This action has a $U(r)$ symmetry acting on $\psi, \widetilde{\psi}$ by $\psi \to U \psi$, $\widetilde{\psi} \to \widetilde{\psi} U^\dagger$; the action of this symmetry on the Hilbert space gives rise to the map $\rho:U(r) \hookrightarrow U(\Lambda^\bullet \mathbb{C}^r)$. At the infinitesimal level it sends a matrix $A\indices{^\alpha_\beta}$ in the Lie algebra of $U(r)$ to the operator $\widetilde{\psi}_\alpha A\indices{^\alpha_\beta} \psi^\beta = \widetilde{\psi} A \psi$. The monopole $S$-matrix valued in $R$ as above may then be written in terms of fermion creation and annihilation operators as 
\begin{equation}
    \rho(S(u)) = \mathscr{P} \exp \Big( - \int_{- \infty}^{+ \infty} dy \widetilde{\psi}(A_y(y, u, \bar{u}) + i \phi(y, u, \bar{u}) \psi \Big).
\end{equation}
Interpreting $y$ as ``time'', at each fixed $u$ this is the (Euclidean) time evolution operator for a time-dependent Hamiltonian $H = \widetilde{\psi} (A_y +  i \phi) \psi$, so by textbook arguments we get a functional integral expression for its matrix elements in terms of a gauged version the fermionic system: 
\begin{equation}
    \bra{\alpha} \rho(S(u)) \ket{\beta} = \int_{\substack{\psi(-\infty) = \beta \\ \psi(+ \infty) = \alpha}} D\widetilde{\psi}(y) D\psi(y) \exp \Big( - \int_{-\infty}^{+ \infty} dy \widetilde{\psi}(y) (\partial_y + A_y(y, u, \bar{u}) + i \phi(y, u, \bar{u})) \psi(y) \Big). 
\end{equation}
The boundary conditions at $y \to \pm \infty$ must be chosen in accordance with the states $\alpha$ and $\beta$, as usual. As discussed in Appendix \ref{reviewscatter}, for the matrix elements to be honest holomorphic functions of $u$, the boundary conditions must be chosen to vary over the $u$-plane according to the $\overline{D}_u$ operator entering the monopole solution. This is suppressed here for conciseness. 

The fermionic description, while nice, may seem a bit arbitrary. The string theory arguments in Section \ref{section:string} give a more sophisticated rationale for this choice; for now we will simply use the fermions and deduce the consequences. 

\subsubsection{Scattering matrix: path integral description}
By these considerations, it is straightforward to write a definition, at the level of functional integrals, for the instanton scattering matrix. To generalize from monopoles on $\mathbb{R}^3$ to instantons on $\mathbb{R}^3 \times S^1$, we should make the replacement 
\begin{equation}
    \partial_y + A_y + i \phi \to \partial_y + i \partial_\theta + A_y + i A_\theta
\end{equation} 
and promote the fermionic action to its natural two-dimensional analog: the fermionic fields become functions of the $\theta$ coordinate $\psi(y, \theta)$, $\widetilde{\psi}(y, \theta)$, and we have the path integral 
\begin{equation}
    \int D \widetilde{\psi} D \psi \exp ( - \int_{-\infty}^{+ \infty} \int_0^{2\pi} dy d\theta \widetilde{\psi}(\partial_y + i \partial_\theta + A_y + i A_\theta) \psi)
\end{equation}
with it understood that $A_y$ and $A_\theta$ are evaluated at $u = \text{const}$. If we introduce the $\mathbb{C}^\times$-valued coordinate $z = e^{y + i \theta}$ then the action above reduces to the standard action of gauged chiral fermions
\begin{equation}
    S = \int_{\mathbb{C}^\times_u} \widetilde{\psi} \overline{D}_z \psi
\end{equation}
and we have coordinates $(u, z) \in \mathbb{C} \times \mathbb{C}^\times \simeq \mathbb{R}^3 \times S^1$; $\mathbb{C}^\times_u$ denotes a slice of constant $u$. A matrix element of the instanton scattering matrix is then expressed as a functional determinant
\begin{equation}
    \det \overline{D}_z
\end{equation}
and which matrix element is thus obtained is determined by making a choice of boundary condition at $z \to 0$ and $z \to \infty$. 

Note that if we had instead written the Wilson line in the monopole case via a functional integral over bosonic degrees of freedom valued in the flag variety (as in \cite{qftias}, pg. 1218-19), we would have instead found for the two-dimensional theory a chiral gauged WZW model at level $k = 1$. The relation between the chiral WZW model and multicomponent free fermions is the nonabelian bosonization of \cite{witten84}. 

The rest of this section will be devoted to explaining how to make this functional determinant less formal, and obtain precise definitions and formulas for the instanton scattering matrices. What makes this possible, and gives the theory a considerable amount of additional structure relative to its monopole analog, is that the relevant two-dimensional free fermion theory is \textit{holomorphic}. This phenomenon has no analog in one dimension, where the fermions depend on $y$ only. 

\subsubsection{Two more deformations}
To simplify the program outlined in the last section, it is convenient to make two additional deformations to the problem. First, instead of considering instantons on $\mathbb{R}^3 \times S^1 \simeq \mathbb{C} \times \mathbb{C}^\times$, we will consider instantons on $\mathbb{R}^4 \simeq \mathbb{C} \times \mathbb{C}$. This choice is made for technical convenience and will turn out to be relatively inessential; owing to the underlying conformal symmetry of two-dimensional chiral fermions, it is not difficult to pass back and forth between $\mathbb{C}$ and $\mathbb{C}^\times$ once the problem has been set up correctly. The coordinates on $\mathbb{C}^2$ are denoted $(z, w)$; the $w$-coordinate was referred to as $u$ in earlier sections of this paper. In the problem on $T^* \mathbb{C}$, we will choose the fermions to be supported on the curve given by $w = 0$. In principle, they can be supported on any holomorphic curve, and an alternative choice will be considered in Section \ref{section:newRmatrix} after motivating the construction in Section \ref{section:string}. 

The second deformation is much more drastic: instead of studying instantons on $\mathbb{C}^2$, we will work with \textit{noncommutative} instantons on the noncommutative cotangent bundle $T^*_\hbar \mathbb{C}$ \cite{Nekrasov_1998}. It would take us too far afield to describe noncommutative instantons in any depth here, see e.g. \cite{Nekrasov_2001} for an introduction; the essentials are described as follows. Noncommutative $\mathbb{R}^4$ is described by writing $z = x^1 + i x^2$, $w = x^3 + i x^4$ and declaring that the algebra of smooth functions on noncommutative $\mathbb{R}^4$ is (an appropriate completion of the algebra) generated by $x^i$ satisfying $\comm{x^i}{x^j} = i \theta^{ij}$, where $\theta^{ij}$ are related to the parameter $\hbar$ of this paper and its complex conjugate by insisting $\comm{w}{z} = \hbar$. 

There is a theory of vector bundles, connections and curvature on noncommutative $\mathbb{R}^4$, and so the instanton equations $F_A^+ = 0$ make sense and one may study their solutions. Just as one may think about an ordinary instanton in terms of a tuple of functions $A_i(x)$ solving first order nonlinear PDEs dictated by the anti-self-duality of the curvature of $A = A_i(x) dx^i$, one may think of a noncommutative instanton as a solution to formally the same equations \textit{in operators}, i.e. expressed via the $x^i$ solving $\comm{x^i}{x^j} = i \theta^{ij}$; \cite{Nekrasov_1998}, \cite{Nekrasov_2001} write them as operators in a Fock module for the creation/annihilation operator algebra generated by the $x^i$. The Wick symbols of these operators solve a nonlocal deformation of the usual instanton equations, in which all products are replaced by Moyal products depending on $\theta^{ij}$. 

The insight of \cite{Nekrasov_1998} was that, in spite of the fact that the noncommutative deformation is rather drastic and nonlocal, the ADHM construction \cite{adhm} of instantons goes through with minimal changes. As a complex manifold, the moduli space of noncommutative instantons of rank $r$ and charge $n$ is the variety $\widetilde{M}(n, r)$ from the introduction; the only impact of noncommutativity is shifting the level of the moment map. 

To describe the coupling of the fermion fields to the noncommutative instanton connection is, in principle, straightforward. This is because $A_{\bar{z}}(z, w, \bar{z}, \bar{w})$ is now a function of operators satisfying $\comm{w}{z} = \hbar$. The correct coupling of chiral fermions supported on the curve $w = 0$ to the noncommutative instanton is captured by the action 
\begin{equation}
    S = \int_{\mathbb{C}} \widetilde{\psi} \overline{D}_z \psi = \int d^2 z \widetilde{\psi}(z, \bar{z}) \Bigg(\overline{\partial}_z + A_{\bar{z}}(z, w, \bar{z}, \bar{w}) \eval_{w = \hbar \partial_z} \Bigg) \psi(z, \bar{z}).
\end{equation}
Very concretely, one can think of this then as a system of conventional free fermions in two dimensions, albeit with a nonstandard and nonlocal kinetic term. The functional determinant $\det \overline{D}_z$ is then at least in principle an object that could be computed in the analytic setting, though this will not be pursued in this paper. 

Instead, we will determine the resulting fermion path integral by more algebraic means. Our results will not depend on doing any noncommutative differential geometry; it will suffice to understand the instanton bundle $\mathscr{E}$ as a ``holomorphic'' bundle on $T^*_\hbar \mathbb{C}$, i.e. as a right $\mathscr{D}_\hbar$-module. We will get completely explicit formulas for $\det \overline{D}_z$, with a certain choice of normalization, as holomorphic functions of the ADHM data $(B_1, B_2, I, J)$. If one chooses the boundary condition at $z \to \infty$ to be the standard one in which the fermions are required to decay sufficiently rapidly, the formula is strikingly simple:
\begin{equation}
    \det \overline{D}_z = \det B_2.
\end{equation}
This can be thought of as an instance of ``reciprocity'' in ADHM/Nahm transforms, see \cite{corrigangoddard}.

\subsection{Determination of scattering matrix} \label{sect:Smatrix1stpass}
With the setup finally understood, we are in a position to give a precise definition and determination of the instanton scattering matrix. Our goal is to understand the functional determinant $\det \overline{D}_z$ of the Cauchy-Riemann operator in the background of a general noncommutative instanton, as we vary the choice of boundary condition at $z \to \infty$. At first sight, this seems rather hopeless but there is fortunately a rigorous algebra-geometric theory of such determinants based on the study of determinant line bundles on certain infinite Grassmannians. This is a classical subject in mathematical physics which we review in Appendix \ref{fermionreview}.

\subsubsection{Determinants and infinite Grassmannians}
Assuming familiarity with the discussion in Appendix \ref{fermionreview}, the setup to define and compute $\det \overline{D}_z$ is the following. We study the vector space $\mathbb{W}$ of solutions to the linear equation $\overline{D}_z \psi = 0$, and considering their $z \to \infty$ expansions gives an inclusion $\mathbb{W} \subset \mathbb{C}^r((z^{-1})) = \mathscr{H}$, which (if the operator $\overline{D}_z$ is reasonable) defines a point in Sato's Grassmannian $\text{Gr}(\mathscr{H})$\footnote{This Grassmannian was called $\text{Gr}^*(\mathscr{H})$ in Appendix \ref{fermionreview}. Also in Appendix \ref{fermionreview} the discussion was carried out for $r = 1$, but the higher rank generalization is straightforward.}. This point has Pl\"{u}cker coordinates given by a state $\ket{\mathbb{W}}$ in the projectivization of the fermionic Fock space $\mathbb{P}(\mathscr{F})$. As explained in \cite{witten} and reviewed in Appendix \ref{fermionreview}, this state $\ket{\mathbb{W}}$ is to be thought of as the state prepared by the path integral of chiral fermions on a surface with a single $S^1$ boundary component, in accordance with general principles of quantum field theory. 

Choices of boundary condition at $z \to \infty$ are labeled by dual states $\bra{\alpha}$, and numerical determinants with given boundary conditions are the overlaps $\braket{\alpha}{\mathbb{W}}$ (i.e. certain linear combinations of the Pl\"{u}cker coordinates); thus to understand the determinant it suffices to describe the state $\ket{\mathbb{W}}$. Perhaps surprisingly, even for the complicated $\overline{D}_z$ operator arising from noncommutative instantons, the space $\mathbb{W}$ and the map $\mathbb{W} \to \mathscr{H}$ may nonetheless be described quite explicitly, and this is the key insight allowing for a description of $\det \overline{D}_z$. 

\subsubsection{The bundle $\mathbb{W}$ of classical solutions}
By the above discussion, the determinant $\det \overline{D}_z$ becomes well-defined as soon as we have a description of the space $\mathbb{W}$ of solutions to $\overline{D}_z \psi = 0$ (and a suitably nice map $\mathbb{W} \hookrightarrow \mathscr{H}$, which we return to momentarily). In our situation, $\mathbb{W}$ varies as the fiber of a vector bundle (of infinite rank) over the instanton moduli space $\widetilde{M}(r)$, which by abuse of notation we also denote by $\mathbb{W}$. In other words, the construction we give below is to be read in the universal family over $\widetilde{M}(r)$; we will never have use for taking the fiber at any point.

In ordinary (commutative) complex geometry, the meaning of the equation $\overline{D}_z \psi = 0$ is clear. The instanton bundle $\mathscr{E}$ is in particular a holomorphic bundle over $T^* \mathbb{C}$, so its sheaf of sections (which we again denote by $\mathscr{E}$) is a locally free sheaf of $\mathscr{O}_{T^* \mathbb{C}}$-modules. The equation $\overline{D}_z \psi  = 0$ just says that $\psi$ is a holomorphic section of the restriction of the bundle $\mathscr{E} \eval_{w = 0}$ (we recall that $\psi$ is supported on the holomorphic curve $\{ w = 0 \} \subset T^* \mathbb{C}$), i.e. $\psi \in H^0(\mathbb{C}, \mathscr{E} \eval_{w = 0})$. 

Restriction of the bundle $\mathscr{E} \eval_{w = 0}$ is described algebraically by the tensor product $\mathscr{E} \otimes_{\mathscr{O}_{T^* \mathbb{C}}} \mathscr{O}_{\mathbb{C}}$ where $\mathscr{O}_{\mathbb{C}}$ denotes the structure sheaf of the zero section $\mathbb{C} \subset T^* \mathbb{C}$; since $\mathbb{C}$ is affine the space of sections $H^0(\mathbb{C}, \mathscr{E} \eval_{w = 0})$ is just the vector space underlying $\mathscr{E} \otimes_{\mathscr{O}_{T^* \mathbb{C}}} \mathscr{O}_{\mathbb{C}}$. In the noncommutative setting over $T^*_\hbar \mathbb{C}$, $\mathscr{E}$ is understood as a right $\mathscr{D}_\hbar$-module. We have the left $\mathscr{D}_\hbar$ module 
\begin{equation}
    \widehat{\mathscr{O}}_{\mathbb{C}}^\ell := \mathscr{D}_\hbar / \mathscr{D}_\hbar \cdot w
\end{equation}
and it is clear that we may take the following, for algebra-geometric purposes, as a definition: 
\begin{equation} \label{Wtensorprod}
    \mathbb{W} := \mathscr{E} \otimes_{\mathscr{D}_\hbar} \widehat{\mathscr{O}}^\ell_{\mathbb{C}}.
\end{equation}
This is just the obvious analog of the tensor product description of the space of solutions to $\overline{D}_z \psi = 0$ in the commutative setting; the tensor product is over $\mathscr{D}_\hbar$ regarded as a bimodule over itself. It is interesting to give an alternative description of $\mathbb{W}$ using only right $\mathscr{D}_\hbar$-modules. The right $\mathscr{D}_\hbar$ module version of $\widehat{\mathscr{O}}^\ell_{\mathbb{C}}$ has, of course, a free resolution 
\begin{equation}
\begin{tikzcd}
    0 \arrow[r] & \mathscr{D}_\hbar \arrow[r, " w \cdot - " ] & \mathscr{D}_\hbar \arrow[r] & \widehat{\mathscr{O}}_{\mathbb{C}} \arrow[r] &  0
\end{tikzcd}
\end{equation}
where the first nonzero map is left multiplication by $w$. Using this free resolution to compute $\text{Ext}^\bullet$, we have the following elementary observation: 
\begin{equation}
    \mathbb{W} \simeq \text{Ext}^1_{\mathscr{D}_\hbar}(\widehat{\mathscr{O}}_{\mathbb{C}}, \mathscr{E}). 
\end{equation}
In this description, it is clear that $\mathbb{W}$ is indeed a rather natural vector bundle over $\widetilde{M}(r)$ (in Proposition \ref{extgroups} it is shown that this $\text{Ext}^\bullet$ is concentrated in degree 1). Readers familiar with the $B$-model of topological string theory may recognize that $\mathbb{W}$ has a natural interpretation as a space of open string states there; this is no coincidence as will be explained in Section \ref{section:string} and is part of the general connection between noncommutative gauge theory and open string field theory. 

\subsubsection{$\mathbb{W}$ as a map to $\textnormal{Gr}(\mathscr{H})$}
Now we want to show that $\mathbb{W}$ gives rise to a map from the instanton moduli space to Sato's Grassmannian $\text{Gr}(\mathscr{H})$ of subspaces in 
\begin{equation}
    \mathscr{H} = W(( z^{-1})).
\end{equation}
$W$ is the $r$-dimensional framing space entering the ADHM description of $\widetilde{M}( r)$. To show this, it suffices to construct an inclusion of vector bundles 
\begin{equation}
    \mathbb{W} \hookrightarrow \mathscr{W}((z^{-1}))
\end{equation}
and prove that the induced projection $\mathbb{W} \to \mathscr{H}_+ \otimes \mathscr{O}_{\widetilde{M}(r)}$ has finite rank kernel and cokernel (here $\mathscr{H}_+$ denotes $W[z]$). For readers following along with the differential-geometric/gauge theoretic description, that $\mathbb{W}$ gives rise to a map to $\text{Gr}(\mathscr{H})$ is essentially an algebraic incarnation of an ellipticity property of the Dirac operator $\overline{D}_z$ (which is rather nontrivial in the noncommutative setting), see \cite{witten}. 

To write down the inclusion we invoke Proposition \ref{extgroups}, which is a straightforward analysis of the ADHM description of $\mathscr{E}$. It describes $\mathbb{W} \simeq \text{Ext}^1_{\mathscr{D}_\hbar}(\widehat{\mathscr{O}}_{\mathbb{C}}, \mathscr{E})$ as the middle cohomology of the complex
\begin{equation} 
\begin{tikzcd}
    \mathscr{V}[z] \arrow[r, "\alpha"] & \mathscr{V}[z] \oplus \mathscr{V}[z] \oplus \mathscr{W}[z] \arrow[r, "\beta"] & \mathscr{V}[z]
\end{tikzcd}
\end{equation}
with maps (ADHM matrices are interpreted as the corresponding tautological sections)
\begin{equation}
\begin{split}
    \alpha & = \begin{pmatrix} B_1 - z \\ B_2 - \hbar \partial_z \\ J \end{pmatrix} \\
    \beta & = \begin{pmatrix} -B_2 + \hbar \partial_z && B_1 - z && I \end{pmatrix}.
\end{split}
\end{equation}
Then denoting any local section of $\mathbb{W}$ by a triple $(\psi_+(z), \psi_-(z), \xi(z))$ in $\ker \beta / \text{im} \, \alpha$, it is clear there is essentially only one well-defined natural map $\mathbb{W} \to \mathscr{W}((z^{-1}))$:
\begin{equation} \label{Winclude}
    (\psi_+(z), \psi_-(z), \xi(z)) \mapsto f(z) = J \frac{1}{z - B_1} \psi_+(z) + \xi(z) = \sum_{n \geq 0} J B_1^n \psi_+(z) z^{- n - 1} + \xi(z).
\end{equation}

\begin{prop} \label{maptoGr}
The map $\textnormal{Ext}^1_{\mathscr{D}_\hbar}(\widehat{\mathscr{O}}_{\mathbb{C}}, \mathscr{E}) \simeq \mathbb{W} \to \mathscr{W}((z^{-1}))$ written in \eqref{Winclude} is an inclusion of vector bundles, moreover we have (via Proposition \ref{explicitext}) isomorphisms
\begin{equation}
\begin{split}
    \ker(\mathbb{W} \to \mathscr{H}_+ \otimes \mathscr{O}_{\widetilde{M}(r)}) & \simeq \ker(B_2) \\
    \textnormal{coker}(\mathbb{W} \to \mathscr{H}_+ \otimes \mathscr{O}_{\widetilde{M}(r)}) & \simeq \textnormal{coker}(B_2)
\end{split}
\end{equation}
of sheaves over $\widetilde{M}(r)$, where $B_2: \mathscr{V} \to \mathscr{V}$ is the tautological section corresponding to the ADHM matrix. 
\end{prop}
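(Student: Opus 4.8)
The plan is to work entirely within the monad description of $\mathbb{W}\simeq\operatorname{Ext}^1_{\mathscr{D}_\hbar}(\widehat{\mathscr{O}}_{\mathbb{C}},\mathscr{E})$ furnished by Proposition \ref{extgroups}, representing a local section by a triple $(\psi_+,\psi_-,\xi)\in\ker\beta$ modulo $\operatorname{im}\alpha$. First I would record that the formula \eqref{Winclude} descends to classes: substituting $(\psi_+,\psi_-,\xi)=\alpha(v)$ gives $J(z-B_1)^{-1}(B_1-z)v+Jv=-Jv+Jv=0$, so $f$ is well defined on $\mathbb{W}$. Next I would put a class into normal form: subtracting multiples of $\alpha(v)$ lowers the $z$-degree of $\psi_+$ one step at a time, so every class has a representative with $\psi_+\in\mathscr{V}$ constant in $z$, and this representative is unique because $(B_1-z)v$ has $z$-degree $\geq\deg v+1$ and hence cannot be a nonzero constant (this is the explicit model of Proposition \ref{explicitext}). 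For such a representative the first term of $f$ becomes $\sum_{n\geq 0}JB_1^n\psi_+\,z^{-n-1}$, which is purely negative in $z$; thus the projection $\mathbb{W}\to\mathscr{H}_+=\mathscr{W}[z]$ is simply $(\psi_+,\psi_-,\xi)\mapsto\xi$, while the negative part is recorded by $\psi_+$.

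With this model the kernel and cokernel computations become linear algebra over the base. The constraint $\beta=0$ in normal form reads $(z-B_1)\psi_-=-B_2\psi_++I\xi$; since $\mathscr{V}[z]/(z-B_1)\mathscr{V}[z]\simeq\mathscr{V}$ via the evaluation $g(z)\mapsto\operatorname{ev}_{B_1}(g):=\sum_k B_1^k g_k$, this equation is solvable for $\psi_-$ exactly when $B_2\psi_+=\operatorname{ev}_{B_1}(I\xi)$. Taking $\xi=0$ gives $\ker(\mathbb{W}\to\mathscr{H}_+)=\{(\psi_+,0,0):B_2\psi_+=0\}\simeq\ker(B_2)$ (solvability also forces $\psi_-=0$, as $(z-B_1)$ is injective on $\mathscr{V}[z]$). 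For the cokernel, the image of $\mathbb{W}$ in $\mathscr{H}_+$ is precisely $\{\xi:\operatorname{ev}_{B_1}(I\xi)\in\operatorname{im}B_2\}$, so composing with $\Theta\colon\xi\mapsto\operatorname{ev}_{B_1}(I\xi)$ identifies $\operatorname{coker}(\mathbb{W}\to\mathscr{H}_+)$ with $(U_0+\operatorname{im}B_2)/\operatorname{im}B_2$, where $U_0:=\operatorname{im}\Theta=\sum_{k\geq 0}B_1^k\operatorname{im}(I)$.

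The whole proposition therefore reduces to two vanishing statements: $\operatorname{coker}(\mathbb{W}\to\mathscr{H}_+)\simeq\operatorname{coker}(B_2)$ holds iff $U_0+\operatorname{im}B_2=\mathscr{V}$, and injectivity of $\mathbb{W}\hookrightarrow\mathscr{H}$ holds iff $S:=\ker(B_2)\cap\bigcap_{n\geq 0}\ker(JB_1^n)=0$. This is the step I expect to be the main obstacle, and it is where $\hbar\neq 0$ enters decisively: the naive attempt to show $U_0+\operatorname{im}B_2$ is $B_1$-invariant fails, because the relation $B_1B_2=B_2B_1+\hbar-IJ$ leaks a term $\hbar\mathscr{V}=\mathscr{V}$. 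The fix is to dualize and use the stability/costability built into Lemma \ref{stabandcostab}. For the cokernel, set $\Lambda:=(U_0+\operatorname{im}B_2)^\perp\subseteq\mathscr{V}^*$; then $B_2^*\Lambda=0$ and $I^*(B_1^*)^k\Lambda=0$ for all $k$, and the transposed ADHM relation $[B_2^*,B_1^*]+J^*I^*=\hbar$ gives the clean identity $B_2^*(B_1^*)^k\xi=\hbar\,k\,(B_1^*)^{k-1}\xi$ for $\xi\in\Lambda$. Hence $T:=\sum_{k\geq 0}(B_1^*)^k\Lambda$ is $B_1^*$- and $B_2^*$-invariant and contained in $\ker(I^*)$, and dualized stability forces $T=0$, so $\Lambda=0$ and $U_0+\operatorname{im}B_2=\mathscr{V}$.

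The injectivity statement $S=0$ is the mirror computation: for $v\in S$ the cyclic space $\operatorname{span}\{B_1^k v\}$ lies in $\ker J$, and the same identity $B_2B_1^kv=-\hbar\,k\,B_1^{k-1}v$ shows it is $B_2$-invariant, so costability forces $v=0$. Finally, since $\ker(B_2)$ and $\operatorname{coker}(B_2)$ are coherent of finite rank and $\mathbb{W}$ is locally free by Proposition \ref{extgroups}, the finite index of the projection to $\mathscr{H}_+$ exhibits $\mathbb{W}$ as an algebraic family of points of $\operatorname{Gr}(\mathscr{H})$, which is the content of the assertion that \eqref{Winclude} is an inclusion of vector bundles.
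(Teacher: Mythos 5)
Your proposal is correct, and for most of the proposition it follows the paper's own route: the reduction to the normal form $(\xi,\psi_+)$ with $\psi_+$ constant is exactly Proposition \ref{explicitext}, the identification $\ker(\mathbb{W}\to\mathscr{H}_+)\simeq\ker(B_2)$ is the same one-line observation, and your injectivity argument (the cyclic span $\mathrm{span}\{B_1^k v\}$ lies in $\ker J$, is $B_2$-invariant via $B_2B_1^kv=-\hbar k B_1^{k-1}v$, hence vanishes by costability) is verbatim the paper's. Where you genuinely diverge is the cokernel. The paper invokes Lemma \ref{stabilitybasis} ($V=\mathrm{Span}(B_2^mB_1^nI(e_i))$): reducing modulo $\operatorname{im}B_2$ kills every term with $m\geq 1$, so the map $\xi\mapsto\xi(B_1)I$ hits all of $\operatorname{coker}(B_2)$ immediately. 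You instead pass to the annihilator $\Lambda=(U_0+\operatorname{im}B_2)^\perp$, prove the dual identity $B_2^*(B_1^*)^k\xi=\hbar k(B_1^*)^{k-1}\xi$ for $\xi\in\Lambda$ (the mirror of the identity used in the injectivity step), and apply the dualized stability condition to force $\Lambda=0$. I checked the transposed moment-map relation and the induction, and they are correct; your observation that naive $B_1$-invariance of $U_0+\operatorname{im}B_2$ fails because of the leaked $\hbar$-term is also accurate, and it is precisely the difficulty that both Lemma \ref{stabilitybasis} and your dualization are designed to absorb. The two arguments rest on the same input (the stability half of Lemma \ref{stabandcostab}); the paper's is more constructive and reusable (Lemma \ref{stabilitybasis} is quoted again elsewhere), while yours is self-contained and makes the symmetry with the costability/injectivity step explicit — the two halves of the proposition become literally dual computations. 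One cosmetic point: your closing sentence conflates "inclusion of vector bundles" with "defines a family of points of $\mathrm{Gr}(\mathscr{H})$"; the assertion in the proposition is just fiberwise injectivity of $\mathbb{W}\to\mathscr{W}((z^{-1}))$, which your $S=0$ argument already supplies, so the appeal to finite index there is unnecessary (it is what the subsequent corollary about mapping to the index-zero component uses).
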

Note that a corollary of this proposition is that $\mathbb{W}$ defines a map from $\widetilde{M}(n, r)$ to the index zero connected component of $\text{Gr}(\mathscr{H})$. For the proof, we will need various technical results from Appendix \ref{adhmdetails}. 

In particular we need Proposition \ref{explicitext}, which says informally that we may pick a gauge fixing condition for the quotient by $\text{im} \,  \alpha$ by imposing $\psi_+ = \text{constant}$, and in such a gauge $\psi_-(z)$ is uniquely determined by the equation $\beta(\psi_+, \psi_-(z), \xi(z)) = 0$. 

\begin{proof}
The proof will take a few steps. Throughout, we will use the description of $\mathbb{W}$ from Proposition \ref{explicitext}. In this description, the map $\mathbb{W} \to \mathscr{W}((z^{-1}))$ is (notations as in Proposition \ref{explicitext}) 
\begin{equation}
    (\xi(z), \psi_+) \mapsto f(z) = J \frac{1}{z - B_1} \psi_+ + \xi(z) = \sum_{n \geq 0} J B_1^n \psi_+ z^{- n - 1} + \xi(z).
\end{equation}
To show that this map is an inclusion of vector bundles, we need to show that the induced map in each fiber is injective.

To see the injectivity, note that $f(z) = 0$ if and only if $\xi(z) = 0$ and $JB_1^n \psi_+ = 0$ for all $n \geq 0$. Since we have $\xi(B_1) I = B_2 \psi_+$, $\xi(z) = 0$ implies $B_2 \psi_+ = 0$. 

Let $S = \text{Span}(B_1^n \psi_+)_{n \geq 0} \subseteq V$; clearly $S \subseteq \ker J$ and $B_1$ preserves $S$. $B_2$ also preserves $S$, because 
\begin{equation}
\begin{split}
    B_2 B_1^n \psi_+ & = \sum_{k = 0}^{n - 1} B_1^k \comm{B_2}{B_1} B_1^{n - k - 1} \psi_+ + B_1^n B_2 \psi_+ \\
    & = - \hbar n B_1^{n - 1} \psi_+ + \sum_{k = 0}^{n - 1} B_1^k IJ B_1^{n - k - 1} \psi_+ = -\hbar n B_1^{n - 1}\psi_+.
\end{split}
\end{equation}
By Lemma \ref{stabandcostab} we must have $S = 0$, therefore $\psi_+ = 0$ establishing the injectivity. 

In the description of Proposition \ref{explicitext}, the map $\mathbb{W} \to \mathscr{H}_+$ is $(\xi(z), \psi_+) \mapsto \xi(z)$. Thus $\ker(\mathbb{W} \to \mathscr{W}[z]) \simeq \ker(B_2)$, because the pair $(\xi(z), \psi_+)$ must solve $\xi(B_1) I = B_2 \psi_+$. 

Finally, observe that we have a morphism of sheaves $\text{coker}(\mathbb{W} \to \mathscr{W}[z]) \to \text{coker}(B_2)$ induced by the assignment $$W[z] \ni p(z) \mapsto p(B_1) I \in \text{coker}(B_2),$$ and the induced morphism is well-defined and injective by Proposition \ref{explicitext}. To see surjectivity, invoke Lemma \ref{stabilitybasis} to conclude any $\chi \in \text{coker}(B_2)$ may be written as (notations as in Lemma \ref{stabilitybasis})
\begin{equation}
    \chi = \sum_{n, i} c_{n, i} B_1^n I(e_i) = p(B_1) I 
\end{equation}
for $p(z) = \sum_{n, i} c_{n, i} e_i z^n$. 
\end{proof}
Denote the morphism constructed in Proposition \ref{maptoGr} as 
\begin{equation}
    F: \widetilde{M}(r) \to \text{Gr}(\mathscr{H}). 
\end{equation}
Composing with the Pl\"{u}cker embedding we get a map 
\begin{equation}
    \widetilde{M}(r) \to \mathbb{P}(\mathscr{F})
\end{equation}
denoted by $\ket{\mathbb{W}}$. The components of $\ket{\mathbb{W}}$ in the standard basis of $\mathscr{F}$ are the Pl\"{u}cker coordinates of $\mathbb{W}$, and vary as sections of the line bundle $F^* \text{DET}^*$ over the instanton moduli space, where $\text{DET}^* \to \text{Gr}(\mathscr{H})$ is the tautological determinant line. 
\begin{prop} \label{vaccomponent}
    $\ket{\mathbb{W}}$ lifts to a map $\widetilde{M}(r) \to \mathscr{F}$, and there is a unique choice of lift satisfying $\braket{0}{\mathbb{W}} = \det(B_2)$. 
\end{prop}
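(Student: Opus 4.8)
The plan is to prove both assertions at once by identifying the line bundle whose sections the Plücker coordinates of $\ket{\mathbb{W}}$ are, and then exhibiting a canonical nowhere-vanishing trivialization of it built directly from the ADHM data. Recall that a lift of the projective map $\ket{\mathbb{W}}:\widetilde{M}(r)\to\mathbb{P}(\mathscr{F})$ to an honest map $\widetilde{M}(r)\to\mathscr{F}$ is precisely the datum of a trivialization of $F^*\text{DET}^*$: such a trivialization converts each section-valued Plücker coordinate $\braket{\alpha}{\mathbb{W}}$ into a regular function, and these assemble into the desired map. So the first task is to produce such a trivialization (this is the existence of a lift), and the second is to compute the vacuum coordinate $\braket{0}{\mathbb{W}}$ in it.

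For the trivialization I would use the standard description (Appendix \ref{fermionreview}) of the fiber of $\text{DET}^*$ over a point $\mathbb{W}\in\text{Gr}(\mathscr{H})$ as the determinant line of the Fredholm projection $\pi_+:\mathbb{W}\to\mathscr{H}_+$, namely $\det(\ker\pi_+)^*\otimes\det(\operatorname{coker}\pi_+)$. Proposition \ref{maptoGr} identifies $\ker\pi_+\simeq\ker(B_2)$ and $\operatorname{coker}\pi_+\simeq\operatorname{coker}(B_2)$ as sheaves over $\widetilde{M}(r)$, so $F^*\text{DET}^*$ is canonically the determinant line of the two-term complex $[\mathscr{V}\xrightarrow{B_2}\mathscr{V}]$. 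But the determinant of any complex of the shape $[\mathscr{V}\xrightarrow{d}\mathscr{V}]$ with equal ranks is canonically $\det(\mathscr{V})^{-1}\otimes\det(\mathscr{V})\simeq\mathscr{O}_{\widetilde{M}(r)}$, a trivial line bundle, and this identification is nowhere-vanishing and algebraic over the whole moduli space (in particular where $B_2$ degenerates). This produces the lift.

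To compute the vacuum coordinate I would invoke that the assignment $\mathbb{W}\mapsto\braket{0}{\mathbb{W}}$ is exactly the canonical ``$\tau$-function'' section of $\text{DET}^*$, whose value at $\mathbb{W}$ is the natural comparison element of the determinant line of $\pi_+$, vanishing precisely off the big cell where $\pi_+$ fails to be an isomorphism. Transporting this section through the identification of the previous paragraph, determinant-of-complex functoriality sends it to the canonical section of $\det[\mathscr{V}\xrightarrow{B_2}\mathscr{V}]$, which in the trivialization $\det(\mathscr{V})^{-1}\otimes\det(\mathscr{V})\simeq\mathscr{O}$ is nothing but the function $\det(B_2)$. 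It suffices to check this on the open locus where $B_2$ is invertible: there $\pi_+$ is an isomorphism by Proposition \ref{maptoGr}, the acyclic complex $[\mathscr{V}\xrightarrow{B_2}\mathscr{V}]$ trivializes its determinant by $\det(B_2)$, and matching the two resulting nowhere-zero trivializations produces exactly the scalar $\det(B_2)$; both sides then extend across the degeneration locus by regularity. I expect this bookkeeping---pinning down the signs and normalization conventions in the determinant-line functoriality so that the answer is $\det(B_2)$ on the nose rather than a fixed unit multiple---to be the one genuinely delicate step, and it is where the precise conventions of Appendix \ref{fermionreview} must be used.

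Finally, for uniqueness: any two lifts of $\ket{\mathbb{W}}$ differ, on each connected component $\widetilde{M}(n,r)$, by multiplication by a global unit $g\in\mathscr{O}(\widetilde{M}(n,r))^\times$, under which $\braket{0}{\mathbb{W}}$ scales to $g\cdot\det(B_2)$. Since $\widetilde{M}(n,r)$ is a smooth (hence reduced and irreducible) affine variety and $B_2$ is invertible at a generic point---equivalently $F$ lands generically in the big cell, so that $\det(B_2)$ is a nonzero, hence nonzerodivisor, regular function---the equation $g\cdot\det(B_2)=\det(B_2)$ forces $g=1$. Thus the normalization $\braket{0}{\mathbb{W}}=\det(B_2)$ determines the lift uniquely.
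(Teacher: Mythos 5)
Your strategy is viable and, if completed, proves something marginally sharper than the paper's own argument, but it proceeds by a different mechanism and one key step is asserted rather than proved. The paper's proof is divisor-theoretic: it notes that $\braket{0}{\mathbb{W}}$ is a section of $F^*\text{DET}^*$ whose zero locus is, by Proposition \ref{maptoGr}, exactly the zero locus of the global function $\det(B_2)$, concludes triviality of the line bundle from that, and then normalizes. You instead trivialize $F^*\text{DET}^*$ first, by identifying it with $\det[\mathscr{V}\xrightarrow{B_2}\mathscr{V}]\simeq\mathscr{O}_{\widetilde{M}(r)}$, and only afterwards compute the vacuum coordinate by transport. What your route buys, when carried out in full, is equality of \emph{divisors with multiplicities} between $\braket{0}{\mathbb{W}}$ and $\det(B_2)$ (the paper's proof silently conflates zero sets with divisors), and your uniqueness step (a unit $g$ with $g\det(B_2)=\det(B_2)$ must equal $1$ because $\det(B_2)$ is a nonzerodivisor) is more explicit than the paper's.

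The gap is in the inference ``Proposition \ref{maptoGr} identifies $\ker\pi_+\simeq\ker(B_2)$ and $\text{coker}\,\pi_+\simeq\text{coker}(B_2)$ as sheaves, \emph{so} $F^*\text{DET}^*$ is canonically the determinant line of $[\mathscr{V}\xrightarrow{B_2}\mathscr{V}]$.'' Abstract isomorphisms between the cohomology sheaves of two two-term complexes do not by themselves induce an isomorphism of determinant lines, canonical or otherwise: a two-term complex is not determined up to quasi-isomorphism by its cohomology sheaves, and determinant-line functoriality requires an actual quasi-isomorphism (or, for bare existence of a trivialization, equality of index classes in $K$-theory). Since your computation of $\braket{0}{\mathbb{W}}$ proceeds by transporting the canonical section through precisely this identification, the identification must be pinned down, not postulated. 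The paper already contains the needed input: Proposition \ref{explicitext} gives the exact sequence $0\to\mathbb{W}\to\mathscr{W}[z]\oplus\mathscr{V}\to\mathscr{V}\to 0$ with second map $(\xi(z),\psi_+)\mapsto\xi(B_1)I-B_2\psi_+$, and $\pi_+$ is the restriction to $\mathbb{W}$ of the projection onto $\mathscr{W}[z]$. Hence $[\mathbb{W}\xrightarrow{\pi_+}\mathscr{W}[z]]$ maps quasi-isomorphically into the complex $[\mathscr{W}[z]\oplus\mathscr{V}\to\mathscr{W}[z]\oplus\mathscr{V}]$, $(\xi,\psi_+)\mapsto(\xi,\xi(B_1)I-B_2\psi_+)$, whose degree-one differential has invertible $(1,1)$ entry; Gaussian elimination of the acyclic summand $[\mathscr{W}[z]\xrightarrow{\mathrm{id}}\mathscr{W}[z]]$ then exhibits a quasi-isomorphism with $[\mathscr{V}\xrightarrow{-B_2}\mathscr{V}]$. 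It is this quasi-isomorphism (note the sign, which is part of the bookkeeping you flagged) that legitimizes your identification and the big-cell computation; with it inserted, your argument goes through. One small further slip: smoothness of $\widetilde{M}(n,r)$ gives reducedness, not irreducibility; you need connectedness, or simply that $\det(B_2)\not\equiv 0$ on each connected component, which is what your density argument actually uses.
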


\begin{proof}
    From definitions in Appendix \ref{fermionreview}, $\braket{0}{\mathbb{W}}$ is a section of $F^*\text{DET}^*$ that vanishes precisely where $\ker(\mathbb{W} \to \mathscr{H}_+ \otimes \mathscr{O}_{\widetilde{M}(r)}) \neq 0$. By Proposition \ref{maptoGr}, this is precisely the divisor of the global function $\det(B_2)$, showing that $F^*\text{DET}^*$ is trivial as a line bundle over $\widetilde{M}(r)$, so $\ket{\mathbb{W}}$ lifts to a map to $\mathscr{F}$. Two choices of lift differ by multiplication by a nowhere-vanishing function on $\widetilde{M}(r)$; clearly we may choose the normalization of $\ket{\mathbb{W}}$ so that $\braket{0}{\mathbb{W}} = \det(B_2)$.
\end{proof}

The map $F$ above should be viewed as the assignment of an instanton to its scattering data; the ``scattering data'' is the possible asymptotics at $z \to \infty$ of solutions to $\overline{D}_z \psi = 0$. Proposition \ref{vaccomponent} just expresses a certain resonance effect: the coefficient $\braket{0}{\mathbb{W}}$ vanishes precisely when there exists a solution to $\overline{D}_z \psi = 0$ that goes to $0$ at $z \to \infty$, i.e. a bound state wavefunction. Unsurprisingly, this happens at $\det B_2 = 0$ because the eigenvalues of $B_2$ correspond to the positions of the noncommutative instantons in the $w$-direction, and our fermions are supported on the curve $\{ w = 0 \}$. 

\subsubsection{The $\tau$-function of $\ket{\mathbb{W}}$}
In the previous section we have given a geometric definition of the state $\ket{\mathbb{W}}$ in the fermionic Fock space assigned to an arbitrary noncommutative instanton. In this section and the next, we will give two algebraic characterizations of this state. For simplicity, attention will be restricted to the rank $r = 1$ case; in this case $\widetilde{M}(n, 1) = \text{Hilb}_n(T^*_\hbar \mathbb{C})$ is the Hilbert scheme of points on the noncommutative plane. 

As recalled in Appendix \ref{fermionreview}, any state $\ket{\mathbb{W}}$ capturing the Pl\"{u}cker coordinates of some subspace in $\mathscr{H}$ is characterized by its $\tau$-function, a function of infinitely many variables defined by 
\begin{equation}
    \tau_{\mathbb{W}}(t) := \bra{0} g(t) \ket{\mathbb{W}}
\end{equation}
with 
\begin{equation}
    g(t) = \exp(\sum_{k = 1}^\infty t_k \alpha_k)
\end{equation}
and notations as in Appendix \ref{fermionreview}; all but finitely many of the variables $t_k$ are zero. $g(t)$ defines an action of (half of) the loop group of $GL_1$ (called $\Gamma_+$ in \cite{segal-wilson}) on $\text{Gr}(\mathscr{H})$ via the Pl\"{u}cker embedding. Moreover $\text{DET}^*$ becomes in a natural way a $\Gamma_+$-equivariant line bundle. Composing the map $F$ above with the action map by $g(t)$, we get a new map from $\widetilde{M}(1) \to \text{Gr}(\mathscr{H})$, and denote the corresponding vector bundle over $\widetilde{M}(1)$ by $g(t) \cdot \mathbb{W}$. By construction, $g(t) \ket{\mathbb{W}} = \ket{g(t) \cdot \mathbb{W}}$. 

\begin{prop} \label{taufnformula}
    For $\ket{\mathbb{W}}$ constructed in Propositions \ref{maptoGr} and \ref{vaccomponent} above, we have
    \begin{equation}
        \bra{0} g(t) \ket{\mathbb{W}} = \det\Big( B_2 - \hbar \sum_{n = 1}^\infty nt_n B_1^{n - 1} \Big) . 
    \end{equation}
\end{prop}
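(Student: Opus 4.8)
The plan is to exploit the relation $g(t)\ket{\mathbb{W}} = \ket{g(t)\cdot\mathbb{W}}$, where $g(t)$ acts on $\mathrm{Gr}(\mathscr{H})$ by multiplication by the series $\gamma(z) := \exp(\sum_{n\geq 1} t_n z^n)$, and to recognize the resulting point of the Grassmannian as the image under $F$ of \emph{shifted} ADHM data, to which Proposition \ref{vaccomponent} then applies verbatim. Concretely, multiplying every solution of $\overline{D}_z\psi = 0$ by $\gamma(z)$ amounts to conjugating the operator $\overline{D}_z$ by $\gamma$; since $\gamma$ is a function of $z$ alone, this fixes $z$, $B_1$, $I$, $J$ and sends $w = \hbar\partial_z \mapsto \hbar\partial_z \mp \hbar\,(\log\gamma)'(z) = \hbar\partial_z \mp \hbar\sum_{n\geq 1} n t_n z^{n-1}$. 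First I would make this precise at the level of the complex computing $\mathbb{W}\simeq\mathrm{Ext}^1_{\mathscr{D}_\hbar}(\widehat{\mathscr{O}}_{\mathbb{C}},\mathscr{E})$ used in Proposition \ref{maptoGr}.

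The heart of the argument is the geometric identity
\begin{equation}
  \gamma(z)\cdot\mathbb{W}(B_1,B_2,I,J) \;=\; \mathbb{W}\bigl(B_1,\,B_2 - \hbar\textstyle\sum_{n\geq 1} n t_n B_1^{n-1},\,I,\,J\bigr)
\end{equation}
as points of $\mathrm{Gr}(\mathscr{H})$. What makes this clean is that, in the gauge-fixed description of Proposition \ref{explicitext}, the datum $B_2$ enters the construction of $\mathbb{W}$ \emph{only} through the solvability constraint $\xi(B_1)I = B_2\psi_+$, whereas the embedding \eqref{Winclude} into $\mathscr{H}$ involves only $B_1$, $J$, $\xi$ and $\psi_+$. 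Redoing the kernel analysis of Proposition \ref{maptoGr} with $w$ replaced by the conjugated operator, the polynomial $\sum_n n t_n z^{n-1}$ multiplying $\psi_+$ is evaluated in the very same ``$z\mapsto B_1$'' slot that produced $B_2$, so the constraint becomes $\xi(B_1)I = (B_2 - \hbar\sum_n n t_n B_1^{n-1})\psi_+$, the sign being fixed by the conventions for $\alpha_k$ and $\Gamma_+$ in Appendix \ref{fermionreview}. I would also verify that the shifted quadruple is again a point of $\widetilde{M}(r)$: because $\sum_n n t_n B_1^{n-1}$ commutes with $B_1$, the moment map is preserved, $[B_1,\,B_2-\hbar\sum_n n t_n B_1^{n-1}]+IJ = [B_1,B_2]+IJ = \hbar$.

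Granting the identity, Proposition \ref{vaccomponent} applied at the shifted point yields directly that the vacuum component equals $\det(B_2 - \hbar\sum_n n t_n B_1^{n-1})$. What remains is to confirm that the Fock-space operator $g(t)$, applied to the \emph{normalized} lift $\ket{\mathbb{W}}$ of Proposition \ref{vaccomponent}, reproduces exactly the normalized lift at the shifted point, with no spurious scalar. Here I would use that $g(t)\in\Gamma_+$ acts on a frame by multiplication by $\gamma$, so that its effect on the $\mathscr{H}_+$-projection is a Toeplitz operator with symbol $\gamma = 1 + O(z)$, hence unitriangular of determinant one; combined with the observation that both $\bra{0}g(t)\ket{\mathbb{W}}$ and $\det(B_2-\hbar\sum_n n t_n B_1^{n-1})$ are regular in the ADHM data, polynomial in $t$, carry the same divisor, and agree at $t=0$ by Proposition \ref{vaccomponent}, this forces equality.

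The main obstacle I expect lies in rigorously justifying the geometric identity: the series $\gamma(z)$ has an essential singularity at $z=\infty$ and does not literally preserve $\mathscr{H}=W((z^{-1}))$, so ``multiplication by $\gamma$'' must be read formally, and one must control the rearrangement $\gamma(z)(z-B_1)^{-1} = (z-B_1)^{-1}\gamma(B_1) + (\text{regular in }z)$ together with the $z\mapsto B_1$ evaluation in the constraint. Tied to this is the normalization step, which is exactly where the moment-map relation $[B_1,B_2]+IJ=\hbar$ must enter: a naive frame computation produces a finite-rank correction built from $IJ$, and it is the moment-map equation that collapses this correction to the clean shift by $\hbar\sum_n n t_n B_1^{n-1}$. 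Getting these two formal/combinatorial points right is the substance of the proof; everything else is a direct appeal to Propositions \ref{explicitext}, \ref{maptoGr}, and \ref{vaccomponent}.
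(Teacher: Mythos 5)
Your proposal is correct and is essentially the paper's own proof: the paper likewise identifies $g(t)\cdot\mathbb{W}$ with $\mathrm{Ext}^1_{\mathscr{D}_\hbar}(\widehat{\mathscr{O}}_C(t),\mathscr{E})$ for the deformed curve $P(z,w;t)=w-\hbar\sum_{n\geq 1}nt_nz^{n-1}$ — which is exactly your shifted-ADHM-data identity, since the gauge-fixed description of Proposition \ref{explicitext} sees $B_2$ only through the constraint $\xi(B_1)I=B_2\psi_+$ while the embedding \eqref{Winclude} involves only $B_1$ and $J$ — and then applies the divisor argument of Proposition \ref{vaccomponent} to conclude, asserting the proportionality factor is $1$ by the chosen normalization. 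Your additional worry about pinning down that scalar (Toeplitz/unitriangularity, polynomiality in $t$) is legitimate but goes beyond the level of detail in the paper's proof, which handles it tersely; note the paper's Appendix \ref{adhmdetails} also gives an independent Slater-determinant computation of this $\tau$-function in Miwa variables that settles the normalization by an induction on asymptotics.
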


\begin{proof}
    By the definition of the loop group action on $\text{Gr}(\mathscr{H})$ from Appendix \ref{fermionreview}, the vector bundle $g(t) \cdot \mathbb{W}$ over $\widetilde{M}(1)$ is given by 
    \begin{equation}
        g(t) \cdot \mathbb{W} \simeq \text{Ext}^1_{\mathscr{D}_\hbar}(\widehat{\mathscr{O}}_C(t), \mathscr{E})
    \end{equation}
    where
    \begin{equation}
        \widehat{\mathscr{O}}_C(t) = \mathscr{D}_\hbar/P(z, w; t)\mathscr{D}_\hbar
    \end{equation}
    and $P(z, w; t) = w - \hbar \sum_{n = 1}^\infty nt_n z^{n - 1}$. Adapting the proof of Proposition \ref{explicitext}, we have an embedding of vector bundles 
    \begin{equation}
       \text{Ext}^1_{\mathscr{D}_\hbar}(\widehat{\mathscr{O}}_C(t), \mathscr{E})  \hookrightarrow (\mathscr{O}_{\widetilde{M}(1)} \otimes \mathbb{C}[z]) \oplus \mathscr{V}
    \end{equation}
    as the locus of $(\xi(z), \psi_+)$ such that $\xi(B_1) I = (B_2 - \hbar \sum_{n = 1}^\infty nt_n B_1^{n - 1}) \psi_+$. Arguing as in the proof of Proposition \ref{vaccomponent}, it follows that that $\bra{0}g(t)\ket{\mathbb{W}} = \braket{0}{g(t) \cdot \mathbb{W}}$ is proportional to $\det(B_2 - \hbar \sum_{n = 1}^\infty n t_n B_1^{n - 1})$, and the proportionality factor is $1$ using the normalization from Proposition \ref{vaccomponent}.
\end{proof}

While the main focus of this paper is on the geometric approach, we give a more combinatorial computation of the $\tau$-function in Appendix \ref{adhmdetails}, which some readers may prefer. This $\tau$-function appeared much earlier in a landmark paper of Wilson \cite{wilson}, see also \cite{bgk02}, \cite{benzvi03}, though our perspective here is rather different.

The loop group action on $\text{Gr}(\mathscr{H})$ corresponds to the flows of the KP hierarchy, see e.g. \cite{segal-wilson}. The main idea entering the above proof is that the KP flows can be interpreted as generating a geometric deformation of the curve $C$ on which the fermions are supported; this was a central insight of \cite{Aganagic_2005}, \cite{DHSV}, \cite{DHS}. 

\subsubsection{Factorization and Miura operators}
The geometric construction above characterizes the scattering data of the noncommutative instantons via the map $F$, or equivalently the family of states $\ket{\mathbb{W}}$ in the fermionic Fock space. In comparison with the monopole scattering matrices, this description may seem rather abstract; now we will show that in fact the state $\ket{\mathbb{W}}$ has a very explicit representation-theoretic description that parallels many features of its monopole analog. 

The main idea is to characterize the state $\ket{\mathbb{W}}$ via the Ward identities of the free fermion conformal field theory, which are reviewed in Lemma \ref{cliffward}. Representation-theoretically, this means we take advantage of the fact that $\ket{\mathbb{W}}$ is a vector in an irreducible Clifford module annihilated by a maximal isotropic subalgebra, which characterizes it up to a scalar multiple. 

To set this up, it is useful to recall some preliminaries. The instanton moduli space has a distinguished open set $U \subset \widetilde{M}(n, 1)$ on which 
\begin{equation}
    B_1 = \text{diag}(z_1, \dots, z_n)
\end{equation}
with $z_i \neq z_j$. We may assume that $|z_1| < |z_2 | < \dots < |z_n|$. Set $p_i := (B_2)_{ii}$, the diagonal components of $B_2$ in the eigenbasis of $B_1$. Then $\comm{B_1}{B_2} + IJ = \hbar$ implies that we may assume $I = (1, \dots, 1)^T$, $J = \hbar(1, \dots, 1)$, and 
\begin{equation}
    (B_2)_{ij} = p_i \delta_{ij} - \frac{\hbar}{z_i - z_j}(1 - \delta_{ij})
\end{equation}
where $\delta_{ij}$ is the Kronecker delta symbol. On this open subset, the symplectic form $\omega_{\mathbb{C}}$ restricts to 
\begin{equation}
    \sum_i dp_i \wedge dz_i. 
\end{equation}
The inclusion map $\iota: U \hookrightarrow \widetilde{M}(n, 1)$ gives rise via pullback to a ring embedding $$\iota^*: H^0(\widetilde{M}(n, 1), \mathscr{O}_{\widetilde{M}(n, 1)}) \hookrightarrow \mathbb{C}[z_1, \dots, z_n, p_1, \dots, p_n, (z_i - z_j)^{-1}]$$
which is Poisson if we give $(z_i, p_i)$ the Poisson bracket $\acomm{p_i}{z_j} = \delta_{ij}$.

\begin{lemma} \label{WinU}
    Over $U$, $\iota^* \mathbb{W} \hookrightarrow \mathscr{O}_U \otimes \mathbb{C}((z^{-1}))$ is the subbundle of $z \to \infty$ expansions of rational functions with at most simple poles at $z = z_i$ such that 
    \begin{equation}
        \oint_{\gamma_i} \frac{dz}{2\pi i } \Big( \frac{\hbar}{z - z_i} - p_i \Big) f(z) = 0
    \end{equation}
    where $\gamma_i$ is a small loop enclosing $z_i$. 
\end{lemma}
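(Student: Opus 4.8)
The plan is to take the explicit description of $\mathbb{W}$ from Proposition \ref{maptoGr} and specialize it to the open set $U$ where $B_1 = \mathrm{diag}(z_1,\dots,z_n)$ and the ADHM data take the standard form given just above the lemma. On $U$, a section of $\mathbb{W}$ is represented by a pair $(\xi(z),\psi_+)$ with $\psi_+\in V\simeq\mathbb{C}^n$, and its image in $\mathscr{W}((z^{-1}))=\mathbb{C}((z^{-1}))$ is
\begin{equation}
  f(z) = J\frac{1}{z-B_1}\psi_+ + \xi(z) = \sum_{i=1}^n \frac{\hbar\,(\psi_+)_i}{z-z_i} + \xi(z),
\end{equation}
using $J=\hbar(1,\dots,1)$ and the diagonal form of $B_1$. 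First I would record that this is exactly an expansion of a rational function with at most simple poles at the $z_i$ plus a polynomial (entire) part $\xi(z)$; thus $f$ ranges over $z\to\infty$ expansions of rational functions with at most simple poles at the $z_i$. The residue of $f$ at $z_i$ is $\hbar(\psi_+)_i$, so $(\psi_+)_i$ is recovered as $\frac{1}{\hbar}\oint_{\gamma_i}\frac{dz}{2\pi i}f(z)$.

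Next I would impose the one remaining constraint from Proposition \ref{explicitext}, namely $\xi(B_1)I = B_2\psi_+$, and unpack it componentwise using the explicit matrix entries $(B_2)_{ij}=p_i\delta_{ij}-\frac{\hbar}{z_i-z_j}(1-\delta_{ij})$ and $I=(1,\dots,1)^T$. Writing $\xi(z)=\sum_m \xi_m z^m$ and evaluating the $i$-th component of $\xi(B_1)I$ as $\xi(z_i)$, the constraint reads
\begin{equation}
  \xi(z_i) = p_i(\psi_+)_i - \sum_{j\neq i}\frac{\hbar}{z_i-z_j}(\psi_+)_j .
\end{equation}
The key computation is to recognize the right-hand side as a contour integral of $f$. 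Indeed, $\oint_{\gamma_i}\frac{dz}{2\pi i}\,p_i f(z) = p_i\,\hbar(\psi_+)_i$ (picking up only the residue at $z_i$), while $\oint_{\gamma_i}\frac{dz}{2\pi i}\frac{\hbar}{z-z_i}f(z)$ picks up the full principal part of $f$ at $z_i$ together with the regular value: since $f$ has only a simple pole at $z_i$ with residue $\hbar(\psi_+)_i$, this integral equals $\hbar\cdot(\text{regular part of }f\text{ at }z_i) = \hbar\big(\xi(z_i)+\sum_{j\neq i}\frac{\hbar(\psi_+)_j}{z_i-z_j}\big)$. Substituting the constraint and combining, the two contour integrals are arranged so that
\begin{equation}
  \oint_{\gamma_i}\frac{dz}{2\pi i}\Big(\frac{\hbar}{z-z_i}-p_i\Big)f(z) = 0
\end{equation}
becomes equivalent to $\xi(z_i)=p_i(\psi_+)_i-\sum_{j\neq i}\frac{\hbar}{z_i-z_j}(\psi_+)_j$, i.e.\ to the defining constraint $\xi(B_1)I=B_2\psi_+$. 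I would verify this equivalence by a direct residue bookkeeping, being careful with the sign and the split between the $z_i$ residue and the regular value of $f$ there.

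Finally I would argue both inclusions. Given $(\xi(z),\psi_+)$ satisfying the constraint, the computation above shows $f$ satisfies the stated residue conditions, giving the inclusion of $\iota^*\mathbb{W}$ into the claimed subbundle. Conversely, given any rational $f$ with at most simple poles at the $z_i$ satisfying the $n$ conditions, I would define $(\psi_+)_i:=\frac{1}{\hbar}\mathrm{Res}_{z_i}f$ and let $\xi(z):=f(z)-\sum_i\frac{\hbar(\psi_+)_i}{z-z_i}$ be the (polynomial) regular part; reversing the residue computation shows the constraint $\xi(B_1)I=B_2\psi_+$ holds, so $f$ lies in $\iota^*\mathbb{W}$. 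The main obstacle I anticipate is purely the residue bookkeeping in the middle step: one must carefully separate the contribution of the pole at $z_i$ from the value of the regular part of $f$ at $z_i$, and track how the off-diagonal entries $-\frac{\hbar}{z_i-z_j}$ of $B_2$ reproduce exactly the regular part $\sum_{j\neq i}\frac{\hbar(\psi_+)_j}{z_i-z_j}$ of the other simple poles. Once the contour integral of $\frac{\hbar}{z-z_i}f$ is correctly identified with $\hbar$ times that regular value, the equivalence with $B_2\psi_+$ is immediate, and everything else is formal.
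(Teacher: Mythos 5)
Your proof is correct and takes essentially the same route as the paper: the paper's one-line proof is precisely to write \eqref{Winclude} in the eigenbasis of $B_1$ and observe that the residue condition is equivalent, via the explicit formula for $B_2$, to the constraint $\xi(B_1)I = B_2\psi_+$ of Proposition \ref{explicitext} — exactly the residue bookkeeping you perform. Your write-up merely makes explicit what the paper leaves implicit (the vanishing of the double-pole contribution, and the converse inclusion by reconstructing $(\psi_+,\xi)$ from the principal parts of $f$).
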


\begin{proof}
    Write \eqref{Winclude} in the eigenbasis of $B_1$; the residue condition is equivalent to \eqref{kerbetaeq} upon using the explicit formula for $B_2$ above. 
\end{proof}

Recall from Appendix \ref{fermionreview} that the bundle $\mathbb{W}$ has a dual $\widetilde{\mathbb{W}}$ induced by the residue pairing on $\mathbb{C}((z^{-1}))$. $\iota^* \widetilde{\mathbb{W}} \hookrightarrow \mathscr{O}_U \otimes \mathbb{C}((z^{-1}))$ is characterized as the subbundle of $z \to \infty$ expansions of rational functions with simple poles at $z = z_i$ satisfying the same residue condition, but with $p_i \mapsto -p_i$. 

Finally recall from Appendix \ref{fermionreview} the chiral boson dual to $\psi(z)$, $\widetilde{\psi}(z)$ under bosonization: 
\begin{equation}
    \partial \varphi(z) = :\psi(z) \widetilde{\psi}(z): = \sum_{n \in \mathbb{Z}} \frac{\alpha_n}{z^{n + 1}}.
\end{equation}
Now we are ready to state one of our main results. Let $\mathscr{F}_0$ denote the charge zero component of the fermionic Fock space $\mathscr{F}$, which we recall is isomorphic to a (completed) Fock module over the Heisenberg algebra generated by the $\alpha_n$, in which $\alpha_0$ acts by zero. 

\begin{theorem} \label{thm:instantonfactor}
    Let $\ket{\mathbb{W}}$ denote the morphism from $\widetilde{M}(1) \to \mathscr{F}_0$ constructed in Proposition \ref{maptoGr} and \ref{vaccomponent}. Over the open set $U$, it factors as 
    \begin{equation} \label{miurafactor}
        \iota^* \ket{\mathbb{W}} = (p_n - \hbar \partial \varphi(z_n)) \dots (p_1 - \hbar \partial \varphi(z_1)) \ket{0}
    \end{equation}
    where $\ket{0} \in \mathscr{F}_0$ is the vacuum. 
\end{theorem}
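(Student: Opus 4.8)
The plan is to bypass a direct residue analysis and instead identify the two states through their $\tau$-functions, exploiting that Proposition~\ref{taufnformula} already computes the $\tau$-function of $\ket{\mathbb{W}}$. Recall (see Appendix~\ref{fermionreview}) that the boson--fermion correspondence identifies $\mathscr{F}_0$ with functions of the times $t = (t_1, t_2, \dots)$ via $\ket{\Psi} \mapsto \bra{0} g(t) \ket{\Psi}$, and that this map is injective; hence two vectors of $\mathscr{F}_0$ with equal $\tau$-functions are equal. Writing $\ket{\Phi} := (p_n - \hbar \partial\varphi(z_n)) \cdots (p_1 - \hbar \partial\varphi(z_1)) \ket{0}$ for the proposed right-hand side (a well-defined vector of $\mathscr{F}_0$ precisely because the ordering $|z_1| < \dots < |z_n|$ makes the operator product radially ordered, and because each Miura factor is charge-neutral), it therefore suffices to prove the $\tau$-function identity $\bra{0} g(t) \ket{\Phi} = \det(B_2 - \hbar \sum_n n t_n B_1^{n-1})$ over $U$.

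First I would compute the left-hand side by conjugating the loop-group element through the Miura factors. Since $[\alpha_k, \partial\varphi(z)] = k z^{k-1}$ is central, conjugation by $g(t) = \exp(\sum_k t_k \alpha_k)$ shifts $\partial\varphi(z_i) \mapsto \partial\varphi(z_i) + \sum_k k t_k z_i^{k-1}$, equivalently replaces each factor by $\hat{p}_i - \hbar \partial\varphi(z_i)$ with $\hat{p}_i := p_i - \hbar \sum_k k t_k z_i^{k-1}$. Because $g(t) \ket{0} = \ket{0}$, pushing $g(t)$ to the right collapses the $\tau$-function to a pure vacuum expectation, $\bra{0} g(t) \ket{\Phi} = \bra{0} (\hat{p}_n - \hbar \partial\varphi(z_n)) \cdots (\hat{p}_1 - \hbar \partial\varphi(z_1)) \ket{0}$, so that all $t$-dependence is now carried by the shifted diagonal $\hat{p}_i$.

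The crux is then the single identity, for formal values $a_1, \dots, a_n$,
\[
    \bra{0} (a_n - \hbar \partial\varphi(z_n)) \cdots (a_1 - \hbar \partial\varphi(z_1)) \ket{0} = \det\Big( a_i \delta_{ij} - \frac{\hbar(1 - \delta_{ij})}{z_i - z_j} \Big).
\]
I would prove this by Wick's theorem for the free current: the left-hand side expands as a sum over the subset of factors contributing their scalar $a_i$ together with a perfect matching of the remaining $\partial\varphi$'s, each contracted pair $a > b$ contributing $(-\hbar)^2 \langle \partial\varphi(z_a) \partial\varphi(z_b) \rangle = \hbar^2 (z_a - z_b)^{-2}$ (the radial ordering $|z_a| > |z_b|$ fixing the expansion of the two-point function). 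On the determinant side, the permutation expansion is reorganized by cycle type: fixed points give the $a_i$, a transposition $(ab)$ gives $-M_{ab} M_{ba} = \hbar^2 (z_a - z_b)^{-2}$, matching a Wick pair, and the essential input is the classical Calogero--Moser cancellation that the contributions of all cycles of length $\geq 3$ on a fixed index set sum to zero. Granting this, only involutions survive and the two expansions coincide. Substituting $a_i = \hat{p}_i$ and noting that on $U$ (where $B_1 = \mathrm{diag}(z_i)$ and $B_2$ is the Moser matrix) the matrix $\hat{p}_i \delta_{ij} - \hbar(1-\delta_{ij})/(z_i - z_j)$ is exactly the restriction of $B_2 - \hbar \sum_n n t_n B_1^{n-1}$, one recovers Proposition~\ref{taufnformula} and concludes $\iota^* \ket{\mathbb{W}} = \ket{\Phi}$ with the normalization built in.

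The main obstacle is the determinant identity, or rather its one non-formal ingredient: the vanishing of the summed $\geq 3$-cycle contributions of the Moser matrix. Everything else — the conjugation, the reduction to a vacuum correlator, and the Wick expansion — is bookkeeping. One could alternatively follow the route suggested in the text and verify the free-fermion Ward identities of Lemma~\ref{cliffward} directly against the residue description of $\iota^* \mathbb{W}$ in Lemma~\ref{WinU}; there the analogous obstacle reappears as the need to show that commuting a fermion field through the Miura product produces exactly those residue conditions, a more delicate operator-ordering computation that the $\tau$-function argument sidesteps.
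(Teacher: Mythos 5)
Your route is genuinely different from the paper's. The paper characterizes $\iota^*\ket{\mathbb{W}}$ by its annihilator: it uses the residue description of $\iota^*\mathbb{W}$ (Lemma \ref{WinU}) to check that the operators $Q_f$, $\widetilde{Q}_g$ of Lemma \ref{cliffward} kill the Miura product, which pins the state down up to a nowhere-vanishing function, and then fixes that function to be $1$ by contracting with $\bra{0}$ and an induction in the $p_n \to \infty$ limit. You instead use injectivity of the $\tau$-function on $\mathscr{F}_0$ and match $\bra{0}g(t)\ket{\Phi}$ against Proposition \ref{taufnformula}. Your reductions are all correct: $\comm{\alpha_k}{\partial\varphi(z)} = k z^{k-1}$ is a scalar, so conjugation by $g(t)$ replaces each $p_i$ by $\hat{p}_i = p_i - \hbar\sum_k k t_k z_i^{k-1}$; $g(t)\ket{0} = \ket{0}$ collapses the $\tau$-function to a vacuum correlator; and on $U$ the matrix $\hat{p}_i\delta_{ij} - \hbar(1-\delta_{ij})/(z_i - z_j)$ is exactly $\iota^*\big(B_2 - \hbar\sum_m m t_m B_1^{m-1}\big)$, so the needed identity is precisely the determinant formula for the correlator, with the paper's normalization from Proposition \ref{vaccomponent} built in. A pleasant feature of your route is that it establishes \eqref{formula:ISMcoefficients} directly rather than as a corollary; what it buys less of is the structural statement (the Ward-identity/annihilator description of the state) that the paper reuses later, e.g.\ for the Pl\"{u}cker relations.

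The gap is the one you flag yourself: you ``grant'' the vanishing of the summed contributions of cycles of length $\geq 3$, and without it the proof is incomplete. It is true, and there are two clean ways to close it. (a) Direct: for a fixed support of size $k \geq 3$, all $k$-cycles have the same sign, so up to a constant the sum is $F(z_1,\dots,z_k) = \sum_{\sigma}\prod_i (z_i - z_{\sigma(i)})^{-1}$ over $k$-cycles $\sigma$; pairing each $\sigma$ with its conjugate $(12)\sigma(12)$ shows the residues at $z_1 = z_2$ cancel (these two sets of cycles are disjoint precisely when $k \geq 3$, which is why transpositions survive), and by symmetry $F$ has no poles at all; being rational, everywhere regular, and homogeneous of degree $-k < 0$, it vanishes identically. (b) Slicker, and internal to the paper's formalism: do not use the bosonic Wick expansion at all. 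Since $\partial\varphi(z) = \, :\!\psi(z)\widetilde{\psi}(z)\!:$, the correlator $\bra{0}\prod_i\big(a_i - \hbar\!:\!\psi\widetilde{\psi}\!:\!(z_i)\big)\ket{0}$ is computed by the \emph{fermionic} Wick theorem: normal ordering removes self-contractions (giving the diagonal entries $a_i$), each contraction $\langle\psi(z_a)\widetilde{\psi}(z_b)\rangle = (z_a - z_b)^{-1}$ supplies an off-diagonal entry, and the fermionic signs assemble the sum over permutations into exactly $\det\big(a_i\delta_{ij} - \hbar(1-\delta_{ij})/(z_i - z_j)\big)$. On this route the cycle cancellation never needs to be stated: it is precisely the discrepancy between the bosonic (pairing) and fermionic (determinant) expansions, and the identity you need is the fermionic one. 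With either (a) or (b) supplied, your argument is complete.
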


Here by ``factors'' we mean factors as a product of matrices acting in the Fock module. This should be compared directly to the factorization of monopole scattering matrices \eqref{eq:monopoleSfactor} in the distinguished coordinate chart of the monopole moduli space. 

\begin{proof}
The idea will be to use Lemma \ref{WinU} to show that the operators $Q_f$, $\widetilde{Q}_g$ from Lemma \ref{cliffward} annihilate the right hand side of \eqref{miurafactor}; this shows that the left and right hand side must agree up to normalization, which we pin down using Proposition \ref{vaccomponent}. 

Indeed, if $f(z)$ is a section of $\iota^* \widetilde{\mathbb{W}}$ we have, by the standard contour pulling arguments from conformal field theory 
\begin{equation}
\begin{split}
    Q_f(\text{RHS of \eqref{miurafactor}}) & = \oint_{\gamma_\infty} \frac{dz}{2\pi i}f(z) \psi(z) (p_n - \hbar \partial \varphi(z_n)) \dots (p_1 - \hbar \partial \varphi(z_1)) \ket{0} \\
    & = \sum_{i = 1}^n (p_n - \hbar \partial \varphi(z_n)) \dots \oint_{\gamma_i} \frac{dz}{2\pi i} f(z) \psi(z) (p_i - \hbar \partial \varphi(z_i)) \dots (p_1 - \hbar \partial \varphi(z_1)) \ket{0}.
\end{split}
\end{equation}
Now using the operator product expansion 
\begin{equation}
    \psi(z) \widetilde{\psi}(w) = \frac{1}{z - w} + \text{reg}
\end{equation}
we find
\begin{equation}
    \psi(z)(p_i - \hbar :\psi(z_i) \widetilde{\psi}(z_i):) = p_i \psi(z_i) +  \frac{\hbar}{z - z_i} \psi(z_i) + O(z - z_i)
\end{equation}
(the absence of a constant term in the regular part of the second term can be seen e.g. by a direct calculation with mode expansions). From the absence of a constant term in this OPE, we see 
\begin{equation}
    \oint_{\gamma_i} \frac{dz}{2\pi i } f(z) \psi(z) (p_i - \hbar \partial \varphi(z_i)) = \psi(z_i) \oint_{\gamma_i} \frac{dz}{2\pi i} f(z) \Big( \frac{\hbar}{z - z_i} + p _i \Big) = 0
\end{equation}
since $f(z)$ is a section of $\iota^* \widetilde{\mathbb{W}}$. The computation with $\widetilde{Q}_g$ is nearly identical and left to the reader. 

By Lemma \ref{cliffward}, the left and right hand side of \eqref{miurafactor} agree up to a nowhere vanishing function on $U$, i.e. 
\begin{equation}
    (p_n - \hbar \partial \varphi(z_n)) \dots (p_1 - \hbar \partial \varphi(z_1)) \ket{0} = f_n(z_i, p_i) \iota^* \ket{\mathbb{W}}
\end{equation}
where $f_n \in H^0(U, \mathscr{O}_U)= \mathbb{C}[z_1, \dots, z_n, p_1, \dots, p_n, (z_i - z_j)^{-1}]$, nonzero at all closed points of $U$. Contract both sides with $\bra{0}$ and apply Proposition \ref{vaccomponent} to conclude 
\begin{equation}
    f_n = \frac{\bra{0} (p_n - \hbar \partial \varphi(z_n)) \dots (p_1 - \hbar \partial \varphi(z_1)) \ket{0}}{\iota^* \det (B_2)}.
\end{equation}
Trivially, $f_1 = 1$. Noting that $f_n$ must depend polynomially on $p_n$ and considering the limit $p_n \to \infty$ we deduce inductively $f_n = f_{n - 1} = \dots = f_1 = 1$. 
\end{proof}
The operator $p - \hbar \partial \varphi(z)$, where $z$ and $p$ are canonically conjugate variables with Poisson bracket $\acomm{p}{z} = 1$, is called (the Poisson limit of a) ``Miura operator'', though from the point of view of this paper a better name for it would be ``instanton scattering matrix''. Based on the discussion of monopole scattering matrices, one might expect that it should play the role of an $R$-matrix of $\mathsf{Y}(\widehat{\mathfrak{gl}}_1)$ (we have $\widehat{\mathfrak{gl}}_1$ since we are in the $r = 1$ case). At this point, we make contact with several other authors: it has been proposed, in various incarnations, in \cite{miuraCSfeynman2}, \cite{gaiottorapcak2020}, \cite{gaiottorapcakzhou24}, \cite{haouzijeong}, \cite{miuraCSfeynman1}, \cite{Prochazka_2019}, \cite{zenkevich24}, that Miura operators (and their $q$-deformed variants) satisfy properties analogous to those of $R$-matrices for affine Yangians (and their multiplicative analogs, the quantum toroidal algebras). Our framework provides a natural explanation for this, and places this observation in a wider context by relating it with the $R$-matrices arising from the moduli spaces of monopoles. 

A corollary of Proposition \ref{taufnformula} and Theorem \ref{thm:instantonfactor} is the nice formula 
\begin{equation} \label{formula:ISMcoefficients}
    \bra{0} e^{\sum_{m = 1}^\infty t_m \alpha_m} (p_n - \hbar \partial \varphi(z_n)) \dots (p_1 - \hbar \partial \varphi(z_1)) \ket{0} = \iota^* \det\Big(B_2 - \hbar \sum_{m = 1}^\infty m t_m B_1^{m - 1} \Big) 
\end{equation}
which says that, remarkably, all components of the vector $(p_n - \hbar \partial \varphi(z_n)) \dots (p_1 - \hbar \partial \varphi(z_1)) \ket{0}$ extend from the coordinate patch $U$ to global functions on $\widetilde{M}(n, 1)$. The $q$-deformed version of this formula was first communicated to the author by Y. Zenkevich (see also \cite{zenkevich24}); it appears to also be related to Feynman's work on Calogero-type models \cite{Polychronakos_2019}. Note that our proof of this formula involves no direct computation of either side.

\subsubsection{RTT Relation}
The fundamental property the Miura operator must satisfy if it is to be regarded as a kind of $R$-matrix is the $RTT$ relation with respect to the $R$-matrix between a pair of Fock modules of $\mathsf{Y}(\widehat{\mathfrak{gl}}_1)$ constructed in \cite{mo}. To match conventions with \cite{mo}, \textit{in this section only} we will rescale the Heisenberg generators $\alpha_n$ in such a way that the Heisenberg commutation relations are 
\begin{equation}
    \comm{\alpha_m}{\alpha_n} = -\frac{1}{\varepsilon_2 \varepsilon_3} m \delta_{m + n, 0}
\end{equation}
where $\varepsilon_2, \varepsilon_3$ are complex numbers. We consider a quantized version of the Miura operator, replacing $p \to \varepsilon_1 \partial_z$ with $\varepsilon_1 + \varepsilon_2 + \varepsilon_3 = 0$: 
\begin{equation}
    \varepsilon_1 \partial_z - \varepsilon_2 \varepsilon_3 \partial \varphi(z)
\end{equation}
in direct parallel with the monopole scattering matrices. This is an operator in the Fock module for the Heisenberg algebra generated by $\alpha_n$, with coefficients in differential operators in $z$. Denote by $\mathscr{F}_0(a)$ the Fock module in which the zero mode $\alpha_0$ acts by $a$. 

Recall that the matrix $R_{\mathsf{MO}}(a_1 - a_2)$ (which we take to be associated to instanton moduli in the $\varepsilon_2, \varepsilon_3$ directions) acts in a tensor product $\mathscr{F}^{(1)}(a_1) \otimes \mathscr{F}^{(2)}(a_2)$ of two Fock spaces and can be shown to satisfy the following two properties (see \cite{mo}, chapters 12-14): 
\begin{enumerate}
    \item It commutes with the diagonal Heisenberg algebra generated by the modes of $\partial \varphi^{(1)}(z) + \partial \varphi^{(2)}(z)$. 
    \item It intertwines the off-diagonal Virasoro algebras generated by the modes of the operators
    \begin{equation}
        T_{\pm}(z) = \varepsilon_2 \varepsilon_3 :(\partial \varphi^{(1)}(z) - \partial \varphi^{(2)}(z))^2: \pm 2(\varepsilon_2 + \varepsilon_3) \partial^2 (\varphi^{(1)} - \varphi^{(2)})
    \end{equation}
i.e. $R_{\mathsf{MO}} T_+(z) R_{\mathsf{MO}}^{-1} = T_-(z)$. 
\end{enumerate}
It is elementary to check that this these two properties imply  

\begin{lemma} (see also \cite{Prochazka_2019}) \label{lemma:RTTaffine} 
\begin{equation} \label{eq:RTTaffine}
\begin{split}
    R_{\mathsf{MO}} (\varepsilon_1 \partial_z - \varepsilon_2 \varepsilon_3 \partial \varphi^{(1)}(z))(\varepsilon_1 \partial_z - \varepsilon_2 \varepsilon_3 \partial \varphi^{(2)}(z)) R_{\mathsf{MO}}^{-1} & = \\ 
     & (\varepsilon_1 \partial_z - \varepsilon_2 \varepsilon_3 \partial \varphi^{(2)}(z))(\varepsilon_1 \partial_z - \varepsilon_2 \varepsilon_3 \partial \varphi^{(1)}(z)).
\end{split}
\end{equation}
\end{lemma}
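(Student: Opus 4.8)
The plan is to prove \eqref{eq:RTTaffine} by a direct conjugation argument, organizing both ordered products into a piece that $R_{\mathsf{MO}}$ leaves invariant and a piece on which it acts through property (2). Throughout I would write $\phi_a := \partial\varphi^{(a)}(z)$, $\kappa := \varepsilon_2 \varepsilon_3$ and $D := \varepsilon_1 \partial_z$, and introduce the diagonal and off-diagonal currents $s := \phi_1 + \phi_2$ and $d := \phi_1 - \phi_2$. Note that $\phi_1$ and $\phi_2$ commute since they live in different Fock spaces, so the only bilinear current term that ever appears is the regular product $\phi_1 \phi_2$.

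First I would expand the two products. The sole noncommutativity to track is $D \phi_a = \phi_a D + \varepsilon_1 \phi_a'$, where $\phi_a' = \partial_z \phi_a$, arising from differentiating the $z$-dependence of the current. A short computation gives
\begin{equation}
\begin{split}
    (D - \kappa \phi_1)(D - \kappa \phi_2) & = D^2 - \kappa s D - \kappa \varepsilon_1 \phi_2' + \kappa^2 \phi_1 \phi_2, \\
    (D - \kappa \phi_2)(D - \kappa \phi_1) & = D^2 - \kappa s D - \kappa \varepsilon_1 \phi_1' + \kappa^2 \phi_1 \phi_2,
\end{split}
\end{equation}
so the two sides of \eqref{eq:RTTaffine} differ only in the inhomogeneous term $\phi_2'$ versus $\phi_1'$.

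Next I would reorganize using $\phi_1 \phi_2 = \tfrac14(:s^2: - :d^2:)$ and $\phi_a' = \tfrac12(s' \pm d')$, and define the \emph{diagonal} operator
\begin{equation}
    \widetilde{S} := D^2 - \kappa s D + \tfrac{\kappa^2}{4}:s^2: - \tfrac{\kappa \varepsilon_1}{2} s',
\end{equation}
built purely from $D$ and the modes of $s$. The key input is $\varepsilon_2 + \varepsilon_3 = -\varepsilon_1$, which turns the coefficient $2(\varepsilon_2+\varepsilon_3)$ in $T_\pm$ into $\mp 2\varepsilon_1$ and makes the remaining terms match the derivative terms exactly. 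One then finds $(D - \kappa\phi_1)(D-\kappa\phi_2) = \widetilde{S} - \tfrac{\kappa}{4} T_+(z)$ and $(D-\kappa\phi_2)(D-\kappa\phi_1) = \widetilde{S} - \tfrac{\kappa}{4} T_-(z)$. Conjugation now finishes the argument: since $\widetilde{S}$ is assembled from $D$ (which acts only on $z$, and with which $R_{\mathsf{MO}}$ commutes because it depends only on $a_1 - a_2$) and the modes of the diagonal current $s$, property (1) gives $R_{\mathsf{MO}} \widetilde{S} R_{\mathsf{MO}}^{-1} = \widetilde{S}$, while property (2) gives $R_{\mathsf{MO}} T_+ R_{\mathsf{MO}}^{-1} = T_-$. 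Hence $R_{\mathsf{MO}} (D-\kappa\phi_1)(D-\kappa\phi_2) R_{\mathsf{MO}}^{-1} = \widetilde{S} - \tfrac{\kappa}{4} T_- = (D-\kappa\phi_2)(D-\kappa\phi_1)$, which is exactly \eqref{eq:RTTaffine}.

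The one place I would slow down is the bookkeeping of normal-ordering and central terms: one must check that after isolating $\widetilde{S}$ the leftover is precisely $-\tfrac{\kappa}{4} T_\pm$ with no residual $c$-number, and that $\widetilde{S}$ carries no hidden dependence on $d$. This is clean here because the only current bilinear occurring in the two products is the regular product $\phi_1\phi_2$, so the identity $:s^2: - :d^2: = 4\phi_1\phi_2$ holds with the singular self-contractions cancelling in the difference, independently of the value of the central term; the only ordering conventions that enter are those already built into the definition of $T_\pm$. This matching is therefore the main (and only genuine) point of care in the computation.
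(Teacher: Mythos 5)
Your proof is correct and follows essentially the same route as the paper: both expand the ordered product, split it into a diagonal piece (built from $\varepsilon_1\partial_z$ and the modes of $\partial\varphi^{(1)}+\partial\varphi^{(2)}$, hence commuting with $R_{\mathsf{MO}}$ by property (1)) plus $-\tfrac{\varepsilon_2\varepsilon_3}{4}T_+(z)$, and then conjugate using property (2) together with $\varepsilon_1 = -(\varepsilon_2+\varepsilon_3)$. The only difference is cosmetic: you name the invariant operator $\widetilde{S}$ and verify both orderings explicitly, whereas the paper verifies one ordering and notes that flipping $T_+\to T_-$ exchanges $\partial^2\varphi^{(2)}$ with $\partial^2\varphi^{(1)}$.
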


\begin{proof} 
Indeed, note that:
\begin{equation} \label{eq:expandmiura}
\begin{split}
    (\varepsilon_1 \partial_z - \varepsilon_2 \varepsilon_3\partial \varphi^{(1)}(z))(\varepsilon_1 \partial_z - \varepsilon_2 \varepsilon_3 \partial \varphi^{(2)}(z))  & =
    \varepsilon_1^2 \partial_z^2 - \varepsilon_1 \varepsilon_2 \varepsilon_3 (\partial \varphi^{(1)}(z) + \partial \varphi^{(2)}(z)) \partial_z \\
    & - \varepsilon_1 \varepsilon_2 \varepsilon_3 \partial^2 \varphi^{(2)}(z) + (\varepsilon_2 \varepsilon_3)^2 \partial \varphi^{(1)}(z) \partial \varphi^{(2)}(z)
\end{split}
\end{equation}
so using 
\begin{equation}
\begin{split}
    \partial \varphi^{(1)}(z) \partial \varphi^{(2)}(z) & = \frac{1}{4} \Big(:(\partial \varphi^{(1)} + \partial \varphi^{(2)})^2: - :(\partial \varphi^{(1)} - \partial \varphi^{(2)})^2: \Big) \\
    \partial^2 \varphi^{(2)}(z) & = \frac{\partial}{2}( \partial \varphi^{(1)} + \partial \varphi^{(2)}) - \frac{\partial}{2} (\partial \varphi^{(1)} - \partial \varphi^{(2)})
\end{split}
\end{equation}
we see that the right hand side of \eqref{eq:expandmiura} takes the form 
\begin{equation}
    (\text{commutes with $R_{\mathsf{MO}}$}) \, \, \, - \frac{\varepsilon_2 \varepsilon_3}{4} \Big(  \varepsilon_2 \varepsilon_3 :(\partial \varphi^{(1)} - \partial \varphi^{(2)})^2: - 2\varepsilon_1 \partial^2(\varphi^{(1)} - \varphi^{(2)}) \Big). 
\end{equation}
Recalling that $\varepsilon_1 = -(\varepsilon_2 + \varepsilon_3)$, the quantity in parenthesis is $T_+(z)$. Thus conjugation by $R_{\mathsf{MO}}$ just flips the sign on the last term. But this precisely has the effect of switching $\partial^2 \varphi^{(2)}$ to $\partial^2 \varphi^{(1)}$ in \eqref{eq:expandmiura}. Noting that all other terms are symmetric in $\varphi^{(1)}, \varphi^{(2)}$, we conclude \eqref{eq:RTTaffine}. 
\end{proof}

Multiplying \eqref{eq:RTTaffine} on the right by $R_{\mathsf{MO}}$ this is just the $RTT$ relation. The products of Miura operators entering the instanton scattering matrices therefore also satisfy $RTT$, by the standard argument. In the semiclassical limit, this directly parallels \eqref{eq:monopoleRTT} enjoyed by monopole scattering matrices, see also discussion in Section 4 of \cite{tamagnishiftop} for the quantized case. 

\subsubsection{$T^*_\hbar \mathbb{C}$ vs. $T^*_\hbar \mathbb{C}^\times$}
The original motivation was to study instantons on $T^*_\hbar \mathbb{C}^\times$, but we studied instantons on $T^*_\hbar \mathbb{C}$ for convenience. As our analysis is geometric, it can of course be repeated for $T^*_\hbar \mathbb{C}^\times$. Instanton bundles on $T^*_\hbar \mathbb{C}^\times$ also admit an ADHM-like construction, so the technical details are largely the same. As such, we will be brief and summarize the results (actually the main result can be deduced directly from \eqref{formula:ISMcoefficients}, as will be explained). 

There is one key difference, though: the free fermion path integral for fermions wrapping the zero section of $T^*_\hbar \mathbb{C}^\times$ naturally produces an $\textit{operator}$ acting on the fermionic Fock space $\mathscr{F}$, rather than a state therein. We can understand this, however, by taking a shortcut and using conformal symmetry of the free fermions to directly map the problem on $\mathbb{C}$ to the problem on $\mathbb{C}^\times$. In essence, the state we produced by studying fermions on $\mathbb{C}$ was already observed to be of the form of some operator acting on the vacuum; this operator is what is canonically assigned to the fermions on $\mathbb{C}^\times$. We once again will remain in the rank $r = 1$ case for convenience. 

Recall that the Hilbert scheme of points on $T^*_\hbar \mathbb{C}^\times$ admits the following description ($V$ is $n$-dimensional as usual, and $\hbar \neq 0$): 
\begin{equation}
    \text{Hilb}_n(T^*_\hbar \mathbb{C}^\times) = \{ (g, B, I, J) \in GL(V) \times \text{End}(V) \times V \times V^* | B - g^{-1} B g + IJ = \hbar \}/GL(V). 
\end{equation}
$GL(V)$ acts on $g \in GL(V)$ by conjugation. The analogous open subset $\iota: U \hookrightarrow \text{Hilb}_n(T^*_\hbar \mathbb{C}^\times)$ is where $g$ is diagonalizable with distinct eigenvalues. $\text{Hilb}_n(T^*_\hbar \mathbb{C}^\times)$ has a symplectic form descending from
\begin{equation}
    \omega_{\mathbb{C}} = d\tr(B g^{-1} dg) + \tr dI \wedge dJ. 
\end{equation}
One may check that if $g = \text{diag}(z_1, \dots, z_n)$, $0 < |z_1 |< \dots < |z_n|$ that 
\begin{equation}
    B_{ij} = p_i \delta_{ij} - \frac{\hbar}{1 - z_j/z_i}(1 - \delta_{ij})
\end{equation}
and the symplectic form is $\iota^*\omega_{\mathbb{C}} = \sum_i dp_i \wedge d \log(z_i)$. Let $\mathscr{F}_0(u)$ denote the Fock representation of the Heisenberg algebra $\comm{\alpha_m}{\alpha_n} = m \delta_{m + n, 0}$ in which the central element $\alpha_0$ acts by $u$. Denote the vacuum by $\ket{0; u} \in \mathscr{F}_0(u)$, it satisfies $\alpha_0 \ket{0; u} = u \ket{0; u}$ and $\alpha_n \ket{0; u} = 0$ for $n >0$. Write the mode expansion for the chiral boson on the cylinder:
\begin{equation}
    \partial \varphi^{cyl}(z) = \sum_{n \in \mathbb{Z}} \frac{\alpha_n}{z^n}
\end{equation}
and introduce a second infinite set of variables $\tilde{t} = (\tilde{t}_1, \tilde{t}_2, \dots,)$ (all but finitely many zero). 

\begin{prop} We have
\begin{equation} \label{formula:cylindermiuracoeff}
\begin{split}
    \bra{0; u} e^{\sum_{m = 1}^\infty t_m \alpha_m} (p_n - \hbar \partial \varphi^{\text{cyl}}(z_n)) & \dots (p_1 -\hbar \partial \varphi^{\text{cyl}}(z_1)) e^{\sum_{m = 1}^\infty \tilde{t}_m \alpha_{-m}} \ket{0; u}  = \\
    & e^{\sum_{m = 1}^\infty m t_m \tilde{t}_m} \iota^* \det\Big( B - \hbar u  - \hbar \sum_{m \geq 1} (m t_m g^m + m \tilde{t}_m g^{-m})\Big). 
\end{split}
\end{equation}
\end{prop}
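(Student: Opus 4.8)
The plan is to reduce the cylinder identity \eqref{formula:cylindermiuracoeff} to the plane identity \eqref{formula:ISMcoefficients} by peeling off the two coherent-state deformations through purely algebraic (Heisenberg) manipulations, and then matching two ADHM matrices by a diagonal change of variables. No new geometry is needed beyond what produces \eqref{formula:ISMcoefficients}.

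First I would use that $\alpha_0$ is central and acts on $\mathscr{F}_0(u)$ by the scalar $u$, so that each cylinder Miura factor may be written $p_i - \hbar\partial\varphi^{\text{cyl}}(z_i) = (p_i - \hbar u) - \hbar\sum_{n\geq 1}(\alpha_n z_i^{-n}+\alpha_{-n}z_i^{n})$. Writing $A_L=\sum_m t_m\alpha_m$ and $A_R=\sum_m\tilde t_m\alpha_{-m}$, the commutator $[A_L,A_R]=\sum_m m t_m\tilde t_m$ is a $c$-number, so $e^{A_L}e^{A_R}=e^{A_R}e^{A_L}e^{\sum_m m t_m\tilde t_m}$; since $\bra{0;u}e^{A_R}=\bra{0;u}$ and $e^{A_L}\ket{0;u}=\ket{0;u}$, the factor $e^{\sum_m m t_m\tilde t_m}$ splits off exactly as the prefactor in \eqref{formula:cylindermiuracoeff}. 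The remaining matrix element becomes $\bra{0;u}\, e^{A_L}\,(e^{-A_R}M e^{A_R})\,\ket{0;u}$, with $M$ the Miura product, and I would then conjugate once more by $e^{A_L}$, using $e^{A_L}\ket{0;u}=\ket{0;u}$.

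The two conjugations are controlled by the elementary $c$-number shifts $e^{-A_R}\alpha_n e^{A_R}=\alpha_n+n\tilde t_n$ ($n\geq 1$) and $e^{A_L}\alpha_{-k}e^{-A_L}=\alpha_{-k}+k t_k$ ($k\geq 1$). Their net effect is to replace each $p_i$ by $\tilde p_i := p_i-\hbar u-\hbar\sum_{m\geq 1}(m t_m z_i^{m} + m\tilde t_m z_i^{-m})$ while leaving in each factor a zero-mode-free boson $\partial\tilde\varphi(z):=\sum_{n\geq 1}(\alpha_n z^{-n}+\alpha_{-n}z^{n})$. The key structural observation is that on the charge-zero vacuum $\partial\tilde\varphi(z)=z\,\partial\varphi(z)$, where $\partial\varphi(z)=\sum_n\alpha_n z^{-n-1}$ is the plane current of \eqref{formula:ISMcoefficients}; hence each factor is $\tilde p_i-\hbar z_i\partial\varphi(z_i)=z_i\big(\tilde p_i/z_i-\hbar\partial\varphi(z_i)\big)$. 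Pulling the scalars $z_i$ out of the product (they commute with everything) turns the cylinder Miura product into $\prod_i z_i$ times a plane Miura product with momenta $q_i:=\tilde p_i/z_i$, reducing the matrix element to $\big(\prod_i z_i\big)\bra{0}\prod_{i=n}^{1}(q_i-\hbar\partial\varphi(z_i))\ket{0}$ over the nonzero modes (where $\mathscr{F}_0(u)$ and $\mathscr{F}_0$ agree).

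At this point I would invoke \eqref{formula:ISMcoefficients} at $t=0$, read as a rational identity in the coordinates $(z_i,p_i)$ on $U$, to evaluate the vacuum matrix element as $\det\hat B_2$, where $\hat B_2$ is the plane ADHM matrix with diagonal $q_i$ and off-diagonal $-\hbar/(z_i-z_j)$. Finally the factor $\prod_i z_i$ is absorbed as $\det g$, and since $g$ is diagonal one checks entrywise that $g\hat B_2$ has diagonal $z_i q_i=\tilde p_i$ and off-diagonal $-\hbar z_i/(z_i-z_j)=-\hbar/(1-z_j/z_i)$; this is exactly the cylinder matrix $B-\hbar u-\hbar\sum_m(m t_m g^m+m\tilde t_m g^{-m})$ on the right of \eqref{formula:cylindermiuracoeff}. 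I expect the main obstacle to be bookkeeping rather than conceptual: one must keep the plane current $\partial\varphi$ and the cylinder current $\partial\varphi^{\text{cyl}}$ (differing by the factor $z$ and by the zero mode) scrupulously separate, and verify that the diagonal rescaling by $g$ converts the plane kernel $1/(z_i-z_j)$ into the cylinder kernel $1/(1-z_j/z_i)$, which is the one genuinely geometric input linking the two ADHM presentations.
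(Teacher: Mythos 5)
Your proof is correct and takes essentially the same route as the paper's: both reduce the cylinder identity to \eqref{formula:ISMcoefficients} using the centrality of $\alpha_0$, the diagonal rescaling relating the two ADHM matrices ($B = gB_2$ after rescaling momenta, equivalently $z\,\partial\varphi(z) = \partial\varphi^{\text{cyl}}(z)$ up to the zero mode), and elementary Heisenberg commutation to dispose of the coherent-state exponentials. The only difference is bookkeeping: the paper invokes the $t$-dependent plane formula and commutes just $e^{\sum_m \tilde{t}_m \alpha_{-m}}$ to the left, whereas you conjugate both exponentials into shifts of the momenta and then apply the plane formula at $t = 0$.
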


\begin{proof} This is actually a corollary of our formula \eqref{formula:ISMcoefficients}. Note that, by the explicit formula for $B_{ij}$ above, $B_{ij} = z_i (B_2)_{ij}$ up to redefining $z_i p_i \to p_i$. Moreover, we trivially have (since $\alpha_0$ is central)
\begin{equation}
\begin{split}
    \bra{0; u} & e^{\sum_{m \geq 1} t_m \alpha_m} (p_n - \hbar \partial \varphi^{cyl}(z_n)) \dots (p_1 - \hbar \partial \varphi^{cyl}(z_1)) e^{\sum_{m \geq 1} \tilde{t}_m \alpha_{-m}}\ket{0; u} \\
    & = \bra{0; 0} e^{\sum_{m \geq 1} t_m \alpha_m}(\text{same with $p_i \to p_i - \hbar u$}) e^{\sum_{m \geq 1} \tilde{t}_m \alpha_{-m}}\ket{0 ;0}
\end{split}
\end{equation}
Then multiplying \eqref{formula:ISMcoefficients} by $\prod_i z_i$, redefining $z_i p_i \to p_i$ and then shifting $p_i \to p_i - \hbar u$ we get the statement of the proposition at $\tilde{t}_m = 0$. The general case follows by commuting $e^{\sum_m \tilde{t}_m \alpha_{-m}}$ to the left using the Heisenberg commutation relations. 
\end{proof}

Taking derivatives of \eqref{formula:cylindermiuracoeff} at $t_m = \tilde{t}_m = 0$, we find
\begin{theorem} \label{thm:Rmatrixcoeff}
All matrix coefficients of the operator $$(p_n - \hbar \partial \varphi^{\text{cyl}}(z_n)) \dots (p_1 - \hbar \partial \varphi^{\text{cyl}}(z_1)) \in \textnormal{End}(\mathscr{F}_0(u)) \otimes \mathscr{O}_U$$ in the oscillator basis $\prod_i\alpha_{-\lambda_i}\ket{0; u}$ of $\mathscr{F}_0(u)$, where $\lambda$ is some partition, extend from the Darboux open set $U$ to regular functions on $\textnormal{Hilb}_n(T^*_\hbar \mathbb{C}^\times)$; moreover these matrix coefficients generate the coordinate ring of $\textnormal{Hilb}_n(T^*_\hbar \mathbb{C}^\times)$.
\end{theorem}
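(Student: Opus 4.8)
The plan is to deduce the entire statement from the generating‑function identity \eqref{formula:cylindermiuracoeff}. First I would make precise the extraction of individual matrix coefficients. Since the positive modes $\alpha_m$ ($m>0$) mutually commute, and likewise the negative modes, expanding $e^{\sum_m t_m\alpha_m}$ and $e^{\sum_m\tilde t_m\alpha_{-m}}$ as power series produces, as the coefficient of a monomial $\prod_m t_m^{a_m}\prod_m\tilde t_m^{b_m}$, exactly the oscillator basis vectors $\prod_i\alpha_{-\lambda_i}\ket{0;u}$ (and dually $\bra{0;u}\prod_i\alpha_{\mu_i}$), up to explicit nonzero combinatorial constants. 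Hence each derivative $\partial_t^{\,\vec a}\partial_{\tilde t}^{\,\vec b}$ of the left side of \eqref{formula:cylindermiuracoeff}, evaluated at $t=\tilde t=0$, is a nonzero scalar times a matrix coefficient of $T:=(p_n-\hbar\partial\varphi^{\text{cyl}}(z_n))\cdots(p_1-\hbar\partial\varphi^{\text{cyl}}(z_1))$ in the oscillator basis, and every such coefficient arises this way. So \eqref{formula:cylindermiuracoeff} packages all of them simultaneously, and it suffices to analyze the right side and the subalgebra its coefficients generate.

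For the extension claim, I would observe that for each fixed $(t,\tilde t)$ the matrix $M(t,\tilde t):=B-\hbar u-\hbar\sum_{m\ge 1}(m t_m g^m+m\tilde t_m g^{-m})$ is a polynomial in the tautological endomorphisms $B$ and $g^{\pm1}$ of $\mathscr V$, hence a $GL(V)$‑conjugation‑equivariant endomorphism; therefore $\det M(t,\tilde t)$ is a conjugation‑invariant function of the quiver data, i.e. a global regular function on $\text{Hilb}_n(T^*_\hbar\mathbb C^\times)$ (using Lemma \ref{stabandcostab}, which identifies the coordinate ring with the invariant ring). Expanding the right side of \eqref{formula:cylindermiuracoeff} in $(t,\tilde t)$, with the scalar prefactor $e^{\sum_m m t_m\tilde t_m}$ contributing only constants, every coefficient is thus $\iota^*$ of a global regular function. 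Combining this with the first paragraph, each matrix coefficient of $T$ — a priori regular only on $U$ — agrees on $U$ with the restriction of a global function; since $U$ is dense in the irreducible variety $\text{Hilb}_n(T^*_\hbar\mathbb C^\times)$, the restriction $\iota^*$ is injective on global functions, so the extension is unique. This proves the first assertion.

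For generation, let $R$ be the subalgebra generated by all matrix coefficients, equivalently (by the first paragraph) by all $(t,\tilde t)$‑coefficients of $\det M$. The plan is: (a) by a first‑fundamental‑theorem / Le Bruyn–Procesi description of the $GL(V)$‑invariants of the quiver data, $\mathbb C[\text{Hilb}_n(T^*_\hbar\mathbb C^\times)]$ is generated by the trace invariants $\operatorname{tr}w(g^{\pm1},B)$ together with the framed invariants $J\,w(g^{\pm1},B)\,I$ over all words $w$; (b) the moment‑map relation $IJ=\hbar-B+g^{-1}Bg$ lets me rewrite $J\,w\,I=\operatorname{tr}(w\,IJ)=\operatorname{tr}\!\big(w(\hbar-B+g^{-1}Bg)\big)$, so the framed invariants are themselves word‑traces; hence $\mathbb C[\text{Hilb}_n(T^*_\hbar\mathbb C^\times)]$ is already generated by $\{\operatorname{tr}w(g^{\pm1},B)\}$. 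It then remains only to show that every such word‑trace lies in $R$.

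The last step, (c), is where I expect the main difficulty. The plan is to pass to $\log\det M=\operatorname{tr}\log M$ on the dense locus where $B-\hbar u$ is invertible, whose multilinear $(t,\tilde t)$‑coefficients are traces $\operatorname{tr}\!\big((B-\hbar u)^{-1}g^{a_1}(B-\hbar u)^{-1}g^{a_2}\cdots\big)$ of words alternating $(B-\hbar u)^{-1}$ with arbitrary powers $g^{a}$, $a\in\mathbb Z$; since $\det(B-\hbar u)\in R$ and $\operatorname{adj}(B-\hbar u)$ is polynomial in $B$, clearing denominators via Cayley–Hamilton should express each $\operatorname{tr}w(g^{\pm1},B)$ as an element of $R$. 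Equivalently, over $U$ (where $g$ is diagonal) $\det M$ becomes $\det\!\big(B-\operatorname{diag}(\phi(z_1),\dots,\phi(z_n))\big)$ for the free Laurent "potential" $\phi(x)=\hbar\big(u+\sum_m(m t_m x^m+m\tilde t_m x^{-m})\big)$, and varying $\phi$ separates the principal minors of $B$ weighted by arbitrary symmetric functions of the $z_i$; the delicate point is to verify that these generate the full invariant ring, where I would adapt the known generation results for Calogero–Moser‑type spaces. I would close by noting that, because the individual coefficients are read off directly by differentiating \eqref{formula:cylindermiuracoeff}, the argument produces explicit formulas for the matrix coefficients with no separate computation.
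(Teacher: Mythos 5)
Your overall strategy is exactly the paper's: the paper proves this theorem precisely by differentiating \eqref{formula:cylindermiuracoeff} at $t=\tilde t=0$, and your first two paragraphs (extraction of oscillator-basis coefficients from the exponential generating function, and the extension claim via $GL(V)$-invariance of $\det M(t,\tilde t)$ together with density of $U$ and irreducibility) are correct and complete --- indeed more detailed than what the paper records. Steps (a) and (b) of your generation argument are also sound: the Le Bruyn--Procesi description plus the relation $IJ=\hbar-B+g^{-1}Bg$, via $J\,w\,I=\tr(wIJ)$, reduce the coordinate ring to the subring generated by trace words $\tr\,w(g^{\pm1},B)$.

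The gap is in step (c). Polarizing $\det M$ (equivalently, taking multilinear $(t,\tilde t)$-coefficients of $\tr\log M$) does not produce individual trace words: with $A=B-\hbar u$, the coefficient of $t_{m_1}\cdots t_{m_N}$ in $\tr\log\bigl(1+A^{-1}Y\bigr)$ is the sum over all cyclic orderings $\sum_{\sigma}\tr\bigl(A^{-1}g^{m_{\sigma(1)}}\cdots A^{-1}g^{m_{\sigma(N)}}\bigr)$, so for three or more insertions you only ever obtain necklace-symmetrized combinations, and no Cayley--Hamilton clearing of denominators will separate, say, $\tr(Bg^aBg^bBg^c)$ from $\tr(Bg^aBg^cBg^b)$; for generic matrix data the ring of such symmetrized invariants is strictly smaller than the full trace-word ring. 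So the assertion that this ``should express each $\tr\,w(g^{\pm1},B)$ as an element of $R$'' fails as stated, and the ``delicate point'' you defer is exactly where the content lies. The repair is to use the moment map a second time, as a straightening rule rather than only to kill framed invariants: multiplying the relation by $g$ gives $[g,B]=\hbar g-gIJ$, so any word in $g^{\pm1},B$ can be reordered to the normal form $g^aB^k$ at the cost of a scalar term and rank-one terms, whose traces are products of strictly shorter trace words (again via $\tr(wIJ)=JwI$ and the relation). Induction on word length then shows the coordinate ring is generated by the simple words $\tr(g^aB^k)$ alone --- and these lie in $R$ with no necklace ambiguity, since they come from the vacuum coefficient $\det(B-\hbar u)$ together with the single first derivatives $\partial_{t_m}$, $\partial_{\tilde t_m}$, which produce $\tr\bigl(\operatorname{adj}(B-\hbar u)\,g^{\pm m}\bigr)$: expanding the adjugate in powers of $u$, its matrix coefficients are triangularly related to $1,B,\dots,B^{n-1}$ over $\mathbb{Z}[e_1(B),\dots,e_n(B)]\subset R$, and Cayley--Hamilton supplies all higher powers. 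With that substitution your step (c) closes and the proof is complete.
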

This is a direct analog of the factorization formula \eqref{eq:monopoleSfactor} for the monopole scattering matrix for smooth $U(2)$ monopoles on $\mathbb{R}^3$, and provides a complete justification for the name ``instanton scattering matrix'' for the semiclassical Miura operator $p - \hbar \partial \varphi^{\text{cyl}}(z)$, where the Poisson bracket is now $\acomm{p}{z} = z$. 

Moreover, these global functions are generators under Poisson bracket of the (semiclassical) $\mathsf{Y}(\widehat{\mathfrak{gl}}_1)$ symmetry of $\text{Hilb}_n(T^*_\hbar \mathbb{C}^\times)$, by the $RTT$ relation enjoyed by Miura operators. It is well-known that quantization of $\text{Hilb}_n(T^*_\hbar \mathbb{C}^\times)$ as a symplectic variety (e.g. by exhibiting it as a Coulomb branch after Braverman-Finkelberg-Nakajima \cite{bfnslice}) leads to (spherical) Cherednik algebras \cite{koderanakajima}, and there are presentations of affine Yangians using Cherednik algebras, e.g. \cite{arbesfeldschiffmann}. 

The approach of this paper is, however, considerably more direct: in the approach of \cite{mo}, the affine Yangian is by definition generated by matrix coefficients of $R$-matrices in Fock modules. The formula \eqref{formula:cylindermiuracoeff} concisely identifies all possible such $R$-matrix coefficients with particular functions on $\text{Hilb}_n(T^*_\hbar \mathbb{C}^\times)$ (in the semiclassical limit), while making contact with the noncommutative geometry of curves in $T^*_\hbar \mathbb{C}^\times$ as in \cite{Aganagic_2005}, \cite{DHS}. 

\subsubsection{Pl\"{u}cker relations and a conjecture}
Note that a corollary of the residue computation done in the proof of Theorem \ref{thm:instantonfactor} is that the operator $S = (p_n - \hbar \partial \varphi^{cyl}(z_n)) \dots (p_1 - \hbar \partial \varphi^{cyl}(z_1))$ featuring in Theorem \ref{thm:Rmatrixcoeff} satisfies Pl\"{u}cker relations, i.e. $\comm{\Omega}{S \otimes S} = 0$ where
\begin{equation}
    \Omega = \oint \frac{dz}{2\pi i} \psi(z) \otimes \widetilde{\psi}(z) = \sum_{r \in \mathbb{Z} + 1/2} \psi_r \otimes \widetilde{\psi}_{-r}.
\end{equation}
By analogy with monopole scattering matrices, one expects that these quadratic relations among $S$-matrix coefficients, together with the requirement that for each pair of partitions $\mu, \nu$, 
\begin{equation}
    \text{$\bra{ 0; u} \prod_i \alpha_{\mu_i} S \prod_j \alpha_{-\nu_j} \ket{0; u}$ is a polynomial in $u$, of the form $(-\hbar u)^n C_{\mu \nu } + $ lower degree}
\end{equation}
generate all the equations cutting out the affine variety (note $\hbar \neq 0$!) $\text{Hilb}_n(T^*_\hbar \mathbb{C}^\times)$ for each $n$. Compare also to the discussion in chapters 17-18 in \cite{mo} on screening operators generating all relations in the core Yangian of $\widehat{\mathfrak{gl}}_1$. The constant $$C_{\mu \nu} = |\text{Aut}(\mu)| \prod_i \mu_i \delta_{\mu \nu}$$
is a standard normalization; $\text{Aut}(\mu)$ is a subgroup in the symmetric group on $\text{length}(\mu)$ letters. 

\subsubsection{Higher rank generalizations}
Much of what has been said has a generalization to rank $r > 1$, though certain statements are more involved because the dual bosonic description is that of the WZW model of $\widehat{\mathfrak{gl}}_r$ at level $k = 1$ \cite{witten84}; in particular the role of $\mathscr{F}_0$ is played by the level $k = 1$ integrable module for $\widehat{\mathfrak{gl}}_r$ induced from the trivial $\mathfrak{gl}_r$-module (in terminology standard from vertex algebras, this is the simple quotient of the vacuum module $V_1(\mathfrak{gl}_r)$). In this section we will just state some of the results that can be obtained by a mild variation on the theme above. 

Denote by $J\indices{^a_b}(z)$, $a, b = 1, \dots, r$ the $\widehat{\mathfrak{gl}}_r$ currents at level $k = 1$, which satisfy OPEs 
\begin{equation}
    \tr(AJ(z)) \tr (B J(w)) \sim \frac{\tr(AB)}{(z - w)^2} -\frac{\tr(\comm{A}{B}J(w))}{(z - w)} + \text{reg}
\end{equation}
where $A, B$ are $r \times r$ matrices. 

The instanton moduli space $\widetilde{M}(n, r)$ has a distinguished open set $\iota: U \hookrightarrow \widetilde{M}(n, r)$ where $B_1$ is diagonalizable with distinct eigenvalues. Let $\mathscr{O}_{(\hbar, 0, \dots, 0)} \subset \mathfrak{gl}_r$ denote the conjugacy class of $\text{diag}(\hbar, 0, \dots, 0)$; we may view it as a holomorphic symplectic variety using the trace to identify $\mathfrak{gl}_r$ with $\mathfrak{gl}_r^*$ and view $\mathscr{O}_{(\hbar, 0, \dots, 0)}$ as a coadjoint orbit endowed the the Kirillov symplectic form. There is an isomorphism of symplectic varieties 
\begin{equation}
    U \simeq (T^*(\mathbb{A}^n \setminus \Delta) \times \mathscr{O}^{\times n}_{(\hbar, 0, \dots, 0)})/W
\end{equation}
where $W$ denotes the symmetric group $S_n$ acting by permutation of the factors, and $\Delta$ denotes the union of diagonals $z_i = z_j$ in $\mathbb{A}^n$. Let $(z_i, p_i, X_i)$, $i = 1, \dots, n$ be coordinates on $U$; we may parameterize $X_i \in \mathscr{O}_{(\hbar, 0, \dots, 0)}$ as $(X_i)\indices{^a_b} = J^a_i I_{ib}$ where $J_i, I_i$ are defined up to scaling $(J_i, I_i) \mapsto (t_i J_i, t_i^{-1} I_i)$ for $t_i \in \mathbb{C}^\times$, and satisfy $\sum_a I_{ia} J^a_i = \hbar$. 

Theorem \ref{thm:instantonfactor} has the following analog, with almost the same proof. $\ket{\mathbb{W}}$ denotes the map $\widetilde{M}(r) \to \mathscr{F}$ constructed in Propositions \ref{maptoGr} and \ref{vaccomponent}. We have 
\begin{equation}
    \iota^* \ket{\mathbb{W}} = (p_n - \tr(X_n J(z_n))) \dots (p_1 - \tr(X_1 J(z_1))) \ket{0}.
\end{equation}
Then a corollary of this result and Proposition \ref{vaccomponent} is the following formula for a correlation function in the $\widehat{\mathfrak{gl}}_r$ WZW model at level $k = 1$: 
\begin{equation}
\begin{split}
    \bra{0} (p_n - \tr(X_n J(z_n))) \dots (p_1 - \tr(X_1 J(z_1))) \ket{0} & = \iota^* \det(B_2) \\ 
    & = \det_{1 \leq i, j \leq n} \Big( p_i \delta_{ij} - \sum_{a = 1}^r \frac{I_{ia} J^a_j}{z_i - z_j}(1 - \delta_{ij}) \Big). 
\end{split}
\end{equation}

The only real difference in the higher rank case is that it is more difficult to get a formula for the general matrix coefficient such as \eqref{formula:cylindermiuracoeff}; we will return to this elsewhere.

\section{Top-down approach via String/M-theory} \label{section:string}
Having explained the essence of our construction in the language of noncommutative gauge theory and algebraic geometry, we will now explain a different path to the same results starting from string/M-theory constructions. This will clarify many aspects of this setup and explain its relation to other parts of the literature, as well as motivate certain conjectures. Some readers familiar with the physics literature may find this path more natural and may wish to start with this section. 

The string theory setup has been investigated before in \cite{DHS}, \cite{DHSV}, \cite{costello2016}, so we will gloss over various technicalities in order to emphasize the novel aspects of the situation under study in this paper. 

\subsection{M-theory background}
We will start out with M-theory on $\mathbb{R} \times T^* \mathbb{C} \times \mathbb{C}^3_{\varepsilon_1, \varepsilon_2, \varepsilon_3}$ with $\varepsilon_1 + \varepsilon_2 + \varepsilon_3 = 0$ ($\varepsilon_i$ denote $\Omega$-deformation parameters \cite{nekrasov2002}---see \cite{costello2016} for a study of the M-theory $\Omega$-background that seems suitable for our purposes. We will simply assume the background exists\footnote{See also \cite{nekrasovokounkov} for a discussion of M-theory in the $\Omega$-background from the point of view of index calculations.} and satisfies certain reasonable properties). Wrap a collection of $n$ M2 branes along $\mathbb{C}_{\varepsilon_1}$ and the $\mathbb{R}$-direction. Then add a single M5 brane which wraps the zero section of $T^*\mathbb{C}$ along with $\mathbb{C}^2_{\varepsilon_2, \varepsilon_3}$. In fact, in most of what follows it will be enough to set $\varepsilon_1 = 0$, and $\varepsilon_2 = - \varepsilon_3 = \hbar$. We will use $(w, z)$ as coordinates of $T^* \mathbb{C}$, following notations from Section \ref{section:scatterandmiura}. 

\subsubsection{Reduction to Type IIA}
The M-theory description is difficult to use directly, as it is very strongly coupled. To get a more accessible description of the system, we consider reducing to Type IIA string theory, and in fact there are two natural ways to do this. The idea is that we may view either $\mathbb{C}^2_{\varepsilon_2, \varepsilon_3}$ or $T^*\mathbb{C}$ as a Taub-NUT manifold and reduce along the $S^1$ fiber of the corresponding fibration 
\begin{equation}
    TN \to \mathbb{R}^3
\end{equation}
to get a description via Type IIA string theory with a D6 brane; the IIA string coupling is related to the radius of the $S^1$ fiber. In this paper, we will focus on the reduction along the $S^1$ fiber of $\mathbb{C}^2_{\varepsilon_2, \varepsilon_3}$; the other reduction is interesting and very relevant for certain questions in categorical Donaldson-Thomas theory, but these will be outside our main line of investigation here (they will be addressed, in part, in the forthcoming work \cite{bottatamagni}). 

It should be pointed out that $\mathbb{C}^2$ coincides with Taub-NUT only as a complex symplectic manifold; their hyperkahler metrics are different, and one expects that the full dynamics of M-theory certainly depends on the choice of metric. In fact the Taub-NUT metric is really a one-parameter family of metrics depending on a variable $R$ which is the asymptotic radius of the $S^1$ fiber; $R \to \infty$ corresponds to the flat metric on $\mathbb{C}^2$. For the sorts of computations we perform here, which are protected by supersymmetry and may be reformulated as computations in an appropriately twisted version of the theory, it is expected that the answers will not depend on the Taub-NUT radius (this is also claimed in \cite{costello2016}). The Type IIA description is valid, strictly speaking, for small values of $R$.

\subsubsection{The instanton scattering frame}
Consider reducing along $\mathbb{C}^2_{\varepsilon_2, \varepsilon_3}$, at $\varepsilon_1 = 0$ and $\varepsilon_2 = - \varepsilon_3 = \hbar$. We arrive at Type IIA string theory on $\mathbb{R} \times T^* \mathbb{C} \times \mathbb{C} \times \mathbb{R}^3$ with a D6 brane wrapping $\mathbb{R} \times T^* \mathbb{C} \times \mathbb{C}$, $n$ D2 branes wrapping $\mathbb{R} \times \{ \text{points} \} \times \mathbb{C} \times \{ 0 \}$, and a D4 brane wrapping $\{ w = 0 \} \times \mathbb{R}^3$. The $\Omega$-deformation in M-theory gives rise to an effective $B$-field with $(2, 0)$ part 
\begin{equation}
    B = \frac{1}{\hbar} dw \wedge dz
\end{equation}
in this IIA frame. This setup, with $n = 0$ D2 branes, was considered at length in \cite{DHSV}. We will be interested in the description via the low-energy effective gauge theories on the branes.

On the D6 brane, the low energy dynamics is described by a seven-dimensional maximally supersymmetric Yang-Mills theory with gauge group $U(1)$. The $B$-field may be incorporated in the low-energy description by following a well-established path \cite{seibergwitten}: it has the effect of making the $(w, z)$ directions of the worldvolume noncommutative, so that the gauge fields on the D6 brane worldvolume should be regarded as noncommutative Yang-Mills fields insofar as their dependence on $(w, z)$ is concerned. We will be treating this D6 brane as so massive that the gauge theory on its worldvolume may be regarded as semiclassical. While $U(1)$ gauge theory on a commutative space is a free field theory with no interesting instanton solutions, its noncommutative analog is a nonlocal interacting field theory that supports nontrivial (``stringy'') instanton solutions \cite{Nekrasov_1998}.

The D2 branes are codimension four defects inside the D6 branes and preserve half of the 16 supersymmetries on the D6 brane worldvolume. They may be described in the effective theory on the D6 brane as solutions to the noncommutative instanton equations in the $(w, z)$ directions, which are independent of the other coordinates. The instanton charge $n$ is the same as the number of D2 branes. 

Finally, it is well-known that the zero modes of D4-D6 open strings are chiral fermions supported on the D4-D6 intersection (this was called the ``I-brane configuration'' in \cite{DHSV}, \cite{DHS}). Of course, the 4-6 open strings are naturally charged in the fundamental representation of the D6 brane gauge fields. The natural coupling of these chiral fermions to the gauge fields is then described via the nonlocal effective action
\begin{equation}
S = \int_{\mathbb{C} \subset T^* \mathbb{C}} \widetilde{\psi}(z, \bar{z}) \Bigg( \overline{\partial}_z + A_{\bar{z}}(z, \overline{z}, w, \bar{w}) \eval_{w = \hbar \partial_z} \Bigg) \psi(z, \bar{z})
\end{equation}
as discussed in Section \ref{section:scatterandmiura}. The D2 branes are fully taken into account, in the semiclassical limit in which we are working, by assuming $A_{\bar{z}}$ above is the $\bar{z}$-component of a noncommutative instanton solution, thus fully described by the ADHM data $(B_1, B_2, I, J)$ via the noncommutative ADHM construction of \cite{Nekrasov_1998}. Instanton scattering matrix coefficients are given by partition functions that arise by integrating out the 4-6 strings; these are functional determinants $\det \overline{D}_z$ of the (highly nonlocal) chiral Dirac operator describing the holomorphic structure of the restriction of the instanton bundle to $w = 0$. Proposition \ref{taufnformula} gives a complete description of this chiral determinant in terms of the ADHM data.

A succinct way to link the proposal of \cite{DHS}, \cite{DHSV} with the gauge theory discussion of Section \ref{section:scatterandmiura} is to say that the dynamics of the zero modes of the 4-6 strings is well-described by a noncommutative $B$-model\footnote{Very abstractly, the noncommutative $B$-model with target $T^*_\hbar \mathbb{C}$ is the 2d topological field theory determined by asserting that its category of boundary conditions is the bounded derived category of finitely generated right $\mathscr{D}_\hbar$-modules. It may be equivalently described as an $A$-model with a $B$-field; for a related discussion see section 4.3 of \cite{aganagic2017}.} on $T^*_\hbar \mathbb{C}$. This means that, ignoring all directions in the IIA geometry besides $T^*_\hbar \mathbb{C}$, we may model the D4 brane as a holonomic right $\mathscr{D}_\hbar$-module of the form $\widehat{\mathscr{O}}_C$ (in the notation of Appendix \ref{adhmdetails}), and capture the D2-D6 configuration by the right $\mathscr{D}_\hbar$ module $\mathscr{E}$ described via the noncommutative ADHM construction. As in any $B$-model, the space of open string states is given by an Ext-group,
\begin{equation}
    \mathscr{H}_{4-6} = \text{Ext}^1_{\mathscr{D}_\hbar}(\widehat{\mathscr{O}}_C, \mathscr{E})
\end{equation}
which we explained directly in gauge theory in Section \ref{section:scatterandmiura}. Note that when the instanton charge $n = 0$, $\mathscr{E} = \mathscr{D}_\hbar$ and $\mathscr{H}_{4-6} \simeq \widehat{\mathscr{O}}^\ell_C$, the left $\mathscr{D}_\hbar$-module version of $\widehat{\mathscr{O}}_C$. This says that the space of 4-6 zero modes is the underlying vector space of a $\mathscr{D}$-module on $\mathbb{A}^1$---precisely the claim of \cite{DHS}, \cite{DHSV}! 

Another claim of \cite{DHS}, \cite{DHSV} is that these $\mathscr{D}_\hbar$-modules define points in the Sato Grassmannian $\text{Gr}(\mathscr{H})$ by considering asymptotic expansions of fundamental solutions to differential equations of the form $P(z, w) \Psi= 0$ for some $P \in \mathscr{D}_\hbar$; this was verified in \cite{DHS} by explicit computations. 

The presence of the $n \neq 0$ D2 branes modifies all this as follows. The space $\mathscr{H}_{4-6}$ of 4-6 open strings now varies as the fiber of a vector bundle over the moduli space of D2 branes; this moduli space is a copy of the moduli space of noncommutative $U(1)$ instantons, $\widetilde{M}(n, 1)$ in the notation of this paper. $\mathscr{H}_{4-6}$ defines a point in $\text{Gr}(\mathscr{H})$ for each point of the instanton moduli space, i.e. a map from the instanton moduli space to $\text{Gr}(\mathscr{H})$. This map, in the special case $P(z, w) = w$, was constructed rigorously in Section \ref{section:scatterandmiura} and denoted by $F$ there.

This vector bundle describes the space of classical solutions to the equation of motion $\overline{D}_z \psi = 0$ in the free fermion theory, as we vary over all possible choices of the instanton background. Understanding the geometry of this vector bundle makes it relatively straightforward to quantize; as explained by Witten long ago \cite{witten} and reviewed in Appendix \ref{fermionreview}, the Pl\"{u}cker embedding of $\text{Gr}(\mathscr{H})$ into the projectivization of the fermionic Fock space gives an explicit and mathematically rigorous way to construct the quantum states prepared by fermion path integrals on various algebraic curves (the Grassmannian machinery works, in particular, when the explicit $\overline{D}_z$ operator is too complicated to be useful or is unavailable entirely). This applies, and in some sense is an even more powerful tool, as we vary the choice of the instanton background. In this way we define the state denoted $\ket{\mathbb{W}}$ in Section \ref{section:scatterandmiura}. The rest of the analysis of Section \ref{section:scatterandmiura} just amounts to obtaining a more explicit description of $\ket{\mathbb{W}}$ using the standard tools of bosonization and the Ward identities. 

From this perspective, it is clear that a degree of freedom that we may incorporate, but has been ignored so far, is the ability to choose a more general algebraic curve $P(z, w) = 0$ in $T^* \mathbb{C}$ as the support of the fermions. This is somewhat orthogonal to our main line of development here, which is focused on quantum groups and $R$-matrices. Nonetheless, the construction of this paper gives rise to a deformation of the partition function of the $B$-model of topological string theory on all backgrounds of the form $uv + P(z, w) = 0$ for each noncommutative $U(1)$ instanton on the $(z, w)$-plane, generalizing the framework of \cite{Aganagic_2005}, \cite{DHS}, \cite{DHSV}. It would be interesting to find a more direct interpretation of the instantons in topological string theory. 

In the case of the conifold $P(z, w) = zw - u$ (the $B$-model of which is well-known to compute the partition function of the $c = 1$ noncritical string theory), this will be discussed explicitly in Section \ref{section:newRmatrix}. The operator $\mathbb{S}(u)$ constructed there is an instanton generalization of the $S$-matrix of the $c = 1$ noncritical string theory (justifying, in a different way, the name ``instanton scattering matrix''). 

Note also that the case of $U(r)$ instantons, from this perspective, corresponds to having $r$ D6 branes, which in turn arise from $M$-theory on $\mathbb{R} \times T^* \mathbb{C} \times \mathbb{C} \times A_{r - 1}$. 

\subsubsection{Instanton scattering matrices and Chern-Simons theories}
From the perspective of this paper, the instanton scattering matrices are expected to be $R$-matrices both by analogy with their monopole analogs and by the explicit computations in Section \ref{section:scatterandmiura}. The string theory embedding gives a more conceptual explanation for this as follows. 

It is argued in \cite{costello2016} that M-theory in the background we study can be described effectively by a noncommutative $U(1)$ Chern-Simons theory on $\mathbb{R} \times T^*_\hbar \mathbb{C}$. The argument is essentially by considering the same gauge theory on the D6 brane discussed above, in a ``$B$-type'' $\Omega$-background as in \cite{nekrasovtying}, \cite{Yagi_2014} dictated by the $\varepsilon_1$ parameter. If we think of $T^* \mathbb{C}$ as an $S^1$-fibration over $\mathbb{R}^3$, this Chern-Simons theory may be viewed as a loop/affine analog of the four-dimensional holomorphic-topological Chern-Simons theory studied in \cite{costelloyangian}, \cite{costellowittenyamazaki}. In much the same way that this Chern-Simons theory is related to the Yangians $\mathsf{Y}(\mathfrak{g})$ of finite-dimensional Lie algebras, one expects that the five-dimensional Chern-Simons theory is related to the Yangians $\mathsf{Y}(\widehat{\mathfrak{g}})$ of the corresponding affine Lie algebras. 

$R$-matrices arise naturally in this framework as follows: if we study the four-dimensional Chern-Simons theory on a background $\mathbb{R}^2 \times \mathbb{C}$ where $\mathbb{R}^2$ is topological, it is possible to defin line operators pointlike on $\mathbb{C}$ and extended along some line in $\mathbb{R}^2$. The most basic such line operators are Wilson lines in fundamental representations of the gauge group, and it was argued in \cite{costellowittenyamazaki} that crossings of Wilson lines in $\mathbb{R}^2$ are described by Yangian $R$-matrices, via explicit Feynman diagram computations. It was argued in \cite{costelloQop} (in a slightly different language) that monopole scattering matrices may be interpreted as the classical limit of an intersection of a Wilson line defect with a 't Hooft operator in the four-dimensional Chern-Simons theory. From this perspective, the appearance of Yangians of finite ADE type in the discussion of monopole scattering matrices is less surprising. 

In \cite{costello2016} it is argued that the M2 branes are realized in the Chern-Simons description via line defects extended along $\mathbb{R}$ and pointlike along $T^*_\hbar \mathbb{C}$; more precisely, described by a solution to the noncommutative instanton equations along $T^*_\hbar \mathbb{C}$. The M5 brane is a surface operator wrapping $\mathbb{C} \subset T^*_\hbar \mathbb{C}$. Upon projecting to the base of $T^* \mathbb{C} \to \mathbb{R}^3$, splitting $\mathbb{R}^3 = \mathbb{R} \times \mathbb{C}$ and forming an $\mathbb{R}^2$ via combining this $\mathbb{R}$ with the $\mathbb{R}$ the M2 branes wrap, we see that the system under study looks like the intersection of a line operator made out of M2 branes with a \textit{half}-line operator made out of the M5 brane. The main assertion of Section \ref{section:scatterandmiura} of this paper is that this intersection is fully described by the map $\ket{\mathbb{W}}$ from the instanton moduli space $\widetilde{M}(1)$ to the Fock space $\mathscr{F}$ constructed there, in complete parallel to the role of the monopole scattering matrix in \cite{costelloQop}.

Factoring the M2 brane line operator into elementary constituents corresponds to passing to the Darboux chart of the instanton moduli space called $U$ in Section \ref{section:scatterandmiura}. Theorem \ref{thm:instantonfactor} then expresses how $\ket{\mathbb{W}}$ is composed from elementary pieces on this open set; in particular the proposal from \cite{gaiottorapcak2020} that the Miura operator corresponds to an M2-M5 intersection can be regarded in our framework as a precise mathematical theorem.  Our framework seems totally different from other, more algebraic approaches to the same physics setup such as \cite{gaiottorapcakzhou24}. 

Our techniques should be compared and contrasted with \cite{miuraCSfeynman2}, \cite{miuraCSfeynman1} which compute Miura operators via Feynman graph calculations along the lines of \cite{costellowittenyamazaki}. The methods of this paper are \textit{nonperturbative}, in the sense that they do not require working in a power series expansion about the trivial configuration of the gauge fields of the bulk Chern-Simons theory. On the other hand, they apply only in the semiclassical regime, $\varepsilon_1 \to 0$. This is not as limiting as it may seem, e.g. as we saw in Section \ref{section:scatterandmiura}, the 1-instanton scattering matrix has an essentially unique $\varepsilon_1 \neq 0$ deformation and insisting on compatibility with the factorization in Theorem \ref{thm:instantonfactor} gives an explicit and unambiguous assignment of quantized instanton scattering matrix coefficients to differential operators in the $z_i$ variables. 

While we have emphasized the 5d Chern-Simons perspective in this discussion, that $R$-matrices of affine Yangians (or more generally, quantum toroidal algebras) correspond to intersections of branes in string theory is a key component of a general proposal of Y. Zenkevich \cite{zenkevich2023spirallingbranesrmatrices}, \cite{zenkevich24}; it was in this form that this author first learned of many of these results. 

\subsubsection{A second kind of scattering matrix} \label{subsect:introS(u)}
From the Chern-Simons perspective, it is clear that there is in some sense a more natural choice of support for the M5 brane worldvolume: the one corresponding to a straight line in the $\mathbb{R}^2$ in which line operators are intersecting to give $R$-matrices. Because the $\mathbb{R}$ direction supporting the M5 brane operator is the vertical axis in the base of $T^* \mathbb{C} \to \mathbb{R}^3 \simeq \mathbb{R} \times \mathbb{C}$, this is the curve $zw = u$ in the full geometry; the parameter $u$ is identified with the position of this line in the $\mathbb{C}$ plane, and therefore with the spectral variable of the shifted Yangian $\mathsf{Y}_{+1}(\widehat{\mathfrak{gl}}_1)$. 

Note that, in this language, studying the \textit{unshifted} affine Yangian corresponds to replacing $T^* \mathbb{C}$ with $T^* \mathbb{C}^\times$, in which case going from the half-line to a full line presents no essential difficulty. This is reflected in the comparative ease of proving Theorem \ref{thm:Rmatrixcoeff} as a corollary of Theorem \ref{thm:instantonfactor}. On the other hand, it is known from a number of perspectives in the geometric representation theory literature that dominant shifted Yangians present a considerable level of new complexity relative to their unshifted or antidominant shifted counterparts. The first nontrivial example, in the setting of this paper, corresponds to instanton moduli on $T^* \mathbb{C}$. More involved examples arise from instanton moduli on $A_{n - 1}$ surfaces, whose scattering matrices lead to $\mathsf{Y}_{+n}(\widehat{\mathfrak{gl}}_r)$ in general.

This manifests in our current problem via the following: the curve $zw = u$ has two distinct asymptotic ends in $T^* \mathbb{C}$; one approaches the $z$-axis and the other approaches the $w$-axis. Naively, one would want to ``glue together'' two copies of the discussion of Section \ref{section:scatterandmiura} at each end, but here the noncommutativity of $T^* \mathbb{C}$ enters: essential use was made of quantization in the $z$-polarization in Section \ref{section:scatterandmiura}, and the comparison between the $z$ and $w$-polarizations is nonlocal and given by Fourier transform. 

In Section \ref{section:newRmatrix} we will make this idea precise and explain how to correctly implement the Fourier transform; the output will be a new kind of scattering operator $\mathbb{S}(u)$ which should be thought of as an operator on the Fock space $\mathscr{F}$ with matrix coefficients in global functions on the instanton moduli space $\text{Hilb}_n(T^*_\hbar \mathbb{C})$ (or more generally, $\widetilde{M}(n, r)$, for $r > 1$). Note by Theorem \ref{thm:Rmatrixcoeff}, the product of Miura operators has matrix coefficients in global functions on $\text{Hilb}_n(T^*_\hbar \mathbb{C}^\times)$; to get global functions on $\text{Hilb}_n(T^*_\hbar \mathbb{C})$ we need a new operator, and defining it requires more effort. The operator $\mathbb{S}(u)$ appears to be a new kind of $R$-matrix for a Poisson limit of $\mathsf{Y}_{+1}(\widehat{\mathfrak{gl}}_r)$. We will analyze some of its basic structure in the next section, and leave more detailed investigation of its properties to the future. 
 
\section{Instanton scattering and new $R$-matrices} \label{section:newRmatrix}
In this section we will return to the mathematical world and carry out the plan outlined in Section \ref{subsect:introS(u)}. In we Section \ref{sect:defineS} give a geometric definition of an operator $\mathbb{S}(u): \mathscr{F}^{\text{in}} \to \mathscr{F}^{\text{out}}$ between two fermionic Fock spaces, from which it follows that it satisfies Pl\"{u}cker relations and is therefore determined by its vacuum matrix element and the fermionic two-point function. These two quantities are computed explicitly in terms of ADHM data in Section \ref{sect:defineS}, and conjectures are made on the role of $\mathbb{S}(u)$ as an $R$-matrix for $\mathsf{Y}_{+1}(\widehat{\mathfrak{gl}}_r)$ for $r = 1$. Because the $RTT$ formalism for dominantly shifted Yangians does not yet exist, we do not pursue the proof of these conjectures here. 

Finally, the discussion from Section \ref{section:string} makes clear that the operator $\mathbb{S}(u)$ is a direct uplift of monopole scattering matrices (the objects of Section \ref{section:scatterandmiura} are more like the first column of monopole scattering matrices). As monopoles on $\mathbb{R}^3$ with certain Dirac singularities can be obtained as components of the fixed locus for an $S^1$-action on the moduli space of instantons on $\mathbb{R}^4$ \cite{kronheimer}, this leads to a conjectural way to read off monopole scattering matrices from the instanton ones (formula \eqref{eq:Smatrixfromprop} below), at least in any case where the moduli space of monopoles arises by taking fixed points in instanton moduli. This is explained in Section \ref{sect:monopolefrominstanton}.

\textit{Note for the worried reader.} In this section, for the sake of brevity and clarity of exposition we are vague about technicalities related to duals and completions. For the goals of this paper, this is harmless since we are just interested in the (infinite) list of matrix coefficients of $\mathbb{S}(u)$ in the standard basis of the fermionic Fock space, each of which is well-defined. For further investigations of $\mathbb{S}(u)$, e.g. its compatibility with coproducts on $\mathsf{Y}_{+1}(\widehat{\mathfrak{gl}}_r)$, these issues should be addressed with more care. 

\subsection{Defining the operator $\mathbb{S}(u)$} \label{sect:defineS}
Generalizing the discussion of Section \ref{section:scatterandmiura}, we are instructed to study the vector bundle 
\begin{equation}
    \mathbb{V}_u := \mathscr{E} \otimes_{\mathscr{D}_\hbar} \widehat{\mathscr{O}}^\ell_{C_u} \simeq \text{Ext}^1_{\mathscr{D}_\hbar}(\widehat{\mathscr{O}}_{C_u}, \mathscr{E})
\end{equation}
over $\widetilde{M}(r)$, where $\mathscr{E}$ is the universal instanton bundle and 
\begin{equation}
\begin{split}
    \widehat{\mathscr{O}}^\ell_{C_u} & := \mathscr{D}_\hbar/\mathscr{D}_\hbar (zw - u) \\ 
    \widehat{\mathscr{O}}_{C_u} & := \mathscr{D}_\hbar/(zw - u) \mathscr{D}_\hbar
\end{split}
\end{equation}
are respectively left and right $\mathscr{D}_\hbar$-modules, depending on the complex parameter $u$. We assume $u \notin \hbar \mathbb{Z}$. 

\subsubsection{$\mathbb{V}_u$ as a map to $\textnormal{Gr}(\mathscr{H}_{\textnormal{in}} \oplus \mathscr{H}_{\textnormal{out}})$}
To fully set up the scattering problem, we need not just the vector bundle $\mathbb{V}_u$ but a piece of additional data which allows us to define a map from the instanton moduli space to a Sato Grassmannian denoted by $\text{Gr}(\mathscr{H}_{\text{in}} \oplus \mathscr{H}_{\text{out}})$. Let us first define this Grassmannian. We write 
\begin{equation}
\begin{split}
    \mathscr{H}_{\text{in}} & := W((z^{-1})) \\
    \mathscr{H}_{\text{out}} & := W(( w^{-1}))
\end{split}
\end{equation}
where $W$ is the framing space in the ADHM description of $\widetilde{M}(r)$. We have the quotient $\mathscr{H}_+ = W[z] \oplus wW[w]$ of $\mathscr{H}_{\text{in}} \oplus \mathscr{H}_{\text{out}}$, and the Sato Grassmannian of subspaces such that the induced projection to $\mathscr{H}_+$ has finite rank kernel and cokernel.

The following statements are obtained by a ``two-sided'' version of the computations done in Section \ref{section:scatterandmiura}; therefore the details are omitted here and presented in Appendix \ref{adhmdetails}. 

\begin{prop} \label{prop:VmaptoGr}
Assume $u \notin \hbar \mathbb{Z}$. Using the description of $\mathbb{V}_u$ of Proposition \ref{extgroups}, there is a canonical morphism of sheaves 
\begin{equation}
    \mathbb{V}_u \to \mathscr{O}_{\widetilde{M}(r)} \otimes (\mathscr{H}_{\textnormal{in}} \oplus \mathscr{H}_{\textnormal{out}})
\end{equation}
over $\widetilde{M}(r)$, which is an inclusion of vector bundles. Moreover under the description of $\mathbb{V}_u$ from Proposition \ref{prop:explicitextforzw}, there are canonical isomorphisms
\begin{equation}
\begin{split}
    \ker(\mathbb{V}_u \to \mathscr{O}_{\widetilde{M}(r)} \otimes \mathscr{\mathscr{H}_+}) & \simeq \ker(u - B_2 B_1) \\
    \textnormal{coker}(\mathbb{V}_u \to \mathscr{O}_{\widetilde{M}(r)} \otimes \mathscr{H}_+ ) & \simeq \textnormal{coker}(u - B_2 B_1)
\end{split}
\end{equation}
of sheaves over $\widetilde{M}(r)$.
\end{prop}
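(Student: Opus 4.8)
The plan is to mirror the proof of Proposition \ref{maptoGr}, accounting for the fact that the curve $C_u = \{zw = u\}$ has \emph{two} asymptotic ends—one approaching the $z$-axis ($z \to \infty$) and one approaching the $w$-axis ($w \to \infty$)—rather than a single one. First I would record the free resolution of $\widehat{\mathscr{O}}_{C_u}$ as a right $\mathscr{D}_\hbar$-module, given by left multiplication by the non-zero-divisor $zw - u$,
\begin{equation}
0 \longrightarrow \mathscr{D}_\hbar \xrightarrow{(zw-u)\,\cdot} \mathscr{D}_\hbar \longrightarrow \widehat{\mathscr{O}}_{C_u} \longrightarrow 0,
\end{equation}
which shows (as in Proposition \ref{extgroups}) that the Ext is concentrated in degree $1$ and underlies the explicit presentation of $\mathbb{V}_u$ in Proposition \ref{prop:explicitextforzw}; in that presentation a local section is parametrized by data subject to a single constraint built from $u - B_2 B_1$. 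The map $\mathbb{V}_u \to \mathscr{O}_{\widetilde{M}(r)}\otimes(\mathscr{H}_{\textnormal{in}}\oplus\mathscr{H}_{\textnormal{out}})$ is then the pair of asymptotic expansions: the $\mathscr{H}_{\textnormal{in}}$-component is a geometric series of the form $J(z - B_1)^{-1}(\cdots) + (\cdots)$ in $z^{-1}$, exactly as in \eqref{Winclude}, while the $\mathscr{H}_{\textnormal{out}}$-component is the analogous expression $J(w - B_2)^{-1}(\cdots) + (\cdots)$ in $w^{-1}$, with the roles of $B_1$ and $B_2$ interchanged.

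Next I would establish that this map is an inclusion of vector bundles by checking fiberwise injectivity. Suppose both asymptotic expansions vanish. Vanishing of the $\mathscr{H}_{\textnormal{in}}$-component feeds, as in Proposition \ref{maptoGr}, into a $B_1$-generated subspace contained in $\ker J$, and vanishing of the $\mathscr{H}_{\textnormal{out}}$-component into a $B_2$-generated subspace; using the moment-map relation $\comm{B_1}{B_2} + IJ = \hbar$ precisely as in the displayed computation in the proof of Proposition \ref{maptoGr}, both spans are simultaneously $B_1$- and $B_2$-invariant and annihilated by $J$. Lemma \ref{stabandcostab} then forces these subspaces, hence the defining data, to vanish. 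The hypothesis $u \notin \hbar\mathbb{Z}$ enters here to guarantee that the operators relating the two expansions—shifts of the Euler-type operator $zw$ by integer multiples of $\hbar$—are invertible, so that the two ends do not resonate.

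Finally I would identify the kernel and cokernel of the projection $\mathbb{V}_u \to \mathscr{O}_{\widetilde{M}(r)}\otimes \mathscr{H}_+$, where $\mathscr{H}_+ = W[z]\oplus wW[w]$. The kernel consists of sections whose polynomial part vanishes at both ends, i.e. sections decaying into the interior of $C_u$; tracing through the presentation of Proposition \ref{prop:explicitextforzw}, the vanishing of these positive parts collapses the defining constraint to $(u - B_2 B_1)\psi = 0$, yielding the canonical isomorphism with $\ker(u - B_2 B_1)$. For the cokernel I would build a morphism $\textnormal{coker}(\mathbb{V}_u \to \mathscr{O}_{\widetilde{M}(r)}\otimes\mathscr{H}_+) \to \textnormal{coker}(u - B_2 B_1)$ as the two-sided analog of the assignment $p(z) \mapsto p(B_1)I$ from Proposition \ref{maptoGr}, prove it well-defined and injective from the presentation, and prove it surjective using a two-sided version of Lemma \ref{stabilitybasis}.

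The step I expect to be the main obstacle is this last one: showing that the two separate asymptotic-end conditions collapse precisely into the single operator $u - B_2 B_1$. In the one-sided situation of Proposition \ref{maptoGr} the relevant operator was simply $B_2$ and the gluing was trivial; here the ``in'' and ``out'' data are matched across the interior of $C_u$, and—because the $z$- and $w$-polarizations are related nonlocally (by the Fourier transform alluded to in Section \ref{subsect:introS(u)})—one must verify carefully that the contributions of the two ends are compatible and produce no extra kernel or cokernel beyond that of $u - B_2 B_1$. Controlling this interaction, and confirming along the way the finiteness of kernel and cokernel (hence landing in the correct connected component of the Grassmannian) via an index computation, is where the hypothesis $u \notin \hbar\mathbb{Z}$ does its real work.
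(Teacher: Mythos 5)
Your plan follows the same architecture as the paper's proof---the gauge-fixed presentation from Proposition \ref{prop:explicitextforzw}, expansion at the two asymptotic ends, costability (Lemma \ref{stabandcostab}) for injectivity, and the assignment $(p^+(z), wp^-(w)) \mapsto p^+(B_1)I + B_2 p^-(B_2) I$ for the cokernel---and your diagnosis of where $u \notin \hbar\mathbb{Z}$ must enter (invertibility of integer-$\hbar$ shifts of $u$ coming from the Euler operator $zw$) is exactly right. But the step you defer as ``the main obstacle'' is not a compatibility check to be controlled at the end; it is the one genuinely new construction the proof requires, and without it your map is not even defined. In the $P = w$ case, $\widehat{\mathscr{O}}^{\ell}_{\mathbb{C}} = \mathbb{C}[z]$ sits inside $\mathbb{C}((z^{-1}))$ and \eqref{Winclude} makes literal sense. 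Here the data $(\psi_+, \psi_-, \xi)$ of Proposition \ref{prop:explicitextforzw} take values in $\widehat{\mathscr{O}}^\ell_{C_u}$, which has vector space basis $\{1, z^n, w^n\}_{n\geq 1}$: to read off an element of $\mathscr{H}_{\text{in}} = W((z^{-1}))$ one must say what a positive power of $w$ \emph{is} as a series in $z^{-1}$, and symmetrically for $\mathscr{H}_{\text{out}}$. The paper does this in Appendix \ref{subsect:functlzw} by constructing two $\mathscr{D}_\hbar$-module morphisms $\phi_z : \widehat{\mathscr{O}}^\ell_{C_u} \to \mathbb{C}[z, z^{-1}]$ and $\phi_w: \widehat{\mathscr{O}}^\ell_{C_u} \to \mathbb{C}[w,w^{-1}]$, with $\phi_z(w^n) = u(u - \hbar)\cdots(u - (n-1)\hbar)\, z^{-n}$ and $\phi_w(z^n) = (u + \hbar)\cdots(u + n\hbar)\, w^{-n}$; these are isomorphisms precisely when $u \notin \hbar\mathbb{Z}$, and every subsequent step of the proof---the definition $f_{\text{in}} = J(z - B_1)^{-1}\phi_z(\psi_+) + \phi_z(\xi)$ and $f_{\text{out}} = J(w-B_2)^{-1}\phi_w(\psi_-) + \phi_w(\xi)$, the extraction of $\xi^+ = 0$ and $\xi^- = 0$ from vanishing of the two expansions, and the identification of the projection to $\mathscr{H}_+$ with $(\xi,\psi_0)\mapsto(\xi^+(z), w\xi^-(w))$---is phrased through them. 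So the hypothesis $u \notin \hbar\mathbb{Z}$ does its work at the very first step, in defining the inclusion, not only in a later ``no resonance'' verification.

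One smaller correction to your injectivity sketch: the computation is not verbatim ``as in Proposition \ref{maptoGr}.'' After one shows $\xi = 0$, equations \eqref{eq:zwextsoln} force $(u - B_2B_1)\psi_0 = 0$, and the invariance of the \emph{single} span $S = \text{Span}(\psi_0, B_1^n \psi_0, B_2^n\psi_0)_{n \geq 1}$ under both $B_1$ and $B_2$ uses this constraint together with the moment map relation: one finds $B_2 B_1^n \psi_0 = (u - (n-1)\hbar)B_1^{n-1}\psi_0$ and $B_1 B_2^n \psi_0 = (u + n\hbar) B_2^{n-1}\psi_0$, so the parameter $u$ enters the invariance computation itself, after which Lemma \ref{stabandcostab} kills $S$. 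Your kernel and cokernel steps, by contrast, are correct as stated and coincide with the paper's: the kernel is cut out by $\xi = 0$, hence is $\ker(u - B_2B_1)$, and the cokernel comparison map is injective from the presentation and surjective by the stability argument via Lemma \ref{stabilitybasis}.
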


\begin{proof}
    This occupies Appendix \ref{subsect:explicitextforzw}, \ref{subsect:functlzw}, \ref{proof:VmaptoGr}.
\end{proof}

By the above proposition, $\mathbb{V}_u$ gives rise to a morphism 
\begin{equation}
    G: \widetilde{M}(r) \to \text{Gr}(\mathscr{H}_{\text{in}} \oplus \mathscr{H}_{\text{out}})
\end{equation}
which we may compose with the Pl\"{u}cker embedding $\text{Gr}(\mathscr{H}_{\text{in}} \oplus \mathscr{H}_{\text{out}}) \hookrightarrow \mathbb{P}(\mathscr{F}^{\text{in}} \otimes \mathscr{F}^{\text{out}})$ to get a morphism $$\ket{\mathbb{V}_u} : \widetilde{M}(r) \to \mathbb{P}(\mathscr{F}^{\text{in}} \otimes \mathscr{F}^{\text{out}}).$$

\begin{prop} \label{prop:S(u)vaccoeff}
    $\ket{\mathbb{V}_u}$ lifts to a map to $\mathscr{F}^{\text{in}} \otimes \mathscr{F}^{\text{out}}$, and there is a unique choice of lift satisfying $\braket{0}{\mathbb{V}_u} = \det(u - B_2 B_1)$. 
\end{prop}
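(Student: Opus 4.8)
The plan is to imitate the proof of Proposition \ref{vaccomponent} essentially verbatim, with the endomorphism $B_2$ of $\mathscr{V}$ replaced throughout by the endomorphism $u - B_2 B_1$. Recall from Appendix \ref{fermionreview} that the vacuum Pl\"{u}cker coordinate $\braket{0}{\mathbb{V}_u}$ is, by construction, a section of the pulled-back determinant line $G^* \text{DET}^*$ over $\widetilde{M}(r)$, and that it vanishes (as a section) precisely along the degeneracy locus of the tautological index-zero projection $\pi \colon \mathbb{V}_u \to \mathscr{O}_{\widetilde{M}(r)} \otimes \mathscr{H}_+$, i.e. exactly where $\ker \pi \neq 0$. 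The entire content of the statement is therefore to identify this zero locus with the divisor of a single explicit global function and to normalize accordingly.

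First I would invoke Proposition \ref{prop:VmaptoGr}, which (under the standing hypothesis $u \notin \hbar \mathbb{Z}$) supplies the isomorphisms $\ker \pi \simeq \ker(u - B_2 B_1)$ and $\text{coker}\, \pi \simeq \text{coker}(u - B_2 B_1)$ of coherent sheaves over $\widetilde{M}(r)$. Working component by component on each $\widetilde{M}(n, r)$ — where $\mathscr{V}$ has rank $n$, the composite $u - B_2 B_1$ is an honest endomorphism of $\mathscr{V}$, and $\det(u - B_2 B_1)$ is a genuine global regular function — this identification shows that the zero scheme of $\braket{0}{\mathbb{V}_u}$ coincides with the divisor $D := \text{div}\bigl(\det(u - B_2 B_1)\bigr)$.

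The next step is to deduce triviality of the line bundle, which is what allows the lift from $\mathbb{P}(\mathscr{F}^{\text{in}} \otimes \mathscr{F}^{\text{out}})$ to $\mathscr{F}^{\text{in}} \otimes \mathscr{F}^{\text{out}}$ itself. Since each $\widetilde{M}(n, r)$ is a smooth affine variety (Lemma \ref{stabandcostab}), the section $\braket{0}{\mathbb{V}_u}$ exhibits $G^*\text{DET}^* \simeq \mathscr{O}(D)$, while the global function $\det(u - B_2 B_1)$ exhibits $\mathscr{O}(D) \simeq \mathscr{O}_{\widetilde{M}(r)}$; composing, $G^*\text{DET}^*$ is trivial, so $\ket{\mathbb{V}_u}$ lifts. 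Any two lifts differ by a nowhere-vanishing global function, so the lift is pinned down uniquely by demanding $\braket{0}{\mathbb{V}_u} = \det(u - B_2 B_1)$.

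The main obstacle — the only point that requires more than the commutative template of Proposition \ref{vaccomponent} — is matching vanishing \emph{orders}: one must check that $D$ is the divisor of $\det(u - B_2 B_1)$ with correct multiplicities, not merely set-theoretically, since it is precisely this that makes $\mathscr{O}(D) \simeq \mathscr{O}$ via that specific function. The cleanest route is to use that the identifications of Proposition \ref{prop:VmaptoGr} are honest isomorphisms of coherent sheaves rather than fiberwise statements, so that they identify the determinant of the Fredholm map $\pi$ with the finite-dimensional determinant of $(u - B_2 B_1) \colon \mathscr{V} \to \mathscr{V}$ up to a nowhere-vanishing factor, whence equality of the scheme-theoretic degeneracy loci follows. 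Finally, one should keep track of the role of $u \notin \hbar \mathbb{Z}$, which via Propositions \ref{extgroups} and \ref{prop:explicitextforzw} guarantees that $\mathbb{V}_u$ is a genuine vector bundle with $\text{Ext}$ concentrated in degree one, so that the determinant-line formalism of Appendix \ref{fermionreview} applies without modification.
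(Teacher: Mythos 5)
Your proposal is correct and is essentially the paper's own proof, which consists of the single line ``Given Proposition \ref{prop:VmaptoGr}, the same argument as in Proposition \ref{vaccomponent} applies'': you spell out exactly that argument, identifying the vanishing locus of the section $\braket{0}{\mathbb{V}_u}$ of $G^*\textnormal{DET}^*$ with the divisor of the global function $\det(u - B_2 B_1)$ via Proposition \ref{prop:VmaptoGr}, concluding triviality of the line bundle, existence of the lift, and uniqueness after normalization. Your added remarks on scheme-theoretic multiplicities and the role of $u \notin \hbar\mathbb{Z}$ are sensible elaborations of details the paper leaves implicit, not a departure from its method.
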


\begin{proof}
    Given Proposition \ref{prop:VmaptoGr}, the same argument as in Proposition \ref{vaccomponent} applies. 
\end{proof}

It is convenient at this point to introduce some notation for the components of $\ket{\mathbb{V}_u}$ in the standard basis of $\mathscr{F}^{\text{in}} \otimes \mathscr{F}^{\text{out}}$. By the general yoga of semi-infinite wedge as reviewed in Appendix \ref{fermionreview}, each of these is given by the determinant of some finite matrix with coefficients in functions on the instanton moduli space. Passing to duals, we may think of these as the matrix coefficients of a family of operators $$\mathbb{S}(u): \widetilde{M}(r) \to \text{Hom}(\mathscr{F}^{\text{in}, \vee}, \mathscr{F}^{\text{out}})$$
and this will be our definition of the operator $\mathbb{S}(u)$ in this paper. Sometimes we will drop the dual on $\mathscr{F}^{\text{in}}$ for simplicity.

We have the vacuum vector $\ket{0} = \ket{0}_z \otimes \ket{0}_w \in \mathscr{F}^{\text{in}} \otimes \mathscr{F}^{\text{out}}$, which in the semi-infinite wedge notation is $\ket{0}_z = 1 \wedge z \wedge z^2 \wedge \dots$, $\ket{0}_w = w \wedge w^2 \wedge w^3 \wedge \dots$. The linear functional $\bra{0} = \bra{0}_z \otimes \bra{0}_w$, where $\bra{0}_w = \dots \wedge w^{-2} \wedge w^{-1} \wedge 1$ etc. Our notation for $\mathbb{S}(u)$ is such that the following equalities hold:
\begin{equation}
    \bra{0} \widetilde{\psi}^{\text{in}}(\zeta) \psi^{\text{out}}(w) \ket{\mathbb{V}_u} = (\bra{0}_{\zeta} \widetilde{\psi}(\zeta) \otimes \bra{0}_w \psi(w)) \ket{\mathbb{V}_u} =: \bra{0} \widetilde{\psi}(\zeta) \mathbb{S}(u) \psi(w) \ket{0}
\end{equation}
and likewise for strings of $\psi$'s. Note in this notation, the vacuum vector on the right kills the \textit{positive} powers of $w$. This choice is particularly natural from the perspective of the operator formalism of conformal field theory on the algebraic curve $zw = u$ \cite{witten}. In this notation, Proposition \ref{prop:S(u)vaccoeff} fixes the vacuum matrix element 
\begin{equation}
    \bra{0} \mathbb{S}(u) \ket{0} = \det(u - B_2 B_1)
\end{equation}
which determines the normalization of the operator $\mathbb{S}(u)$. 

\subsection{Determining the operator $\mathbb{S}(u)$} \label{sect:determineS}
Paralleling the developments of Section \ref{section:scatterandmiura}, we want to give a more direct algebraic characterization of $\mathbb{S}(u)$. Because of the nonlocal nature of $\mathbb{S}(u)$, interpolating between the region of $z \to \infty$ and $w \to \infty$, it is a somewhat more involved object than the strings of Miura operators encountered in Section \ref{section:scatterandmiura}. 

\subsubsection{$\tau$-functions and Miura operators}
As a warm-up, pass to the rank $r = 1$ case and observe that there are two actions of the group $\Gamma_+$ (in the notation of \cite{segal-wilson}) on $\text{Gr}(\mathscr{H}_{\text{in}} \oplus \mathscr{H}_{\text{out}})$, one for each summand. 

\begin{prop} \label{prop:taufnforS(u)}
We have 
\begin{equation} 
\begin{split}
    \bra{0} e^{\sum_{m = 1}^\infty t_m \alpha_m^{\text{in}}} \ket{\mathbb{V}_u} & = \bra{0} e^{\sum_{m = 1}^\infty t_m \alpha_m} \mathbb{S}(u) \ket{0} = \det \Big( u - B_2 B_1 + \hbar\sum_{m = 1}^\infty m t_m B_1^m  \Big)  \\
    \bra{0} e^{\sum_{m = 1}^\infty \tilde{t}_m \alpha^{\text{out}}_m} \ket{\mathbb{V}_u} & = \bra{0} \mathbb{S}(u) e^{\sum_{m = 1}^\infty \tilde{t}_m \alpha_{-m}} \ket{0} = \det \Big(u - B_2 B_1 - \hbar \sum_{m = 1}^\infty m \tilde{t}_m B_2^m \Big).  
\end{split}
\end{equation}
\end{prop}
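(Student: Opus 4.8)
The plan is to mirror the proof of Proposition \ref{taufnformula}, exploiting the fact that each of the two displayed formulas turns on only one of the two $\Gamma_+$ flows; because we never activate both simultaneously, we avoid the genuinely nonlocal interplay between the $z\to\infty$ and $w\to\infty$ ends of $C_u$, and the computation reduces to a one-sided deformation of the kind already handled for $P(z,w)=w$.

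First I would invoke the central insight of \cite{Aganagic_2005}, \cite{DHS}, \cite{DHSV} already used in Proposition \ref{taufnformula}: the action of $\Gamma_+$ on the relevant summand of $\textnormal{Gr}(\mathscr{H}_{\textnormal{in}}\oplus\mathscr{H}_{\textnormal{out}})$ is the KP flow, realized geometrically as a deformation of the support curve. Concretely, the ``in'' action $g(t)=\exp(\sum_m t_m\alpha_m^{\textnormal{in}})$ multiplies $z\to\infty$ expansions by $\gamma(z)=\exp(\sum_m t_m z^m)$, and conjugating the defining operator in the $z$-polarization ($w=\hbar\partial_z$) gives
\[
\gamma(z)(zw-u)\gamma(z)^{-1} = zw - u - \hbar\sum_m m t_m z^m ,
\]
so that $g(t)\cdot\mathbb{V}_u \simeq \textnormal{Ext}^1_{\mathscr{D}_\hbar}(\widehat{\mathscr{O}}_{C_u}(t),\mathscr{E})$ with $\widehat{\mathscr{O}}_{C_u}(t)=\mathscr{D}_\hbar/(zw-u-\hbar\sum_m m t_m z^m)\mathscr{D}_\hbar$. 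Dually, the ``out'' action $\exp(\sum_m\tilde t_m\alpha_m^{\textnormal{out}})$ multiplies $w\to\infty$ expansions by $\exp(\sum_m\tilde t_m w^m)$, and the same conjugation in the $w$-polarization ($z=-\hbar\partial_w$) produces instead $zw-u+\hbar\sum_m m\tilde t_m w^m$; the opposite sign is exactly the difference between $w=+\hbar\partial_z$ and $z=-\hbar\partial_w$. Here I would flag the one genuinely delicate point: I must verify that this conjugation legitimately replaces the generator of the deformed right ideal and reproduces the Grassmannian point obtained from the abstract $\Gamma_+$-action — this is the analog of the identification asserted in Proposition \ref{taufnformula}, and is where ``KP flow $=$ curve deformation'' does the real work.

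Next I would run the two-sided explicit-Ext computation of Proposition \ref{prop:explicitextforzw} on the deformed curve. The point is that the $z$-polynomial deformation enters the gauge-fixed incidence condition through the single substitution $B_2 \mapsto B_2 - \hbar\sum_m m t_m B_1^{m-1}$, exactly as the deformation $w\mapsto w-\hbar\sum_m m t_m z^{m-1}$ touched only the occurrence of $B_2$ in the proof of Proposition \ref{taufnformula}. Feeding this into the kernel description of Proposition \ref{prop:VmaptoGr} replaces the operator $u-B_2B_1$ by
\[
u - \Big(B_2 - \hbar\sum_m m t_m B_1^{m-1}\Big)B_1 = u - B_2 B_1 + \hbar\sum_m m t_m B_1^m .
\]
Symmetrically, the ``out'' deformation enacts $B_1\mapsto B_1+\hbar\sum_m m\tilde t_m B_2^{m-1}$, turning $u-B_2B_1$ into $u-B_2B_1-\hbar\sum_m m\tilde t_m B_2^m$. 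The work in this step is to confirm that, with a single flow active, the deformation does not cross-contaminate the opposite asymptotic end of $C_u$ (and that no $\hbar^2$-ordering corrections appear), so that the clean single-substitution picture genuinely holds.

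Finally I would conclude exactly as in Proposition \ref{prop:S(u)vaccoeff}: the vacuum component $\braket{0}{g(t)\cdot\mathbb{V}_u}$ is a section of the pulled-back dual determinant line that vanishes precisely where the projection to $\mathscr{H}_+$ acquires a kernel, i.e. on the divisor of the determinant of the deformed operator above. Hence it equals that determinant up to a nowhere-vanishing factor, which the normalization of Proposition \ref{prop:S(u)vaccoeff} (agreement with $\det(u-B_2B_1)$ at $t=\tilde t=0$) pins to $1$, via the same limiting/inductive argument that closes Proposition \ref{taufnformula}. The operator reformulations $\braket{0}{g(t)\cdot\mathbb{V}_u}=\bra{0}e^{\sum t_m\alpha_m}\mathbb{S}(u)\ket{0}$ and $\braket{0}{\cdots}=\bra{0}\mathbb{S}(u)e^{\sum\tilde t_m\alpha_{-m}}\ket{0}$ hold by the very definition of $\mathbb{S}(u)$ in Section \ref{sect:defineS}, yielding the two stated determinant formulas. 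I expect the main obstacle to be the first two steps — rigorously justifying that each single $\Gamma_+$ flow is realized as the stated one-sided curve deformation and enters the two-sided Ext computation as one clean substitution — rather than the closing determinant bookkeeping, which is routine.
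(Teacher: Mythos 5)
Your proposal is correct and is essentially the paper's own argument: the paper proves this proposition simply by saying it follows "by the same argument as Proposition \ref{taufnformula}," i.e.\ exactly your route of identifying each one-sided $\Gamma_+$ flow with a deformation of the curve $zw=u$, feeding the resulting substitution ($B_2\mapsto B_2-\hbar\sum_m m t_m B_1^{m-1}$, resp.\ $B_1\mapsto B_1+\hbar\sum_m m\tilde t_m B_2^{m-1}$) into the explicit Ext/kernel description of Propositions \ref{prop:explicitextforzw} and \ref{prop:VmaptoGr}, and fixing the normalization as in Proposition \ref{prop:S(u)vaccoeff}. Your sign bookkeeping in the two polarizations and the resulting determinants match the stated formulas exactly.
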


\begin{proof}
By the same argument as Proposition \ref{taufnformula}, details omitted. 
\end{proof}
Recall the open subset $\iota: U \hookrightarrow \widetilde{M}(n, 1)$, on which $B_1$ is diagonalizable with distinct eigenvalues $z_i$ and $B_2$ has diagonal elements $p_i$. Introduce a second open subset $\jmath : V \hookrightarrow \widetilde{M}(n, 1)$ on which $B_2$ is diagonalizable with distinct eigenvalues $\widetilde{p}_i$ and in which $B_1$ has diagonal elements $\widetilde{z}_i$. As a state in the Fock space is uniquely determined by its $\tau$-function, we have from \eqref{formula:ISMcoefficients}
\begin{theorem} \label{thm:Scolumns factor}
In rank $r = 1$, the first row and column of the instanton scattering matrix $\mathbb{S}(u)$ satisfy (we write formulas in terms of $\ket{\mathbb{V}_u}$ to avoid confusion with notation)
\begin{equation}
\begin{split}
    \bra{0}_w \iota^* \ket{\mathbb{V}_u} & = (u - z_n p_n + \hbar z_n \partial \varphi^{\text{in}}(z_n)) \dots (u - z_1 p_1 + \hbar z_1 \partial \varphi^{\text{in}}(z_1))\ket{0}_z \in \mathscr{F}^{\text{in}} \\
    \bra{0}_z \jmath^* \ket{\mathbb{V}_u} & = (u - \widetilde{z}_n \widetilde{p}_n - \hbar \widetilde{p}_n \partial \varphi^{\text{out}}(\widetilde{p}_n)) \dots (u - \widetilde{z}_1 \widetilde{p}_1 - \hbar \widetilde{p}_1 \partial \varphi^{\text{out}}(\widetilde{p}_1)) \ket{0}_w \in \mathscr{F}^{\text{out}}. 
\end{split}
\end{equation}
\end{theorem}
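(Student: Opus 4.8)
The plan is to prove both identities by comparing $\tau$-functions, using that a point of the Sato Grassmannian — equivalently a vector in Fock space — is uniquely determined by its $\tau$-function, i.e. by its pairing against the family $\bra{0}e^{\sum_{m\geq 1}t_m\alpha_m}$ (Appendix \ref{fermionreview}). I will treat the first identity in detail and obtain the second by the $z\leftrightarrow w$ (equivalently $B_1\leftrightarrow B_2$) symmetry of the curve $zw=u$. The $\tau$-function of the left-hand side $\bra{0}_w\iota^*\ket{\mathbb{V}_u}\in\mathscr{F}^{\text{in}}$ is recorded in the first line of Proposition \ref{prop:taufnforS(u)}:
\begin{equation}
\bra{0}e^{\sum_{m\geq 1}t_m\alpha^{\text{in}}_m}\iota^*\ket{\mathbb{V}_u}=\iota^*\det\Big(u-B_2B_1+\hbar\textstyle\sum_{m\geq 1}mt_mB_1^m\Big).
\end{equation}
Writing $\widetilde{B}_2(t):=B_2-\hbar\sum_{m\geq 1}mt_mB_1^{m-1}$, the matrix inside this determinant is exactly $u-\widetilde{B}_2(t)B_1$, so it remains to show that the right-hand side of the theorem has the same $\tau$-function.

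Set $M_i:=p_i-\hbar\partial\varphi^{\text{in}}(z_i)$, so that the right-hand side is $(u-z_nM_n)\cdots(u-z_1M_1)\ket{0}$. Since $u$ and the eigenvalues $z_i$ are scalars on $U$, expanding the ordered product gives
\begin{equation}
(u-z_nM_n)\cdots(u-z_1M_1)=\sum_{S\subseteq\{1,\dots,n\}}u^{\,n-|S|}(-1)^{|S|}\Big(\prod_{i\in S}z_i\Big)\prod_{i\in S}^{\downarrow}M_i,
\end{equation}
where $\prod_{i\in S}^{\downarrow}M_i$ denotes the product of the $M_i$, $i\in S$, in decreasing order of $i$. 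The key computational input is that each sub-collection $\{(z_i,p_i)\}_{i\in S}$ is itself a valid point of the Darboux chart of $\widetilde{M}(|S|,1)$: restricting the explicit data $I=(1,\dots,1)^{T}$, $J=\hbar(1,\dots,1)$ and $(B_2)_{ij}=p_i\delta_{ij}-\hbar(z_i-z_j)^{-1}(1-\delta_{ij})$ to rows and columns in $S$ still satisfies $\comm{B_1^S}{B_2^S}+I^SJ^S=\hbar$. Hence formula \eqref{formula:ISMcoefficients} (equivalently Theorem \ref{thm:instantonfactor}), applied to this sub-configuration, yields
\begin{equation}
\bra{0}e^{\sum_{m\geq 1}t_m\alpha^{\text{in}}_m}\prod_{i\in S}^{\downarrow}M_i\ket{0}=\det_{i,j\in S}\widetilde{B}_2(t)_{ij},
\end{equation}
since the restriction of $\widetilde{B}_2(t)$ to $S$ is precisely the matrix produced by \eqref{formula:ISMcoefficients} for the sub-configuration (its off-diagonal part is unchanged and $B_1^S$ is diagonal).

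Pulling the scalars $u$ and $z_i$ through the pairing and summing, the $\tau$-function of the right-hand side becomes $\sum_S u^{n-|S|}(-1)^{|S|}(\prod_{i\in S}z_i)\det_{i,j\in S}\widetilde{B}_2(t)_{ij}$. Because $B_1=\text{diag}(z_i)$ on $U$, factoring $z_j$ out of the $j$-th column identifies $(\prod_{i\in S}z_i)\det_{i,j\in S}\widetilde{B}_2(t)_{ij}$ with the principal minor $\det\big((\widetilde{B}_2(t)B_1)_S\big)$, so the sum over $S$ is exactly the principal-minor (characteristic-polynomial) expansion of $\det(u-\widetilde{B}_2(t)B_1)$. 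This coincides identically in $t$ with the $\tau$-function of the left-hand side computed above; at $t=0$ both reduce to $\det(u-B_2B_1)$, consistent with Proposition \ref{prop:S(u)vaccoeff}. As the $\tau$-function determines a Fock-space vector outright (not merely up to scale), the two states are equal and the first identity follows. The second identity is then obtained by running the same argument on the chart $\jmath:V\hookrightarrow\widetilde{M}(n,1)$ where $B_2$ is diagonalized, starting from the second line of Proposition \ref{prop:taufnforS(u)} and the dual Miura operators $u-\widetilde{z}_i\widetilde{p}_i-\hbar\widetilde{p}_i\partial\varphi^{\text{out}}(\widetilde{p}_i)$.

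The main obstacle is the middle step: one must verify that the restriction of the ADHM data to an arbitrary subset $S$ of eigenvalues of $B_1$ defines a genuine instanton configuration, so that \eqref{formula:ISMcoefficients} can be invoked term by term, and then recognize the resulting sum of principal minors as a single characteristic polynomial — a clean linear-algebra identity, but the pivot on which the whole computation turns. A secondary subtlety is the second identity: there the elementary factors are the dual Miura operators attached to diagonalizing $B_2$ rather than $B_1$, and one must check that the symmetry of $zw=u$ (under which $\det(u-B_2B_1)=\det(u-B_1B_2)$) interchanges the two lines of Proposition \ref{prop:taufnforS(u)} compatibly with the chart $V$ and with the $z\leftrightarrow w$ dual form of the factorization formula \eqref{formula:ISMcoefficients}.
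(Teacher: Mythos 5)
Your proposal is correct, and at the strategic level it is the same proof as the paper's: the paper deduces Theorem \ref{thm:Scolumns factor} in one line from the fact that a vector of $\mathscr{F}_0$ is determined by its $\tau$-function, comparing Proposition \ref{prop:taufnforS(u)} against \eqref{formula:ISMcoefficients}. Where you differ is in the mechanism reducing the shifted product to \eqref{formula:ISMcoefficients}, and your mechanism is sound. The pivot you worried about is fine: on $U$ the moment map equation holds \emph{entrywise}, $\bigl(\comm{B_1}{B_2}+IJ\bigr)_{ij}=\hbar\delta_{ij}$, so deleting the rows and columns outside $S$ manifestly yields a valid point of the Darboux chart of $\widetilde{M}(|S|,1)$, with the induced ordering of the $|z_i|$ intact; and the resummation $\sum_S u^{n-|S|}(-1)^{|S|}\bigl(\prod_{i\in S}z_i\bigr)\det_{S}\widetilde{B}_2(t)=\sum_S u^{n-|S|}(-1)^{|S|}\det\bigl((\widetilde{B}_2(t)B_1)_S\bigr)=\det\bigl(u-\widetilde{B}_2(t)B_1\bigr)$ is the standard principal-minor expansion. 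A shorter implementation, presumably the one the paper has in mind, is to write $u-z_ip_i+\hbar z_i\partial\varphi^{\text{in}}(z_i)=-z_i\bigl(p_i-u/z_i-\hbar\partial\varphi^{\text{in}}(z_i)\bigr)$, so the whole product is $(-1)^n\det(B_1)$ times a product of honest Miura operators at the shifted chart point $B_2\mapsto B_2-uB_1^{-1}$; then \eqref{formula:ISMcoefficients} applies verbatim and $(-1)^n\det(B_1)\det\bigl(B_2-uB_1^{-1}-\hbar\textstyle\sum_m mt_mB_1^{m-1}\bigr)=\det\bigl(u-B_2B_1+\hbar\sum_m mt_mB_1^m\bigr)$, matching Proposition \ref{prop:taufnforS(u)}. (Your subset route has the minor advantage of never dividing by $z_i$.) For the second identity, which you leave to ``symmetry'': the check does go through, but the point worth making explicit is that the $z\leftrightarrow w$ swap carries $\hbar\to-\hbar$, which is exactly why the dual factors carry the opposite sign of $\hbar$. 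Concretely, on the chart $V$ the moment map forces $(B_1)_{ij}=\widetilde z_i\delta_{ij}+\hbar(1-\delta_{ij})/(\widetilde p_i-\widetilde p_j)$, so the dual of \eqref{formula:ISMcoefficients} reads $\bra{0}e^{\sum_m\tilde t_m\alpha_m}(\widetilde z_n+\hbar\partial\varphi^{\text{out}}(\widetilde p_n))\cdots(\widetilde z_1+\hbar\partial\varphi^{\text{out}}(\widetilde p_1))\ket{0}=\jmath^*\det\bigl(B_1+\hbar\sum_m m\tilde t_mB_2^{m-1}\bigr)$, and the substitution $B_1\mapsto B_1-uB_2^{-1}$ then reproduces the second line of Proposition \ref{prop:taufnforS(u)} with the stated signs. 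With that made explicit, your argument is complete.
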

In what follows, the rest of the matrix $\mathbb{S}(u)$ will be determined.

\subsubsection{Pl\"{u}cker relations and fermion propagator}
A more geometric way to rephrase Proposition \ref{prop:taufnforS(u)} is that the map $G: \widetilde{M}(1) \to \text{Gr}(\mathscr{H}_{\text{in}} \oplus \mathscr{H}_{\text{out}})$ constructed above is equivariant with respect to each $\Gamma_+$-action, with the action on $\mathscr{H}_{\text{in}}$ being intertwined with the symplectomorphisms generated by Hamiltonians $H(t) = \sum_m t_m \tr(B_1^m)$ and the action on $\mathscr{H}_{\text{out}}$ being intertwined with symplectomorphisms generated by $\widetilde{H}(\tilde{t}) = \sum_m \tilde{t}_m \tr(B_2^m)$. 

It is clear that these $\Gamma_+$ actions on $\widetilde{M}(1)$ \textit{do not commute}, so one does not expect a simple formula like \eqref{formula:cylindermiuracoeff} for the general matrix coefficient of $\mathbb{S}(u)$. This phenomenon was observed, in a different language, in \cite{Aganagic_2005}. Nonetheless, there is the following workaround. 

Because $\mathbb{S}(u)$ arises from a state $\ket{\mathbb{V}_u}$ in the image of the Pl\"{u}cker embedding, it of course satisfies Pl\"{u}cker relations. It is a classical fact (see e.g. Chapter 9 of \cite{miwa2000solitons}) that Pl\"{u}cker relations characterize the image of the embedding $GL(\mathscr{H}) \hookrightarrow GL(\mathscr{F})$ where $\mathscr{H}$ is a vector space that may be finite or infinite dimensional (one has to take appropriate care in defining $GL(\mathscr{H})$ in the latter case), and $\mathscr{F}$ is the exterior algebra of $\mathscr{H}$. In our current setting, this idea manifests as the following 

\begin{lemma} \label{lemma:wickthm}
The instanton scattering matrix $\mathbb{S}(u)$ is uniquely determined by its vacuum matrix element $\bra{0} \mathbb{S}(u) \ket{0}$ and the object 
\begin{equation}
    G(\zeta, w) := \frac{\bra{0} \widetilde{\psi}(\zeta) \mathbb{S}(u) \psi(w) \ket{0}}{\bra{0} \mathbb{S}(u) \ket{0}} = \frac{\bra{0} \widetilde{\psi}^{in}(\zeta) \psi^{out}(w) \ket{\mathbb{V}_u}}{\braket{0}{\mathbb{V}_u}}.
\end{equation}
\end{lemma}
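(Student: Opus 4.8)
The lemma asserts that the operator $\mathbb{S}(u)$, which arises from a state $\ket{\mathbb{V}_u}$ in the image of the Plücker embedding, is uniquely determined by two pieces of data: its vacuum matrix element and the "fermion propagator" $G(\zeta, w)$ (the normalized two-point function). This is essentially a statement that Plücker relations (= the condition of being in the image of $GL(\mathscr{H}) \hookrightarrow GL(\mathscr{F})$) reduce the infinite data of all matrix coefficients to just the two-point function, via Wick's theorem.

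**What kind of statement is this?** This is a "Wick's theorem" / "free field" statement. For a state that is a Plücker point (i.e., comes from a genuine subspace via the semi-infinite wedge), all matrix coefficients of products of fermion operators factor as Pfaffians/determinants of the two-point function. The operator $\mathbb{S}(u)$ is built from the element $g \in GL(\mathscr{H})$ (or rather its image in $GL(\mathscr{F})$) defined by the subspace $\mathbb{V}_u \subset \mathscr{H}_{\text{in}} \oplus \mathscr{H}_{\text{out}}$. Since $\mathbb{S}(u)$ is a "group-like" element (satisfies $[\Omega, \mathbb{S} \otimes \mathbb{S}] = 0$ as noted in the Plücker-relations discussion), its action on the fermions is by a linear (Bogoliubov-type) transformation, and such operators are completely determined by how they act on single fermions — equivalently by the two-point function — together with an overall normalization.

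**The plan.** The proof strategy would be:

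First, I would recall the precise content of the Plücker relations for $\mathbb{S}(u)$: the condition $[\Omega, \mathbb{S} \otimes \mathbb{S}] = 0$ where $\Omega = \sum_r \psi_r \otimes \widetilde\psi_{-r}$, established in the Plücker-relations discussion. This says exactly that $\mathbb{S}(u)$ lies in the image of $GL(\mathscr{H}) \hookrightarrow GL(\mathscr{F})$, i.e. it implements a linear transformation of the one-particle space $\mathscr{H}$ on Fock space. The key structural fact (classical; Chapter 9 of \cite{miwa2000solitons}) is that such a group-like operator $g = \exp(\sum a_{ij} :\psi_i \widetilde\psi_j:)$ acts on fermion fields linearly: $g\, \psi(w)\, g^{-1}$ is again a linear combination of the $\psi$'s (a Bogoliubov transformation), and similarly for $\widetilde\psi$.

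Second, I would invoke Wick's theorem for such group-like states. For a Bogoliubov-transformed (free) state, every multi-point function factors:
\begin{equation}
\frac{\bra{0} \widetilde\psi(\zeta_1)\cdots\widetilde\psi(\zeta_k)\, \mathbb{S}(u)\, \psi(w_1)\cdots\psi(w_k)\ket{0}}{\bra{0}\mathbb{S}(u)\ket{0}} = \det_{1 \le i, j \le k} G(\zeta_i, w_j),
\end{equation}
with all "anomalous" contractions $\langle\psi\psi\rangle$ and $\langle\widetilde\psi\widetilde\psi\rangle$ vanishing because $\mathbb{S}(u)$, being a map $\mathscr{F}^{\text{in},\vee} \to \mathscr{F}^{\text{out}}$ respecting charge, preserves fermion number (it is charge-conserving, so only equal numbers of $\psi$ and $\widetilde\psi$ give nonzero correlators, and the relevant propagator is the mixed $G(\zeta,w)$). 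Since the standard basis of $\mathscr{F}^{\text{in}} \otimes \mathscr{F}^{\text{out}}$ is obtained by applying such strings of fermion creation operators to the vacuum, every matrix coefficient of $\mathbb{S}(u)$ is a determinant in $G$ times the overall normalization $\bra{0}\mathbb{S}(u)\ket{0}$. Thus these two data determine $\mathbb{S}(u)$ completely, and conversely any two group-like operators with the same two-point function and normalization agree on all such matrix coefficients, hence coincide.

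**The main obstacle.** The genuinely delicate point is \emph{charge conservation / vanishing of anomalous contractions}: one must verify that $\mathbb{S}(u)$ really does preserve fermion number, so that only the mixed propagator $G(\zeta, w)$ enters and no $\langle\psi(\zeta)\psi(\zeta')\rangle$ or $\langle\widetilde\psi(w)\widetilde\psi(w')\rangle$ terms appear. This follows because $\mathbb{V}_u$ lands in the index-zero component of the Grassmannian (as in the corollary to Proposition \ref{maptoGr}, via the isomorphisms $\ker/\mathrm{coker}(u - B_2 B_1)$ of Proposition \ref{prop:VmaptoGr}), so $\mathbb{S}(u)$ carries charge zero and pairs $\psi$'s against $\widetilde\psi$'s only. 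The secondary subtlety — which the "Note for the worried reader" explicitly sets aside — is that $\mathscr{H}_{\text{in}} \oplus \mathscr{H}_{\text{out}}$ is infinite-dimensional, so the determinant formula must be understood coefficient-by-coefficient (each matrix element being a finite determinant), rather than as a genuine infinite determinant; I would phrase the argument entirely in terms of individual matrix coefficients in the standard basis to sidestep these completion issues, exactly as flagged in the excerpt.
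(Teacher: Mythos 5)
Your proposal is correct and follows essentially the same route as the paper: reduce $\mathbb{S}(u)$ to its vacuum element plus the multi-point correlators, use the fact that on the locus where $\bra{0}\mathbb{S}(u)\ket{0}$ is invertible the state $\ket{\mathbb{V}_u}$ is a group-like/Bogoliubov (big cell) point of the Grassmannian, and apply Wick's theorem to express every correlator as $\pm\det_{i,j} G(\zeta_i, w_j)$. The paper phrases the key input as "$\mathbb{V}_u$ lies in the big cell, hence $\mathbb{S}(u)\cdot(\bra{0}\mathbb{S}(u)\ket{0})^{-1}$ is the exponential of a fermion bilinear" (via Proposition \ref{prop:VmaptoGr}) where you phrase it via the Plücker relations and charge conservation, but these are the same classical facts and your coefficient-by-coefficient treatment of the completion issue matches the paper's.
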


\begin{proof}
    $\mathbb{S}(u)$ is clearly determined by the vacuum matrix element and 
    \begin{equation}
        \frac{\bra{0} \widetilde{\psi}(\zeta_1) \dots \widetilde{\psi}(\zeta_N) \mathbb{S}(u) \psi(w_1) \dots \psi(w_N) \ket{0}}{\bra{0} \mathbb{S}(u) \ket{0}}
    \end{equation}
    for all $N$. Proposition \ref{prop:VmaptoGr} shows that over the open subset of $\widetilde{M}(1)$ where 
    \begin{equation}
        u \notin \text{Eigenvalues}(B_2 B_1) \cup \hbar \mathbb{Z}
    \end{equation}
    (which is precisely where we may invert $\bra{0} \mathbb{S}(u) \ket{0}$) the fiber of $\mathbb{V}_u$ determines a point in the big cell of $\text{Gr}(\mathscr{H}_{\text{in}} \oplus \mathscr{H}_{\text{out}})$, therefore $\mathbb{S}(u) \cdot (\bra{0} \mathbb{S}(u) \ket{0})^{-1}$ may be written as the exponential of a fermion bilinear over that locus. By Wick's theorem, the expression above is then written as 
    \begin{equation}
        \pm \det_{1 \leq i , j \leq N} G(\zeta_i, w_j).
    \end{equation}
\end{proof}

For physicists, this lemma may be rephrased in a more familiar language. The fact that the state $\ket{\mathbb{V}_u}$ arises from a point in a Grassmannian, not just an arbitrary vector in $\mathscr{F}^{\text{in}} \otimes \mathscr{F}^{\text{out}}$, means that the fermion theory on $zw = u$ is \textit{free}, although it is nonlocal. When $\mathbb{V}_u$ is in the big cell, the state may be written as an explicit Bogoliubov transform of the Fock vacuum. It is of course a familiar fact that any free field theory, not necessarily local, is completely determined by its unnormalized partition function and propagator. 

Now we will compute the propagator $G(\zeta; w)$. For simplicity we discuss it here in the rank $r = 1$ case; the higher rank case is no more difficult but we postpone its discussion to Section \ref{sect:monopolefrominstanton} to discuss its relation with monopole scattering matrices. 

\begin{theorem} \label{thm:fermionpropagator}
    For $\mathbb{S}(u)$ constructed in Propositions \ref{prop:VmaptoGr}, \ref{prop:S(u)vaccoeff}, the propagator is given by the explicit formula
    \begin{equation}
    \begin{split}
        G(\zeta, w) = \frac{\bra{0} \widetilde{\psi}(\zeta) \mathbb{S}(u) \psi(w) \ket{0}}{\bra{0} \mathbb{S}(u) \ket{0}} & = J \frac{w}{w - B_2} \frac{1}{u - B_2 B_1} \frac{1}{\zeta - B_1} I \\
        & + \Big( J \frac{1}{w - B_2} \frac{1}{\zeta - B_1} I + 1 \Big) \sum_{k = 0}^\infty \hbar^k \frac{\Gamma( \frac{u}{\hbar} + k +  1)}{\Gamma(\frac{u}{\hbar} + 1)} w^{-k} \zeta^{- k - 1}.
    \end{split}
    \end{equation}
\end{theorem}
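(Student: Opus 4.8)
The plan is to invoke Lemma~\ref{lemma:wickthm} to reduce the entire theorem to the computation of the single normalized two-point function $G(\zeta,w)$, and then to identify $G$ as the reproducing (Cauchy) kernel of the point $\mathbb{V}_u\in\textnormal{Gr}(\mathscr{H}_{\textnormal{in}}\oplus\mathscr{H}_{\textnormal{out}})$ over the open locus where $\det(u-B_2B_1)\neq 0$, i.e. where $\bra{0}\mathbb{S}(u)\ket{0}$ is invertible, which by Proposition~\ref{prop:VmaptoGr} is exactly where $\mathbb{V}_u$ lies in the big cell. On that locus $\ket{\mathbb{V}_u}$ is a Bogoliubov transform of the vacuum, and $G(\zeta,w)$ is computed by the standard recipe: for each incoming mode one solves for the unique element of $\mathbb{V}_u$ whose $\mathscr{H}_+$-projection is prescribed and reads off its $\mathscr{H}_-$-component. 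I would organize the verification around the two characterizing properties of this kernel, namely its pole structure in $\zeta$ and $w$, and the Ward identities of Lemma~\ref{cliffward}, i.e. that $G(\zeta,\cdot)$ and $G(\cdot,w)$ pair to zero against sections of the dual bundle $\widetilde{\mathbb{V}}_u$.

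The computation splits naturally into a bare part and an ADHM dressing. First I would treat the charge-zero case $n=0$, where $\mathscr{E}=\mathscr{D}_\hbar$ and $\mathbb{V}_u=\widehat{\mathscr{O}}^\ell_{C_u}$ itself; here $G$ is the Cauchy kernel of the two punctures $z\to\infty$ and $w\to\infty$ of the noncommutative curve $zw=u$, computed by a direct manipulation inside the $\mathscr{D}_\hbar$-module $\widehat{\mathscr{O}}_{C_u}$. Iterating the defining relation $zw=u$ together with $wz=zw+\hbar$ relates the mode of order $k$ at $z\to\infty$ to the mode of order $k$ at $w\to\infty$, accumulating exactly one factor $(u+j\hbar)$ at each step, which yields
\begin{equation}
G_0(\zeta,w)=\sum_{k=0}^\infty \hbar^k\frac{\Gamma(\tfrac{u}{\hbar}+k+1)}{\Gamma(\tfrac{u}{\hbar}+1)}\,w^{-k}\zeta^{-k-1}=\sum_{k=0}^\infty\Big(\prod_{j=1}^k(u+j\hbar)\Big)w^{-k}\zeta^{-k-1},
\end{equation}
whose $\hbar\to 0$ limit is the classical kernel $w/(w\zeta-u)$ of the diagonal $\zeta=u/w$, a useful sanity check. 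I expect this bare computation to be the main obstacle: unlike the one-sided situation of Theorem~\ref{thm:instantonfactor}, the two insertions sit at distinct punctures related by the noncommutative Fourier transform $w\leftrightarrow\hbar\partial_z$, so the matching of the $z\to\infty$ and $w\to\infty$ expansions is genuinely nonlocal and is precisely what the $\Gamma$-function coefficients encode.

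With $G_0$ in hand I would reinstate the instanton. Using the explicit description of $\mathbb{V}_u$ from Proposition~\ref{prop:explicitextforzw} (the two-sided analog of \eqref{Winclude}, built from the resolvents $(\zeta-B_1)^{-1}$ at the incoming end and $(w-B_2)^{-1}$ at the outgoing end together with the framing maps $I,J$), I would solve the matching conditions defining $\mathbb{V}_u$ for the distinguished element attached to a source at $\zeta$. The step inverting the operator that transfers data between the two ends produces the resolvent $(u-B_2B_1)^{-1}$ — invertible exactly on our locus — while the geometric series $w/(w-B_2)=\sum_m B_2^m w^{-m}$ and $(\zeta-B_1)^{-1}=\sum_n B_1^n\zeta^{-n-1}$ assemble, together with the couplings $I,J$, into the finite-rank term $J\frac{w}{w-B_2}\frac{1}{u-B_2B_1}\frac{1}{\zeta-B_1}I$ and the dressing factor $\big(J\frac{1}{w-B_2}\frac{1}{\zeta-B_1}I+1\big)$ multiplying $G_0$. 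Consistency of the first row and column of this formula with Theorem~\ref{thm:Scolumns factor} provides an independent check, and I would confirm the Ward identities directly: the residues of $G$ at the eigenvalues of $B_1$ (poles in $\zeta$) and of $B_2$ (poles in $w$) cancel against sections of $\widetilde{\mathbb{V}}_u$ by the ADHM relation $\comm{B_1}{B_2}+IJ=\hbar$, exactly as in the contour-pulling argument of Theorem~\ref{thm:instantonfactor}. Uniqueness of the reproducing kernel then identifies the formula with $G$, and Lemma~\ref{lemma:wickthm} upgrades this to the full determination of $\mathbb{S}(u)$.
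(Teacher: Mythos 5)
Your proposal is correct and follows essentially the same route as the paper's proof: the paper also works from the explicit description of $\mathbb{V}_u$ in Proposition \ref{prop:explicitextforzw}, fixes the section sourced by the $\widetilde{\psi}(\zeta)$ insertion to be $\xi = \frac{1}{\zeta - z}$, solves \eqref{eq:zwextsoln} (producing the resolvent $(u - B_2 B_1)^{-1}$ exactly as you describe), and obtains the $\Gamma$-series as $\phi_w\bigl(\frac{1}{\zeta - z}\bigr)$ — your ``bare kernel'' $G_0$ computed by iterating $zw = u$, $wz = zw + \hbar$ is precisely this application of the functional realization $\phi_w$ from Appendix \ref{subsect:functlzw}. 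Your added wrapper (Ward identities, consistency with Theorem \ref{thm:Scolumns factor}, uniqueness of the reproducing kernel) is sound but redundant, since the direct solution of the matching conditions already determines $G(\zeta, w)$ outright.
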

Note that when $\hbar \neq 0$, the infinite series appearing is factorially divergent and must be understood only as a formal power series in $z, w$; this is a hallmark of the nonlocality of the operator $\mathbb{S}(u)$. If $\mathbb{S}(u)$ arose from the path integral of a local holomorphic field theory on a Riemann surface with two boundary components, this expression would extend analytically over that Riemann surface; instead $\mathbb{S}(u)$ arises from the path integral on the ``noncommutative curve'' $zw = u$. Happily, when $\hbar \to 0$ the series collapses to the expansion of $1/(\zeta - u/w)$, the expected behavior of the fermion propagator on the honest rational curve $zw = u$ with $\widetilde{\psi}$ inserted in the $z \to \infty$ patch and $\psi$ inserted in the $w \to \infty$ patch. 

We also note in passing that Proposition \ref{prop:VmaptoGr} implies that for each ADHM data $(B_1, B_2, I, J)$, $$\det_{1 \leq i, j \leq N} G(\zeta_i, w_j)$$ determines an explicit $\tau$-function of the Toda hierarchy in Miwa coordinates $(\zeta_i, w_j)$. 

\begin{proof}
We start by considering the following as a function of $z$:
\begin{equation}
     \frac{\bra{0} \widetilde{\psi}(\zeta) \psi(z) \mathbb{S}(u) \ket{0}}{\bra{0} \mathbb{S}(u) \ket{0}} = \frac{\bra{0} \widetilde{\psi}^{in}(\zeta) \psi^{in}(z) \ket{\mathbb{V}_u}}{\braket{0}{\mathbb{V}_u}}.
\end{equation}
From definitions and the description of $\mathbb{V}_u$ in Appendices \ref{subsect:explicitextforzw}, \ref{subsect:functlzw}, \ref{proof:VmaptoGr}, this expression is determined to be of the form
\begin{equation}
    \frac{\bra{0} \widetilde{\psi}^{in}(\zeta) \psi^{in}(z) \ket{\mathbb{V}_u}}{\braket{0}{\mathbb{V}_u}} = J \frac{1}{z - B_1} \phi_z( \psi_+) + \phi_z(\xi)
\end{equation}
where $\phi_z$ is the morphism from \ref{subsect:functlzw} and $(\psi_+, \psi_-, \xi)$ are the by now familiar triple from Propositions \ref{extgroups}, \ref{prop:explicitextforzw}. The insertion of $\bra{0} \widetilde{\psi}(\zeta)$ uniquely fixes 
\begin{equation}
     \xi = \frac{1}{\zeta - z} = \sum_{k = 0}^\infty z^k \zeta^{- k - 1} \in \widehat{\mathscr{O}}^\ell_{C_u}[[\zeta^{-1}]].
\end{equation}
It is understood that $\phi_z$ is applied to $\xi$ by expansion at $\zeta \to \infty$ and linearity. As we assume $u - B_2 B_1$ is invertible, \eqref{eq:zwextsoln} determines the rest of the triple to be 
\begin{equation}
\begin{split}
    \psi_+ & = -B_1 \frac{1}{u - B_2 B_1} \frac{1}{\zeta - B_1} I \\
    \psi_- & = \frac{1}{u - B_2 B_1} \frac{1}{\zeta - B_1} I w + \frac{1}{\zeta - z}\frac{1}{ \zeta - B_1} I.
\end{split}
\end{equation}
Then again by the description of the map $\mathbb{V}_u \to \mathscr{O}_{\widetilde{M}(r)} \otimes ( \mathscr{H}_{\text{in}} \oplus \mathscr{H}_{\text{out}})$ in Appendix \ref{proof:VmaptoGr}, we must have 
\begin{equation}
    G(\zeta, w) = J \frac{1}{w - B_2} \phi_w(\psi_-) + \phi_w(\xi).
\end{equation}
Noting that 
\begin{equation}
    \phi_w \Big( \frac{1}{\zeta - z} \Big) = \sum_{k = 0}^\infty \zeta^{-k - 1} \phi_w(z^k) = \sum_{k = 0}^\infty \hbar^k \frac{\Gamma( \frac{u}{\hbar} + k + 1)}{\Gamma(\frac{u}{\hbar} + 1)} \zeta^{-k - 1} w^{-k}
\end{equation}
gives the result. 
\end{proof}
By Lemma \ref{lemma:wickthm}, Propositions \ref{prop:VmaptoGr}, \ref{prop:S(u)vaccoeff} and Theorem \ref{thm:fermionpropagator} give a complete determination of the operator $\mathbb{S}(u)$. It would be interesting to investigate its properties further; we will make some remarks and conjectures to this effect at present. 

\subsubsection{$\mathbb{S}(u)$ as a semiclassical $R$-matrix for $\mathsf{Y}_{+1}(\widehat{\mathfrak{gl}}_1)$}
By its very construction as an instanton scattering matrix, $\mathbb{S}(u)$ is expected to be an $R$-matrix for a Poisson limit of $\mathsf{Y}_{+1}(\widehat{\mathfrak{gl}}_1)$. The most basic and essential property one would hope to verify is that the Poisson bracket of $\widetilde{M}(r)$ on matrix elements of $\mathbb{S}(u)$ can be recast as a semiclassical Yang-Baxter equation with $R_{\mathsf{MO}}$ from \cite{mo}, as it was in the monopole case \eqref{eq:monopoleRTT}.  

We have not yet attempted to verify this, for the following reason. As is already evident in the discussion in Section \ref{sect:monopoleyangian} in the semiclassical limit, the monopole scattering matrices \textit{do not} satisfy the Yang-Baxter relation on the nose in the case of dominantly shifted Yangians; their failure to do so was captured by the ``$\Delta$-term''. One certainly expects similar behavior for the dominantly shifted affine Yangians, and for this reason it is unlikely that $\mathbb{S}(u)$ solves the most naive form of the Yang-Baxter equation. Without a general framework to understand the corrections to the Yang-Baxter equation arising for shifted Yangians, it seems premature to attempt to compute those corrections in the special case of the equation satisfied by $\mathbb{S}(u)$ (the meaning of the corrections is already unclear in the much simpler case of shifted Yangians of $\mathfrak{gl}_r$). 

While essentially nothing useful is known about the hypothetical ``corrected Yang-Baxter equation'', one thing is clear: the correction terms always involve \textit{positive} powers of the spectral variable $u$, as we saw with the $\Delta$-term in \eqref{eq:monopoleRTT}. It is heartening to observe that in the above explicit formula for $G(\zeta, w)$, matrix elements of $\mathbb{S}(u)$ involve arbitrarily large positive powers of $u$ appearing as we go deeper into the $\zeta, w$ expansion, while the part of the matrix elements of negative degree in $u$ are essentially uniform and controlled by the poles at $\det(u - B_2 B_1) = 0$ dictated by the general construction of Proposition \ref{prop:VmaptoGr}. 

\subsubsection{$G(\zeta, w)$ and the vector-vector $R$-matrix} \label{subsect:vectRmatrixconj}
$\mathsf{Y}_{+1}(\widehat{\mathfrak{gl}}_1)$ has a representation called the vector representation, such that the Fock representation can be thought of as roughly the semi-infinite wedge of the vector representation (see e.g. \cite{Tsymbaliuk_2017}, \cite{Feigin_2011} for the unshifted case). To this author's knowledge, no formulas for $R$-matrices between a pair of vector representations are known, even in the unshifted case.

It is natural to conjecture based on Theorem \ref{thm:fermionpropagator} that the following quantization of the fermion propagator in the background of $n = 1$ instanton is an $R$-matrix between a pair of vector representations (see the next paragraph for clarifications)
\begin{equation}
    \widehat{G}(\zeta, w; u) = \frac{w}{w - x} \frac{1}{u - x \partial_x} \frac{1}{\zeta - \partial_x} + \Big( \frac{1}{w - x} \frac{1}{\zeta - \partial_x} + 1\Big) \sum_{k = 0}^\infty \frac{\Gamma(u + k + 1)}{\Gamma(u + 1)} w^{-k} \zeta^{-k - 1}
\end{equation}
where we have set all parameters to $1$ for convenience. 

This formula is to be read as an $R$-matrix as follows: extracting a coefficient of $\zeta^{- m - 1} w^{-n}$, one obtains a formal Laurent series in $u^{-1}$ with coefficients in differential operators in the $x$-coordinate. One expects that the differential operators arising in this expansion generate the action of $\mathsf{Y}_{+1}(\widehat{\mathfrak{gl}}_1)$ in the vector representation furnished by an appropriate class of functions of the $x$-coordinate. Conversely, applying this expression to a test function $f(x)$ and pairing with some other function $g(x)$, one may read the coefficient $(g(x), \widehat{G}(\zeta, w; u) \cdot f(x))$ as a Laurent series in $u^{-1}$ with coefficients in integral kernels for operators on functions of either $\zeta$ or $w$; these integral operators should once again generate the $\mathsf{Y}_{+1}(\widehat{\mathfrak{gl}}_1)$ action in a vector module furnished by functions of $\zeta$ or $w$. Here by ``integral'' we mean formal extraction of residue. 

$\widehat{G}$ should also solve some ``Yang-Baxter'' equation, likely to receive correction terms of positive degree in the spectral variable $u$. 

\subsection{Monopole scattering matrices from $\mathbb{S}(u)$} \label{sect:monopolefrominstanton}
In this final section, the circle of ideas will be closed and it will be explained that monopole scattering matrices (at least for certain values of the charges $\lambda$ and $\mu$ in the notation of Appendix \ref{reviewscatter}) can be deduced from $\mathbb{S}(u)$. This shows that all semiclassical $R$-matrices discussed in this paper can be put into one coherent framework. In this section we will need to assume the rank $r > 1$ in order to get interesting monopole scattering matrices.

\subsubsection{Mathematical motivation}
Mathematically, the strategy to understand this follows. Moduli spaces of $U(r)$ monopoles $\mathscr{\widetilde{M}}^{- \lambda}_{-\mu}$ arise as fixed loci of a $\mathbb{C}^\times$-action on $\widetilde{M}(n, r)$ when 
\begin{equation}
    n = \frac{1}{2} (\norm{\lambda}^2 - \norm{\mu}^2). 
\end{equation}
Differential-geometrically, this goes back to an observation of Kronheimer \cite{kronheimer}, and the $\mathbb{C}^\times$-action is viewed as a complexification of a natural $S^1$-action induced by rotation of the fiber of $T^* \mathbb{C} \to \mathbb{R}^3$ (we use here that the complex manifold structure of instanton moduli on $T^* \mathbb{C}$ and a Taub-NUT space are the same). A version of this result that may be more familiar to representation theorists is that in type $A$, transversal slices in the affine Grassmannian an be exhibited as Nakajima quiver varieties see e.g. \cite{mirkovic2002}, \cite{Braverman_2010}. 

The proof of Theorem \ref{thm:fermionpropagator} can be generalized to compute the propagator in the case $r > 1$, which now carries indices as $G\indices{_\alpha^\beta}(\zeta; w)$, $\alpha, \beta = 1, \dots, r$. Restricting $G$ to the fixed locus in $\widetilde{M}(n, r)$, it can be shown to satisfy an equivariance property with respect to the $\mathbb{C}^\times$-action. It may be decomposed using the weight decomposition on $\mathbb{C}[[\zeta^{-1}, w^{-1}]]$ under the naturally induced $\mathbb{C}^\times$-action, and in some appropriate convention its zero weight component should be the monopole scattering matrix/matrix description of affine Grassmannian slice; this is the content of \eqref{eq:Smatrixfromprop} below.

Formula \eqref{eq:Smatrixfromprop} gives, in very concrete terms, an explicit identification between the matrix description of the affine Grassmannian slices (in the terminology of \cite{bfnslice}, \cite{krylov}) and their presentation as type $A$ Nakajima varieties. This is done by exhibiting the monopole scattering matrix coefficients as explicit elements in the invariant ring of the Nakajima variety. It is expected that \eqref{eq:Smatrixfromprop} implements the isomorphism of \cite{mirkovic2002}, directly in the matrix description of slices from \cite{bfnslice}. Note we also get the affine deformation to $\hbar \neq 0$ for free.

Strictly speaking, we do not give a complete mathematical proof that \eqref{eq:Smatrixfromprop} works; the difficult part is to show that there are no more relations in the invariant ring of the quiver variety than the ones implied by the conditions imposed on $S$-matrix coefficients. This seems likely to be true based on experience with the monopole case, but it will not be demonstrated here. It will, however, be illustrated in specific examples. The conjecture can likely be reduced to proving that the morphism constructed in Proposition \ref{prop:VmaptoGr} above is an embedding.

\subsubsection{Physical motivation} 
Let us re-explain the motivation and plan of analysis from the physical perspective; the idea is just to reverse the steps of the argument in Section \ref{sect:monopoletoinst} leading from the monopole scattering matrices to the instanton ones. 

Ignoring the noncommutativity (or just setting it to $0$, which does less harm in the rank $r > 1$ case), the operator $\mathbb{S}(u)$ arises from the chiral fermion path integral on the curve $\{ zw = u \} \subset T^* \mathbb{C}$. This curve is the fiber over a vertical line of fixed $u$ under the map $T^* \mathbb{C} \to \mathbb{R}^3 \simeq \mathbb{R} \times \mathbb{C}$, the copy of $\mathbb{C}$ being the $u$-plane. If we suppose our background instanton configuration is actually $S^1$-invariant for the circle action rotating the fibers, the fermion problem gains a $U(1)$ symmetry. On the other hand, such an $S^1$-invariant instanton configuration is the same as a monopole configuration with a certain Dirac singularity, by Kronheimer's observation \cite{kronheimer}. 

Strictly speaking, to define the $S^1$-action on the moduli space of instantons, we need to choose in addition a lift of the gauge bundle $\mathscr{E}$ to an $S^1$-equivariant vector bundle; this determines the asymptotic monopole charge $\mu$ and the amount of singular monopole charge $\lambda$ at the origin. 

We may use this $U(1)$ action to decompose the fermions on $zw = u$ into Kaluza-Klein modes along the $S^1$ fiber; reversing the construction of Section \ref{sect:monopoletoinst}, the monopole scattering matrix coefficients are determined precisely from the contributions of the zero mode to this path integral. Actually, these are the scattering matrix coefficients, in general, in the exterior algebra $\Lambda^\bullet(\mathbb{C}^r)$ and the conventional $S$-matrix is the induced map on the component of fermion number $1$. Since the fermion theory is free and the KK modes are decoupled from one another, this can be read off precisely from the fermion propagator $G(\zeta; w)$: the propagator breaks up into an infinite collection of decoupled propagators, one for each KK mode, and the one for the zero mode is precisely the monopole scattering matrix.

It is important here that $G(\zeta; w)$ is understood in an asymptotic expansion for $z \to \infty$, $w \to \infty$, as the scattering matrix comes from the expectation value $\langle \psi^\alpha(y = T/2) \widetilde{\psi}_\beta(y = -T/2) \rangle$, $\alpha, \beta = 1, \dots, r$ in the fermionic quantum mechanics only in the limit $T \to \infty$, by the definition of scattering matrices in Appendix \ref{reviewscatter}. 

\subsubsection{Monopole moduli as $\mathbb{C}^\times$-fixed locus} \label{subsect:C*fixed}
Let us review the results of \cite{Braverman_2010}, \cite{kronheimer}, \cite{mirkovic2002} in the form useful for us. Consider the affine Nakajima variety $\widetilde{M}(n, r)$. Choose a homomorphism $\mu: \mathbb{C}^\times \to GL_r$. Without loss of generality we may assume $\mu$ lands in the maximal torus and write 
\begin{equation}
    \mu(t) = \text{diag}(t^{\mu_\alpha })_{\alpha = 1}^n 
\end{equation}
for some integers $\mu_1 \geq \dots \geq \mu_r$, and $t \in \mathbb{C}^\times$.

This defines a $\mathbb{C}^\times$-action on $\widetilde{M}(n, r)$: $t \in \mathbb{C}^\times$ acts by 
\begin{equation}
    t \cdot(B_1, B_2, I, J) = (t B_1, t^{-1} B_2, I \mu(t), \mu^{-1}(t) J). 
\end{equation}
Note the $\mathbb{C}^\times$ action is unchanged under shifts of $\mu$ by a cocharacter of the center of $GL_r$; thus without loss of generality we may assume $\mu_1 = 0$. 

As usual, the fixed locus is a disjoint union of $A_\infty$ quiver varieties: 
\begin{equation} \label{eq:connectecomponents}
    \widetilde{M}(n, r)^{\mathbb{C}^\times} = \bigsqcup_{\sigma/\sim } \widetilde{M}_\sigma
\end{equation}
where the union is over conjugacy classes of $\sigma: \mathbb{C}^\times \to GL_n$ solving \begin{equation} \label{eq:C*equivariance}
\begin{split}
    t B_1 & = \sigma(t) B_1 \sigma^{-1}(t) \\
    t^{-1} B_2 & = \sigma(t) B_2 \sigma^{-1}(t) \\
    I \mu(t) & = \sigma(t) I \\
    \mu^{-1}(t) J & = J \sigma^{-1}(t)
\end{split}
\end{equation}
and, for given $\sigma$,
\begin{equation}
    \widetilde{M}_\sigma := \{ (B_1,B_2, I, J) | \text{$\comm{B_1}{B_2} + IJ = \hbar$ and \eqref{eq:C*equivariance}} \}/G_\sigma
\end{equation}
where $G_\sigma$ is the centralizer of $\sigma$ in $GL_n$. Each $\widetilde{M}_\sigma$ is a type $A_N$ quiver variety, where $N = \#(\text{GL factors in $G_\sigma$})$. Specializing to $\hbar = 0$, we have an analogous decomposition of the fixed locus in the Uhlenbeck moduli space $M_0(n, r)$ (a singular affine variety) into connected components that we denote $M_{0, \sigma}$. 

Now observe that on the fixed locus in instanton moduli the fiber of the universal sheaf $\mathscr{E}$ over $0$ in $T^* \mathbb{C}$ becomes a $\mathbb{C}^\times$-representation, and from the ADHM complex (see Appendix \ref{adhmdetails}) its character is computed via 
\begin{equation} \label{eq:chargeat0}
    \chi\Big(\mathscr{E} \eval_0 \Big) = \tr_W(\mu(t)) - (1 - t)(1 - t^{-1}) \tr_V(\sigma(t)). 
\end{equation}
At given $t$, this is a locally constant function on the fixed locus, and we may write 
\begin{equation}
    \chi \Big( \mathscr{E} \eval_0 \Big) = \sum_{\alpha = 1}^r t^{\lambda_\alpha}
\end{equation}
for integers $\lambda_\alpha$ in the order $\lambda_1 \geq \dots \geq \lambda_r$; we regard these integers as another cocharacter $\lambda$. $\sigma$ determines $\lambda$ and vice versa, up to conjugacy. Note that $\mu$ and $\lambda$ are not related arbitrarily: they must differ by a coroot and satisfy e.g. $\sum_\alpha \lambda_\alpha^2 \geq \sum_\alpha \mu_\alpha^2$. The assertion of \cite{mirkovic2002} can be translated to the claim that
\begin{equation} \label{eq:kronheimeriso}
    M_{0, \sigma} \simeq \overline{\mathscr{M}}^{-\lambda}_{-\mu}
\end{equation}
where $\sigma$, $\mu$ are related by \eqref{eq:chargeat0}, and the monopole moduli space/affine Grassmannian slice $\overline{\mathscr{M}}^{-\lambda}_{-\mu} = \overline{\mathscr{W}}^{-\lambda}_{-\mu}$ is as in Appendix \eqref{eq:genslicedef}; the isomorphism is on the level of affine algebraic varieties. In this form, the result is essentially Kronheimer's observation \cite{kronheimer} via the dictionary between monopole moduli and affine Grassmannian slices initiated in \cite{KW}. 

\textit{Remark on conventions.} To reduce the headache of readers, we should remark that it is customary in the literature to define $\overline{\mathscr{W}}^\lambda_\mu$ for \textit{dominant} $\lambda$. Using \eqref{eq:genslicedef}, the definition makes sense for arbitrary $\lambda$. Acting by the Weyl group on $(\lambda, \mu)$ as well as simultaneous sign reversal $(\lambda, \mu) \mapsto (-\lambda, -\mu)$ does not change the underlying variety, but it does change the scattering matrix/embedding into $GL_r(u)$ via \eqref{eq:genslicedef}: Weyl group action conjugates the scattering matrix by the corresponding permutation matrix and sign reversal replaces the scattering matrix by its inverse. We introduce the sign here to avoid taking matrix inverse in \eqref{eq:Smatrixfromprop} below. 

\textit{Remark on affine deformation.} It is easy to see that the fixed loci $\widetilde{M}_\sigma \simeq \widetilde{\mathscr{M}}^{\lambda}_{\mu}(\hbar)$ correspond to affine deformations of the monopole moduli space in which the Dirac singularities are placed at positions in the $u$-plane which are integer multiples of $\hbar$. This correlates nicely with the poles in $\mathbb{S}(u)$ at $u \in \hbar \mathbb{Z}$ from Proposition \ref{prop:VmaptoGr}, in accordance with our general picture. 

\subsubsection{Fermion propagator in higher rank and equivariance}
In Section \ref{sect:determineS}, the fermion propagator $G(\zeta; w)$ was computed in the rank $r = 1$ case. The proof of Theorem \ref{thm:fermionpropagator} generalizes immediately to the rank $r > 1$ case. The only difference is that now the operator $\mathbb{S}(u)$ corresponds to a tensor in $\mathscr{F}^{\text{in}} \otimes \mathscr{F}^{\text{out}}$, where $\mathscr{F}^{\text{in/out}}$ are $r$-component fermionic Fock spaces. We denote the fermions in components as $\psi^\alpha(w)$, $\widetilde{\psi}_\alpha(\zeta)$, with $\alpha = 1, \dots r$. The index $\alpha$ corresponds to choosing a basis in the framing space $W$; correspondingly denote $J^\alpha \in V^*$ and $I_\alpha \in V$ the components of the ADHM data $I, J$. 

Then we have the following generalization of Theorem \ref{thm:fermionpropagator} for the rank $r > 1$ case:
\begin{prop} \label{prop:highrankpropagator}
\begin{equation}
\begin{split}
    G\indices{_\alpha^\beta}(\zeta; w) := \frac{\bra{0} \widetilde{\psi}_\alpha(\zeta) \mathbb{S}(u) \psi^\beta(w) \ket{0}}{\bra{0} \mathbb{S}(u) \ket{0}} & = J^\beta \frac{w}{w - B_2} \frac{1}{u - B_2 B_1} \frac{1}{\zeta - B_1} I_\alpha \\
    & + \Big( J^\beta \frac{1}{w - B_2} \frac{1}{\zeta - B_1} I_\alpha + \delta^\beta_\alpha\Big) \sum_{k = 0}^\infty \hbar^k \frac{\Gamma(\frac{u}{\hbar} + k + 1)}{\Gamma(\frac{u}{\hbar} + 1)} w^{-k} \zeta^{-k - 1}
\end{split}
\end{equation}
where $\delta^\beta_\alpha$ is the Kronecker delta symbol.
\end{prop}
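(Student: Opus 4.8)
The plan is to follow the proof of Theorem~\ref{thm:fermionpropagator} essentially verbatim, promoting the framing space from $W \simeq \mathbb{C}$ to $W \simeq \mathbb{C}^r$ and carrying the two framing indices $\alpha,\beta$ that label the components $\widetilde{\psi}_\alpha(\zeta)$ and $\psi^\beta(w)$ through the computation. First I would consider, as a function of $z$, the quantity
\begin{equation}
\frac{\bra{0}\widetilde{\psi}_\alpha(\zeta)\psi^\beta(z)\ket{\mathbb{V}_u}}{\braket{0}{\mathbb{V}_u}}
\end{equation}
and invoke the explicit description of $\mathbb{V}_u$ from Propositions~\ref{prop:explicitextforzw} and~\ref{prop:VmaptoGr} to write it as $J^\beta \frac{1}{z - B_1}\phi_z(\psi_+) + \phi_z(\xi)$, exactly as in rank $1$. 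The only change at this stage is that the output index $\beta$ selects the component $J^\beta \in V^*$ of the ADHM data, since reading off the $\beta$-th slot of $\mathscr{H}_{\text{out}} = W((w^{-1}))$ amounts to pairing with $e^\beta \in W^*$.

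The second step is to fix the input data. The insertion of $\bra{0}\widetilde{\psi}_\alpha(\zeta)$ forces $\xi = \frac{1}{\zeta - z}\,e_\alpha$, i.e. the source lies in the $\alpha$-th framing direction, so the inhomogeneous term in \eqref{eq:zwextsoln} is built from $I_\alpha \in V$ rather than $I$. Assuming $u - B_2 B_1$ invertible (the locus where $\bra{0}\mathbb{S}(u)\ket{0}$ is a unit by Proposition~\ref{prop:S(u)vaccoeff}), the same algebra as in the rank $1$ proof determines
\begin{equation}
\begin{split}
\psi_+ &= -B_1 \frac{1}{u - B_2 B_1}\frac{1}{\zeta - B_1} I_\alpha, \\
\psi_- &= \frac{1}{u - B_2 B_1}\frac{1}{\zeta - B_1} I_\alpha\, w + \frac{1}{\zeta - z}\frac{1}{\zeta - B_1} I_\alpha.
\end{split}
\end{equation}
Applying the output map $\mathbb{V}_u \to \mathscr{O}_{\widetilde{M}(r)} \otimes (\mathscr{H}_{\text{in}} \oplus \mathscr{H}_{\text{out}})$ from Appendix~\ref{proof:VmaptoGr} then yields $G\indices{_\alpha^\beta}(\zeta, w) = J^\beta \frac{1}{w - B_2}\phi_w(\psi_-) + \phi_w(\xi)$, and substituting $\phi_w(w) = w$ together with the already-computed expansion
\begin{equation}
\phi_w\!\left(\frac{1}{\zeta - z}\right) = \sum_{k = 0}^\infty \hbar^k \frac{\Gamma\!\left(\tfrac{u}{\hbar} + k + 1\right)}{\Gamma\!\left(\tfrac{u}{\hbar} + 1\right)} \zeta^{-k - 1} w^{-k}
\end{equation}
reproduces every term of the stated formula, with $I\mapsto I_\alpha$ and $J\mapsto J^\beta$ throughout.

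The one genuinely new point---and the only place the index structure is not mere bookkeeping---is the appearance of $\delta^\beta_\alpha$ in place of the $+1$ of Theorem~\ref{thm:fermionpropagator}. This comes from the term $\phi_w(\xi)$: since $\xi$ carries the input framing label $\alpha$ while the whole expression is read in the $\beta$-th output slot, the ``bare'' contribution is proportional to the pairing $\langle e^\beta, e_\alpha\rangle = \delta^\beta_\alpha$. Conceptually this is the statement that the charge-zero instanton (the bare D6 brane, $\mathscr{E} = \mathscr{D}_\hbar \otimes W$) contributes a propagator diagonal in framing. I expect this to be the main thing to verify with care; one must check that the identification of framing directions in the morphism of Appendix~\ref{proof:VmaptoGr} is compatible across the two summands $\mathscr{H}_{\text{in}}$ and $\mathscr{H}_{\text{out}}$, so that the same basis $\{e_\alpha\}$ of $W$ indexes both $\widetilde{\psi}_\alpha$ and $\psi^\beta$. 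Granting this, the Kronecker delta is immediate and the remainder of the argument is the rank $1$ computation with decorated indices.
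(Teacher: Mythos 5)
Your proposal is correct and is essentially the paper's own argument: the paper establishes this proposition precisely by asserting that the proof of Theorem \ref{thm:fermionpropagator} generalizes immediately to $r > 1$, which is exactly the index-decorated rerun you carry out, with $\xi = \frac{1}{\zeta - z}e_\alpha$ forcing $I \mapsto I_\alpha$ in \eqref{eq:zwextsoln}, the output slot giving $J \mapsto J^\beta$, and the Kronecker delta arising from the $\beta$-component of $\phi_w(\xi)$ just as you say. (One cosmetic slip only: in your first step, the $\beta$-th slot is read off in $\mathscr{H}_{\textnormal{in}} = W((z^{-1}))$, not $\mathscr{H}_{\textnormal{out}}$, since at that stage $\psi^\beta(z)$ is an incoming fermion.)
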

From this formula, it is immediate that if we restrict $(B_1, B_2, I, J)$ to the fixed locus $\widetilde{M}(r)^{\mathbb{C}^\times} \subset \widetilde{M}(r)$ of a $\mathbb{C}^\times$ action defined by a choice of cocharacter $\mu$ as in Section \ref{subsect:C*fixed}, that $G\indices{_\alpha^\beta}(\zeta; w)$ satisfies the following equivariance property: 
\begin{lemma} \label{lemma:propequivariance}
\begin{equation}
    G\indices{_\alpha^\beta}(t \zeta; t^{-1} w) \eval_{\widetilde{M}(r)^{\mathbb{C}^\times}} = t^{\mu_{\alpha} - \mu_\beta - 1} G\indices{_\alpha^\beta}(\zeta; w) \eval_{\widetilde{M}(r)^{\mathbb{C}^\times}}.
\end{equation}
\end{lemma}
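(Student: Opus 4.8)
The plan is a direct computation from the explicit formula in Proposition \ref{prop:highrankpropagator}, using only the fixed-point relations \eqref{eq:C*equivariance}. First I would record the consequences of those relations that do the work. Rearranging \eqref{eq:C*equivariance} gives
\begin{equation}
    \sigma^{-1}(t) B_1 \sigma(t) = t^{-1} B_1, \qquad \sigma^{-1}(t) B_2 \sigma(t) = t B_2,
\end{equation}
so that conjugation by $\sigma(t)$ rescales $B_1$ and $B_2$ by reciprocal factors; in particular the product $B_2 B_1$ is $\sigma(t)$-invariant, and hence so is $(u - B_2 B_1)^{-1}$. The remaining two relations in \eqref{eq:C*equivariance}, together with $\mu(t) = \text{diag}(t^{\mu_\gamma})$, give the eigenvector identities
\begin{equation}
    J^\beta \sigma^{-1}(t) = t^{-\mu_\beta} J^\beta, \qquad \sigma(t) I_\alpha = t^{\mu_\alpha} I_\alpha.
\end{equation}
These are the only inputs needed.

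Next I would substitute $\zeta \mapsto t\zeta$ and $w \mapsto t^{-1} w$ into the first (rational) term of $G\indices{_\alpha^\beta}$. Using $\frac{t^{-1} w}{t^{-1} w - B_2} = \frac{w}{w - t B_2}$ and $\frac{1}{t \zeta - B_1} = t^{-1} \frac{1}{\zeta - t^{-1} B_1}$, and then replacing $t B_2 = \sigma^{-1}(t) B_2 \sigma(t)$ and $t^{-1} B_1 = \sigma^{-1}(t) B_1 \sigma(t)$, every resolvent becomes $\sigma(t)$-conjugate to its original. Since any rational function of $B_i$ satisfies $f(\sigma^{-1}(t) B_i \sigma(t)) = \sigma^{-1}(t) f(B_i) \sigma(t)$, and $(u - B_2 B_1)^{-1}$ is $\sigma(t)$-invariant, the internal $\sigma(t)$ factors telescope and collapse, leaving $J^\beta \sigma^{-1}(t)(\cdots)\sigma(t) I_\alpha$ times the explicit scalar $t^{-1}$. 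Applying the eigenvector identities then peels off $t^{-\mu_\beta}$ on the left and $t^{\mu_\alpha}$ on the right, so the first term is reproduced with the prefactor $t^{\mu_\alpha - \mu_\beta - 1}$.

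For the second term I would treat the scalar series and the operator bracket separately. Under $\zeta \mapsto t\zeta$, $w \mapsto t^{-1} w$ the monomial $w^{-k} \zeta^{-k-1}$ scales by $t^{k} \cdot t^{-k-1} = t^{-1}$, uniformly in $k$, so the entire series is multiplied by a single factor $t^{-1}$. The bracket $J^\beta \frac{1}{w - B_2} \frac{1}{\zeta - B_1} I_\alpha + \delta^\beta_\alpha$ transforms exactly as in the previous step (now the two scalar factors $t$ and $t^{-1}$ from the resolvents cancel), producing $t^{\mu_\alpha - \mu_\beta}$ times the operator part; the Kronecker term is consistent because $\delta^\beta_\alpha$ is supported on $\alpha = \beta$, where $\mu_\alpha - \mu_\beta = 0$. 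Multiplying by the $t^{-1}$ from the series gives the prefactor $t^{\mu_\alpha - \mu_\beta - 1}$ for the second term as well, and combining the two terms yields the claim.

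The computation is elementary once the fixed-point relations are in hand; the only points demanding care are structural rather than difficult. The crucial observation is that $B_2 B_1$ --- and not $B_1 B_2$ --- appears, so that the opposite weights of $B_1$ and $B_2$ under $\sigma(t)$-conjugation conspire to leave the central resolvent invariant; this is what makes the net weight independent of the Kaluza-Klein level and equal to $\mu_\alpha - \mu_\beta - 1$. Equally important is the uniform $t^{-1}$ scaling of every term of the (factorially divergent) series, which is precisely the statement that all modes along the $S^1$ fiber transform with the same weight shift --- a prerequisite for later isolating the zero mode as the monopole scattering matrix in \eqref{eq:Smatrixfromprop}. I expect no genuine obstacle here, only the bookkeeping of the $\sigma(t)$ conjugations and the verification that the $\delta^\beta_\alpha$ term is compatible with the overall weight.
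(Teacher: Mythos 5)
Your proof is correct and is precisely the argument the paper intends: the paper's own proof is the one-line citation ``By Proposition \ref{prop:highrankpropagator} and \eqref{eq:C*equivariance},'' and your computation (conjugation by $\sigma(t)$ leaving $B_2 B_1$ invariant, the eigenvector identities for $I_\alpha$, $J^\beta$, and the uniform $t^{-1}$ scaling of the series) is exactly the bookkeeping that citation compresses. No gaps; the handling of the $\delta^\beta_\alpha$ term is the right observation.
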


\begin{proof}
    By Proposition \ref{prop:highrankpropagator} and \eqref{eq:C*equivariance}. 
\end{proof}

This means that with respect to the naturally induced $\mathbb{C}^\times$-action on $$\text{End}(W) \otimes H^0 \Big(\widetilde{M}(r)^{\mathbb{C}^\times}, \mathscr{O}_{\widetilde{M}(r)^{\mathbb{C}^\times}} \Big)(u) \otimes \mathbb{C}[[\zeta^{-1}, w^{-1}]],$$ $G\indices{_\alpha^\beta}(\zeta; w)$ belongs to the subspace of weight $-1$. In view of the discussion above, it seems natural to propose the following. 
\begin{conj}
Define an $r \times r$ matrix 
\begin{equation} \label{eq:Smatrixfromprop}
    S\indices{_\alpha^\beta}(u) := \textnormal{Coeff}[w^{\mu_\beta} \zeta^{\mu_\alpha - 1}] \Bigg( G\indices{_\alpha^\beta}(\zeta; w) \eval_{\widetilde{M}(r)^{\mathbb{C}^\times}}\Bigg). 
\end{equation}
Its entries are rational functions on $\mathbb{A}^1_u \times \widetilde{M}(r)^{\mathbb{C}^\times}$ with poles along the locus $\det(u - B_2 B_1) = 0$. Recall that in the conventions of Section \ref{subsect:C*fixed}, $\mu_r \leq \dots \leq \mu_1 = 0$. Then $S\indices{_\alpha^\beta}(u)$ restricts on each connected component of \eqref{eq:connectecomponents} to the monopole scattering matrix for $\widetilde{\mathscr{M}}^{-\lambda}_{-\mu}$/matrix description of affine Grassmannian slice $\widetilde{\mathscr{W}}^{-\lambda}_{-\mu}$ under the isomorphism \eqref{eq:kronheimeriso}.
\end{conj}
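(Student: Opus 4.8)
The plan is to exhibit, on each connected component $\widetilde{M}_\sigma$ of the fixed locus \eqref{eq:connectecomponents} (resp. $M_{0,\sigma}$ at $\hbar=0$), an explicit morphism $\Phi_\sigma$ to the generalized slice $\widetilde{\mathscr{W}}^{-\lambda}_{-\mu}$ (resp. $\overline{\mathscr{W}}^{-\lambda}_{-\mu}$) sending the ADHM data to the matrix $S\indices{_\alpha^\beta}(u)$ of \eqref{eq:Smatrixfromprop}, and then to prove that $\Phi_\sigma$ is an isomorphism of affine varieties compatible with \eqref{eq:kronheimeriso}. That $\Phi_\sigma$ is well defined at the level of the coefficient extraction is essentially Lemma \ref{lemma:propequivariance}: it shows $G\indices{_\alpha^\beta}(\zeta;w)$ restricted to the fixed locus is homogeneous of weight $\mu_\alpha-\mu_\beta-1$, hence supported on a single diagonal of the $(\zeta^{-1},w^{-1})$-expansion, and \eqref{eq:Smatrixfromprop} selects the distinguished monomial $(\zeta^{\mu_\alpha-1},w^{\mu_\beta})$ on that diagonal singled out by the ordering $\mu_r\le\dots\le\mu_1=0$. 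From the explicit formula of Proposition \ref{prop:highrankpropagator}, this coefficient is a rational function of $u$ with poles only along $\det(u-B_2B_1)=0$.

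First I would verify that $\Phi_\sigma$ factors through the slice, i.e. that $S\indices{_\alpha^\beta}(u)$ obeys the conditions cutting out $\overline{\mathscr{W}}^{-\lambda}_{-\mu}$ as an affine variety (in rank $2$ these are the degree and determinant conditions of Section \ref{sect:monopoleyangian}; in general those of \cite{bfnslice}, \cite{krylov}). The degree and pole conditions in $u$ follow from Lemma \ref{lemma:propequivariance} together with the $w^{-k}\zeta^{-k-1}$ structure of the propagator: the $\mathbb{C}^\times$-weight bookkeeping translates directly into the $\mu$-graded leading behaviour of the Gauss factors of $S(u)$. The determinant condition I would check by computing the $r\times r$ determinant $\det_{\alpha\beta}S\indices{_\alpha^\beta}(u)$ from the explicit propagator and matching it to the characteristic polynomial $\det_V(u-B_2B_1)=\bra{0}\mathbb{S}(u)\ket{0}$ of Proposition \ref{prop:S(u)vaccoeff}, whose factorization over the fixed locus into weight-graded pieces is dictated by \eqref{eq:chargeat0}; this reproduces the polynomial fixed by $(\lambda,\mu)$. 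These are finite, if slightly tedious, manipulations of Proposition \ref{prop:highrankpropagator}, and I expect no conceptual difficulty.

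The crux is showing $\Phi_\sigma$ is an isomorphism, which as the text indicates reduces to proving that the morphism $G$ of Proposition \ref{prop:VmaptoGr} is a closed embedding when restricted to the fixed locus. Granting this, injectivity of $\Phi_\sigma$ is immediate, since the scattering data then determines the point of $\mathrm{Gr}(\mathscr{H}_{\text{in}}\oplus\mathscr{H}_{\text{out}})$, hence the ADHM data up to $GL(V)$. Surjectivity --- equivalently, the absence of relations among the $S$-matrix coefficients beyond those defining the slice --- I would then deduce from a dimension count: an independent comparison of the quiver-variety dimension of $\widetilde{M}_\sigma$ with $\dim\overline{\mathscr{W}}^{-\lambda}_{-\mu}=2|\langle\rho,\mu-\lambda\rangle|$ (the ambient charge being $n=\tfrac{1}{2}(\norm{\lambda}^2-\norm{\mu}^2)$ via \eqref{eq:chargeat0}) shows the two agree, so a dominant injective morphism from a reduced variety to the irreducible normal slice of equal dimension is an isomorphism onto it. Dominance is itself part of the embedding property.

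The main obstacle is therefore exactly the embedding property of Proposition \ref{prop:VmaptoGr}: one must reconstruct the ADHM data from the scattering matrix. This is genuinely delicate because $\mathbb{S}(u)$ interpolates between the $z\to\infty$ and $w\to\infty$ asymptotics, which are exchanged by a nonlocal Fourier-type transform, so the single-polarization Gauss-factorization argument that works for monopole $S$-matrices \eqref{eq:monopoleSfactor} is unavailable. A promising route is to work on the dense Darboux charts $U$ and $V$: there the higher-rank propagator of Proposition \ref{prop:highrankpropagator} recovers the generic ADHM data up to gauge, with Theorem \ref{thm:Scolumns factor} as the rank-$1$ prototype of this reconstruction through the first row and column of $\mathbb{S}(u)$. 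The remaining task is to show the reconstruction extends over the boundary divisors, which should follow from normality of $\overline{\mathscr{W}}^{-\lambda}_{-\mu}$; verifying this extension in explicit examples, as the text proposes, is the natural first check before the general argument.
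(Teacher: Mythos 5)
The statement you are trying to prove is a \emph{conjecture} in the paper: the paper offers no proof, only (i) Lemma \ref{lemma:propequivariance}, which makes the coefficient extraction \eqref{eq:Smatrixfromprop} sensible, (ii) the remark that the conjecture ``would likely follow'' if the morphism of Proposition \ref{prop:VmaptoGr} were known to be an embedding, and (iii) verification in three families of examples. Your plan reproduces exactly this reduction --- weight bookkeeping via Lemma \ref{lemma:propequivariance}, slice conditions from Proposition \ref{prop:highrankpropagator}, reduction of the crux to the embedding property, examples as checks --- and you honestly flag that the embedding property is open. So you have matched the paper's strategy, but like the paper you have not produced a proof; your Darboux-chart reconstruction sketch and the appeal to normality to extend over boundary divisors do not close that gap.

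Beyond the shared open gap, two steps in your plan are genuinely flawed. First, ``injectivity of $\Phi_\sigma$ is immediate'' does not follow even granting the embedding property of Proposition \ref{prop:VmaptoGr}: by Lemma \ref{lemma:propequivariance} the restricted propagator $G\indices{_\alpha^\beta}(\zeta;w)$ is supported on a single diagonal of the $(\zeta^{-1},w^{-1})$-expansion, but \eqref{eq:Smatrixfromprop} retains only \emph{one} coefficient from each diagonal (the zero Kaluza--Klein mode); the remaining coefficients, e.g. $J^\beta B_2^k(u-B_2B_1)^{-1}B_1^l I_\alpha$ for the other $(k,l)$ on the diagonal, are independent functions, so the full scattering data determining the ADHM point does not immediately imply that $S(u)$ alone does. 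This is precisely why the paper hedges with ``would likely follow.'' Second, your surjectivity argument is invalid as stated: a dominant injective morphism between irreducible varieties of equal dimension need not be surjective (the open immersion $\mathbb{A}^1\setminus\{0\}\hookrightarrow\mathbb{A}^1$ is a counterexample among smooth affine varieties); Zariski's main theorem with normal target only yields an open immersion onto the image. Surjectivity --- equivalently the generation/no-extra-relations statement --- is exactly the part the paper singles out as the difficulty it will not demonstrate, so it cannot be dispatched by a dimension count.
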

Note it is important here that we assume $\mu_1 = 0$, even though the $\mathbb{C}^\times$-action itself is independent of this choice, otherwise some of the entries in $S(u)$ would be identically zero for trivial reasons. The need to do this is traced back to the fact that the Sato Grassmannian $\text{Gr}(\mathscr{H}_{\text{in}} \oplus \mathscr{H}_{\text{out}})$ depends on a choice of attracting direction for the $\mathbb{C}^\times$-action on $T^* \mathbb{C}$. The conjecture should be true as long as $\mu_1 \leq 0$. It would be interesting to more closely investigate this appearance of the translation automorphism of the $\mathfrak{gl}_\infty$ quiver.

If it was known that the morphism $\widetilde{M}(n, r) \to \text{Gr}(\mathscr{H}_{\text{in}} \oplus \mathscr{H}_{\text{out}})$ constructed in Proposition \ref{prop:VmaptoGr} was an embedding, this conjecture would likely follow from Lemma \ref{lemma:propequivariance}. It is amusing to observe that this conjecture implies that the scattering matrix is in some sense independent of $\lambda$. This conjecture will now be illustrated in simple, illustrative examples. 

\subsubsection{Example: when $\widetilde{\mathscr{M}}^{-\lambda}_{-\mu}$ is a coadjoint orbit}
The easiest instance of \eqref{eq:Smatrixfromprop} to understand is the special case when $\widetilde{\mathscr{M}}^{-\lambda}_{-\mu}$ is isomorphic to a coadjoint orbit in $\mathfrak{gl}_r^*$. From the discussion in Section \ref{subsect:C*fixed}, this happens when $\mu = 0$ and 
\begin{equation}
    \tr_V \sigma(t) = n_0 + n_{-1} t^{-1} + \dots + n_{-(\ell - 1)} t^{-(\ell - 1)}. 
\end{equation}
where $r > n_0 > n_{-1} > \dots > n_{-(\ell - 1)} \geq 0$. The corresponding $\widetilde{\mathscr{W}}^{-\lambda}_{-\mu}$ is an affine deformation of the cotangent bundle to the partial flag variety describing flags $V_{-(\ell - 1)} \subset \dots V_{-1} \subset V_0 \subset \mathbb{C}^r$ of length $\ell$, with $\dim V_{-i} = n_{-i}$. $V_{-i}$ corresponds to the $-i$ weight space of $V$ under $\mathbb{C}^\times$.

On the fixed locus, $B_1: V_{-i - 1} \to V_{-i}$, in particular $B_1 \eval_{V_0} = 0$. Then from \eqref{eq:Smatrixfromprop}, 
\begin{equation}
    S\indices{_\alpha^\beta}(u) = \delta^\beta_\alpha + J^\beta \frac{1}{u - B_2 B_1 \eval_{V_0}} I_\alpha = \delta^\beta_\alpha + \frac{1}{u} J^\beta I_\alpha. 
\end{equation}
It is well-known that this is the correct scattering matrix for a coadjoint orbit, because $J^\beta I_\alpha$ coincides with the $\mathfrak{gl}_r$-moment map in the quiver description, and the conditions \eqref{eq:genslicedef} imposed on the scattering matrix fix it to be of the form $1 + u^{-1} \boldsymbol{\mu}$, where $\boldsymbol{\mu}$ is the $\mathfrak{gl}_r$ moment map. This is also the semiclassical $R$-matrix for $\mathsf{Y}(\mathfrak{gl}_r)$, which braids an $r$-dimensional module with a module arising from quantization of $T^*(GL_r/P)$, where the parabolic subgroup $P$ is determined by $\lambda$, equivalently $\sigma$. Note that 
\begin{equation}
\begin{split}
    \det S(u) & = \frac{1}{u^r} \text{det}_W(u + JI) \\
              & = u^{-n_0}  \text{det}_{V_0} (u + IJ) \\
              & = u^{-n_0} \text{det}_{V_0}(u + \hbar -B_{1, -1} B_{2, 0}) \\
              & = u^{-n_0} (u + \hbar)^{n_0 - n_{-1}} \text{det}_{V_{-1}}(u + \hbar - B_{2, 0} B_{1, - 1}) \\
              & = \dots \\
              & = \prod_{j = 0}^\ell (u + j \hbar)^{n_{-(j - 1)} - n_{-j}}
\end{split}
\end{equation}
in accordance with the poles of $\mathbb{S}(u)$ along $u \in \hbar \mathbb{Z}$ in general. 

\subsubsection{Example: the $A_2$ surface}
The simplest example where positive powers of $u$ enter (i.e. we get a dominantly shifted Yangian) is that of the $A_2$ surface. This corresponds to the choice $r = 2$, $\mu = (0, -1)$ and $\lambda = (1, -2)$ in our present convention. Write the components of $I, J$ as $I_0 \in V_0$, $I_{-1} \in V_{-1}$ and similarly for $J^0$, $J^{-1}$. The only nonzero components of $B_1, B_2$ are $B_{1, -1} :V_{-1} \to V_0$, $B_{2, 0} : V_0 \to V_{-1}$, and $\dim V_0 = \dim V_{-1} = 1$. 

We calculate from \eqref{eq:Smatrixfromprop} 
\begin{equation}
\begin{split}
    S\indices{_1^1}(u) & = J^0 \frac{1}{u - B_2 B_1 \eval_{V_0}} I_0 + 1 = 1 + \frac{1}{u} J^0 I_0 \\
    S\indices{_1^2}(u) & = J^{-1} B_{2, 0} \frac{1}{u - B_2 B_1\eval_{V_0}} I_{0} = \frac{1}{u} J^{-1} B_{2, 0} I_0 \\
    S\indices{_2^1}(u) & = J^0 \frac{1}{u - B_2 B_1 \eval_{V_0}} B_{1, -1} I_{-1} = \frac{1}{u} J^0 B_{1, -1} I_{-1} \\
    S\indices{_2^2}(u) & =  \frac{1}{u} J^{-1} B_{2, 0} B_{1, -1} I_{-1} + J^{-1} I_{-1} + u + \hbar.
\end{split}
\end{equation}
One readily sees that the matrix coefficients generate the invariant ring of the $A_2$ quiver and that the ADHM relations 
\begin{equation}
\begin{split}
    B_{1, -1} B_{2, 0}  + I_0J^0 & = \hbar \\
    -B_{2, 0} B_{1, -1} + I_{-1} J^{-1} & = \hbar
\end{split}
\end{equation}
imply 
\begin{equation}
    \det S(u) = \frac{1}{u}(u + \hbar)(u + 2\hbar).
\end{equation}
This has one pole and two zeroes, as dictated by $-\lambda = (-1, 2)$. Taking coefficients in $u$ one finds the usual description of an affine deformation of the $A_2$ surface. 

Clearly, for further dominantly shifted Yangians of $\mathfrak{gl}_2$, more terms from the infinite series in $G\indices{_\alpha^\beta}(\zeta; w)$ will contribute to $S\indices{_2^2}$. 

\subsubsection{Example: 3d mirror to $T^*Gr(n, 2n)$}
The most nontrivial family of examples where it is straightforward to check that \eqref{eq:Smatrixfromprop} works are the varieties related by 3d mirror symmetry to $T^* \text{Gr}(n, 2n)$. These correspond to $r = 2$, $\mu = 0$, $\lambda = (n, -n)$. In this case $\lambda$ determines that 
\begin{equation}
    \tr_V \sigma(t) = n + (n - 1)(t + t^{-1}) + \dots + t^{n - 1} + t^{1 - n}. 
\end{equation}
$B_1$ breaks into the maps $B_{1, i} : V_i \to V_{i + 1}$ and $B_2$ into the maps $B_{2, i} : V_i \to V_{i - 1}$. The moment map conditions are 
\begin{equation}
\begin{split}
    B_{1, -1} B_{2, 0} - B_{2, 1} B_{1, 0} + \sum_\alpha I_\alpha J^\alpha & = \hbar \\
    B_{1, i - 1} B_{2, i} - B_{2, i + 1} B_{1, i} & = \hbar, \, \, \, i \neq 0.
\end{split}
\end{equation}
From \eqref{eq:Smatrixfromprop} we read off
\begin{equation}
    S\indices{_\alpha^\beta}(u) = \delta^\beta_\alpha + J^\beta \frac{1}{u - B_{2, 1} B_{1, 0}} I_\alpha
\end{equation}
for $\alpha, \beta = 1, 2$. Clearly $S(u) = 1 + O(u^{-1})$ for $u \to \infty$, ensuring it satisfies the Gauss factorization condition from \eqref{eq:genslicedef}. To check the second condition in \eqref{eq:genslicedef} it suffices to examine the determinant 
\begin{equation}
    \det S(u) = \frac{\det(u - B_{2, 1} B_{1, 0} + IJ)}{\det(u - B_{2, 1} B_{1, 0})}.
\end{equation}
Repeatedly applying the moment map conditions and using that $\det(1 + AB) = \det (1 + BA)$ for rectangular matrices $A, B$ gives 
\begin{equation}
\begin{split}
    \det(u - B_{2, 1} B_{1, 0}) & = u (u - \hbar) \dots (u - (n -1) \hbar) \\
    \det(u - B_{2, 1} B_{1, 0} + IJ) & = (u + \hbar) \dots (u + n\hbar).
\end{split}
\end{equation}
We see that $\det S(u)$ is a rational function with $n$ zeroes and $n$ poles at integer multiples of $\hbar$, in accordance with $-\lambda = (-n, n)$ and general features of $\mathbb{S}(u)$.

\newpage 

\begin{appendices}

\section{Review of monopole scattering} \label{reviewscatter}
The scattering transform for monopoles is a construction in classical gauge theory studied by Atiyah, Hitchin, Hurtubise, Jarvis, and many others \cite{atiyahhitchin}, \cite{hitchinspectral}, \cite{hurtubise}, \cite{jarvis}. To orient the reader, we will give an introduction to monopole scattering in this appendix. There is nothing essentially new in this appendix, thought it has become clear that a careful examination of the monopole scattering transform may contain valuable insights in constructing a general $RTT$ formalism for shifted Yangians, so we have spelled out as many details about the Poisson structure on the monopole moduli space (the semiclassical imprint of the Yangian) as we could manage. 

We will study the case of the gauge group $G = U(r)$. In this appendix, we give a purely formal treatment, i.e. we do not address subtle analytic issues. 

\subsection{Monopole moduli spaces: basic definitions}
To discuss moduli spaces of monopoles appropriately requires a few preparations and definitions, which we give in this section. 

\subsubsection{Setup and notations}
We work on $\mathbb{R}^3$ with Cartesian coordinates $(x^1, x^2, x^3)$ and the standard Euclidean metric. We will also have occasion to use spherical coordinates, which we denote by $(r, \theta, \phi)$. We study rank $r$ unitary vector bundles $E \to \mathbb{R}^3$, with unitary connection denoted by $A$. In local coordinates, $A$ is given by a one-form $A_i$ valued in $r \times r$ antihermitian matrices, with indices $i = 1, 2, 3$. We also introduce a scalar field $\phi$ which is a real-valued section of the bundle $\text{End}(E)$ (real-valued means valued in the Lie algebra of $U(r)$). 

A pair $(A, \phi)$ is called a monopole configuration if it solves the equation 
\begin{equation} \label{bpsmonopole}
    F_A + \star D_A \phi = 0
\end{equation}
relating the curvature $F_A = dA + A \wedge A$ and the covariant derivative $D_A \phi = d\phi + \comm{A}{\phi}$. Here, $\star$ denotes the Hodge star operation on forms induced via the Euclidean metric on $\mathbb{R}^3$. This is a system of nonlinear first order partial differential equations for the components of $A$ and $\phi$. Note, however, that it becomes linear if the gauge group is replaced by its maximal torus; in more invariant terms, if we have a preferred splitting of $E$ as a direct sum of line bundles and we require $(A, \phi)$ to preserve this splitting. 

We would like to study the moduli space of solutions to the monopole equation, considering two solutions equivalent if they are related by a gauge transformation (smooth automorphism of $E$) which approaches the identity as $r \to \infty$. In order to get a reasonable moduli space, we must specify a boundary condition on $(A, \phi)$ as $r \to \infty$. To carefully state the boundary condition requires a small digression. 

\subsubsection{Dirac monopoles}
The Dirac monopole configuration is a reference configuration defined on $\mathbb{R}^3 \setminus \{ 0 \}$, and depending on a choice of cocharacter $\rho: \mathbb{C}^\times \to  (\mathbb{C}^\times)^r \hookrightarrow GL_r$ that we view as a collection of $r$ integers $(m_1, \dots, m_r)$. Let $m = \text{diag}(m_1, \dots, m_r)$ denote the corresponding diagonal matrix. $m$ is called the charge of the Dirac monopole. 

The underlying vector bundle of the Dirac monopole is topologically non-trivial and may be described abstractly as follows. Let $\pi: \mathbb{R}^3 \setminus \{ 0 \} \to S^2$ denote the quotient by translations in the radial direction, and view $S^2 \simeq \mathbb{CP}^1$. Denote by $\mathscr{O}(k)$ the unique up to isomorphism line bundle on $\mathbb{P}^1$ of degree $k$; it may be viewed as associated to the Hopf bundle $S^3 \to S^2$ via the weight $-k$ representation of $U(1)$. Then $E = \pi^*( \mathscr{O}(-m_1) \oplus \dots \oplus \mathscr{O}(-m_r))$, and relative to this decomposition into line bundles the scalar field is given in coordinates by the formula
\begin{equation}
    \phi_D(r) = -\frac{im}{2r}.
\end{equation}
We readily compute 
\begin{equation}
    \star d\phi_D = \frac{im}{2r^2} \star dr = \frac{im}{2} \sin \theta d\theta \wedge d\phi
\end{equation}
where $(r, \theta, \phi)$ are the usual spherical coordinates in $\mathbb{R}^3$. From this we read off the ``connection'' $A = \frac{im}{2} \cos \theta d\phi$, which solves the equation $F_A + \star D_A \phi = dA_D + \star d \phi_D = 0$. However, this connection is ill-defined at $\theta = 0$ and $\theta = \pi$, since $\phi$ is not a good coordinate there. So we cover $S^2$ by two patches (`N' for north, `S' for south) and in each patch define the connection to be 
\begin{equation}
\begin{split}
    A^S & = \frac{im}{2}(1 +  \cos \theta) d\phi \\
    A^N & = \frac{im}{2}(\cos \theta-1) d\phi. 
\end{split}
\end{equation}
This is engineered so that $A^S$ is well-defined near $\theta = \pi$ and $A^N$ is well-defined near $\theta = 0$. Note that $A^N - A^S = e^{im \phi} d e^{-i m \phi}$, reflecting the fact that $E$ is glued from two trivial bundles using the transition function $e^{im \phi}$. 

We will denote a reference Dirac field of charge $m$ by $(A_{D, m}, \phi_{D, m})$. It is important to keep in mind the well-known fact that the connection defining a Dirac monopole of nontrivial charge cannot be written in a single gauge across all of $\mathbb{R}^3 \setminus \{ 0 \}$; this will play an important role in determining the asymptotics of scattering matrices below. 

\subsubsection{Boundary condition at infinity}
With the notation understood, we may now state the following. Fix a coweight $(\mu_1, \dots, \mu_r)$. We look for the solutions to \eqref{bpsmonopole} such that, as $r \to \infty$ we have
\begin{equation} \label{monopolebcinfty}
\begin{split}
    \phi(\vec{x}) & \to -i\text{diag}( \phi_1, \dots, \phi_r) - \frac{i\mu}{2r} + \dots \\
    A(\vec{x}) & \to A_{D, \mu} + \dots
\end{split}
\end{equation}
where dots denote terms decaying more rapidly for $r \to \infty$. At the risk of belaboring a triviality, in writing these equations we also understand that we require our bundles $E$ to have a smooth isomorphism $E \eval_U \simeq \pi^*(\mathscr{O}(-\mu_1) \oplus \dots \oplus \mathscr{O}(-\mu_r))$ on some open set $U \subset \mathbb{R}^3$ which is the complement of some ball $\{ |x| \leq R \}$ of suitably large radius $R \gg 0$, and the above formulas are written with respect to this identification. The $\phi_i$ are real numbers that we assume distinct and ordered as $\phi_1 < \phi_2 < \dots < \phi_r$. We will refer to the $\phi_i$ as symmetry breaking parameters, since they break the gauge group down to its maximal torus near infinity. 

\subsubsection{The moduli space of smooth monopoles}
Let $\mathscr{G}_\infty \subset \text{Aut}(E)$ be the subgroup of gauge transformations which approach $1$ as $r \to \infty$. Let us denote the moduli space of smooth monopoles of charge $\mu$ as $\mathscr{M}_\mu$ and make the following
\begin{definition}
\begin{equation}
    \mathscr{M}_\mu := \{ (A, \phi) | F_A + \star D_A \phi = 0, \, \, \, \textnormal{boundary conditions $\eqref{monopolebcinfty}$} \}/\mathscr{G}_\infty.
\end{equation}
\end{definition}
Presented in these terms, the moduli space is a quotient of an infinite-dimensional space by an infinite-dimensional group, so it is not clear that it is in fact a reasonable finite-dimensional space. One of the main applications of the scattering transform to be discussed below is that it gives rise (at least formally) to an accessible, finite-dimensional description of the monopole moduli space by trading partial differential equations for algebra-geometric conditions. 

Note that the charge $\mu$ cannot be an arbitrary coweight, for the following reason. As a vector bundle on $\mathbb{R}^3$, $E$ is necessarily trivial topologically, and in particular its restriction to a sphere of any large radius will be trivial. This is compatible with the boundary condition coming from the Dirac singularity only if $e^{-i \mu \phi}$ represents the trivial homotopy class in $\pi_1(U(r))$, which is true if and only if $\det(e^{-i \mu \phi}) = 1$. This implies that $\mu$ must be a coroot. 

\subsubsection{Dirac singularities}
The moduli space of monopoles described above parameterizes solutions to the monopole equation \eqref{bpsmonopole} which are smooth throughout $\mathbb{R}^3$. It is very useful to consider solutions which are singular, but only have prescribed singularities at certain points, see \cite{Moore_2014} for a careful discussion. Excise some points $p_1, \dots, p_k$ from $\mathbb{R}^3$, and consider bundles $E$ and pairs $(A, \phi)$ defined on $\mathbb{R}^3 \setminus \{ p_1, \dots, p_k \}$, solving $F_A + \star D_A \phi = 0$ with a boundary condition at infinity given by a Dirac monopole of charge $\mu$. 

For each point $p_i$, choose a coweight $\lambda_i$ and impose the condition that, relative to some decomposition $E = \oplus_i \mathscr{L}_i$ into a direct sum of line bundles near $p_i$,
\begin{equation} \label{singularbcmonopole}
\begin{split}
    \phi(x) & \sim - \frac{i \lambda_i}{2|\vec{x} - \vec{x}_i|} + \dots \\
    A(x) & \sim A_{D, \lambda_i} + \dots 
\end{split}
\end{equation}
as $\vec{x} \to \vec{x}_i$, where dots denote less singular terms. We let $\mathscr{G}_{\infty, \{ p_i \}} \subset \text{Aut}(E)$ denote the subgroup of gauge transformations approaching identity as $r \to \infty$ and which stabilize $\lambda_i$ near $p_i$. Then we have a moduli space of smooth monopoles in the background of Dirac singularities, described in the following 

\begin{definition}
\begin{equation}
    \mathscr{M}^{\lambda_1, \dots, \lambda_k}_\mu(p_1, \dots, p_k) := \{ (A, \phi) | F_A + \star D_A \phi = 0, \, \, \,  \textnormal{ boundary conditions \eqref{monopolebcinfty} and \eqref{singularbcmonopole}}  \}/ \mathscr{G}_{\infty, \{ p_i \}}. 
\end{equation}
\end{definition}

We will be interested especially in the case where there is only one point, which we can take to be the origin $0 \in \mathbb{R}^3$, and one coweight $\lambda$ there. We denote the corresponding moduli space by $\mathscr{M}^\lambda_\mu$. Similar considerations as in the previous subsection show that $\lambda - \mu$ must be a coroot.  

$\mathscr{M}^\lambda_\mu$ turn out to be smooth manifolds of real dimension $4|\langle \rho, \mu - \lambda \rangle|$ whenever nonempty ($\rho$ denotes the Weyl vector), see \cite{Moore_2014} for a differential-geometric discussion. They are not compact for two reasons: one is an ``infrared'' effect related to the noncompactness of $\mathbb{R}^3$, and the other is an ``ultraviolet'' effect called monopole bubbling \cite{KW} reminiscent of small instanton phenomena in four dimensions. Like instanton moduli, $\mathscr{M}^\lambda_\mu$ admit partial compactifications $\overline{\mathscr{M}}^\lambda_\mu$, in general singular, where ``bubbled'' configurations are added in. We call these Uhlenbeck-type compactifications. Their explicit construction will be given in algebra-geometric terms below, via the scattering matrix. There is no monopole bubbling when $\lambda$ vanishes or is minuscule. 

Since we quotient only by gauge transformations tending to $1$ at infinity, the constant gauge transformations preserving the form of the boundary conditions at infinity act nontrivially on the moduli spaces $\mathscr{M}^\lambda_\mu$. Due to the symmetry breaking parameters $\phi_i$, this means that a priori only the maximal torus $T \subset G$ acts on a given $\mathscr{M}^\lambda_\mu$. The full group that acts is the centralizer of $\mu$, though this is not obvious without the $S$-matrix construction.

\subsubsection{Hyperkahler and symplectic structures}
The monopole moduli spaces $\mathscr{M}^\lambda_\mu$ carry the structure of infinite-dimensional hyperkahler quotients, and therefore are (at least formally) hyperkahler manifolds, see \cite{atiyahhitchin}. While this is formally true, it is still a quite nontrivial issue to precisely describe the hyperkahler structure on them in finite-dimensional terms. 

Let us recall the formal argument underlying these claims. We consider the infinite dimensional space $\mathscr{A}$ of all connections and scalar fields $(A, \phi)$ satisfying our boundary conditions \eqref{monopolebcinfty} and \eqref{singularbcmonopole}, on which we have the metric 
\begin{equation}
    ds^2 = -\int_{\mathbb{R}^3 \setminus \{ p_j \}}  d^3x \Tr \Big( \delta A_i \otimes \delta A_i + \delta \phi \otimes \delta \phi \Big) 
\end{equation}
and the triplet of symplectic forms 
\begin{equation} \label{hyperkahlerform}
    \omega_i = \int_{\mathbb{R}^3 \setminus \{ p_j \}} d^3 x \Tr \Big( \delta A_i \wedge \delta \phi + \frac{1}{2} \varepsilon_{ijk} \delta A_j \wedge \delta A_k \Big) 
\end{equation}
where $\varepsilon_{ijk}$ is the totally antisymmetric symbol. An infinitesimal automorphism $\epsilon(x) \in \text{Lie}(\mathscr{G}_{\infty, \{ p_i \}} )$ acts on the fields by the usual formulas (we denote $D_i \epsilon = \partial_i \epsilon + \comm{A_i}{\epsilon}$):
\begin{equation} \label{gaugetransf}
\begin{split}
\delta A_i & = D_i \epsilon \\
\delta \phi & = \comm{\phi}{\epsilon}. 
\end{split}
\end{equation}
Note the conditions on gauge transformations in $\mathscr{G}_{\infty, \{ p_i \}}$ mean that we require $\epsilon(|x| \to \infty) \to 0$ sufficiently rapidly, and that $\epsilon(x)$ remains regular as $x \to p_i$ and $\comm{\epsilon(x)}{\lambda_i} = (\text{vanishing as $x \to p_i$})$. 

We may compute the hyperkahler moment map for the action of $\mathscr{G}_{\infty, \{ p_i \}}$ on $\mathscr{A}$ as follows. Let $V_\epsilon$ be the vector field on $\mathscr{A}$ corresponding to the Lie algebra element $\epsilon(x)$, and denote by $\iota_V$ the operation of interior multiplication with $V$. Then we may calculate from \eqref{gaugetransf} 
\begin{equation}
\begin{split}
    \iota_{V_\epsilon} \omega_i & = \int_{\mathbb{R}^3 \setminus \{ p_j \}} d^3x \Tr\Big( D_i \epsilon \delta \phi - \delta A_i \comm{\phi}{\epsilon} + \varepsilon_{ijk} D_j \epsilon \delta A_k \Big) \\
    & = -\int_{\mathbb{R}^3 \setminus \{ p_j \}} \Tr \epsilon(x) \Big(D_i \delta \phi + \comm{\delta A_i}{\phi} + \frac{1}{2} \varepsilon_{ijk} (D_j \delta A_k - D_k \delta A_j) \Big) \\
    & = - \delta( \langle \epsilon, \mu_i \rangle) 
\end{split}
\end{equation}
where 
\begin{equation}
    \langle \epsilon, \mu_i \rangle = \int_{\mathbb{R}^3 \setminus \{ p_j \}} d^3 x \Tr\Big( \epsilon(x) \Big( D_i \phi + \frac{1}{2} \varepsilon_{ijk} F_{jk} \Big) \Big). 
\end{equation}
In getting from the first to second line in the formula for $\iota_V \omega_i$, we have integrated by parts and used various properties of the trace. In going from the second to third line, we have recalled the formula $\delta F_{ij} = D_i \delta A_j - D_j \delta A_i$ for the variation of the curvature of a connection. The reader may enjoy verifying that the boundary terms we would acquire in integration by parts from the sphere at infinity and the infinitesimal spheres surrounding $p_i$ vanish by our assumptions on $\epsilon(x)$ and boundary conditions on the fields. 

From this formal calculation, we can see that the equation $F_A + \star D_A \phi = 0$ is equivalent to the vanishing of the triplet of moment maps $\mu_i$, therefore we may rewrite 
\begin{equation}
    \mathscr{M}^\lambda_\mu = \bigcap_{i = 1}^3 \mu_i^{-1}(0) \Big/ \mathscr{G}_{\infty, \{ p_j \}}
\end{equation}
which presents the finite-dimensional space $\mathscr{M}^\lambda_\mu$ as a hyperkahler quotient of an infinite-dimensional affine space by an infinite-dimensional group. 

From \eqref{hyperkahlerform} it is clear that rotations of $\mathbb{R}^3$ rotate $\omega_i$ as a three-dimensional vector; together with basic facts of hyperkahler geometry this means in particular that a choice of direction in $\mathbb{R}^3$ induces a preferred complex structure on the monopole moduli space. From this point forward, we will only be interested in $\mathscr{M}^\lambda_\mu$ as a complex symplectic variety in the complex structure corresponding to the $x^3$ direction. In this complex structure, $\omega_3$ is identified as the Kahler form and $\Omega = \omega_1 + i \omega_2$ becomes the holomorphic symplectic form.  

\subsection{Definition of scattering matrices}
The scattering transform is a technique that allows one to describe the monopole moduli spaces $\mathscr{M}^\lambda_\mu$ in much more concrete terms, modulo standard conjectures to be addressed below. In this section we will recall the definition of the scattering matrix and explain what kind of object it is. 

\subsubsection{Real and holomorphic equations}
Corresponding to our choice of complex structure on $\mathscr{M}^\lambda_\mu$ is an identification $\mathbb{R}^3 \simeq \mathbb{C} \times \mathbb{R} \ni (u, y)$, where $u = x^1 + i x^2$ and $y = x^3$. In these coordinates we have the connection $A = A_u du + A_{\bar{u}} d\bar{u} + A_y dy$, and denote $D_u = \partial_u + A_u$, $\overline{D}_u = \overline{\partial}_u + A_{\bar{u}}$, $D_y = \partial_y + A_y$. The complex symplectic form $\Omega$ is proportional to 
\begin{equation}
    \omega_{\mathbb{C}} = \int d^2 u dy \Tr \Big( \delta A_{\bar{u}} \wedge \delta(A_y + i \phi) \Big).
\end{equation}
From this we see that $(A_{\bar{u}}, A_y + i \phi)$ give distinguished holomorphic coordinates on $\mathscr{A}$. The equations $F_A + \star D_A \phi = 0$ become 
\begin{equation}
\begin{split}
\comm{\overline{D}_u}{D_y + i \phi} & = 0 \\
\comm{D_u}{\overline{D}_u} + \frac{i}{2} D_y \phi & = 0. 
\end{split}
\end{equation}
The first is interpreted as the complex moment map on $\mathscr{A}$ in this complex structure; the second is the real moment map arising from the Kahler form $\omega_3$. By a standard line of reasoning, one expects an infinite-dimensional analog of the Kempf-Ness theorem (expressing the equivalence of symplectic and GIT quotients) to hold, and for the monopole moduli space to be presented as the space of solutions to only the first equation modulo complex-valued gauge transformations and satisfying an appropriate stability condition. This motivates the completeness of the scattering transform (i.e. the fact that the $S$-matrix contains the full data of the monopole solution, up to gauge equivalence). 

\subsubsection{Idea of scattering transform}
The content of the nonlinear equations $F_A + \star D_A \phi = 0$, when expressed in terms of holomorphic data, is the condition that the differential operators $\overline{D}_u$ and $D_y + i \phi$ commute. To extract the important data in the monopole solution, we follow a well-established path in mathematical physics: we fix a value of $u$, and consider the associated linear problem
\begin{equation} \label{assoclinear}
    (\partial_y + A_y(y, u , \bar{u}) + i \phi(y, u, \bar{u})) s = 0
\end{equation}
for a section $s$ of the restriction of $E$ to the vertical line $u = \text{constant}$. This equation will have a monodromy matrix connecting solutions at $y \to -\infty$ to solutions at $y \to + \infty$. The equation $\comm{\overline{\partial}_u + A_{\bar{u}}}{\partial_y + A_y + i \phi} = 0$ then essentially says that, in an appropriate gauge, this monodromy varies holomorphically over the $u$-plane provided we avoid positions of any Dirac singularities. This condition turns out to be strong enough to characterize the monodromy uniquely up to a finite-dimensional ambiguity, and these finitely many parameters vary precisely in the moduli space $\mathscr{M}^\lambda_\mu$. We will now make this picture more precise and explicit. 

\subsubsection{Wilson lines and holomorphic framing}
The monodromy matrix of the linear problem \eqref{assoclinear} can be expressed formally as a Wilson line/holonomy of the complex connection $A_y + i \phi$:
\begin{equation}
    W(y_f, y_i; u, \bar{u}) = \mathscr{P} \exp{ - \int_{{y_i}}^{y_f} dy(A_y(y, u, \bar{u}) + i \phi(y, u, \bar{u}))}. 
\end{equation}
This gives the matrix of parallel transport from the initial position $y_i$ to final position $y_f$ along the line $u = \text{constant}$. As a consequence of $\comm{\overline{D}_u}{D_y + i \phi} = 0$, $W$ is covariantly holomorphic in the sense that 
\begin{equation}
    \overline{D}_u W(y_f, y_i; u, \bar{u}) = \overline{\partial}_u W(y_f, y_i; u, \bar{u}) + A_{\bar{u}}(y_f, u, \bar{u}) W(y_f, y_i; u, \bar{u}) - W(y_f, y_i; u, \bar{u}) A_{\bar{u}}(y_i, u, \bar{u}) = 0. 
\end{equation}
To extract a matrix that is actually holomorphic in $u$ from $W$, we need to present it in a holomorphic frame; such a frame is constructed from sections annihilated by $\overline{D}_u$. Finally, we will be interested only in the asymptotics of this expression for $y_i \to - \infty$, $y_f \to + \infty$. 

Concretely, this means we study solutions $s(y, u, \bar{u})$ to the compatible system of equations 
\begin{equation}
\begin{split}
    (\partial_y + A_y(y, u, \bar{u}) + i \phi(y, u, \bar{u}))s(y, u, \bar{u}) & = 0 \\
    (\overline{\partial}_u + A_{\overline{u}}(y, u, \bar{u})) s(y, u, \bar{u}) & = 0. 
\end{split}
\end{equation}
We consider asymptotic behavior of these solutions for $|y|$ large, while allowing $u$ to be arbitrary. Let $s^+_a$, $a = 1, \dots, r$ denote a basis of solutions defined in the region of large positive $y$, and similarly $s^-_a$ denote a basis of solutions defined for large negative $y$. We form matrices $g_\pm(y, u, \bar{u}) = [s^\pm_1, \dots, s^\pm_r]$ where $g_+$ is defined for large positive $y$ and $g_-$ is defined for large negative $y$. Note that the $g_\pm$ are at this stage ambiguous (in the sense of not being uniquely determined by the monopole configuration $(A, \phi)$) up to right multiplication by invertible matrices that vary holomorphically in $u$; we will explain in the next subsection how to partially fix this ambiguity. 

\begin{definition} The monopole scattering matrix is
\begin{equation} \label{scatteringmatrixdef}
    S(u) :=  \lim_{L \to \infty} g_+^{-1}(L/2, u, \bar{u}) W(L/2, -L/2; u, \bar{u}) g_-(-L/2, u, \bar{u}).
\end{equation}
\end{definition}

By construction, the matrix coefficients of $S(u)$ are holomorphic functions of the variable $u$, which are regular in the $u$-plane away from the positions of Dirac singularities as a consequence of the regularity of the fields $(A, \phi)$. It is also clear by construction that the matrix coefficients vary holomorphically over the moduli space of monopoles in the complex structure determined by the complex coordinates $(A_{\bar{u}}, A_y + i \phi)$. 

\subsubsection{Triangular ambiguities} \label{trianggauge}
Above, the $g_\pm$ matrices were chosen essentially at random; we would like to eliminate this auxiliary choice to the fullest extent possible. This can be done by inspecting the asymptotic behavior of the fields $(A, \phi)$ more carefully. 

The key observation is that the boundary condition at $r \to \infty$ \eqref{monopolebcinfty} gives a distinguished class of bases $s^\pm_a$ of solutions to the equations $(D_y + i \phi)s = \overline{D}_u s = 0$ in the regions of large positive $y$ and large negative $y$. First consider the behavior of solutions for $y \to + \infty$. Keeping terms up to order $1/y$, the equations read from \eqref{monopolebcinfty} (in the north pole gauge for the Dirac field): 
\begin{equation}
\begin{split}
    \Big( \partial_y + \text{diag}(\phi_1, \dots, \phi_r) + \frac{\mu}{2y} \Big) s & = 0 \\
    \overline{\partial}_u s & = 0. 
\end{split}
\end{equation}
From this we see that, for each $a = 1, \dots, r$ there is a solution behaving for $y \to +\infty$ as
\begin{equation}
    s_a^+(y, u, \bar{u}) \sim e^{- \phi_a y} y^{-\mu _a/2} \begin{pmatrix} 0 \\ \vdots \\ c_a(u) \\ \vdots \\ 0 \end{pmatrix}
\end{equation}
where $c_a(u)$ is a nowhere vanishing entire function of $u$ and the only nonzero entry in the column vector is in the $a$-th spot. 

While solutions $s_a^+$ with the given asymptotics certainly exist, the asymptotics alone does not determine them uniquely. A solution $s_a^+$ with the above asymptotics is uniquely determined only up to changing the function $c_a(u)$ and the addition of solutions vanishing more rapidly for $y \to \infty$. If we choose arbitrarily some reference solutions $(s_1^+, \dots, s_r^+)$ with the required asymptotics, this may be expressed in formulas as a redundancy under (we use here the assumption that the symmetry breaking parameters are in the order $\phi_1 < \dots < \phi_r$)  
\begin{equation}
s_a^+(y, u, \bar{u}) \mapsto \sum_{b \geq a} s_b^+(y, u, \bar{u}) (h_+(u))\indices{^b_a}.
\end{equation}
We may view the collection $(h_+(u))\indices{^a_b}$ as an analytic function of $u$ valued in invertible lower triangular matrices, denoted by $h_+(u)$. We denote the subgroup of lower triangular matrices as $B_+ \subset G_{\mathbb{C}} = GL_r$. Thus the matrix $g_+(y, u, \bar{u})$ is in general determined uniquely by the monopole configuration $(A, \phi)$ up to $g_+(y, u, \bar{u}) \mapsto g_+(y, u, \bar{u}) h_+(u)$. 

A similar consideration shows that $g_-$ is defined up to the ambiguity $g_-(y, u, \bar{u}) \mapsto g_-(y, u, \bar{u}) h_-(u)$ where $h_-$ is valued in the subgroup $B_- \subset GL_r$ of upper triangular matrices. 

In the next section, it will be demonstrated that in fact the only singularities the $S$-matrix may have are poles (at $u \to \infty$ and the positions of Dirac singulairites) of some order fixed by the integers in $\mu$ and $\lambda$. That is, in an appropriate (algebraic) class of gauges $S(u)$ will be an element of $G_{\mathbb{C}}(u)$ (invertible matrices with entries in rational functions of the variable $u$). 

The triangular ambiguities preserving such a class of gauges are those in which the functions $h_\pm(u)$ are not merely analytic but are in fact algebraic, so that regularity over the $u$-plane forces them to be polynomial. We denote polynomial functions valued in $B_\pm$ as $B_\pm[u]$---note the diagonal entries of such matrices must be constant by the requirement that the matrices have nonzero determinant for each value of $u$. The scattering matrix then canonically takes values in the double quotient $B_+[u] \backslash  G_{\mathbb{C}}(u) / B_-[u]$, and the assignment of a monopole to its scattering data $(A, \phi) \mapsto S(u)$ provides a canonical map 
\begin{equation} \label{maptodoublequotient}
    \mathscr{M}^\lambda_\mu \longrightarrow B_+[u] \backslash G_{\mathbb{C}}(u) /B_-[u]. 
\end{equation}
One expects the scattering transform to be complete, meaning that this map is actually an embedding. In the rank 2 case with $\lambda = 0$ this follows from \cite{donaldsonrational} as observed in \cite{hurtubise}. The result for $\lambda = 0$ but general $G$ and $\mu$ was obtained by Jarvis \cite{jarvis}. To the knowledge of this author no complete proof has been given for nonzero $\lambda$, though the result is almost certainly true. 

It is conceivable that the result could be shown via comparing our formula \eqref{eq:Smatrixfromprop} to an analysis of Nahm's equations describing the appropriate singular monopole configuration. 

This map can be used to give $\mathscr{M}^\lambda_\mu$ the structure of an algebraic variety, and it is in this way that we will typically regard $\mathscr{M}^\lambda_\mu$ in this paper. Note the complex structure on $\mathscr{M}^\lambda_\mu$ viewed as a variety via this map is automatically compatible with the complex structure arising from gauge theory in which $(A_{\bar{u}}, A_y + i \phi)$ are the holomorphic coordinates. 

\subsection{Determination of scattering matrices}
Now we will see how to determine the asymptotics of $S(u)$ near its singularities. This will give a characterization of $\mathscr{M}^\lambda_\mu$ as a complex manifold and allow for the construction of its Uhlenbeck-type partial compactification $\overline{\mathscr{M}}^\lambda_\mu$ as an affine algebraic variety, assuming the completeness of the scattering transform. Because of the boundary conditions \eqref{monopolebcinfty}, \eqref{singularbcmonopole} the key calculation to do is determine the scattering data for a Dirac monopole of arbitrary charge. 

\subsubsection{Scattering data for Dirac monopoles}
Now we will calculate $S(u)$ in the special case of a Dirac monopole field of charge $m$. In the $(y, u, \bar{u})$ coordinates the fields are given by the expressions (with symmetry breaking parameters included, though they will do essentially nothing):
\begin{equation}
\begin{split}
    \phi(y, u, \bar{u}) & = -i \text{diag}(\phi_1, \dots, \phi_r) - \frac{im}{2\sqrt{y^2 + |u|^2}} \\
    A^{N, S}(y, u, \bar{u}) & = \frac{im}{2}\Big(\frac{y}{\sqrt{y^2 + |u|^2}} \pm 1 \Big) \frac{1}{2i} \Big( \frac{du}{u} - \frac{d\bar{u}}{\bar{u}} \Big) 
\end{split}
\end{equation}
and the sign on $\pm1$ is chosen according to north/south patch; note north is $y \to + \infty$ and south is $y \to -\infty$. In particular we see that to leading order at large $y$, both $A^N$ and $A^S$ vanish in their respective patches, and that the $y$-component of $A$ identically vanishes $A_y = 0$. 

Using the elementary integral 
\begin{equation}
    \int_{-L/2}^{+L/2} \frac{dy}{\sqrt{y^2 + |u|^2}} = 2\ln \Bigg( \frac{L}{2|u|} + \sqrt{1 + \frac{L^2}{4|u|^2}} \Bigg) 
\end{equation}
we easily compute the asymptotics (note path ordering is not necessary here since we deal with abelian gauge fields): 
\begin{equation} \label{diracwilsonline}
\exp{ -\int_{-L/2}^{+L/2} dy(A_y(y, u, \bar{u}) + i \phi(y, u, \bar{u}))} \sim \text{diag}\Bigg(e^{-\phi_1 L} \Big( \frac{L}{|u|} \Big)^{-m_1} , \dots, e^{-\phi_r L} \Big( \frac{L}{|u|} \Big)^{-m_r} \Bigg)  
\end{equation}
for $L \to \infty$. To determine the $g_\pm$ matrices, we just pick out asymptotic solutions for $y \to \pm \infty$. An asymptotic solution $s^-$ for $y \to -\infty$ solves to leading order the equations 
\begin{equation} 
\begin{split}
    (\partial_y + \text{diag}(\phi_1, \dots, \phi_n) -m/2y) s^+ & = 0 \\
    (\overline{\partial}_u + A^S_{\bar{u}})s^- & = 0.
\end{split}
\end{equation}
Since $A^S_{\bar{u}}$ vanishes to leading order at large $y$, the most general asymptotic section $s^-$ has components $s^{-, i}$, $i = 1, \dots, r$ of the form
\begin{equation}
    s^{-, i}(y, u, \bar{u}) \sim  e^{-\phi_i y} (-y)^{m_i/2} h_{-, i}(u)
\end{equation}
for $y \to -\infty$, where $h_{-, i}(u)$ is a nowhere vanishing entire function of $u$. Note that the triangular ambiguity discussed above does not arise because the fields describing a Dirac monopole are given by diagonal matrices so the problem in essence reduces to the $1 \times 1$ case. We may go to a gauge in which $h_-$ is an algebraic function of $u$, not merely analytic, in which case it follows that $h_-$ must be a constant. Discarding the constants, we see that 
\begin{equation} \label{diracg-}
    g_-(-L/2, u, \bar{u}) \sim  \text{diag} \Big( e^{\phi_1 L/2} L^{m_1/2}, \dots, e^{\phi_r L/2} L^{m_r/2} \Big) 
\end{equation}
for $L \to \infty$. A similar computation for $y \to + \infty$, and in the north pole gauge for the Dirac field so that $A^N_{\bar{u}}$ vanishes to leading order, gives the asymptotics 
\begin{equation} \label{diracg+}
    g_+(L/2, u, \bar{u})  \sim \text{diag} \Big( e^{- \phi_1 L/2} L^{-m_1/2}, \dots, e^{-\phi_r L/2} L^{-m_r/2}\Big).  
\end{equation}

Now we may combine \eqref{diracwilsonline}, \eqref{diracg-}, \eqref{diracg+} to compute $S(u)$. Before simply multiplying them together, there is one caveat to take into account: \eqref{diracg-} is written in the south pole patch, while \eqref{diracg+} is written in the north pole patch; if we use \eqref{diracwilsonline} to continue from large negative $y$ to large positive $y$, we must also multiply by the transition function $e^{+im \phi} = (u/\bar{u})^{m/2}$ before multiplying with $g_+^{-1}$ since \eqref{scatteringmatrixdef} is written under the assumption that $W$, $g_-$, and $g_+$ are expressed with respect to compatible trivializations of the bundle. Put differently, since \eqref{diracg+} is written in the north pole patch the correct substitution to make for $W$ in \eqref{scatteringmatrixdef} is not \eqref{diracwilsonline} but $(u/\bar{u})^{m/2} \times \text{\eqref{diracwilsonline}}$. 

Assembling all the contributions, from \eqref{scatteringmatrixdef} we compute (the subscript `D' reminds us that this is the result for a Dirac monopole of charge $m$):
\begin{equation}
\begin{split}
    S_D(u) & = \lim_{L \to \infty} \text{diag} \Big( e^{\phi_i L/2} L^{m_i/2} \times (u/\bar{u})^{m_i/2} \times e^{-\phi_i L} \frac{(u \bar{u})^{m_i/2}}{L^{m_i}} \times e^{\phi_i L/2} L^{m_i/2} \Big)_{i = 1}^r  \\
    & = \text{diag}(u^{m_1}, \dots, u^{m_r}) = u^m. 
\end{split}
\end{equation}
Note that all the $L$ and $\bar{u}$ dependence cancels out in the end, as it must. The fact that the scattering data for Dirac monopoles takes this form underlies the basic link between Dirac singularities and Hecke modification of bundles discussed in \cite{KW}. 

Note that the monopole equations in the case of an abelian gauge group are linear, so that solutions add and scattering matrices multiply. If there are multiple Dirac singularities at positions $(u_\alpha, y_\alpha)$ of charges $m_\alpha$, the total scattering matrix is then $S_D(u) = \prod_\alpha (u - u_\alpha)^{m_\alpha}$. Note that the scattering matrix is insensitive to the position $y_\alpha$ in the vertical direction, since we integrate over a line in this direction in computing $W$. 

\subsubsection{Asymptotics as $u \to \infty$}
The preliminaries are finally in place to determine the matrix $S(u)$ for $u \to \infty$. By the discussion in section \ref{trianggauge}, as $L \to \infty$ we have the asymptotics 
\begin{equation}
    g_\pm(L/2, u, \bar{u}) \sim \text{diag}(e^{\mp \phi_1 L/2} L^{\mp m_1/2}, \dots, e^{\mp \phi_r L/2} L^{\mp m_r/2}) h_\pm(u).
\end{equation}
In equations involving both $L$ and $u$, we understand that we consider asymptotics with $|u|$ large but $L \gg |u|$. The matrices $h_\pm$ are lower (for $+$) or upper (for $-$) triangular and the nonzero entries grow polynomially in $u$ for large $u$. The diagonal entries may be assumed without loss of generality to be $1$ by normalizing the sections $s_a^\pm$. 

Likewise, using the boundary condition \eqref{monopolebcinfty}, we see that we may replace $W$ in \eqref{scatteringmatrixdef} by the Wilson line for the Dirac field computed in the previous section. Then we may borrow the computation of the previous section to conclude the asymptotics 
\begin{equation} \label{ulargeasympt}
    S(u) \sim h^{-1}_+(u) u^\mu h_-(u)
\end{equation}
for $u \to \infty$; we understand this statement to mean that in any gauge of the type discussed in section \ref{trianggauge}, the monopole scattering matrix must have the above leading behavior at large $u$. 

The following conclusions may be drawn from \eqref{ulargeasympt}. Let $\Delta_i(S(u))$ denote the $i$-th principal minor of $S(u)$ (the determinant of the top left $i \times i$ submatrix); since $\Delta_i(h^{-1}_+(u) u^\mu h_-(u)) = \Delta_i(u^\mu)$, we conclude that $\Delta_i(S(u)) \sim u^{\langle \mu, \omega_i \rangle}$ for $u \to \infty$ where brackets denote the canonical pairing of coweights with weights and $\omega_i = (1, \dots, 1, 0, \dots, 0)$ is the $i$-th fundamental weight of $GL_r$. In particular, if we pass to formal power series in $u^{-1}$, $\Delta_i(u)$ become invertible and we have a (necessarily unique) Gauss decomposition 
\begin{equation} \label{gaussdecomp}
    S(u) = E(u) H(u) F(u)
\end{equation}
where $E(u)$ is lower triangular with unit diagonal, $H(u)$ is diagonal, and $F(u)$ is upper triangular with unit diagonal. Then \eqref{ulargeasympt} together with uniqueness of the Gauss decomposition imply that $E(u)$ and $F(u)$ have at most polynomial growth for $u$ large and that $H(u)$ has an expansion for $u \to \infty$ of the form
\begin{equation}
    H(u) = u^\mu(1 + O(u^{-1})).
\end{equation}
However, as discussed in section \ref{trianggauge}, $S(u)$ is ambiguous up to $B_+[u]$ on the left and $B_-[u]$ on the right; the asymptotics \eqref{ulargeasympt} implies that in fact each $S(u)$, regarded as an element of $B_+[u] \backslash G_{\mathbb{C}}(u) /B_-[u]$, has a unique representative in $G_{\mathbb{C}}(u)$ of the form \eqref{gaussdecomp} if we in addition impose the conditions
\begin{equation}
\begin{split}
    E(u) & = 1 + O(u^{-1}) \\ 
    F(u) & = 1 + O(u^{-1}). 
\end{split}
\end{equation}
Informally, we can just exhaust the $B_+[u]$ freedom by using it to kill any positive powers of $u$ appearing in $E(u)$, and likewise $B_-[u]$ to kill any positive powers appearing in $F(u)$. 

In other words, we have shown that \eqref{ulargeasympt} implies that we have a lifting of the canonical map \eqref{maptodoublequotient} to a map 
\begin{equation}
    \mathscr{M}^\lambda_\mu \longrightarrow G_{\mathbb{C}}(u) 
\end{equation}
and composing with the completion (expansion at large $u$) $G_{\mathbb{C}}(u) \hookrightarrow G_{\mathbb{C}}(( u^{-1}))$ its image lands in $$B_{+, 1}[[ u^{-1} ]] u^\mu B_{-, 1}[[ u^{-1} ]]$$ where subscripts $1$ mean kernel of evaluation at $u = \infty$. 

\subsubsection{Asymptotics near Dirac singularities}
Essentially the same consideration determines the asymptotics near the position of a Dirac singularity, which we assume is at $u = 0$. As $u \to 0$, it is clear from the definition \eqref{scatteringmatrixdef} that $W$ will receive a dominant contribution from a small neighborhood of $y = 0$ along the line, since the gauge field is becoming very nearly singular there. This contribution is once again given by the scattering matrix of a Dirac monopole as a result of \eqref{singularbcmonopole}, and by a standard saddle point asymptotics argument one sees that the scattering matrix must behave as 
\begin{equation} \label{usmallasympt}
    S(u) \sim S_+(u) u^\lambda S_-(u)
\end{equation}
for $u \to 0$, where the matrices $S_\pm(u)$ are regular at $u = 0$ but are otherwise arbitrary, and correspond to the contributions of either side of the small interval enclosing $y = 0$ to $W$. If the only Dirac singularity is at $u = 0$, this condition may be restated more algebraically as 
\begin{equation}
    S(u) \in G_{\mathbb{C}}[u] u^\lambda G_{\mathbb{C}}[u] \subset G_{\mathbb{C}}(u)
\end{equation}
where we regard the $S$-matrix as an element of $G_{\mathbb{C}}(u)$ by the construction of the previous subsection. The Uhlenbeck-type compactification $\overline{\mathscr{M}}^\lambda_\mu$ is constructed by relaxing this condition to (bar denotes closure)
\begin{equation}
    S(u) \in \overline{G_{\mathbb{C}}[u] u^\lambda G_{\mathbb{C}}[u]} \subset G_{\mathbb{C}}(u),
\end{equation}
see below. This means we require that the $S$-matrix has a pole at $u = 0$ of order $\leq \lambda$, rather than order exactly $\lambda$. Configurations with poles of order $< \lambda$ are ``bubbled'' configurations in which the singular Dirac charge has been screened by some smooth monopole charge. 

If there are multiple Dirac singularities, one just imposes \eqref{usmallasympt} separately at the location of each singularity, and uses the charge of the singularity for $\lambda$. It is likewise straightforward to restate this condition algebraically. 

\subsubsection{Monopole moduli spaces and generalized affine Grassmannian slices}
The subspace 
\begin{equation} \label{eq:genslicedef}
    \overline{\mathscr{W}}^\lambda_\mu : = \overline{G_{\mathbb{C}}[u] u^\lambda G_{\mathbb{C}}[u]} \bigcap B_{+, 1}[[u^{-1}]] u^\mu B_{-, 1}[[u^{-1}]] \subset G_{\mathbb{C}}(u)
\end{equation}
coincides with the generalized affine Grassmannian slice defined in \cite{bfnslice}, see Section 2(xi) of that paper. It is known to be an irreducible affine algebraic variety of dimension $2|\langle \rho, \mu - \lambda \rangle|$, and the arguments of the previous two sections show that the map \eqref{maptodoublequotient} lifts to $G_{\mathbb{C}}(u)$ and factors through a map 
\begin{equation}
    \mathscr{M}^\lambda_\mu \longrightarrow \overline{\mathscr{W}}^\lambda_\mu.
\end{equation}
It is expected that the completeness of the scattering transform identifies this map with the inclusion of $\mathscr{M}^\lambda_\mu$ into its Uhlenbeck-type compactification $\overline{\mathscr{M}}^\lambda_\mu \simeq \overline{\mathscr{W}}^\lambda_\mu$, and that $\mathscr{M}^\lambda_\mu$ itself is recovered as the smooth locus of $\overline{\mathscr{M}}^\lambda_\mu$. 

\subsubsection{Complete determination in rank $2$}
For $G = U(2)$ we will make these results fully explicit. Denote a prospective $S$-matrix by 
\begin{equation}
    S(u) = \begin{bmatrix} Q(u) & U^+(u) \\ U^-(u) & \widetilde{Q}(u) \end{bmatrix}. 
\end{equation}
We will choose the charge at the origin to be $\lambda = (0, n)$ and the charge at infinity to be $\mu = (k, n - k)$ for $n, k \geq 0$. Then requiring 
\begin{equation}
    S(u) = E(u) H(u) F(u)
\end{equation}
with $E(u)$, $F(u)$ of the form $1 + O(u^{-1})$ and $H(u) = u^\mu(1 + O(u^{-1}))$ implies that $Q(u) = u^k + \dots$, and $U^\pm(u) = O(u^{k - 1})$ for $u \to \infty$. Likewise $S(u) \in \overline{ G_{\mathbb{C}}[u] u^\lambda G_{\mathbb{C}}[u]}$ if and only if all entries of $S(u)$ are polynomials in $u$ and 
\begin{equation}
    \det S(u) = Q(u) \widetilde{Q}(u) - U^+(u) U^-(u) = u^n
\end{equation}
(we see that the leading coefficient of the determinant must be $1$ by comparing with $u \to \infty$ asymptotics). Note that from the determinant condition it follows that
\begin{equation}
    \deg \widetilde{Q}(u) \leq  
    \begin{cases}
    k - 2 & n < 2k - 1 \\ 
    n - k & n \geq 2k - 1
    \end{cases}
\end{equation}
and $\widetilde{Q}(u)$ is monic in the second case. Taking coefficients of $u$ in $\det S(u) = u^n$ gives explicit equations cutting out the monopole moduli space $\overline{\mathscr{M}}^\lambda_\mu \simeq \overline{\mathscr{W}}^\lambda_\mu$ as an affine variety; its coordinate ring is thus obviously generated by the coefficients of matrix elements of $S(u)$. 

Note that displacing the positions of the Dirac singularities to $x_\alpha = (u = a_\alpha, y = 0)$, so that each singularity has charge $(0, 1)$, deforms the moduli space to the affine variety cut out by 
\begin{equation}
    \det S(u) = \prod_{\alpha = 1}^n (u - a_\alpha)
\end{equation}
which is smooth as long as $a_\alpha$ are distinct. Displacing the singular monopoles in the $y$-direction while fixing them at $u = 0$ resolves the singularities of $\overline{\mathscr{W}}^\lambda_\mu$; we will not make use of the resolutions in this paper. 

\subsection{Poisson bracket}
As discussed above, the moduli space of monopoles when viewed as a complex symplectic variety comes with the canonically defined symplectic form 
\begin{equation}
    \omega_{\mathbb{C}} = \int_{\mathbb{R}^3 \setminus \{ p_j \}} d^2 u dy \Tr( \delta A_{\bar{u}} \wedge \delta (A_y + i \phi))
\end{equation}
descending from the space of all connections and scalars. As the scattering matrix is defined directly in terms of the gauge fields and scalars, this symplectic form may be used to compute the Poisson brackets of the matrix coefficients, at least modulo a certain detail addressed below. 

Recall that on any symplectic manifold with symplectic form $\Omega$ and local coordinates $x^I$, 
\begin{equation}
    \Omega = \frac{1}{2} \Omega_{IJ} dx^I \wedge dx^J
\end{equation}
that the Poisson bracket of functions is defined as 
\begin{equation}
    \acomm{f}{g} = \Omega^{IJ} \pdv{f}{x^I} \pdv{g}{x^J}
\end{equation}
where $\Omega^{IJ}$ is the matrix inverse to $\Omega_{IJ}$. Then computing it is equivalent to the following formal recipe: compute $\delta f \wedge \delta g$, where variations are taken with respect to changes of coordinates $x \to x + \delta x$, and replace $\delta x^I \wedge \delta x^J$ with $\Omega^{IJ}$.

This continues to make sense in the infinite-dimensional situation. Let $m, n = 1, \dots, r$ denote vector bundle indices. Then the components of connections and scalars evaluated at points $(A_{\bar{u}})\indices{^m_n}(y, u, \bar{u})$, similarly for $A_y + i \phi$, may be viewed as an infinite set of coordinates on the space $\mathscr{A} \ni (A, \phi)$. The symplectic form is then written in terms of these coordinates as (implied sum on all indices) 
\begin{equation}
    \omega_{\mathbb{C}} = \int d^2 u d^2 u' dy dy' \delta (A_{\bar{u}})\indices{^m_n}(y, u, \bar{u}) \delta(A_y + i \phi)\indices{^n_m}(y', u', \bar{u}') \delta(y - y') \delta^2(u  - u').
\end{equation}
$\delta(y - y')$ is the Dirac delta function and $\delta^2(u - u')$, its two-dimensional analog. To compute Poisson brackets for two functions on the moduli space of monopoles, which descend from functionals of $(A_{\bar{u}}, A_y + i \phi)$, one follows the following procedure: compute the product of variations $\delta f \delta g$ with respect to the connection and scalars, and then wherever it appears make the replacement 
\begin{equation}
    (\delta A_{\bar{u}})\indices{^m_n}(y, u, \bar{u}) \delta(A_y + i \phi)\indices{^{m'}_{n'}}(y', u', \bar{u}') \to \delta^m_{n'} \delta^{m'}_n \delta(y - y') \delta^2(u - u')
\end{equation}
(and of course, replace any $\delta A_{\bar{u}} \delta A_{\bar{u}}$ terms by $0$, similarly for $\delta(A_y +i \phi) \delta (A_y + i \phi)$). Thus the determination of the Poisson structure on the monopole moduli space is reduced to a relatively straightforward exercise in calculus of variations. We will carry out this analysis below for the Poisson brackets of scattering matrix elements $\acomm{S\indices{^i_j}(u)}{S\indices{^k_\ell}(v)}$.

\subsubsection{Variations}
The scattering matrix is defined as 
\begin{equation}
    S(u) = \lim_{L \to \infty} g^{-1}_+(L/2, u, \bar{u}) W(L/2, -L/2; u, \bar{u}) g_-(-L/2, u, \bar{u}) = \lim_{L \to \infty} S_L(u).
\end{equation}
Therefore, to calculate Poisson brackets it suffices to understand the variation of $g_\pm$ and $W$ under a change of connection. 

For the Wilson line, it is well-known and follows straightforwardly from the definition that 
\begin{equation}
    \delta W(L/2, -L/2; u, \bar{u}) = -\int_{-L/2}^{L/2} dy W(L/2, y; u, \bar{u})(\delta A_y(y, u, \bar{u}) + i \delta\phi(y, u, \bar{u}))W(y, -L/2; u, \bar{u}).   
\end{equation}
Note in particular that the Wilson line depends only on $A_y + i \phi$. To get interesting Poisson brackets will require some $A_{\bar{u}}$ dependence, which comes from $g_\pm$. 

Each of $g_\pm$ are defined so that they satisfy 
\begin{equation} \label{gconj}
    g^{-1}_\pm \overline{D}_u g_\pm = \overline{\partial}_u
\end{equation}
as an operator equation; this is just restating that the columns of the matrix $g_\pm$ are a basis of solutions to $\overline{D}_u s = 0$ in the region $y \to \pm \infty$. In this equation it is understood that $\overline{D}_u  = \overline{\partial}_u + A_{\bar{u}}$ where $A_{\bar{u}}$ is evaluated at $y = \pm L/2$ for the equation involving $g_\pm$. 

Taking the variation of \eqref{gconj} gives (we suppress the subscript $\pm$ for clarity)
\begin{equation} \label{dgequation}
\begin{split}
    -g^{-1} \delta g g^{-1} \overline{D}_u g + g^{-1} \delta A_{\bar{u}} g + g^{-1} \overline{D}_u g g^{-1} \delta g   & = \comm{\overline{\partial}_u}{g^{-1} \delta g} + g^{-1} \delta A_{\bar{u}} g \\
    & = \overline{\partial}_u(g^{-1} \delta g) + g^{-1} \delta A_{\bar{u}} g = 0. 
\end{split}
\end{equation}
Treating $\delta A_{\bar{u}}$ as given, the unique $\delta g$ solving this equation and vanishing at $|u| \to \infty$ may be written as the integral 
\begin{equation} \label{dgsolution}
    g^{-1}_\pm \delta g_\pm(\pm L/2, u, \bar{u}) = - \int_{\mathbb{C}} d^2 v \frac{1}{u - v} g^{-1}_\pm \delta A_{\bar{u}} g_\pm (\pm L/2, v, \bar{v}).
\end{equation}
The content of this equation is simply that $\overline{\partial}_u (u - v)^{-1} = \delta^2(u - v)$. We will not discuss convergence of such integrals or related matters, since we are just doing a formal computation. 

\subsubsection{Main calculation}
Now the results just obtained will be applied to calculate the Poisson bracket of the matrix elements $\acomm{S\indices{^i_j}(u)}{S\indices{^k_\ell}(v)}$. For the variation of the $S$-matrix one computes (the cutoff $L$ is being kept finite to avoid a later nuisance with factors of $2$):
\begin{equation}
\begin{split}
    \delta S_L(u) & = - g_+^{-1} \delta g_+ g_+^{-1} Wg_- + g_+^{-1} \delta W g_- + g_+^{-1} W g_- g_-^{-1} \delta g_- \\
    & = + \int_{\mathbb{C}} d^2 w \frac{1}{u - w} g_+^{-1} \delta A_{\bar{u}} g_+(L/2, w, \bar{w}) S_L(u) \\& - \int_{-L/2}^{+L/2} dy g_+^{-1}W(L/2, y; u, \bar{u}) \delta(A_y + i \phi)(y, u, \bar{u}) W(y, -L/2; u, \bar{u}) g_- \\
    & - \int_{\mathbb{C}} d^2 w \frac{1}{u - w} S_L(u) (g_-^{-1} \delta A_{\bar{u}} g_-)(-L/2, w, \bar{w})  .
\end{split}
\end{equation}
To finish out the formal computation we follow our recipe: we compute $\delta S_L(u) \otimes \delta S_{L'}(v)$, then replace $$\delta (A_{\bar{u}})\indices{^m_n}(y, w, \bar{w}) \wedge \delta(A_y + i \phi)\indices{^{m'}_{n'}}(y', w', \bar{w}') \to \delta^m_{n'} \delta^{m'}_n \delta(y - y') \delta^2(w - w')$$
wherever it appears. To avoid troubles with delta functions supported at the boundaries of integration, we adopt the following ``point splitting'' regularization procedure: set $L' = L + \varepsilon$ in the computation for small $\varepsilon > 0$, then at the end send $\varepsilon \to 0$ and then $L \to \infty$. It is easy to see that nothing essential depends on this choice, and that other choices will lead to equivalent results (e.g. interpreting an integral with a delta function on the boundary as $1/2$ leads to a slightly more tedious calculation with the same answer). Let us also restore the indices on $S_L$ and $S_{L'}$, so that we are computing $\delta (S_L)\indices{^i_j} \delta (S_{L'})\indices{^k_\ell}$. 

Now it is clear that, in computing the product, only terms involving $\delta A_{\bar{u}} \delta(A_y + i \phi)$ will contribute to the Poisson bracket, so we can ignore any $\delta A_{\bar{u}} \delta A_{\bar{u}}$ terms and likewise any $\delta(A_y + i \phi) \delta (A_y + i \phi)$ terms. Also, because we assume $L < L'$ the term in $\delta S_L$ involving $\delta(A_y + i \phi)$ will contribute zero when contracted with either $\delta A_{\bar{u}}$ appearing in $\delta S_{L'}$, since the latter are supported at $\pm L'/2$ while the $\delta(A_y + i \phi)$ variation is supported at some $y \in (-L/2, L/2)$, so $\delta(\pm L'/2 - y) = 0$ identically. 

By this discussion, only the first and third term in $\delta S_{L}$ can produce any nonzero contractions, and they must do so with the middle term in $\delta S_{L'}$, so we have 
\begin{equation}
\begin{split}
    (\delta S_L(u))\indices{^i_j} (\delta S_{L'}(v))\indices{^k_\ell} & = \\
    -\int \int_{-L'/2}^{+L'/2} d^2w dy' \frac{1}{u - w} &  (g_+^{-1})\indices{^i_m}(L/2, w, \bar{w}) (g_+(w, \bar{w})S_L(u))\indices{^n_j} (g_+^{-1}W)\indices{^k_{m'}}(L'/2, y'; v, \bar{v}) \times \\ & (Wg_-)\indices{^{n'}_\ell}(y', -L'/2; v, \bar{v})
    \delta(A_{\bar{u}})\indices{^m_n}(L/2, w, \bar{w}) \delta(A_y + i \phi)\indices{^{m'}_{n'}}(y', v, \bar{v}) \\
    + \int \int_{-L'/2}^{+L'/2} d^2 w dy' \frac{1}{u - w} & (S_L(u) g_-^{-1}(w, \bar{w}))\indices{^i_m} (g_-)\indices{^n_j}(-L/2, w, \bar{w})  (g_+^{-1}W)\indices{^k_{m'}}(L'/2, y'; v, \bar{v}) \times \\ & (Wg_-)\indices{^{n'}_\ell}(y', -L'/2; v, \bar{v})) \delta(A_{\bar{u}})\indices{^m_n}(-L/2, w, \bar{w}) \delta(A_y + i \phi)\indices{^{m'}_{n'}}(y', v, \bar{v}) \\
    & + \dots 
\end{split}
\end{equation}
where dots denote terms that do not contribute to the Poisson bracket. 

Consider the first line above. The contraction of $\delta A_{\bar{u}}$ with $\delta(A_y + i \phi)$ is just the instruction to kill the $dy'$ integral and set $y' = L/2$, and likewise kill the $d^2 w$ integral and set $w = v$. The term $\delta^m_{n'} \delta^{m'}_n$ is the instruction to contract various remaining terms together, so the contribution of the first line to the Poisson bracket is 
\begin{equation}
    -\frac{1}{u - v} (g_+^{-1} W g_{-})\indices{^i_\ell}(L/2, -L'/2; v, \bar{v}) (g_+^{-1} W)\indices{^k_n}(L'/2, L/2; v, \bar{v}) (g_+(L/2, v, \bar{v}) S_L(u))\indices{^n_j}. 
\end{equation}
Upon setting $L' = L + \varepsilon$ and letting $\varepsilon \to 0$, the factor  $(g_+^{-1}W)(L'/2, L/2; v, \bar{v})$ tends to $g_+^{-1}(L/2, v, \bar{v})$ which cancels against the $g_+$ in $g_+S_L(u)$. Then sending $L \to \infty$, we get $S(u)$ from those factors by definition, and likewise $(g_+^{-1} W g_-)(L/2, -L/2,; v, \bar{v})$ tends to $S(v)$ by definition. In the end these terms contribute
\begin{equation}
    - \frac{1}{u - v} S\indices{^i_\ell}(v) S\indices{^k_j}(u)
\end{equation}
to the Poisson bracket. The consideration for the second term is very similar: upon performing contractions it reduces to 
\begin{equation}
    \frac{1}{u - v} (S_L(u) g_-^{-1}(-L/2, v, \bar{v}))\indices{^i_m} (Wg_{-})\indices{^m_\ell}(-L/2, -L'/2; v, \bar{v}) (g_+^{-1} W g_-)\indices{^k_j}(L'/2, -L/2; v, \bar{v}).
\end{equation}
Setting $L' = L + \varepsilon$, as $\varepsilon \to 0$ the factor $(Wg_-)(-L/2, -L'/2, v, \bar{v})$ tends to $g_-(-L/2, v, \bar{v})$ which cancels the $g_-^{-1}$ in $S_L(u) g_-^{-1}$. Then when $L \to \infty$ we get $S(u)$ and likewise $(g_+^{-1}Wg_-)(L/2, -L/2; v, \bar{v})$ becomes $S(v)$, so that this term contributes 
\begin{equation}
    \frac{1}{u - v} S\indices{^i_\ell}(u) S\indices{^k_j}(v).
\end{equation}
We conclude that the Poisson bracket is given by 
\begin{equation} \label{monopolerttbracket}
    \acomm{S\indices{^i_j}(u)}{S\indices{^k_\ell}(v)} = \frac{S\indices{^i_\ell}(u) S\indices{^k_j}(v) - S\indices{^i_\ell}(v) S\indices{^k_j}(u)}{u - v}.
\end{equation}

\subsubsection{A caveat}
This computation was somewhat formal, so it is unsurprising that it may be corrected in sufficiently concrete situations. Indeed, the key issue is that in passing from \eqref{dgequation} to \eqref{dgsolution}, we assumed that we wanted solutions tending to $0$ as $|u| \to \infty$ which fixed $\delta g$ in terms of $\delta A_{\bar{u}}$ uniquely. However it was explained at length above that $g_\pm$ are \textit{not} uniquely determined by $A_{\bar{u}}$: any $g'_\pm = g_\pm h_\pm(u)$ with $h_\pm(u)$ holomorphic on the $u$-plane and triangular will also do.

And indeed, as reviewed above, one needs to make use of the ability to redefine $g_\pm$ to be able to write concrete formulas for matrix coefficients of $S(u)$, so this is not a freedom that may be ignored. Thus \eqref{dgsolution} should be read as holding modulo adding a triangular matrix depending polynomially on $u$, and the true Poisson bracket takes the form 
\begin{equation}
    \acomm{S\indices{^i_j}(u)}{S\indices{^k_\ell}(v)} = \frac{S\indices{^i_\ell}(u) S\indices{^k_j}(v) - S\indices{^i_\ell}(v) S\indices{^k_j}(u)}{u - v} + \dots 
\end{equation}
where the dots involve terms of higher degree in $u$. Unfortunately, there does not appear to be any universal or useful form for these terms. 

In fact, one can avoid determining them at all by the following trick. If we apply the naive Poisson bracket, ignoring the corrections, to functions of the matrix coefficients which are invariant under the polynomial triangular transformations, then the derivation above goes through unchanged. For example, for $U(2)$ monopoles the scattering matrices take the form 
\begin{equation}
    S(u) = \begin{bmatrix} Q(u) & U^+(u) \\ U^-(u) & \widetilde{Q}(u) \end{bmatrix}
\end{equation}
and such invariant functions are given by 
\begin{equation}
\begin{split}
    h(u) & = Q(u) \\
    e(u) & = \Big[ \frac{U^-(u)}{Q(u)} \Big]_- \\
    f(u) & = \Big[ \frac{U^+(u)}{Q(u)} \Big]_-
\end{split}
\end{equation}
where the subscript $-$ means take the part of negative degree in the $u \to \infty$ expansion. In the language of quantum groups, the scattering matrix $S(u)$ plays the role of a transfer matrix while these expressions are essentially the Drinfeld currents; it is well-known in that context that a Poisson bracket of the form we have computed for $S(u)$ implies the standard Drinfeld-style relations on the $e(u), f(u), h(u)$ generators. 

The Poisson brackets on the invariants are recorded here, just for convenience of the readers. Recall that on any given moduli space of $U(2)$ monopoles with fixed locations of Dirac singularities we have the relation $Q(u) \widetilde{Q}(u) - U^+(u) U^-(u) = P(u)$ for given fixed polynomial $P(u)$ encoding the positions of Dirac singularities. The Poisson brackets are $\acomm{Q(u)}{Q(v)} = 0$ and
\begin{equation}
\begin{split}
    \acomm{Q(u)}{\frac{U^\pm(v)}{Q(v)}} & = \pm \frac{Q(u)}{u - v} \Bigg[\frac{U^\pm(u)}{Q(u)}  - \frac{U^\pm(v)}{Q(v)} \Bigg] \\
    \acomm{\frac{U^+(u)}{Q(u)}}{\frac{U^-(v)}{Q(v)}} & = \frac{1}{u - v} \Bigg[ \frac{P(v)}{Q(v)^2} - \frac{P(u)}{Q(u)^2}\Bigg]
\end{split}
\end{equation}
where we understand that wherever a $U^\pm(a)/Q(a)$ appears, we take the negative degree part of the expansion of both sides in powers of $1/a$. This accounts for the ``$\dots$'' above; their role is to cancel the positive powers appearing. 

For a sample calculation illustrating how these are obtained: 
\begin{equation}
\begin{split}
    \acomm{\frac{U^+(u)}{Q(u)}}{\frac{U^-(v)}{Q(v)}} & = \frac{1}{Q(u)} \acomm{U^+(u)}{\frac{U^-(v)}{Q(v)}} - U^+(u) \cdot \frac{1}{Q(u)^2} \acomm{Q(u)}{ \frac{U^-(v)}{Q(v)}} \\
    & = \frac{1}{Q(u) Q(v)} \acomm{S\indices{^1_2}(u)}{S\indices{^2_1}(v)} - \frac{1}{Q(u)} \acomm{S\indices{^1_2}(u)}{S\indices{^1_1}(v)} \frac{1}{Q(v)^2} U^-(v) \\
    & - \frac{U^+(u)}{Q(u)} \cdot \frac{1}{Q(u) Q(v)} \acomm{S\indices{^1_1}(u)}{S\indices{^2_1}(v)} \\
    & = \frac{1}{Q(u)Q(v)} \frac{1}{u - v} \Big[ S\indices{^1_1}(u) S\indices{^2_2}(v) - S\indices{^2_2}(u) S\indices{^1_1}(v) - \frac{U^-(v)}{Q(v)}(S\indices{^1_1}(u) S\indices{^1_2}(v) - S\indices{^1_2}(u) S\indices{^1_1}(v)) \\
    & - \frac{U^+(u)}{Q(u)} (S\indices{^1_1}(u) S\indices{^2_1}(v) - S\indices{^2_1}(u) S\indices{^1_1}(v)) \Big] \\
    & =  \frac{1}{u - v} \Big[ \frac{\widetilde{Q}(v)}{Q(v)} - \frac{\widetilde{Q}(u)}{Q(u)} - \frac{U^-(v) U^+(v)}{Q(v)^2} + \frac{U^+(u) U^-(v)}{Q(u)Q(v)} - \frac{U^+(u) U^-(v)}{Q(u) Q(v)} + \frac{U^+(u) U^-(u)}{Q(u)^2} \Big] \\
    & = \frac{1}{u - v} \Bigg[ \frac{P(v)}{Q(v)^2} - \frac{P(u)}{Q(u)^2} \Bigg].
\end{split}
\end{equation}
The first line is the product and chain rule for Poisson brackets applied in the first argument, the second is the same applied in the second argument together with $\acomm{Q(u)}{Q(v)} = 0$ and the evident fact that $Q(u) = S\indices{^1_1}(u)$, $U^+(u) = S\indices{^1_2}(u)$ etc, the third line uses \eqref{monopolerttbracket} and in going to the last line the relation $Q(u) \widetilde{Q}(u) - U^+(u) U^-(u) = P(u)$ was invoked.

\newpage 

\section{Recollections on free fermions and Grassmannians} \label{fermionreview}
In this appendix, we will give a self-contained overview of the aspects of free fermions and infinite Grassmannians which we will need, following \cite{miwa2000solitons}, \cite{Mulase2002ALGEBRAICTO}, \cite{segal-wilson}, \cite{witten}. Our goal is not to be complete or fully rigorous here, since this material is well-covered in existing literature, but just to say enough to make this paper self-contained. 

\subsection{Boundary conditions}
The place that Grassmannians enter the story for free fermions is via considering the space of possible boundary conditions for the free fermion field theory on the finite disk $D$ with coordinate $|z| \leq 1$: 
\begin{equation}
    S = \int_D \widetilde{\psi} \overline{\partial} \psi.
\end{equation}
The free fermion path integral on $D$ at least formally computes $\det \overline{\partial}$, but it is easy to see that without any boundary conditions on $\partial D$, $\overline{\partial}$ has infinitely many zero modes spanned by $z^n$ for $n \geq 0$. To get something that has a chance of being well-defined as a regularized infinite product, one must choose boundary conditions that eliminate all but perhaps finitely many zero modes. 

The boundary values of $\psi, \widetilde{\psi}$ are naturally required to lie in the vector space $\mathscr{H}$ of (roughly) smooth functions on $S^1 = \partial D$, so to remove all the zero modes one can require that $\psi \eval_{\partial D}, \widetilde{\psi} \eval_{\partial D} \in \mathscr{H}_-$ where $\mathscr{H}_-$ is the subspace of $\mathscr{H}$ generated by boundary values of $z^{-1}, z^{-2}, \dots$. 

To make this completely precise, one needs to view $\mathscr{H}$ as a completion of the space of Laurent polynomials in $z$, so that ``generation'' means generation under span and closure. Some references such as \cite{segal-wilson} follow an analytic approach and use the Hilbert space completion $\mathscr{H} = L^2(S^1)$. For the goals of this paper, it will be preferable to take a more algebraic approach and use formal Laurent series in $z^{-1}$, $\mathscr{H} = \mathbb{C}((z^{-1}))$ see e.g. \cite{Mulase2002ALGEBRAICTO}; this is also consistent with the approach of \cite{witten}. Physically, one can think of the Laurent series version as replacing the finite disk $D$ by the complex plane $\mathbb{C}$ and expanding functions on the $S^1$ ``at infinity''. In this setup, $\mathscr{H}_- = z^{-1} \mathbb{C}[[z^{-1}]] \subset \mathscr{H}$. 

We can contemplate having more general boundary conditions $\psi \eval_{\partial D} \in W \subset \mathscr{H}$ labeled by some subspace $W$ of $\mathscr{H}$. To get reasonable boundary conditions one wishes to have at most finitely many zero modes, so one requires that $\dim (W \cap \mathscr{H}_+) < \infty$, where $\mathscr{H}_+ = \mathbb{C}[z] \subset \mathscr{H}$. Likewise, for any subspace $W$ one may define its dual $\widetilde{W}$ with respect to the natural residue pairing ($\oint_{S^1}$ means residue at $z = \infty$)
\begin{equation}
    \widetilde{W} = \Bigg\{ f(z) \Bigg| \oint_{S^1} dz f(z) g(z) = 0 \, \, \, \forall g(z) \in W \Bigg\}
\end{equation}
and absence of boundary terms in integration by parts requires $\widetilde{\psi} \eval_{\partial D} \in \widetilde{W}$. Finitely many zero modes again requires $\dim (\widetilde{W} \cap \mathscr{H}_+) < \infty$. 

We can then make the following 

\begin{definition}
Sato's Grassmannian $\textnormal{Gr}(\mathscr{H})$ is defined to be the collection of all closed subspaces $W \subset \mathscr{H}$ such that the natural projection $W \to \mathscr{H}_-$ has finite-dimensional kernel and cokernel.
\end{definition}
It will be convenient to also consider the ``dual'' Grassmannian $\text{Gr}^*(\mathscr{H})$, consisting of closed subspaces $W \subset \mathscr{H}$ such that $W \to \mathscr{H}_+$ has finite-dimensional kernel and cokernel. 

Clearly we may think of $\text{Gr}(\mathscr{H})$ as the moduli space of boundary conditions of the free fermion CFT. For any $W \in \text{Gr}(\mathscr{H})$, we may define $\text{index}(W) := \dim (W \cap \mathscr{H}_+) - \dim( \widetilde{W} \cap \mathscr{H}_+)$. This is a locally constant function on $\text{Gr}(\mathscr{H})$, and we have a decomposition into connected components 
\begin{equation}
    \text{Gr}(\mathscr{H}) = \bigsqcup_{n \in \mathbb{Z}} \text{Gr}_n(\mathscr{H})
\end{equation}
with $n = \text{index}(W)$. The index zero connected component $\text{Gr}_0(\mathscr{H})$ is of principal interest, and may be thought of as the ``Grassmannian of middle-dimensional subspaces in $\mathbb{C}^\infty$''. 

We will be interested in doing algebraic geometry on $\text{Gr}(\mathscr{H})$. In fact, all we need is to be able to define the dual of the determinant line bundle $\text{DET}^*$ and its global sections. To this end, we approximate $\text{Gr}(\mathscr{H})$ by finite-dimensional Grassmannians, following \cite{Mulase2002ALGEBRAICTO}, \cite{witten} as follows. Define the $2n$-dimensional vector space $V_{2n} := \mathbb{C} z^{-n} \oplus \mathbb{C}z^{- n + 1} \oplus \dots \oplus \mathbb{C}z^{n - 1} \subset \mathscr{H}$. Denote 
\begin{equation}
    \text{Gr}(V_{2n}) := \bigsqcup_{k = 0}^{2n} \text{Gr}_k(V_{2n}). 
\end{equation}
where $\text{Gr}_k(V)$ is the usual Grassmannian of $k$-planes in $V$. $\text{Gr}(V_{2n})$ includes into $\text{Gr}(\mathscr{H})$ by defining, given a $U \in \text{Gr}(V_{2n})$, a subspace $W = z^{-n - 1} \mathbb{C}[[z^{-1} ]] \oplus U \subset \mathscr{H}$. Moreover these inclusions are clearly compatible with the natural inclusions $\text{Gr}_k(V_{2m}) \subset \text{Gr}_{k + n - m}(V_{2n})$ for $m \leq n$ so we have an inclusion of $\cup_{n \geq 0} \text{Gr}(V_{2n}) \subset \text{Gr}(\mathscr{H})$. 

In the analytic approach of \cite{segal-wilson}, $\text{Gr}(\mathscr{H})$ has the structure of an infinite-dimensional complex manifold and $\cup_{n \geq 0} \text{Gr}(V_{2n})$ is a dense subspace, so sections of vector bundles are determined by their restrictions to every $\text{Gr}(V_{2n})$. In the algebraic approach, it will suffice\footnote{For the functor of points approach to the Sato Grassmannian, see \cite{infinitegrassmannianfunctor}.} for our purposes to define $\text{DET}^*$ by the property that it restricts coherently to the usual determinant line bundle on each $\text{Gr}(V_{2n})$. This is the correct interpretation of the statement that the fiber of $\text{DET}$ over $W \subset \mathscr{H}$ should be the top exterior power of $W$; see the discussion around eq. 33 of \cite{witten}.  

\subsection{Determinant line bundle and Fock space}
To see that this leads to concrete and sensible results, it will be convenient to recall some basic facts about determinant line bundles on the finite-dimensional Grassmannians $\text{Gr}_k(V_{2n})$. The universal $k$-plane $\mathscr{E}$ sweeps out a vector bundle on $\text{Gr}_k(V)$ and $\text{DET} := \Lambda^k(\mathscr{E})$ is a line bundle. 

\begin{lemma} There is a canonical isomorphism
\begin{equation}
    H^0(\textnormal{Gr}_k(V_{2n}), \textnormal{DET}^*) \simeq \Lambda^k(V^*_{2n}).
\end{equation}
\end{lemma}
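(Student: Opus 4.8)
The plan is to recognize $\mathrm{DET}^*$ as the Plücker line bundle and then invoke the classical description of sections of the ample generator on a Grassmannian. First I would set up the Plücker embedding
\begin{equation}
\iota: \textnormal{Gr}_k(V_{2n}) \hookrightarrow \mathbb{P}(\Lambda^k V_{2n}), \qquad U \mapsto [\Lambda^k U],
\end{equation}
sending a $k$-plane to the line it spans inside $\Lambda^k V_{2n}$. By the very definition of the tautological subbundle $\mathscr{E}$, whose fiber over $U$ is $U$ itself, the fiber of $\mathrm{DET} = \Lambda^k \mathscr{E}$ over $U$ is the line $\Lambda^k U$, which is exactly the fiber of $\iota^* \mathscr{O}_{\mathbb{P}(\Lambda^k V_{2n})}(-1)$. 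Hence $\mathrm{DET} \simeq \iota^* \mathscr{O}(-1)$ canonically, and dually $\mathrm{DET}^* \simeq \iota^* \mathscr{O}(1)$; all identifications here are $GL(V_{2n})$-equivariant, which is what will ultimately make the final isomorphism canonical rather than basis-dependent.

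Next I would reduce the computation to projective space. On $\mathbb{P}(\Lambda^k V_{2n})$ one has the standard identification $H^0(\mathbb{P}(\Lambda^k V_{2n}), \mathscr{O}(1)) \simeq (\Lambda^k V_{2n})^* \simeq \Lambda^k(V_{2n}^*)$, the last step using the canonical perfect pairing between $\Lambda^k V_{2n}$ and $\Lambda^k V_{2n}^*$. Pullback along $\iota$ then furnishes a natural map
\begin{equation}
\iota^*: \Lambda^k(V_{2n}^*) \simeq H^0(\mathbb{P}(\Lambda^k V_{2n}), \mathscr{O}(1)) \longrightarrow H^0(\textnormal{Gr}_k(V_{2n}), \mathrm{DET}^*),
\end{equation}
and the claim is that this is an isomorphism. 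Injectivity is immediate: a linear form vanishing on the image $\iota(\textnormal{Gr}_k)$ vanishes on its linear span, but the decomposable vectors $\Lambda^k U$ span all of $\Lambda^k V_{2n}$, so the embedding is linearly nondegenerate and the form must be zero.

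The hard part will be surjectivity, i.e. showing that $\textnormal{Gr}_k(V_{2n})$ carries no sections of $\mathrm{DET}^*$ beyond those coming from linear forms on the ambient projective space. The cleanest route is Borel--Weil: $\textnormal{Gr}_k(V_{2n})$ is the homogeneous space $GL(V_{2n})/P$ for a maximal parabolic $P$, and $\mathrm{DET}^*$ is the $GL(V_{2n})$-equivariant line bundle associated to the character by which $P$ acts on $\Lambda^k$ of the tautological quotient; its highest weight is the fundamental weight $\varpi_k$, so the Borel--Weil theorem identifies $H^0(\textnormal{Gr}_k(V_{2n}), \mathrm{DET}^*)$ with the irreducible representation of highest weight $\varpi_k$, which is precisely $\Lambda^k(V_{2n}^*)$. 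Alternatively, and avoiding representation theory entirely, one may argue via projective normality of the Plücker embedding (so that $\textnormal{Gr}_k$ is linearly normal and the restriction map on $\mathscr{O}(1)$ is surjective), or simply compare dimensions by noting that both sides have dimension $\binom{2n}{k}$ --- the right-hand side by inspection, the left-hand side from the Schubert cell decomposition, which exhibits a basis of $H^0(\mathrm{DET}^*)$ indexed by the Plücker coordinates $p_S$ for $k$-subsets $S \subset \{-n, \dots, n-1\}$. Since $\iota^*$ is then an equivariant injection between spaces of equal finite dimension, it is an isomorphism, completing the proof.
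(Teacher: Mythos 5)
Your proposal is correct and takes essentially the same approach as the paper, whose entire proof is the remark that the lemma ``is well-known and may be obtained, e.g.\ as a special case of the Borel--Weil--Bott theorem'' --- exactly the route you flesh out via the Pl\"{u}cker embedding, injectivity from decomposable vectors spanning $\Lambda^k V_{2n}$, and Borel--Weil for surjectivity. Two parenthetical blemishes, neither of which affects your main argument: the $P$-character defining $\textnormal{DET}^*$ is the inverse determinant of $P$ acting on the tautological \emph{sub}space (not ``$\Lambda^k$ of the tautological quotient,'' which is not even a line unless $k = n$), and the Schubert-cell dimension count you offer as a fallback is circular as stated, since knowing that the Pl\"{u}cker coordinates form a \emph{basis} of $H^0(\textnormal{DET}^*)$ is equivalent to the surjectivity you are trying to prove.
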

This is well-known and may be obtained, e.g. as a special case of the Borel-Weil-Bott theorem. In particular if we take the total Grassmannian $\text{Gr}(V_{2n}) = \bigsqcup_k \text{Gr}_k(V_{2n})$ we get the whole exterior algebra 
\begin{equation}
    H^0(\text{Gr}(V_{2n}), \text{DET}^*) \simeq \Lambda^\bullet(V^*_{2n}) 
\end{equation}
which can be thought of as the (finite-dimensional) Fock space for free fermions valued in $V$. For $m \leq n$ we have natural pullback maps $H^0(\text{Gr}(V_{2n}), \text{DET}^*) \to H^0(\text{Gr}(V_{2m}), \text{DET}^*)$ associated to the inclusions discussed above; it is a nice exercise to determine these maps in the exterior algebra language. 

By definition, the global sections of $\text{DET}^* \to \text{Gr}(\mathscr{H})$ is the inverse limit of the system consisting of the global sections over each $\text{Gr}(V_{2n})$, with morphisms given by the pullback maps associated to the inclusions $\text{Gr}(V_{2m}) \subset \text{Gr}(V_{2n})$ for $m < n$, 
\begin{equation}
    H^0(\text{Gr}(\mathscr{H}), \text{DET}^*) := \varprojlim H^0(\text{Gr}(V_{2n}), \text{DET}^*) \simeq \mathscr{F}.
\end{equation}
Using the exterior algebra description, this inverse limit may be described very explicitly using the standard fermionic Fock space $\mathscr{F}$, which may be thought of as consisting of semi-infinite differential forms on $\mathscr{H}$. It has the following description. Introduce operators $\psi_r, \widetilde{\psi}_r$ for $r \in \mathbb{Z} + 1/2$, with the standard anticommutation relations $\acomm{\psi_r}{\widetilde{\psi}_s} = \delta_{r + s, 0}$, $\acomm{\psi_r}{\psi_s} = \acomm{\widetilde{\psi}_r}{\widetilde{\psi}_s} = 0$. There is a unique vacuum vector $\ket{0} \in \mathscr{F}$ satisfying 
\begin{equation}
    \psi_r \ket{0} = \widetilde{\psi}_r \ket{0} = 0
\end{equation}
for $r > 0$, and $\mathscr{F}$ is generated\footnote{``Generate'' means under span and closure in the natural topology on the inverse limit. In plain English, we allow vectors in $\mathscr{F}$ to include infinite linear combinations of the standard basis vectors; we understand such an infinite linear combination as by definition its sequence of truncations to the basis vectors only including $\psi_{-r}, \widetilde{\psi}_{-r}$ for $r < N$ for some $N$. Each of the truncations is a finite sum.} by vectors of the form $\prod_i \psi_{-r_i} \prod_j \widetilde{\psi}_{-s_j} \ket{0}$, where $r_i, s_j > 0$. The connection with semi-infinite differential forms arises by viewing $\widetilde{\psi}_{-r}$ as wedging with $z^{-r - 1/2}$, $\psi_{-r}$ as contraction with $z^{r - 1/2}$, and $\ket{0}$ as $z^0 \wedge z^1 \wedge z^2 \wedge \dots$. 

We have a natural linear functional $\bra{0}$ on $\mathscr{F}$ which satisfies $\braket{0}{0} = 1$ and annihilates all the other standard basis vectors. $\bra{0}$ may be thought of as wedging with $\dots \wedge z^{-2} \wedge z^{-1}$ and taking the coefficient of $\dots \wedge z^{-2} \wedge z^{-1} \wedge z^0 \wedge z^1 \wedge \dots$. 
\subsubsection{Primitive sections}
Note that, because $\text{Gr}_k(V_{2n})$ is proper, any global function on it is necessarily constant. Therefore a global section of $\text{DET}^*$ may be characterized up to a scalar multiple by specifying its divisor of zeroes (the ratio of any two sections with the same zero set would be a function with no poles). 

In this way we may define canonical sections associated to subspaces $W \subset V_{2n}$ of dimension $2n - k$, which we call primitive sections following \cite{witten}. So let us fix some subspace $W$ of dimension $2n - k$. By dimension counting, for generic $U \in \text{Gr}_k(V_{2n})$, $\dim(W \cap U) = 0$. Then we may characterize a section $\sigma_W(U)$ up to overall scalars by insisting it has a simple zero on the divisor $\dim(U \cap W) > 0$, i.e. the locus where $U$ is not transverse to $W$. A formula for this section may be written as follows. Fix once and for all a volume form $\alpha$ on $V_{2n}$, e.g. by $\alpha = z^{-n} \wedge \dots \wedge z^{n - 1}$ using the standard basis of monomials in $V_{2n}$. Then, as $W$ is fixed, choose a volume form $\eta \in \Lambda^{2n - k}(W)$ on it as well. Then if $U = \text{Span}(u_1, \dots, u_k)$, we get a section 
\begin{equation}
    \sigma_W(U) = \frac{u_1 \wedge \dots \wedge u_k \wedge \eta}{\alpha}. 
\end{equation}
Of course the study of such sections is classical and related to the geometry of the Pl\"{u}cker embedding. Note that if we allow $W$ to vary in $\text{Gr}_{2n - k}(V)$, $\sigma_W$ varies as a section of $\text{DET}^* \boxtimes \text{DET}^*$ over $\text{Gr}_k(V) \times \text{Gr}_{2n - k}(V)$. 

The reader may verify that, for $n \to \infty$, the statement is that subspaces $W \in \text{Gr}^*(\mathscr{H})$ determine primitive sections supported on the component of $\text{Gr}(\mathscr{H})$ labeled by $-\text{index}(W)$, and that as $W$ varies these sections vary as sections of $\text{DET}^* \to \text{Gr}^*(\mathscr{H})$.

In the semi-infinite wedge description, primitive sections will be denoted $\ket{W}$ and admit the following construction, see \cite{Mulase2002ALGEBRAICTO}, \cite{segal-wilson} (for simplicity assume $\text{index}(W) = 0$). Since $W \in \text{Gr}^*_0(\mathscr{H})$ we may choose a basis $w_j(z)$, $j \geq 0$ for $W$ such that $w_j(z) = z^j(1 + O(z^{-1}))$ for all but finitely many $j$. Then 
\begin{equation}
    \ket{W} = w_0(z) \wedge w_1(z) \wedge w_2(z) \wedge \dots
\end{equation}
Since we assume that all but finitely many $w_j(z)$ are monic, the semi-infinite wedge can be expanded in terms of the standard basis states in $\mathscr{F}$ by the usual rules of linear algebra; each coefficient is well-defined but the total expression is in general an infinite linear combination of the standard basis states $z^{s_0} \wedge z^{s_1} \wedge \dots $ (cf. the footnote about the topology on the inverse limit above). 

\subsubsection{Clifford algebra characterization}
The following lemma is an immediate corollary of the semi-infinite wedge description of the primitive section $\ket{W}$. To set up the statement, we introduce some notation. Recall that $\mathscr{F}$ is a module over the Clifford algebra generated by the $\psi_r, \widetilde{\psi}_r$, $r \in \mathbb{Z} + 1/2$. Write the standard mode expansions
\begin{equation}
\begin{split}
    \psi(z) & = \sum_{r \in \mathbb{Z} + \frac{1}{2}} \frac{\psi_r}{z^{r + 1/2}} \\ 
    \widetilde{\psi}(z) & = \sum_{r \in \mathbb{Z} + \frac{1}{2}} \frac{\widetilde{\psi}_r}{z^{r + 1/2}}
\end{split}
\end{equation}
and define, for $f, g \in \mathscr{H}$
\begin{equation}
\begin{split}
    Q_f & = \oint_\gamma f(z) \psi(z) \\ 
    \widetilde{Q}_g & = \oint_\gamma g(z) \widetilde{\psi}(z)
\end{split}    
\end{equation}
where $\oint_\gamma$ denotes extraction of the residue at $z = \infty$. Finally, recall that $\widetilde{W}$ denotes the dual of $W$ with respect to the residue pairing. 

We have the ``Clifford algebra characterization'' of primitive sections:
\begin{lemma} \label{cliffward}
The state $\ket{W}$ corresponding to a primitive section $\sigma_W(U) \in H^0(\textnormal{Gr}(\mathscr{H}), \textnormal{DET}^*)$ is characterized up to scalars by $Q_f \ket{W} = 0 = \widetilde{Q}_g \ket{W}$, for all $f \in \widetilde{W}$ and $g \in W$. 
\end{lemma}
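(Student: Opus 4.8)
The plan is to translate the contour integrals $Q_f,\widetilde{Q}_g$ into wedge and contraction operators on the semi-infinite wedge $\mathscr{F}$, and then to argue uniqueness by recognizing the annihilating operators as a maximal isotropic subalgebra of the Clifford algebra. First I would expand in modes: writing $f=\sum_k f_k z^k$, the residue picks out $Q_f=\sum_k f_k\psi_{k+1/2}$ and similarly $\widetilde{Q}_g=\sum_l g_l\widetilde{\psi}_{l+1/2}$, so that from the anticommutation relations one reads off the single nonvanishing bracket
\begin{equation}
  \acomm{Q_f}{\widetilde{Q}_g}=\oint_\gamma f(z)g(z)=\langle f,g\rangle ,
\end{equation}
the residue pairing, while $\acomm{Q_f}{Q_{f'}}=\acomm{\widetilde{Q}_g}{\widetilde{Q}_{g'}}=0$. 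Using the dictionary recorded above (that $\widetilde{\psi}_{-r}$ acts by wedging with $z^{-r-1/2}$ and $\psi_{-r}$ by contracting $z^{r-1/2}$), I would check that $\widetilde{Q}_g$ is simply exterior multiplication $g\wedge(\cdot)$, and that $Q_f$ is the interior product by the functional $\langle f,\cdot\rangle$; the latter is immediate once one notes that removing the basis vector $z^{-k-1}$ extracts exactly the coefficient against which $\langle z^k,\cdot\rangle$ pairs.

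Granting this, the two annihilation statements are essentially formal. Since $\ket{W}=w_0\wedge w_1\wedge\cdots$ is a top wedge of a basis of $W$, for $g\in W$ we get $\widetilde{Q}_g\ket{W}=g\wedge\ket{W}=0$; and for $f\in\widetilde{W}=W^{\perp}$ the functional $\langle f,\cdot\rangle$ kills every $w_j$, so $Q_f\ket{W}=0$ as well. This exhibits $\ket{W}$ as a solution and gives one direction.

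The substantive content is the converse: that these conditions pin down $\ket{W}$ up to scale. Here I would observe that the span $L$ of $\{Q_f:f\in\widetilde{W}\}$ together with $\{\widetilde{Q}_g:g\in W\}$ is a maximal isotropic (Lagrangian) subspace for the form $\acomm{\cdot}{\cdot}$: isotropy is exactly the vanishing of the three families of brackets above, using $\widetilde{W}=W^{\perp}$, and maximality follows because the bracket-orthogonal complement of $L$ is again $L$, using the reflexivity $\widetilde{\widetilde{W}}=W$. A vector annihilated by such a Lagrangian in the irreducible Clifford module $\mathscr{F}$ is unique up to a scalar — this is the statement that a given polarization has a unique vacuum — and $\ket{W}$ is one such vector, so any solution is proportional to it.

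The one point requiring care, and the main obstacle, is making this uniqueness rigorous in infinite dimensions: one must know that $L$ lies in the restricted class commensurable with the standard polarization, so that the vacuum exists and is unique. This is guaranteed precisely by the hypothesis $W\in\mathrm{Gr}^{*}(\mathscr{H})$, which forces $W$ and $\widetilde{W}$ to differ from $\mathscr{H}_{+}$ by finite-dimensional corrections. Concretely I would descend to the truncations $\mathrm{Gr}(V_{2n})$, where the annihilation conditions cut out the non-transversality divisor of $\sigma_W$, and the claim reduces to the classical fact already noted above — that a global section of $\mathrm{DET}^{*}$ on a finite Grassmannian is determined up to scalar by its divisor of zeroes. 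Passing to the inverse limit then yields the statement for $\mathscr{F}$.
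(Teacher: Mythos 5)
Your proposal is correct, and it supplies in full the argument the paper leaves implicit: the paper states Lemma \ref{cliffward} with no proof at all, calling it ``an immediate corollary of the semi-infinite wedge description,'' and what you have written is precisely the standard fleshing-out of that remark. Your mode computations ($Q_f=\sum_k f_k\psi_{k+1/2}$, $\widetilde{Q}_g=\sum_l g_l\widetilde{\psi}_{l+1/2}$, with $\acomm{Q_f}{\widetilde{Q}_g}$ the residue pairing and all other brackets zero) match the paper's conventions, the identification of $\widetilde{Q}_g$ with exterior multiplication by $g$ and $Q_f$ with contraction against $\langle f,\cdot\rangle$ is exactly the dictionary the paper records ($\widetilde{\psi}_{-r}$ wedges $z^{-r-1/2}$, $\psi_{-r}$ contracts $z^{r-1/2}$), and the two annihilation statements then follow formally, as you say. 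The substantive step, uniqueness, is also handled correctly: isotropy of $L=\widetilde{W}\oplus W$ is the definition of $\widetilde{W}$, maximality is reflexivity of closed subspaces under the residue pairing, and you rightly flag that the Lagrangian-vacuum uniqueness principle needs the commensurability guaranteed by $W\in\mathrm{Gr}^*(\mathscr{H})$ rather than holding for arbitrary maximal isotropics (indeed for, say, $L=\mathrm{span}\{\psi_r\}$ no vacuum exists in $\mathscr{F}$).

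Two small points of precision in your final paragraph, neither fatal. First, the phrase ``the annihilation conditions cut out the non-transversality divisor'' conflates conditions on a state with a locus in the Grassmannian; what is true, and what your argument needs, is that the $\widetilde{Q}_g$ conditions force the section associated to any solution $\ket{v}$ to vanish on $\{U:\dim(U\cap W)>0\}$, after which properness gives proportionality to $\sigma_W$ \emph{within the relevant graded component}. Second, because a general $\ket{v}$ has components in several charge sectors, the divisor argument alone does not kill the components living on other connected components of $\mathrm{Gr}(\mathscr{H})$; there you must invoke the contraction conditions $Q_f\ket{v}=0$ as well (equivalently, in the finite-dimensional model: wedging conditions force divisibility by $\det W$, contraction conditions force the cofactor into degree zero). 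Your middle paragraph's Clifford-algebra argument already contains both halves, so this is a matter of making the truncation sketch consistent with it rather than a gap in the overall proof.
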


The above lemma gives a criterion for determining a primitive section $\ket{W}$ assuming the space $W$ is known. A related question is to determine primitive sections among all sections; this characterization is classical and provided by the Pl\"{u}cker relations. 

\begin{lemma}
A ray $\ket{\omega}$ in the fermionic Fock space is of the form $\ket{W}$ for some (necessarily unique) $W \in \textnormal{Gr}^*(\mathscr{H})$ if and only if it satisfies the quadratic relations
\begin{equation}
    \sum_{r \in \mathbb{Z} + 1/2} \psi_r \ket{\omega} \otimes \widetilde{\psi}_{-r} \ket{\omega} = 0. 
\end{equation}
\end{lemma}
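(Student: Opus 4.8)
The plan is to recognize this as the infinite-dimensional avatar of the classical statement that decomposable vectors in an exterior algebra are exactly those satisfying the Plücker relations, and to reduce both implications to the Clifford-algebra characterization of Lemma \ref{cliffward}. The starting point is to rewrite the quadratic operator in contour form,
\begin{equation}
    \Omega := \sum_{r \in \mathbb{Z} + 1/2} \psi_r \otimes \widetilde{\psi}_{-r} = \oint_\gamma \frac{dz}{2\pi i} \, \psi(z) \otimes \widetilde{\psi}(z),
\end{equation}
and then to expand it in a basis adapted to a given subspace. Choosing dual bases $\{f_a\}$, $\{f^a\}$ of $\mathscr{H}$ with respect to the residue pairing ($\oint_\gamma f^a f_b = \delta^a_b$) and using the formal expansions $\psi(z) = \sum_a f_a(z) Q_{f^a}$ and $\widetilde{\psi}(z) = \sum_a f^a(z) \widetilde{Q}_{f_a}$, one obtains the basis-independent identity $\Omega = \sum_a Q_{f^a} \otimes \widetilde{Q}_{f_a}$.

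For the forward implication I would take $\ket{\omega} = \ket{W}$ and choose the basis so that $\{f_a\}$ restricts to a basis $\{w_j\}$ of $W$ together with a complementary family $\{v_k\}$; the complementary dual functions $\{v^k\}$ then lie in $\widetilde{W}$. Splitting $\Omega = \sum_j Q_{w^j} \otimes \widetilde{Q}_{w_j} + \sum_k Q_{v^k} \otimes \widetilde{Q}_{v_k}$ and applying Lemma \ref{cliffward}, the first sum annihilates the second tensor factor (since $\widetilde{Q}_{w_j}\ket{W} = 0$ for $w_j \in W$) while the second sum annihilates the first factor (since $Q_{v^k}\ket{W} = 0$ for $v^k \in \widetilde{W}$), giving $\Omega(\ket{W} \otimes \ket{W}) = 0$.

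For the converse I would build the subspace directly from the state. Assume $\ket{\omega} \neq 0$ and, after projecting to a single charge component, set $W := \{ g \in \mathscr{H} : \widetilde{Q}_g \ket{\omega} = 0 \}$. By construction $\widetilde{Q}_g\ket{\omega} = 0$ for $g \in W$, so half of the hypotheses of Lemma \ref{cliffward} hold automatically. The role of the Plücker relation is to supply the other half: contracting $\Omega(\ket{\omega}\otimes\ket{\omega}) = 0$ against an arbitrary covector in one tensor slot produces, for each such covector, a linear relation of the shape $\sum_r c_r \widetilde{\psi}_{-r}\ket{\omega} = 0$ whose coefficients are the components of $\psi_r\ket{\omega}$. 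Reading these relations as the assertion that the single-particle functions extracted from $\psi(z)\ket{\omega}$ lie in the annihilator of $\widetilde{Q}$, and running the symmetric argument in the other slot, yields the duality $\{ f \in \mathscr{H} : Q_f\ket{\omega} = 0 \} = \widetilde{W}$. Hence $Q_f\ket{\omega} = 0$ for all $f \in \widetilde{W}$, so $\ket{\omega}$ meets both conditions of Lemma \ref{cliffward} and is therefore proportional to $\ket{W}$, with uniqueness of $W$ immediate from its intrinsic definition.

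The main obstacle, and the step requiring genuine care beyond the finite-dimensional model, is verifying that the annihilator space $W$ is actually a point of $\textnormal{Gr}^*(\mathscr{H})$, i.e. that the projection $W \to \mathscr{H}_+$ has finite-dimensional kernel and cokernel, and dually that the two annihilator families have the correct mutually dual codimension rather than being spuriously small. This is precisely where the Plücker relation must be used quantitatively, since it is what guarantees that $\ket{\omega}$ is a genuine semi-infinite wedge rather than merely a vector killed by a thin family of operators. Controlling this requires working within the inverse-limit and completion framework of this appendix, so that the infinite linear combinations defining $\ket{\omega}$ and the basis $w_j = z^j(1 + O(z^{-1}))$ of $W$ are handled consistently and the charge and level truncations behave well. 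The remaining manipulations, namely the contour expansion of $\Omega$, the basis splitting, and the extraction of the linear relations, are routine once this finiteness is established.
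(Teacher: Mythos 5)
The paper itself offers no proof of this lemma --- it is quoted as a classical fact (Chapter 9 of the Miwa--Jimbo--Date reference, Segal--Wilson), so your proposal has to stand on its own rather than be compared line-by-line with the text. Your forward direction does stand: expanding $\Omega = \sum_a Q_{f^a}\otimes\widetilde{Q}_{f_a}$ in a basis adapted to $W$ and applying Lemma \ref{cliffward} is correct and complete, since each summand of the two partial sums annihilates $\ket{W}\otimes\ket{W}$ individually (so the rearrangement of the infinite sum causes no trouble), and this is the standard argument.

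The converse is where the theorem actually lives, and your argument does not close it. Two concrete problems. First, the step ``running the symmetric argument in the other slot yields the duality $\{f : Q_f\ket{\omega} = 0\} = \widetilde{W}$'' is not a consequence of the contractions you performed. Contracting the Pl\"ucker tensor against $\bra{\alpha}\otimes\bra{\beta}$ gives exactly the orthogonality $\oint g_\alpha f_\beta = 0$ of the two support spaces $S = \operatorname{span}\{\bra{\alpha}\psi(z)\ket{\omega}\}$ and $S' = \operatorname{span}\{\bra{\beta}\widetilde{\psi}(z)\ket{\omega}\}$; indeed the quadratic relation is \emph{equivalent} to $S\perp S'$ and contains no further information. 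Since $W = (S')^{\perp}$ by its very definition, converting $S \perp S'$ into the inclusion $\widetilde{W}\subseteq\{f: Q_f\ket{\omega}=0\}$ requires the double-annihilator identity $((S')^{\perp})^{\perp} = \overline{S'}$ for the residue pairing on $\mathbb{C}((z^{-1}))$ --- true in this formal setting, but a statement you neither formulate nor prove (the reverse inclusion, by contrast, follows for free from the anticommutator $\acomm{Q_f}{\widetilde{Q}_g} = \oint fg$). Second, and fatally: Lemma \ref{cliffward} is stated, and is only true, for $W \in \text{Gr}^*(\mathscr{H})$; its uniqueness clause is what you invoke, and it presupposes that the projection $W\to\mathscr{H}_+$ has finite-dimensional kernel and cokernel. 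You correctly identify this finiteness as ``the main obstacle'' and then defer it, declaring the rest routine. But this is not a technical loose end to be handled by completions --- it \emph{is} the theorem. That a vector satisfying the quadratic relations is annihilated by a Grassmannian-sized family of operators, rather than a thin one, is precisely what separates a genuine semi-infinite wedge from an arbitrary Fock vector, and in the classical proofs it is supplied by real work: either an induction on the coefficient expansion of $\ket{\omega}$ (use a Clifford group element to move to the big cell, then reconstruct $W$ and $\ket{W}$ from the one-particle matrix elements --- essentially the Wick-theorem reasoning the paper uses in Lemma \ref{lemma:wickthm}), or by finite-dimensional approximation through the exhaustion $\text{Gr}(V_{2n})$ that this appendix sets up, where the classical Pl\"ucker theorem applies. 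As written, you have proved the forward implication and assembled the skeleton of the converse, but the converse itself remains unproven.
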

Note that the operator entering the Pl\"{u}cker relations may be rewritten as 
\begin{equation}
    \sum_{r \in \mathbb{Z} + 1/2} \psi_r \otimes \widetilde{\psi}_{-r} = \oint_\gamma \frac{dz}{2\pi i} \psi(z) \otimes \widetilde{\psi}(z). 
\end{equation}
The Pl\"{u}cker relations give an explicit characterization of the well-known (at least in the finite-dimensional context) Pl\"{u}cker embedding $\text{Gr}^*(\mathscr{H}) \hookrightarrow \mathbb{P}(\mathscr{F})$.

\subsection{States from determinants}
The purpose of introducing this formalism is to provide a framework to understand certain fermion path integrals using geometry of $\text{Gr}(\mathscr{H})$; this eventually leads to a purely algebraic description of certain functional determinants. Let us describe how this works in a standard concrete historical example. For motivational purposes, and to make an explicit connection with quantum field theory, we will briefly return to the analytic setting. 

Free fermions can be defined on a compact Riemann surface $C$ on which we are given a line bundle $\mathscr{L} \to C$ with unitary connection $A$. The field $\psi(z)$ is valued in smooth sections of $K_C^{1/2} \otimes \mathscr{L}$ where $K_C^{1/2}$ denotes a choice of square root of the canonical bundle $K_C$, and $\widetilde{\psi}(z)$ is valued in smooth sections of $K_C^{1/2} \otimes \mathscr{L}^{-1}$. Let $\overline{\partial}_A: \Omega^{0, 0}(C, K_C^{1/2} \otimes \mathscr{L}) \to \Omega^{0, 1}(C, K_C^{1/2} \otimes \mathscr{L})$ denote the Cauchy-Riemann/Dirac operator valued in this bundle. The partition function of the free fermion theory on $C$ is formally expressed as the functional integral over the space of all fermionic smooth sections:
\begin{equation}
    \int D\widetilde{\psi} D \psi \exp( - \int_C \widetilde{\psi} \overline{\partial}_A \psi ) = \det \overline{\partial}_A.
\end{equation}
To understand the determinant in the context of the formalism discussed so far, we excise a small neighborhood of a point $p \in C$. We choose a local coordinate $z^{-1}$ near $p$, so that $z \to \infty$ in local coordinates corresponds to $p$. We also assume $\mathscr{L}$ is trivialized in this neighborhood of $p$. Deleting the neighborhood of $p$ produces a Riemann surface $\Sigma$ with a single $S^1$ boundary component, such that $K_C^{1/2} \otimes \mathscr{L}$ is naturally trivialized on the boundary, so we may choose a boundary condition labeled by a subspace $U \in \text{Gr}(\mathscr{H})$. We denote $\overline{\partial}_A^U$ the Dirac operator acting on smooth sections of $K_C^{1/2} \otimes \mathscr{L}$ with boundary values in $U$ to remind us of its dependence on $U$. The path integral over fields on $\Sigma$ with boundary condition $U$ computes, formally, $\det \overline{\partial}_A^U$. 

Giving a direct definition of $\det \overline{\partial}_A^U$ typically involves some complicated regularization procedure. Independent of the details, it is clear that any reasonable definition of $\det \overline{\partial}_A^U$ must satisfy two properties: 
\begin{enumerate}
    \item As $U$ varies in $\text{Gr}(\mathscr{H})$, $\det \overline{\partial}_A^U$ varies as a holomorphic section of $\text{DET}^*$, i.e. if we work over all $U$, the determinant is canonically an element of $H^0(\text{Gr}(\mathscr{H}), \text{DET}^*)$. 

    \item As a section of $\text{DET}^*$, $\det \overline{\partial}_A^U$ vanishes precisely when there exists a $\psi \in \Omega^{0, 0}(\Sigma, K_\Sigma^{1/2} \otimes \mathscr{L})$ such that $\overline{\partial}_A \psi = 0$ and $\psi \eval_{\partial \Sigma} \in U$. 
\end{enumerate}
Informally, the first condition says that the determinant is a determinant and the second condition says that the determinant vanishes if and only if the corresponding operator has a nonzero kernel. 

From the discussion above, the second condition in fact uniquely determines the functional determinant up to an overall scalar multiple (in this section we will disregard such overall scalars): it is the primitive section of $\text{DET}^* \to \text{Gr}(\mathscr{H})$ associated to the space $W \subset \mathscr{H}$ of boundary values of holomorphic sections of $K_C^{1/2} \otimes \mathscr{L}$. We denote the corresponding ray in the fermionic Fock space as $\ket{W} \in \mathbb{P}(\mathscr{F})$; we will sometimes abuse notation and denote by $\ket{W}$ also a lift of this ray to a vector in $\mathscr{F}$.

This logic can be inverted to give a definition of the functional determinants in the algebra-geometric setting. Now $C$ is a smooth projective curve, $p \in C$ is a point, $z^{-1}$ is a formal coordinate at $p$ and $\Sigma = C \setminus p$; instead of restriction to $\partial \Sigma$, we restrict to the formal punctured disk $\mathbb{D}^\times_p$ at $p$, and assume $\mathscr{L}$ is trivialized over the formal disk $\mathbb{D}_p$. Expansion at $p$ gives rise to a map 
\begin{equation}
    H^0(C \setminus p, K^{1/2}_C \otimes \mathscr{L}) \hookrightarrow \mathbb{C}(( z^{-1})) = \mathscr{H}. 
\end{equation}
The closure of its image is denoted $W \subset \mathscr{H}$; in fact $W$ defines a point of $\text{Gr}^*(\mathscr{H})$ with $\text{index}(W) = \deg \mathscr{L} - 1$. This is shown by the following consideration: since $C \setminus p$ is an affine variety, we may compute cohomology using the Cech complex associated to the (formal) covering $C = (C \setminus p) \cup \mathbb{D}_p$, and unwinding definitions we see there are natural isomorphisms 
\begin{equation}
\begin{split}
    \ker(W \to \mathscr{H}_+) & \simeq H^0(C, K_C^{1/2} \otimes \mathscr{L}(-p) ) \\ 
    \text{coker}(W \to \mathscr{H}_+) & \simeq H^1(C, K_C^{1/2} \otimes \mathscr{L}(-p))
\end{split}
\end{equation}
from which it follows that each of these are finite dimensional (as $C$ is projective) and $\text{index}(W) = \deg \mathscr{L} - 1$ (from Riemann-Roch). Therefore we have a well-defined primitive state $\ket{W}$ assigned to $W$ and make this a 
\begin{definition}
    Notations as above, the functional determinant $\det \overline{\partial}_A^U$ is defined as the primitive state $\ket{W} \in \mathbb{P}(\mathscr{F})$ assigned to $H^0(C \setminus p, K_C^{1/2} \otimes \mathscr{L})$. 
\end{definition}
Note that to recover the numerical determinant $\det \overline{\partial}_A$ in this approach, we just evaluate this section at the point $U = \mathbb{C}[[z^{-1}]] \in \text{Gr}_1(\mathscr{H})$, using the natural trivialization of $\text{DET}^*$ over the big cell. To state meaningful results for numerical determinants it is of course important to pin down the normalization of $\ket{W}$ by some other means. 

There is a more geometric way to describe how $\ket{W}$ is produced. Let $\mathfrak{M}$ denote the moduli stack of data $(C, \mathscr{L}, p, z,  \phi_p)$ with notations as above ($\phi_p$ denotes a trivialization of $\mathscr{L}$ in the formal neighborhood of $p$). The extraction of $W$ from this data defines a morphism 
\begin{equation}
    \mathfrak{K}: \mathfrak{M} \to \text{Gr}^*(\mathscr{H})
\end{equation}
called the Krichever map, and composing it with the Pl\"{u}cker embedding $\text{Gr}^*(\mathscr{H}) \hookrightarrow \mathbb{P}(\mathscr{F})$ produces the state $\ket{W}$ assigned to a point of $\mathfrak{M}$. It is this structure that we will imitate in the instanton problem. 

\subsubsection{Example}
To illustrate the language, let's consider the path integral on $\mathbb{C} = \mathbb{P}^1 \setminus \infty$, $\mathscr{L}$ trivial, and $K_C^{1/2} \simeq \mathscr{O}(-z_1)$ locally trivialized in such a way as to be identified with the sheaf of functions vanishing at some point $z_1 \in \mathbb{P}^1 \setminus \infty$. Write $W(z_1) = H^0(\mathbb{P}^1 \setminus \infty, \mathscr{O}(-z_1))$; by the above construction we are supposed to produce a state $\ket{W(z_1)}$. It is instructive to understand this state explicitly. 

$W(z_1)$ has a basis given by $w_j(z) = (z - z_1)^j$, for $j \geq 1$. Then 
\begin{equation}
\begin{split}
        \ket{W(z_1)} & = (z - z_1) \wedge (z - z_1)^2 \wedge (z - z_1)^3 \wedge \dots \\ 
        & = (z - z_1) \wedge (z^2 - z_1 z) \wedge (z^3 - z_1 z^2) \wedge \dots \\
        & = z \wedge z^2 \wedge z^3 \wedge \dots - z_1 \cdot z^0 \wedge z^2 \wedge z^3 \wedge \dots  \\ 
        & + z_1^2 \cdot z^0 \wedge z \wedge z^3 \wedge \dots  + O(z_1^3) \\
        & = \sum_{r = 1/2}^\infty \psi_{-r} z_1^{r - 1/2} \ket{0} = \psi(z_1) \ket{0}
\end{split}
\end{equation}
where we recall that $\psi_{-r}$ is contraction with $z^{r - 1/2}$ and $\ket{0} = z^0 \wedge z^1 \wedge \dots$. $\widetilde{W}(z_1)$ is identified with functions that have at most a simple pole at $z_1$; Lemma \ref{cliffward} is checked by noting that 
\begin{equation}
\begin{split}
    Q_f \ket{W(z_1)} & = \oint_{\gamma_{z_1}} dz f(z) \psi(z) \psi(z_1) \ket{0} = 0, \, \, \, \text{if $(z - z_1)f(z)$ is regular at $z_1$} \\ 
    \widetilde{Q}_g \ket{W(z_1)} & = \oint_{\gamma_{z_1}} dz g(z) \widetilde{\psi}(z) \psi(z_1) \ket{0} = 0, \, \, \, \text{if $g(z_1) = 0$.}
\end{split}
\end{equation}
We used the residue theorem to shrink the contour to a small loop $\gamma_{z_1}$ enclosing $z_1$, and the operator product expansions 
\begin{equation}
\begin{split}
    \psi(z) \psi(z_1)  & = O(z - z_1) \\ 
    \widetilde{\psi}(z) \psi(z_1) & = \frac{1}{z - z_1} + \text{reg}
\end{split}
\end{equation}
as $z \to z_1$. The reader may check as an exercise that $\ket{\widetilde{W}(z_1)} = \widetilde{\psi}(z_1) \ket{0}$. 

In \cite{witten} it was observed that twisting/Hecke transforming a line bundle at a point on $C$ is equivalent in the above sense to inserting $\psi$ or $\widetilde{\psi}$ at that point; this was interpreted as a ``multiplicative Ward identity'' in the free fermion conformal field theory. In essence, it says that the operation of Hecke transformation is realized as a genuine local operator in the free fermion theory, and this operator is nothing but insertion of the basic local fields $\psi(z)$, $\widetilde{\psi}(z)$. This is just a geometric interpretation of the basic operator product expansions. 

One of the main results of this paper can be phrased as a generalization of this, in the following direction. Instead of an abstract algebraic curve $C$, we consider an embedded curve in a noncommutative algebraic surface, as in \cite{Aganagic_2005}, \cite{DHS}, \cite{DHSV}; say the zero section in $T^*_\hbar \mathbb{C}$. Noncommutative $U(1)$ instantons on $T^*_\hbar \mathbb{C}$ can be thought of as rank 1 sheaves on $T^*_\hbar \mathbb{C}$ with full support, obtained by taking the trivial line bundle and performing some complex two-dimensional analog of a Hecke transformation at several points $(z_i, p_i) \in T^* \mathbb{C}$. If we place the free fermion theory on the zero section $\mathbb{C} \subset T^* \mathbb{C}$, then the ``Hecke transformation'' at a point $(z_i, p_i)$ on the surface $T^*_\hbar \mathbb{C}$ is implemented by insertion of the local operator 
\begin{equation}
    p_i - \hbar \partial \varphi(z_i)
\end{equation}
where $\partial \varphi(z)$ is the chiral boson equivalent to the fermions under bosonization, see below. Note that if the noncommutativity is sent to zero, $\hbar \to 0$, the local operator becomes essentially trivial, as expected since the instanton can simply be moved away from the zero section. The nonlocality introduced by the noncommutativity is essential for this formula to make sense. 

For more general holomorphic Lagrangians in $T^* \mathbb{C}$, the nonlocality of the free fermion theory in this situation is more difficult to suppress (see \cite{Aganagic_2005}, \cite{DHS}, \cite{DHSV}), so the corresponding statement becomes more complicated, though it is a natural generalization of the construction of \cite{Aganagic_2005}, \cite{DHS}, \cite{DHSV} that works over the moduli space of instantons.

\subsection{Bosonization and coherent state basis}
Thus far, the primitive sections $\ket{W}$ have been described by considering their expansion in the standard basis of the free fermion Fock space coming from the action of negative modes of $\psi$ and $\widetilde{\psi}$ on the vacuum; this amounts to giving explicit Pl\"{u}cker coordinates for the subspace $W \subset \mathscr{H}$. For certain applications, it is convenient to use a ``bosonized'' description of the states, which proceeds as follows. We assume basic familiarity with bosonization, see e.g. \cite{miwa2000solitons} or \cite{qftias}, pg. 1202-1205 for textbook accounts.

\subsubsection{Bosonic description of $\mathscr{F}$}
Because $\text{Gr}(\mathscr{H})$ has $\mathbb{Z}$-many connected components, the space of global sections of $\text{DET}^*$ splits as a sum of sections supported on each component, 
\begin{equation}
    H^0(\text{Gr}(\mathscr{H}), \text{DET}^*) \simeq \mathscr{F} = \bigoplus_{N \in \mathbb{Z}} \mathscr{F}_N.
\end{equation}
The modes of (colons denote normal ordering)
\begin{equation}
    : \psi(z) \widetilde{\psi}(z) : \, \, = \sum_{n \in \mathbb{Z}} \frac{\alpha_n}{z^{n + 1}} = \partial \varphi(z)
\end{equation}
generate a Heisenberg algebra 
\begin{equation}
    \comm{\alpha_m}{\alpha_n} = m \delta_{m + n, 0}
\end{equation}
and each $\mathscr{F}_N$ is isomorphic to a Fock module of the Heisenberg algebra (which we will call bosonic Fock space) where the zero mode $\alpha_0$ acts by $N$, 
\begin{equation}
    \mathscr{F}_N = \mathbb{C}[\alpha_{-1}, \alpha_{-2, \dots}] \ket{N}.
\end{equation}
The state $\ket{N} = z^N \wedge z^{N + 1} \wedge z^{N + 2} \wedge \dots$ in terms of the semi-infinite wedge description. We will usually concentrate on the bosonic Fock space $\mathscr{F}_0$ corresponding to the index zero component in this paper. Note our notation is consistent in the sense that the fermionic vacuum $\ket{0} = \ket{N = 0}$.  

\subsubsection{Loop group action on $\text{Gr}(\mathscr{H})$}
Define the operator (all but finitely many $t_n$ must be zero)
\begin{equation}
    g(t) := \exp \Big(\sum_{n = 1}^\infty t_n \alpha_n \Big) 
\end{equation}
then the assignment $\ket{\omega} \mapsto \bra{0}g(t) \ket{\omega}$ identifies $\mathscr{F}_0$ with functions of infinitely many variables $t_i$ in terms of which, for $n > 0$, 
\begin{equation}
\begin{split}
    \alpha_n & \to \pdv{}{t_n} \\
    \alpha_{-n} & \to nt_n. 
\end{split}
\end{equation}
We will define $\bra{t} := \bra{0} g(t)$ and view the function associated to the state as its expression in the coherent state basis for the operators $\alpha_n$. The wavefunction of a primitive state $\ket{W}$ 
\begin{equation}
    \tau_W(t) = \braket{t}{W}
\end{equation}
is called its tau function for historical reasons related to integrable hierarchies that will play a minimal role for us. However it is an economical way to encode the data of $\ket{W}$.  

Note that the operator $g(t)$ corresponds to the action of half of the loop group of $GL_1$ on $\text{Gr}^*(\mathscr{H})$: for $W \in \text{Gr}^*(\mathscr{H})$, the would-be action is given by taking any basis $w_j(z)$ of $\ket{W}$ of the kind above, and replacing $w_j(z) \mapsto w_j(z) e^{\sum_{n = 1}^\infty t_n z^n}$. However, since we use $\mathscr{H} = \mathbb{C}((z^{-1}))$, this formula does not preserve $\mathscr{H}$ and therefore is not well-defined. One can circumvent this by using the Pl\"{u}cker embedding, the fact that the action on $\mathscr{F}$ is well-defined using $g(t)$, and observing that $g(t) \ket{W}$ satisfies the Pl\"{u}cker relations if $\ket{W}$ does. In this way one obtains an action of this group (called $\Gamma_+$ in \cite{segal-wilson}) on $\text{Gr}^*(\mathscr{H}) \hookrightarrow \mathbb{P}(\mathscr{F})$. We will denote the image of $W$ under the action as $g(t) \cdot W$, so $g(t) \ket{W} = \ket{g(t) \cdot W}$. 

In the classical theory of Krichever solutions (i.e. when $W = H^0(C \setminus p, K_C^{1/2} \otimes \mathscr{L})$ as above), the above characterizations can be used to deduce explicit formulas for $\tau_W(t)$ using the theta functions of the curve $C$; we will not review them here as this is outside our primary narrative, but refer to \cite{dubrovin}, or Sections 6 and 9 of \cite{segal-wilson} for details. Our formulas \eqref{formula:ISMcoefficients}, \eqref{formula:cylindermiuracoeff} can be similarly viewed in this spirit, but for us the interesting moduli space is the moduli of instantons, while the curve is usually fixed and rational. As such, the expression is given by a simple finite-dimensional determinant of the ADHM matrices. 

\subsubsection{Higher rank analogs}
In this appendix we have reviewed the theory ofa single free fermion $\psi(z)$ and its conjugate $\widetilde{\psi}(z)$. There is a natural generalization of much of the story to $r$-component free fermions, $\psi^a(z)$, $\widetilde{\psi}_a(z)$, $a = 1, \dots, r$. All discussion of the Grassmannian goes through essentially unchanged with the replacements $\mathscr{H} = \mathbb{C}^r((z^{-1}))$, $\mathscr{H}_+ = \mathbb{C}^r[z]$, $\mathscr{H}_- = z^{-1} \mathbb{C}^r[[z^{-1} ]]$. Certain new features emerge, however; for instance one may consider Cauchy-Riemann operators $\overline{\partial}_A$ coupled to more general vector bundles (and our noncommutative instanton operators should be thought of in this form). Likewise, the bosonic description of the Fock space is more involved, and uses the integrable representations of $\widehat{\mathfrak{gl}}_r$ at level $k = 1$ (this is the nonabelian bosonization of \cite{witten84}). This Kac-Moody algebra is generated by the currents 
\begin{equation}
    J\indices{^a_b}(z) = :\psi^a(z) \widetilde{\psi}_b(z):.
\end{equation}

\newpage

\section{Details with ADHM complex} \label{adhmdetails}
In this technical appendix we will prove various facts about the noncommutative instanton bundles $\mathscr{E}$ that rely on their description using the ADHM construction. 

\subsubsection{Stability and costability}
Let $V, W$ denote the vector spaces of dimension $n$ and $r$ entering the ADHM description of $\widetilde{M}(n, r)$. We recall that the ADHM data $(B_1, B_2, I, J)$ must solve the moment map equation 
\begin{equation} \label{momentmapeq}
    \comm{B_1}{B_2} + IJ = \hbar. 
\end{equation}
We have the stability and co-stability conditions that may be satisfied by such a quadruple $(B_1, B_2, I, J)$: 
\begin{itemize}
    \item Stability condition: $S \subseteq V$ is a subspace such that $\text{Im} \,  I \subseteq S$ and $S$ is an invariant subspace for $B_1$ and $B_2 \implies S = V$. 
    \item Costability condition: $S \subseteq V$ is a subspace such that $S \subseteq \ker J$ and $S$ is an invariant subspace for $B_1$ and $B_2 \implies S = 0$. 
\end{itemize}
The following is well-known in the context of Nakajima quiver varieties, but we give the proof here. 

\begin{lemma} \label{stabandcostab}
If $(B_1, B_2, I, J)$ solve \eqref{momentmapeq} with $\hbar \neq 0$, then they satisfy both the stability and costability conditions. 
\end{lemma}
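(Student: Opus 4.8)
The plan is to prove both statements by the same device: restrict or corestrict the moment map equation \eqref{momentmapeq} to the offending subspace and take a trace, exploiting that the trace of a commutator of endomorphisms of a finite-dimensional space vanishes while the trace of $\hbar\cdot\mathrm{Id}$ does not (here $\hbar\neq 0$ is essential, and $\hbar$ in \eqref{momentmapeq} is read as $\hbar\,\mathrm{Id}_V$). This is the standard trace obstruction for the existence of a pair of operators with scalar commutator on a finite-dimensional space.

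For the \textbf{costability} condition, suppose $S\subseteq V$ is invariant under $B_1$ and $B_2$ and satisfies $S\subseteq\ker J$. First I would observe that $B_1|_S$ and $B_2|_S$ are well-defined endomorphisms of $S$ by invariance, and that $(IJ)|_S = I\circ(J|_S) = 0$ since $S\subseteq\ker J$. Restricting \eqref{momentmapeq} to $S$ then gives $\comm{B_1|_S}{B_2|_S} = \hbar\,\mathrm{Id}_S$. Taking the trace over $S$, the left side vanishes while the right side equals $\hbar\dim S$; since $\hbar\neq 0$ this forces $\dim S = 0$, i.e. $S = 0$.

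For the \textbf{stability} condition, suppose $S\subseteq V$ is invariant under $B_1$ and $B_2$ and satisfies $\operatorname{Im} I\subseteq S$. Here the natural move is to pass to the \emph{quotient} $V/S$ rather than restrict. Since $S$ is $B_1$- and $B_2$-invariant, both descend to operators $\bar B_1,\bar B_2$ on $V/S$ with $\overline{\comm{B_1}{B_2}} = \comm{\bar B_1}{\bar B_2}$. Crucially, $IJ$ descends to zero on $V/S$, because $\operatorname{Im}(IJ)\subseteq\operatorname{Im} I\subseteq S$. Thus \eqref{momentmapeq} descends to $\comm{\bar B_1}{\bar B_2} = \hbar\,\mathrm{Id}_{V/S}$, and the trace argument over $V/S$ gives $\hbar\dim(V/S) = 0$, hence $S = V$.

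I do not anticipate a serious obstacle: the only point requiring care is making sure one corestricts in the stability case (so that the $IJ$ term is killed by $\operatorname{Im} I\subseteq S$) and restricts in the costability case (so that it is killed by $S\subseteq\ker J$), and that the descended/restricted commutator really is the commutator of the descended/restricted operators. Everything else is the one-line trace identity $\operatorname{tr}\comm{A}{B}=0$ together with $\hbar\neq 0$.
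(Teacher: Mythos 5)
Your proof is correct and follows essentially the same approach as the paper: kill the $IJ$ term using the hypothesis on $S$, then apply the trace obstruction $\tr\comm{A}{B}=0$ against $\hbar\neq 0$. The only cosmetic difference is in the stability case, where you pass to the quotient $V/S$ while the paper works dually with $\mathrm{Ann}(S)\subseteq V^*$ and phrases the contradiction as the nonexistence of finite-dimensional representations of the Weyl algebra—which is the same trace argument in disguise.
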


\begin{proof}
Toward the costability condition, assume $S \subseteq V$ is $B_1, B_2$ invariant and contained in $\ker J$. Then the left hand side of \eqref{momentmapeq} preserves $S$ and coincides there with $\comm{B_1}{B_2} |_S = \comm{B_1 |_S}{B_2 |_S}$, so taking trace we must have $ 0 = \hbar \dim S$. If $\hbar \neq 0$, $\dim S = 0$. 

Toward the stability condition, assume $S \subseteq V$ is $B_1, B_2$ invariant and contains $\text{Im} \, I$. Let $\phi \in \text{Ann}(S) \subseteq V^*$; then precomposing $\phi$ with \eqref{momentmapeq} we must have $\phi \cdot\comm{B_1}{B_2} = \hbar \phi$; then $\text{Ann}(S)$ must carry a finite-dimensional representation of the Weyl algebra so $\text{Ann}(S) = 0$, or $S = V$. 
\end{proof}

The following is a useful corollary of the stability condition. Let $e_i, i = 1, \dots, r$ denote any basis in $W$.
\begin{lemma} \label{stabilitybasis}
    When $\hbar \neq 0$, we have $ V= \textnormal{Span}(B_2^m B_1^n I(e_i))_{m, n \geq 0, 1 \leq i \leq r}$. 
\end{lemma}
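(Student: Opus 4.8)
The plan is to set $S := \mathrm{Span}\big(B_2^m B_1^n I(e_i)\big)_{m,n\ge 0,\,1\le i\le r}$ and to prove $S = V$ by appealing to the stability condition established in Lemma \ref{stabandcostab}. For this it suffices to verify three things: that $S$ contains $\mathrm{Im}\, I$, and that $S$ is invariant under both $B_1$ and $B_2$. The first two are immediate. Taking $m = n = 0$ and using that the $e_i$ span $W$ gives $\mathrm{Im}\, I \subseteq S$, and the identity $B_2 \cdot B_2^m B_1^n I(e_i) = B_2^{m+1} B_1^n I(e_i) \in S$ gives $B_2$-invariance on the nose.

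The only real content is $B_1$-invariance, which is exactly where the noncommutativity enters through $\comm{B_1}{B_2} = \hbar - IJ$. First I would commute $B_1$ to the right past the $B_2^m$ factor using the standard expansion, substituting \eqref{momentmapeq}:
\[
B_1 B_2^m = B_2^m B_1 + \sum_{k=0}^{m-1} B_2^k \comm{B_1}{B_2} B_2^{m-1-k} = B_2^m B_1 + \hbar m\, B_2^{m-1} - \sum_{k=0}^{m-1} B_2^k\, IJ\, B_2^{m-1-k}.
\]
Applying this to a generator $B_2^m B_1^n I(e_i)$ produces three kinds of terms. The first, $B_2^m B_1^{n+1} I(e_i)$, is manifestly a generator of $S$; the second, $\hbar m\, B_2^{m-1} B_1^n I(e_i)$, is likewise a generator, with $\hbar$ a harmless scalar.

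The term requiring the most care is the sum involving $IJ$: a typical summand is $B_2^k\, I\, \big(J B_2^{m-1-k} B_1^n I(e_i)\big)$. Here the inner factor $J B_2^{m-1-k} B_1^n I(e_i)$ is an honest vector of $W$, so expanding it in the basis $e_j$ turns $I$ of it into an element of $\mathrm{Span}(I(e_j))_j$, and then $B_2^k$ applied to that lands in $\mathrm{Span}(B_2^k I(e_j))_j \subseteq S$. Hence every summand lies in $S$, whence $B_1 \cdot B_2^m B_1^n I(e_i) \in S$ for all $m, n, i$, giving $B_1 S \subseteq S$. With all three conditions verified, stability forces $S = V$, completing the proof. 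I do not anticipate a genuine obstacle here; the single point one must get right is that the $IJ$-correction terms reassemble into generators of $S$ rather than escaping it, which is precisely what the factorization of $IJ$ through the finite-dimensional space $W$ guarantees.
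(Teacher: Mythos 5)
Your proof is correct and follows essentially the same route as the paper: both verify $\mathrm{Im}\,I \subseteq S$ and $B_2$-invariance trivially, establish $B_1$-invariance via the commutator expansion of $B_1 B_2^m$ with the substitution $\comm{B_1}{B_2} = \hbar - IJ$, observe that the $IJ$-terms factor through $W$ and hence stay in $S$, and conclude by the stability condition of Lemma \ref{stabandcostab}. No differences worth noting.
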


\begin{proof}
    Let $S \subseteq V$ denote the span of the $B_2^m B_1^n I(e_i)$. Clearly $\text{Im} \, I \subseteq S$ and $S$ is $B_2$-invariant. To see that $S$ is also $B_1$-invariant, compute  
    \begin{equation}
    \begin{split}
        B_1 B_2^m B_1^n I(e_i) & = \sum_{k = 0}^{m - 1} B_2^k \comm{B_1}{B_2} B_2^{m - 1 - k} B_1^n I(e_i) + B_2^m B_1^{n + 1} I(e_i) \\ 
        & = \hbar m B_2^{m - 1} B_1^nI(e_i) - \sum_{k = 0}^{m - 1} B_2^k I(J B_2^{m - 1 - k} B_1^n I(e_i)) + B_2^m B_1^{n + 1} I(e_i) \in S
    \end{split}
    \end{equation}
    where we used \eqref{momentmapeq} and the assumption that $e_i$ is a basis of $W$. By the stability condition, $S = V$.
\end{proof}

\subsubsection{ADHM complex}
Part of the fundamental assertion of the noncommutative ADHM construction is that the instanton bundle $\mathscr{E}$, as a right $\mathscr{D}_\hbar$-module, coincides with the middle cohomology of the complex 
\begin{equation} \label{eq:adhmcomplex}
\begin{tikzcd}
 C^\bullet := V \otimes \mathscr{D}_\hbar \arrow[r, "\alpha"] & (V \oplus V \oplus W) \otimes \mathscr{D}_\hbar \arrow[r, "\beta"] & V \otimes \mathscr{D}_\hbar 
\end{tikzcd}
\end{equation}
The maps in the complex are 
\begin{equation}
\begin{split}
    \alpha & = \begin{pmatrix} B_1 - z \\ B_2 - w \\ J \end{pmatrix} \\ 
    \beta & = \begin{pmatrix} -B_2 + w && B_1 - z && I  \end{pmatrix}.
\end{split}
\end{equation}
Whenever the elements $z, w \in \mathscr{D}_\hbar$ appear in $\alpha$ or $\beta$, we understand the operators of left multiplication by these elements, so that the morphisms in the complex are indeed morphisms of right $\mathscr{D}_\hbar$-modules. Note that $\beta \alpha = 0$ using \eqref{momentmapeq} and $\comm{z}{w} = - \hbar$. First we have 
\begin{lemma} \label{adhmcohvanish}
    For $\hbar \neq 0$, the cohomology of $C^\bullet$ is entirely concentrated in the middle degree, i.e. $H^0(C^\bullet) = H^2(C^\bullet) = 0$. 
\end{lemma}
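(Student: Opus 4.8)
The plan is to show separately that $H^0(C^\bullet) = \ker \alpha = 0$ and that $H^2(C^\bullet) = \operatorname{coker} \beta = 0$, using the stability and costability conditions guaranteed by Lemma \ref{stabandcostab} (recall $\hbar \neq 0$). Both vanishings are really statements about maps of right $\mathscr{D}_\hbar$-modules, but the standard trick is to reduce them to linear algebra over $\mathbb{C}$ by filtering $\mathscr{D}_\hbar$ by order of differential operator, or equivalently by writing a general element of $V \otimes \mathscr{D}_\hbar$ as a (right) polynomial in $w$ with coefficients that are functions of $z$, and tracking the leading term. The noncommutativity enters only through $\comm{w}{z} = \hbar$, which affects lower-order terms but not the leading symbol.

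First I would treat $H^0$. Suppose $\alpha(s) = 0$ for some $s \in V \otimes \mathscr{D}_\hbar$, i.e. $(B_1 - z)s = 0$, $(B_2 - w)s = 0$, and $Js = 0$. Writing $s = \sum_k s_k w^k$ with $s_k \in V \otimes \mathbb{C}[z]$ (or more carefully in an appropriate completion) and examining the top-degree term in $w$ of $(B_2 - w)s = 0$ forces the leading coefficient to vanish, so by induction $s = 0$ outright; the equations $(B_1 - z)s = 0$ alone already pin down $s$ since left multiplication by $z$ cannot be cancelled by the finite-rank operator $B_1$ unless $s$ is trivial. The cleaner route is to observe that $\ker \alpha$ is a $\mathscr{D}_\hbar$-submodule on which $z$ and $w$ act as the finite-rank operators $B_1, B_2$; this is impossible for a nonzero module over the Weyl algebra $\mathscr{D}_\hbar$ since $\comm{w}{z} = \hbar \neq 0$ cannot be realized on a space where $z, w$ act finite-rank (taking a trace gives a contradiction exactly as in the proof of Lemma \ref{stabandcostab}), so $\ker \alpha = 0$.

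For $H^2 = \operatorname{coker} \beta$, I would argue dually. The surjectivity of $\beta$ is equivalent, after dualizing, to the injectivity of the transpose map built from $-B_2 + w$, $B_1 - z$, $I$, and this injectivity is governed by the costability condition in the same way that $\ker \alpha = 0$ was governed by stability. Concretely, to show $\beta$ is surjective it suffices to show that every $v \otimes 1 \in V \otimes \mathscr{D}_\hbar$ lies in the image; one builds the preimage order by order in $w$, using Lemma \ref{stabilitybasis} to guarantee that the relevant operators generate all of $V$ from $\operatorname{Im} I$ under $B_1, B_2$. The main obstacle, and the step I would be most careful about, is the bookkeeping of the $w$-filtration: because $w$ and $z$ do not commute, solving $\beta(\cdot) = v$ recursively produces lower-order corrections at each stage, and one must check the recursion terminates (or converges in the completed sense in which $\mathscr{E}$ is defined) and that the obstruction to lifting at each order is exactly killed by the costability/stability input. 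Once the leading-order solvability is established via Lemma \ref{stabilitybasis} and the corrections are controlled by the commutator $\comm{w}{z} = \hbar$, both vanishings follow and the cohomology is concentrated in the middle degree.
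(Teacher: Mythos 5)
Your proposal is correct, and the two halves deserve separate comments. For $H^0$ you take a genuinely different and simpler route than the paper: the paper expands $f=\sum_{m,n}f_{m,n}z^mw^n$, checks that $\mathrm{Span}(f_{m,n})$ is a $B_1,B_2$-invariant subspace of $\ker J$, and invokes costability from Lemma \ref{stabandcostab}, whereas your leading-term observation shows that already $\ker(B_2-w)=0$ on $V\otimes\mathscr{D}_\hbar$, with no appeal to $J$ or to (co)stability, since left multiplication by $w$ raises $w$-degree by exactly one while $B_2\otimes 1$ preserves it. Two caveats: drop the hedge about ``an appropriate completion'' --- the argument works precisely because $\mathscr{D}_\hbar$ is the polynomial Weyl algebra, so every element is a finite sum with a top term, and in a completed setting there would be no leading coefficient to kill; and ``finite rank'' is the wrong phrase ($B_1\otimes 1$ has infinite rank on $V\otimes\mathscr{D}_\hbar$) --- what your trace variant actually uses is that the span of the coefficients of elements of $\ker\alpha$ is a finite-dimensional $B_1,B_2$-invariant subspace of $V$ on which $[B_1,B_2]$ acts as $\hbar$, which is the same trace contradiction as in Lemma \ref{stabandcostab}. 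For $H^2$ your route coincides with the paper's (Lemma \ref{stabilitybasis} is indeed the key input, and it is stability, not costability, that governs this half), and your reduction to hitting $v\otimes 1$ via the right-module structure is a nice economy; but the ``main obstacle'' you flag does not exist. Since $B_2\otimes 1$ and left multiplication by $w$ act on different tensor factors they commute, so $(B_2-w)$ commutes with left multiplication by $w$, and hence $B_2^pX\equiv w^pX$ modulo $\mathrm{im}(B_2-w)$ exactly, in $p$ steps, with no lower-order corrections; likewise $B_1^qY\equiv z^qY$ modulo $\mathrm{im}(B_1-z)$. The commutator $[w,z]=\hbar$ never enters because one never needs to normal-order the resulting monomials, and after both substitutions every term lies in $I(W\otimes\mathscr{D}_\hbar)\subseteq\mathrm{im}\,\beta$, so the recursion terminates for trivial degree reasons --- this finite substitution is precisely the paper's proof that $\mathrm{coker}\,\beta=0$.
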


\begin{proof}
$H^0(C^\bullet)$ consists of those $f(z, w) \in V \otimes \mathscr{D}_\hbar$ such that $(B_1 - z) f = 0 = (B_2 - w) f$, and $Jf(z, w) = 0$. Expanding $f$ in the monomial basis of $\mathscr{D}_\hbar$, $f(z, w) = \sum_{m, n} f_{m, n} z^m w^n$, $f_{m, n} \in V$, these equations imply that $\text{Span}(f_{m, n})_{m, n \geq 0} \subseteq V$ is a $B_1, B_2$ invariant subspace of $\ker J$. By the costability condition, $S = 0$ and we have $f_{m, n} = 0$ for all $m, n$, thus $f(z, w) = 0$. 

To see the vanishing of $H^2(C^\bullet)$, take any $f(z, w) \in V \otimes \mathscr{D}_\hbar$ and expand $f(z, w) = \sum_{m, n} f_{m, n} z^m w^n$ as above. Now apply Lemma \ref{stabilitybasis} to get an expansion 
\begin{equation}
    f_{m, n} = \sum_{p, q, i} C_{mnpqi} B_2^p B_1^q I(e_i) \in V
\end{equation}
and observe 
\begin{equation}
\begin{split}
    f(z, w) & = \sum_{m, n, p, q, i} C_{mnpqi} (B_2 - w + w)^p B_1^q I(e_i) z^m w^n \\ 
    & = \sum_{m, n, p, q, i} C_{mnpqi} (B_1 - z+ z)^q I(e_i) w^p z^m w^n \, \, \, \mod (B_2 - w) V \otimes \mathscr{D}_\hbar \\ 
    & = \sum_{m, n, p, q, i} C_{mnpqi} I(e_i) z^q w^p z^m w^n \, \, \, \mod (B_1 - z) V \otimes \mathscr{D}_\hbar \\ 
    & = 0 \, \, \, \mod I (W \otimes \mathscr{D}_\hbar)
\end{split}
\end{equation}
showing that $f(z, w) = 0 \in \text{coker} \, \beta$. Since $f$ was arbitrary, $\text{coker} \, \beta = 0$. 
\end{proof}
Note that if we were to consider a similar complex but with $\mathscr{D}_\hbar$ replaced by any left $\mathscr{D}_\hbar$-module of countable dimension over $\mathbb{C}$ (in particular any cyclic $\mathscr{D}_\hbar$ module), the same argument implies the cohomology vanishes outside the middle degree (replace the monomial basis wherever it appears by some choice of basis in the vector space underlying the module). This observation will be important in the proof of Proposition \ref{extgroups} below. 

\subsubsection{Computing Ext groups}
The ADHM description of the $\mathscr{D}_\hbar$-module $\mathscr{E}$ is useful for giving an explicit description of the kind of $\text{Ext}$-groups we need to understand in this paper. Let $P(z, w) \in \mathscr{D}_\hbar$ be any nonzero element. We have the right and left $\mathscr{D}_\hbar$-modules: 
\begin{equation}
\begin{split}
    \widehat{\mathscr{O}}_C & : = \mathscr{D}_\hbar/ P(z, w) \mathscr{D}_\hbar \\ 
    \widehat{\mathscr{O}}^\ell_C & := \mathscr{D}_\hbar/\mathscr{D}_\hbar P(z, w). 
\end{split}
\end{equation}
The notation signifies that these modules should be viewed as noncommutative analogs of the structure sheaves of the affine curve $C$ defined by $P(z, w) = 0$ in the $\hbar \to 0$ limit. 

Define the complex

\begin{equation}
\begin{tikzcd}
    C^\bullet_P := V \otimes \widehat{\mathscr{O}}^\ell_C \arrow[r, "\alpha"] & (V \oplus V \oplus W) \otimes \widehat{\mathscr{O}}^\ell_C \arrow[r, "\beta"] & V \otimes \widehat{\mathscr{O}}^\ell_C.
\end{tikzcd}
\end{equation}
$\alpha$ and $\beta$ denote the natural maps induced by the maps in the ADHM complex $C^\bullet$, which by abuse of notation we continue to call $\alpha$ and $\beta$. Note that the cohomology of $C^\bullet_P$ vanishes outside the middle degree by the remark following Lemma \ref{adhmcohvanish}. 

\begin{prop} \label{extgroups}
    We have 
    \begin{equation}
    \begin{split}
        \textnormal{Hom}_{\mathscr{D}_\hbar}(\widehat{\mathscr{O}}_C, \mathscr{E}) & = 0 \\ 
        \textnormal{Ext}^1_{\mathscr{D}_\hbar}(\widehat{\mathscr{O}}_C, \mathscr{E}) & \simeq H^1(C^\bullet_P). 
    \end{split}
    \end{equation}
\end{prop}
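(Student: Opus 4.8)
The plan is to compute $\mathrm{Ext}^\bullet_{\mathscr{D}_\hbar}(\widehat{\mathscr{O}}_C, \mathscr{E})$ by resolving the first argument and then comparing the resulting complex with $C^\bullet_P$. Since $\widehat{\mathscr{O}}_C = \mathscr{D}_\hbar/P\mathscr{D}_\hbar$ is a cyclic right $\mathscr{D}_\hbar$-module, it admits the free resolution
\begin{equation}
\begin{tikzcd}
0 \arrow[r] & \mathscr{D}_\hbar \arrow[r, "P \cdot -"] & \mathscr{D}_\hbar \arrow[r] & \widehat{\mathscr{O}}_C \arrow[r] & 0
\end{tikzcd}
\end{equation}
where the nontrivial map is \emph{left} multiplication by $P$ (this is a map of right modules, and is injective because $\mathscr{D}_\hbar$ is a domain). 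Applying $\mathrm{Hom}_{\mathscr{D}_\hbar}(-, \mathscr{E})$ to this resolution gives a two-term complex computing $\mathrm{Ext}^0$ and $\mathrm{Ext}^1$, namely $\mathscr{E} \xrightarrow{P} \mathscr{E}$, where now $P$ acts on $\mathscr{E}$ via its right $\mathscr{D}_\hbar$-module structure. Thus $\mathrm{Hom}_{\mathscr{D}_\hbar}(\widehat{\mathscr{O}}_C, \mathscr{E}) = \ker(P: \mathscr{E} \to \mathscr{E})$ and $\mathrm{Ext}^1_{\mathscr{D}_\hbar}(\widehat{\mathscr{O}}_C, \mathscr{E}) = \mathrm{coker}(P: \mathscr{E} \to \mathscr{E})$, with all higher Ext groups vanishing since the resolution has length one.

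The second and main step is to identify these kernel and cokernel with the cohomology of $C^\bullet_P$. The key observation is that right multiplication by $P$ on $\mathscr{E} = H^1(C^\bullet)$ fits into a short exact sequence of complexes relating $C^\bullet$ to $C^\bullet_P$: tensoring the ADHM complex $C^\bullet$ with the short exact sequence
\begin{equation}
\begin{tikzcd}
0 \arrow[r] & \mathscr{D}_\hbar \arrow[r, "-\cdot P"] & \mathscr{D}_\hbar \arrow[r] & \widehat{\mathscr{O}}^\ell_C \arrow[r] & 0
\end{tikzcd}
\end{equation}
of left $\mathscr{D}_\hbar$-modules produces a short exact sequence of complexes $0 \to C^\bullet \xrightarrow{P} C^\bullet \to C^\bullet_P \to 0$, where the middle map is right multiplication by $P$ applied termwise. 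I would then invoke Lemma \ref{adhmcohvanish} (and the remark following it, which guarantees that $C^\bullet_P$ also has cohomology concentrated in the middle degree) and pass to the long exact sequence in cohomology. Because $H^0(C^\bullet) = H^2(C^\bullet) = 0$ and $H^0(C^\bullet_P) = H^2(C^\bullet_P) = 0$, the long exact sequence collapses to
\begin{equation}
\begin{tikzcd}[column sep=small]
0 \arrow[r] & H^1(C^\bullet) \arrow[r, "P"] & H^1(C^\bullet) \arrow[r] & H^1(C^\bullet_P) \arrow[r] & 0,
\end{tikzcd}
\end{equation}
which exhibits $P: \mathscr{E} \to \mathscr{E}$ as injective with cokernel $H^1(C^\bullet_P)$. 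Injectivity gives $\mathrm{Hom}_{\mathscr{D}_\hbar}(\widehat{\mathscr{O}}_C, \mathscr{E}) = \ker P = 0$, and the identification of cokernels gives $\mathrm{Ext}^1_{\mathscr{D}_\hbar}(\widehat{\mathscr{O}}_C, \mathscr{E}) \simeq H^1(C^\bullet_P)$, as claimed.

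The step I expect to require the most care is verifying that the termwise right-multiplication-by-$P$ map genuinely assembles into a morphism of complexes with the claimed quotient complex $C^\bullet_P$. This amounts to checking that $\alpha$ and $\beta$ commute with right multiplication by $P$ (immediate, since $\alpha, \beta$ are built from left multiplication by $z, w$ and by the constant matrices $B_i, I, J$, all of which commute with right multiplication in the bimodule $\mathscr{D}_\hbar$) and that the cokernel of termwise right multiplication by $P$ on $V\otimes\mathscr{D}_\hbar$ etc.\ is exactly $V \otimes \widehat{\mathscr{O}}^\ell_C$ etc. The latter is where the \emph{left} module $\widehat{\mathscr{O}}^\ell_C = \mathscr{D}_\hbar/\mathscr{D}_\hbar P$ enters correctly: the cokernel of $-\cdot P$ on $\mathscr{D}_\hbar$ is $\mathscr{D}_\hbar/\mathscr{D}_\hbar P$, matching the definition of $C^\bullet_P$. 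The only genuine subtlety is confirming that the induced maps on these quotients agree with the $\alpha,\beta$ defining $C^\bullet_P$; once this bookkeeping is settled, the homological algebra is formal.
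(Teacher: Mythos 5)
Your proof is correct and is essentially the paper's own argument: the paper assembles the identical ingredients (the length-one free resolution of $\widehat{\mathscr{O}}_C$, termwise right multiplication by $P$ on the ADHM complex with cokernel $C^\bullet_P$, and Lemma \ref{adhmcohvanish} together with the remark following it) into a two-row double complex and compares its two spectral sequences, which for a two-row complex with injective vertical maps is precisely the long exact sequence you wrote down. The difference is pure bookkeeping; your snake-lemma formulation is, if anything, marginally more economical than the spectral-sequence packaging.
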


\begin{proof}
Consider the double complex
\begin{equation}
E^{\bullet, \bullet} := 
\begin{tikzcd}
    & V \otimes \mathscr{D}_\hbar \arrow[d, "\cdot P"] \arrow[r, "\alpha"] & (V \oplus V \oplus W) \otimes \mathscr{D}_\hbar \arrow[d, "\cdot P"] \arrow[r, "\beta"] & V \otimes \mathscr{D}_\hbar \arrow[d, "\cdot P"] \\
    & V \otimes \mathscr{D}_\hbar \arrow[r, "\alpha"] & (V \oplus V \oplus W) \otimes \mathscr{D}_\hbar \arrow[r, "\beta"] & V \otimes \mathscr{D}_\hbar. 
\end{tikzcd}
\end{equation}
Vertical arrows are given by \textit{right} multiplication by $P$. There are two spectral sequences that compute the cohomology of $E^{\bullet, \bullet}$; comparing them will yield the statement of the proposition. 

First consider the spectral sequence with horizontal orientation. Invoking Lemma \ref{adhmcohvanish}, it has $E_1$ page 
\begin{equation}
\begin{tikzcd}
    & 0 \arrow[r] \arrow[d] & \mathscr{E} \arrow[d, " \cdot P "] \arrow[r] & 0 \arrow[d] \\
    & 0 \arrow[r] & \mathscr{E} \arrow[r] & 0. 
\end{tikzcd}
\end{equation}
We see that, for degree reasons, the spectral sequence degenerates at the $E_2$ page. Comparing to the computation of $\text{Ext}^\bullet_{\mathscr{D}_\hbar}(\widehat{\mathscr{O}}_C, \mathscr{E})$ using the free resolution 
\begin{equation}
\begin{tikzcd}
    0 \arrow[r] & \mathscr{D}_\hbar \arrow[r, "P \cdot"] & \mathscr{D}_\hbar \arrow[r] & \widehat{\mathscr{O}}_C \arrow[r] & 0 
\end{tikzcd}
\end{equation}
(where the map is \textit{left} multiplication by $P$) we see that the only a priori nonvanishing cohomologies are $H^1(E^{\bullet, \bullet}) \simeq \text{Ext}^1_{\mathscr{D}_\hbar}(\widehat{\mathscr{O}}_C, \mathscr{E})$, and $H^2(E^{\bullet, \bullet}) \simeq \text{Hom}_{\mathscr{D}_\hbar}(\widehat{\mathscr{O}}_C, \mathscr{E})$. 

On the other hand, consider the spectral sequence with vertical orientation. Since $\mathscr{D}_\hbar$ has no zero divisors, the $E_1$ page reads 
\begin{equation}
\begin{tikzcd}
    & 0 \arrow[r] \arrow[d] & 0 \arrow[r] \arrow[d] & 0 \arrow[d] \\
    & V \otimes \widehat{\mathscr{O}}^\ell_C \arrow[r, "\alpha"] & (V \oplus V \oplus W) \otimes \widehat{\mathscr{O}}^\ell_C \arrow[r, "\beta"] & V \otimes \widehat{\mathscr{O}}^\ell_C.
\end{tikzcd}
\end{equation}
We see that once again, for degree reasons the spectral sequence degenerates at the $E_2$ page, and that $H^i(E^{\bullet, \bullet}) \simeq H^i(C^\bullet_P)$. Using that the cohomology of $C^\bullet_P$ vanishes outside the middle degree yields the result.  
\end{proof}

\subsubsection{Explicit description when $P(z, w) = w$}
We can give an even more explicit description of the relevant $\text{Ext}$ in the case where $C$ is the zero section in $T^*_\hbar \mathbb{C}$, i.e. $P(z, w) = w$. In this case, 
\begin{equation}
    \widehat{\mathscr{O}}_C^\ell \simeq \mathbb{C}[z]
\end{equation}
with $z$ acting by multiplication and $w$ acting by $\hbar \partial_z$. In this case there is the following result. 

\begin{prop} \label{explicitext}
We have an inclusion $\textnormal{Ext}^1_{\mathscr{D}_\hbar}(\widehat{\mathscr{O}}_{\mathbb{C}}, \mathscr{E}) \hookrightarrow W[z] \oplus V$ as the locus of $(\xi(z), \psi_+) \in W[z] \oplus V$ such that $\xi(B_1) I = B_2 \psi_+$. 
\end{prop}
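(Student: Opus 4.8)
The plan is to combine the identification $\textnormal{Ext}^1_{\mathscr{D}_\hbar}(\widehat{\mathscr{O}}_{\mathbb{C}}, \mathscr{E}) \simeq H^1(C^\bullet_P)$ from Proposition \ref{extgroups} (for $P = w$) with the explicit model $\widehat{\mathscr{O}}^\ell_C \simeq \mathbb{C}[z]$, on which $z$ acts by multiplication and $w$ by $\hbar\partial_z$. Under this model the middle term of the complex is $V[z]\oplus V[z]\oplus W[z]$, a class in $H^1$ is represented by a triple $(\psi_+(z),\psi_-(z),\xi(z))$ with $\beta(\psi_+,\psi_-,\xi) = (w - B_2)\psi_+ + (B_1 - z)\psi_- + I\xi = 0$, and two triples represent the same class iff they differ by $\alpha\eta = ((B_1-z)\eta,\,(B_2 - \hbar\partial_z)\eta,\,J\eta)$ for some $\eta\in V[z]$. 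The entire argument rests on one elementary lemma about the operator $B_1 - z$ on $V[z]$, which I would establish first.

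Step 1 (the key lemma). I would prove that $0 \to V[z] \xrightarrow{\,B_1 - z\,} V[z] \xrightarrow{\,\mathrm{ev}_{B_1}\,} V \to 0$ is exact, where $\mathrm{ev}_{B_1}(\sum_k v_k z^k) := \sum_k B_1^k v_k$. Injectivity is immediate from leading terms: if $\eta\neq 0$ has degree $e$ then $(B_1-z)\eta$ has degree $e+1$ with leading coefficient $-\eta_e\neq 0$. A direct coefficient computation gives $\mathrm{ev}_{B_1}\circ(B_1 - z) = 0$, and $\mathrm{ev}_{B_1}$ is surjective since it fixes constants; a degree count then shows its kernel is exactly $\operatorname{im}(B_1 - z)$. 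Concretely: every $\chi\in V[z]$ is congruent modulo $\operatorname{im}(B_1-z)$ to the unique constant $\chi(B_1) = \mathrm{ev}_{B_1}(\chi)$.

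Step 2 (gauge-fixing $\psi_+$). Using Step 1 I would show every class has a unique representative with $\psi_+\in V$ constant. For existence, solve $(B_1-z)\eta = \psi_+(B_1) - \psi_+(z)$ (possible since the right side evaluates to $0$ under $\mathrm{ev}_{B_1}$) and add $\alpha\eta$, making the new $\psi_+$-component the constant $\psi_+(B_1)$ while preserving $\beta = 0$. For uniqueness, two constant-$\psi_+$ representatives differ by $\alpha\eta$ with $(B_1-z)\eta$ constant, which by the leading-term argument forces $\eta = 0$. This canonical representative gives a bijection from $H^1$ onto the constant-$\psi_+$ triples in $\ker\beta$.

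Step 3 (eliminating $\psi_-$ and identifying the image). On a constant-$\psi_+$ representative $w\psi_+ = \hbar\partial_z\psi_+ = 0$, so $\beta = 0$ becomes $(B_1 - z)\psi_- = B_2\psi_+ - I\xi(z)$. By Step 1 this is solvable for $\psi_-\in V[z]$ iff $\mathrm{ev}_{B_1}(B_2\psi_+ - I\xi) = 0$, i.e. iff $\xi(B_1)I = B_2\psi_+$ (with $\xi(B_1)I := \mathrm{ev}_{B_1}(I\xi)$), and when solvable $\psi_-$ is unique. Hence $(\psi_+,\psi_-,\xi)\mapsto(\xi,\psi_+)$ is a bijection from constant-$\psi_+$ triples in $\ker\beta$ onto the locus $\{(\xi,\psi_+) : \xi(B_1)I = B_2\psi_+\}$, with $\psi_-$ recovered from the displayed equation. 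Composing with Step 2 yields the asserted inclusion $\textnormal{Ext}^1_{\mathscr{D}_\hbar}(\widehat{\mathscr{O}}_{\mathbb{C}}, \mathscr{E})\hookrightarrow W[z]\oplus V$ with image exactly this locus; all maps are $\mathscr{O}_{\widetilde{M}(r)}$-linear, so the identification holds in the universal family. I expect the only genuine content to be the exactness in Step 1, everything else being formal bookkeeping built upon it; the points needing care are (a) checking the obstruction $\mathrm{ev}_{B_1}$ reproduces the bilinear condition in precisely the stated ordering $\xi(B_1)I = B_2\psi_+$, and (b) confirming that the $\psi_+$ gauge-fixing and the subsequent solution for $\psi_-$ are simultaneously unique, so no residual $\operatorname{im}\alpha$ freedom survives.
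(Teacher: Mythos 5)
Your proof is correct and follows essentially the same route as the paper: gauge-fixing $\psi_+$ to a constant via the expansion $z^n = (z-B_1+B_1)^n$ (whose injectivity/obstruction content you cleanly package as the exact sequence $0 \to V[z] \xrightarrow{B_1-z} V[z] \xrightarrow{\mathrm{ev}_{B_1}} V \to 0$), then solving uniquely for $\psi_-$ with the solvability obstruction giving exactly $\xi(B_1)I = B_2\psi_+$. The paper performs these same expansions inline rather than isolating the lemma, but the content is identical.
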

Here, by $\xi(B_1)I$ we mean the following. Expand $\xi(z) = \sum_n \xi_n z^n$ uniquely for some $\xi_n \in W$; then $\xi(B_1) I: = \sum_n B_1^n I\xi_n \in V$. 

Note that the above inclusion also holds on the level of vector bundles over the instanton moduli space, i.e. it gives us a short exact sequence 
\begin{equation}
\begin{tikzcd}
    0 \arrow[r] & \text{Ext}^1_{\mathscr{D}_\hbar}(\widehat{\mathscr{O}}_{\mathbb{C}}, \mathscr{E}) \arrow[r] & \mathscr{W}[z] \oplus \mathscr{V} \arrow[r] & \mathscr{V} \arrow[r] & 0 
\end{tikzcd}
\end{equation}
where the second nonzero map is $(\xi(z), \psi_+) \mapsto \xi(B_1)I - B_2 \psi_+$, with $B_1, B_2, I$ now interpreted as the corresponding tautological sections. 

\begin{proof}
    The proof amounts to an analysis of $H^1(C^\bullet_P)$ in the case $P(z, w) = w$. In this case, a cohomology class is the same as a triple $(\psi_+(z), \psi_-(z), \xi(z)) \in V[z] \oplus V[z] \oplus W[z]$ satisfying the equation 
    \begin{equation} \label{kerbetaeq}
        -(B_2 - \hbar \partial_z ) \psi_+(z) + (B_1 - z) \psi_-(z) + I \xi(z) = 0 
    \end{equation}
    and considered modulo 
    \begin{equation}
    \begin{split}
        \psi_+(z) & \sim \psi_+(z) + (B_1 - z) a(z) \\ 
        \psi_-(z) & \sim \psi_-(z) + (B_2 - \hbar \partial_z) a(z) \\ 
        \xi(z) & \sim \xi(z)  + Ja(z) 
    \end{split}
    \end{equation}
    with $a(z) \in V[z]$. First observe that any equivalence class has a unique representative such that $\psi_+(z) = \text{constant}$: simply expand, for $\psi_{+, n} \in V$
    \begin{equation}
        \psi_+(z) = \sum_n \psi_{+, n} z^n = \sum_n (z - B_1 + B_1)^n\psi_{+, n}  = \sum_n B_1^n \psi_{+, n} + (z - B_1)(\dots)
    \end{equation}
    which determines $a(z)$ uniquely since $z - B_1: V[z] \to V[z]$ is injective. Passing to this representative, let $\psi_+ \in V$ denote the corresponding constant. Expanding in a similar fashion $I\xi(z) = \xi(B_1) I + (z - B_1)(\dots)$, we see that \eqref{kerbetaeq} implies that $\xi(B_1) I = B_2 \psi_+$ and that $\psi_-(z)$ is uniquely determined by $\xi(z)$: 
    \begin{equation}
        \psi_-(z) = \frac{1}{z - B_1}(I \xi(z) - \xi(B_1) I) \in V[z].
    \end{equation}
\end{proof}

\subsubsection{$\tau$-function via Slater determinants}
This section offers a complementary proof of Proposition \ref{taufnformula} compared to the one in the main text, using the combinatorics of the semi-infinite wedge representation. Consequently we borrow notations from the main text. 

It is a general result (see e.g. \cite{dijkgraaf92} for a review and explanation) that if a subspace $\mathbb{W}$ belongs to the big cell of the index zero component of $\text{Gr}(\mathscr{H})$, then if we make Miwa's substitution 
\begin{equation}
    t_n = - \frac{1}{n} \sum_{i = 1}^N x_i^{-n}
\end{equation}
that 
\begin{equation} \label{tauslater}
    \frac{\tau_{\mathbb{W}}(t)}{\tau_{\mathbb{W}}(0)} = \frac{\det_{1 \leq i,j \leq N}(w_{j - 1}(x_i))}{\det_{1 \leq i, j \leq N} (x_i^{j - 1})}
\end{equation}
where $w_j(z)$, $j = 0, 1, 2, \dots$ is any basis of $\mathbb{W}$ with $w_j(z) = z^j(1 + O(z^{-1}))$, which exists since we assume $\mathbb{W}$ is in the big cell. We may apply this to the vector bundle $\mathbb{W}$ of the main text, if we restrict ourselves to working over the open subset $\iota: U' \hookrightarrow \widetilde{M}(1)$ where $\det B_2 \neq 0$; in particular we have the following basis of sections of $\iota^* \mathbb{W}$:
\begin{equation}
    w_j(z) = z^j + J\frac{1}{z - B_1} B_2^{-1} B_1^j I. 
\end{equation}
Note that the right hand side of \eqref{tauslater} vanishes if there exists a nonzero linear combination $f(z) = \sum_{j = 0}^{N - 1} c_j w_j(z)$ such that $f(x_i) = 0$ for $i = 1, \dots, N$. We may assume that the $x_i$ are distinct. Such a nonzero linear combination is the same as a section of $\iota^* \mathbb{W}$ such that $f(z) = O(z^{N - 1})$ as $z \to \infty$. This $f(z)$ is of the form 
\begin{equation}
    f(z) = \xi(z) + J \frac{1}{z - B_1} \psi_+
\end{equation}
with $\xi(z) \in \mathscr{O}_{U'} \otimes \mathbb{C}[z]$, $\deg \xi(z) \leq N - 1$ and $\psi_+$ a section of $\iota^* \mathscr{V}$ satisfying $B_2 \psi_+ = \xi(B_1) I$. The degree of $\xi(z)$ and the conditions $f(x_i) = 0$ uniquely determine $\xi(z)$ to be 
\begin{equation}
    \xi(z) = - \sum_{i = 1}^N J \frac{1}{x_i - B_1} \psi_+ \prod_{j (\neq i)} \frac{z - x_j}{x_i - x_j}
\end{equation}
by polynomial interpolation. $\xi(B_1) I = B_2 \psi_+$ becomes the condition 
\begin{equation}
    \Big( B_2 + \sum_{i = 1}^N \prod_{j (\neq i)} \frac{B_1 - x_j}{x_i - x_j} IJ \frac{1}{x_i - B_1} \Big) \psi_+ = 0  
\end{equation}
whence a nonzero $f(z)$ exists iff the matrix above has a nontrivial kernel, therefore
\begin{equation}
    \text{RHS of \eqref{tauslater}} \propto \det\Big(B_2 + \sum_{i = 1}^N \prod_{j (\neq i)} \frac{B_1 - x_j}{x_i - x_j} IJ \frac{1}{x_i -B_1} \Big). 
\end{equation}
The operator above may be simplified by writing $IJ = \hbar - \comm{B_1}{B_2} = \hbar - \comm{B_1 - x_i}{B_2}$, then we have that the expression in parenthesis coincides with 
\begin{equation}
    B_2 - \sum_{i = 1}^N \prod_{j (\neq i)} \frac{B_1 - x_j}{x_i - x_j} (B_1 - x_i) B_2 \frac{1}{x_i - B_1} - \sum_{i = 1}^N \prod_{j (\neq i)} \frac{B_1 - x_j}{x_i - x_j} B_2 + \hbar \sum_{i = 1}^N \prod_{j (\neq i)} \frac{B_1 - x_j}{x_i - x_j} \frac{1}{x_i - B_1}. 
\end{equation}
The third term is easy to simplify, because 
\begin{equation}
    -\sum_{i = 1}^N \prod_{j (\neq i)} \frac{B_1 - x_j}{x_i - x_j} = - \sum_{i = 1}^N \prod_{j (\neq i)} \frac{z - x_j}{x_i - x_j} \eval_{z = B_1} = - 1
\end{equation}
by observing that this is a polynomial of degree at most $N - 1$ taking the value $-1$ at $N$ points. We note that the second term is 
\begin{equation}
    - \prod_{k = 1}^N (B_1 - x_k) B_2 \sum_{i = 1}^N \prod_{j (\neq i)} \frac{1}{x_i - x_j} \frac{1}{x_i - B_1}
\end{equation}
and 
\begin{equation}
    \sum_{i = 1}^N \prod_{j (\neq i)} \frac{1}{x_i - x_j} \frac{1}{x_i - B_1} = \oint_\gamma \frac{du}{2\pi i} \frac{1}{\prod_j (u  - x_j)} \frac{1}{u - z} \eval_{z = B_1} = - \frac{1}{\prod_j(B_1 - x_j)}
\end{equation}
where $\gamma$ is a loop encircling the $x_i$ which in the last line we deform to a loop surrounding $z$, then substitute $z = B_1$. Then 
\begin{equation}
\begin{split}
    \det\Big( B_2 + \sum_{i = 1}^N \prod_{j (\neq i)}\frac{B_1 - x_j}{x_i - x_j} IJ \frac{1}{x_i - B_1} \Big) & = \det\Big( \prod_j (B_1 - x_j) B_2 \frac{1}{\prod_j (B_1- x_j)} + \hbar \sum_{i = 1}^N \prod_{j (\neq i)} \frac{B_1 - x_j}{x_i - x_j} \frac{1}{x_i - B_1} \Big) \\
    & = \det\Big( B_2 + \hbar \sum_{i = 1}^N \prod_{j (\neq i)} \frac{B_1 - x_j}{x_i - x_j} \frac{1}{x_i - B_1} \Big)
\end{split}
\end{equation}
since the term proportional to $\hbar$ commutes with $B_1$. Finally observe that 
\begin{equation}
\begin{split}
    \sum_{i = 1}^N \prod_{j (\neq i)} \frac{B_1 - x_j}{x_i - x_j} \frac{1}{x_i - B_1} & = -\oint_\gamma \frac{du}{2\pi i } \prod_j \frac{z - x_j}{u - x_j} \frac{1}{(u - z)^2} \eval_{z = B_1} \\
    & = \frac{d}{du} \Big( \prod_{j} \frac{z - x_j}{u - x_j} \Big) \eval_{u = z} \eval_{z = B_1} \\
    & = \sum_{i = 1}^N \frac{1}{x_i - B_1}.
\end{split}
\end{equation}
In all, we have 
\begin{equation}
    \frac{\tau_{\mathbb{W}}(t)}{\tau_{\mathbb{W}}(0)} = \frac{\det_{1 \leq i,j \leq N}(w_{j - 1}(x_i))}{\det_{1 \leq i, j \leq N} (x_i^{j - 1})} = C(x_1, \dots, x_N) \det\Big( 1 + \hbar B_2^{-1} \sum_{i = 1}^N \frac{1}{x_i - B_1} \Big) 
\end{equation}
where $C(x_1, \dots, x_N)$ is a rational function of each $x_i$ (since each $w_{j - 1}(x_i)$ is rational), with no pole in $x_i$. Moreover, $C(x_1, \dots, x_N)$ approaches $C(x_1, \dots, \widehat{x}_i, \dots, x_N)$ at $x_i \to \infty$ (hat denotes omission of the $i$-th variable) due to the normalization of $w_j(z)$. We conclude by induction on $N$ that $C(x_1, \dots, x_N) = 1$ identically, and invoking the normalization $\tau_{\mathbb{W}}(0) = \det(B_2)$ from Proposition \ref{vaccomponent}, expanding each $(x_i - B_1)^{-1}$ in a geometric series and substituting $t_n = - \frac{1}{n} \sum_i x_i^{-n}$ yields Proposition \ref{taufnformula}. 

\subsubsection{Explicit description of Ext when $P(z, w) = zw - u$} \label{subsect:explicitextforzw}
In this subsection we will prove an analog of Proposition \ref{explicitext} for the second case of interest of the main body, $P(z, w) = zw - u$. It is first worth noting that 
\begin{equation}
    \widehat{\mathscr{O}}^\ell_{C_u} = \mathscr{D}_\hbar/\mathscr{D}_\hbar(zw - u)
\end{equation}
has a vector space basis given by $\{ 1, z^n, w^n \}_{n \geq 1}$. Any element $f(z, w) \in \widehat{\mathscr{O}}^\ell_{C_u}$ can be expanded uniquely as 
\begin{equation}
    f(z, w) = f^+(z) + wf^-(w)
\end{equation}
with $f^+(z) \in \mathbb{C}[z] \subset \widehat{\mathscr{O}}^\ell_{C_u}$, $f^-(w) \in \mathbb{C}[w] \subset \widehat{\mathscr{O}}^\ell_{C_u}$. In terms of the scattering problems we are emphasizing in this paper, the $f^+$ and $f^-$ should be thought of as incoming/outgoing waves. 

We have the following elementary 
\begin{lemma} \label{lemma:vanishinzw}
Let $P(w) \in V[w] \subset V \otimes \widehat{\mathscr{O}}^\ell_{C_u}$, $Q(z) \in V[z] \subset V \otimes \widehat{\mathscr{O}}^\ell_{C_u}$, $R \in V \subset V \otimes \widehat{\mathscr{O}}^\ell_{C_u}$. Then 
\begin{equation}
    (B_2 - w)P(w) + (B_1 - z)Q(z) + R = 0 
\end{equation}
implies $P(w) = Q(z) = R = 0$.
\end{lemma}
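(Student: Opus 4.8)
The plan is to work entirely in the vector-space basis $\{1\}\cup\{z^n\}_{n\geq1}\cup\{w^n\}_{n\geq1}$ of $\widehat{\mathscr{O}}^\ell_{C_u}$ recorded just above the statement, and to exploit a decoupling: the operator $B_2-w$ is only ever applied to the pure-$w$ element $P(w)$, while $B_1-z$ is only applied to the pure-$z$ element $Q(z)$. First I would record the only two multiplication rules actually needed. Since a pure power of $w$ (respectively $z$) never places a $z$ to the left of a $w$, it never triggers the relation $zw=u$, so left multiplication keeps normal form: $w\cdot w^k=w^{k+1}$ and $z\cdot z^k=z^{k+1}$. Consequently $(B_2-w)P(w)$ lands in $V\otimes\big(\mathbb{C}\cdot 1\oplus\bigoplus_{n\geq1}\mathbb{C}w^n\big)$ and $(B_1-z)Q(z)$ lands in $V\otimes\big(\mathbb{C}\cdot 1\oplus\bigoplus_{n\geq1}\mathbb{C}z^n\big)$, so the two contributions can interfere only in the constant ($w^0=z^0=1$) slot.

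Next I would expand the hypothesis in this basis. Writing $P(w)=\sum_{k\geq0}P_k w^k$ and $Q(z)=\sum_{k\geq0}Q_k z^k$ with $P_k,Q_k\in V$ and only finitely many nonzero, a direct computation gives
\[
\begin{aligned}
&(B_2-w)P(w)+(B_1-z)Q(z)+R\\
&\quad=(B_2P_0+B_1Q_0+R)\cdot 1+\sum_{k\geq1}(B_2P_k-P_{k-1})\,w^k+\sum_{k\geq1}(B_1Q_k-Q_{k-1})\,z^k.
\end{aligned}
\]
Because the monomials $1$, $w^k$ $(k\geq1)$, $z^k$ $(k\geq1)$ are linearly independent, vanishing of the left-hand side forces every coefficient to vanish: $B_2P_k=P_{k-1}$ and $B_1Q_k=Q_{k-1}$ for all $k\geq1$, together with $B_2P_0+B_1Q_0+R=0$.

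Finally I would run the descending induction that closes the argument. Choosing $N$ with $P_k=Q_k=0$ for $k\geq N$, the relation $P_{N-1}=B_2P_N=0$, then $P_{N-2}=B_2P_{N-1}=0$, and so on down to $P_0$, yields $P(w)=0$; the identical step with $B_1$ gives $Q(z)=0$. The constant-term equation then collapses to $R=-B_2P_0-B_1Q_0=0$.

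There is no serious obstacle here: the statement is elementary once the basis is in hand. The only point that requires care — and the reason the conclusion is so clean, in particular independent of the moment-map equation \eqref{momentmapeq}, of stability, and of the value of $u$ — is the sector decoupling of the first paragraph, namely that pure powers of $z$ and of $w$ are already in normal form and hence the $z$- and $w$-sectors meet only in the constant slot. I would verify that decoupling explicitly and check that the finite-support induction terminates; everything else is routine. This lemma plays the role, for $P(z,w)=zw-u$, of the injectivity of $z-B_1$ on $V[z]$ used in the proof of Proposition \ref{explicitext}.
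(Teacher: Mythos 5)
Your proof is correct and is essentially the paper's own argument: the paper disposes of this lemma with the single line ``by induction on the degree of $P$ and $Q$,'' and your basis expansion with the recursions $B_2P_k=P_{k-1}$, $B_1Q_k=Q_{k-1}$ followed by the downward induction is exactly that induction made explicit. The sector-decoupling observation (that left multiplication by $w$ on $w^k$ and by $z$ on $z^k$ stays in normal form, so the $z$- and $w$-sectors meet only in the constant slot) is the right point to isolate, and your use of the basis $\{1,z^n,w^n\}_{n\geq1}$ recorded before the lemma makes the coefficient extraction legitimate.
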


\begin{proof}
    By induction on the degree of $P$ and $Q$. 
\end{proof}

\begin{prop} \label{prop:explicitextforzw}
    We have an inclusion $\textnormal{Ext}^1_{\mathscr{D}_\hbar}(\widehat{\mathscr{O}}_{C_u}, \mathscr{E}) \hookrightarrow W \otimes \widehat{\mathscr{O}}^\ell_{C_u} \oplus V$ as the locus of $(\xi(z, w), \psi_0) \in W \otimes \widehat{\mathscr{O}}^\ell_{C_u} \oplus V$ such that $(u - B_2 B_1) \psi_0 = \xi(B_1, B_2) I$. 
\end{prop}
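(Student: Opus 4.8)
The plan is to follow the template of Proposition~\ref{explicitext} (the case $P(z,w)=w$), replacing the one-sided module $\mathbb{C}[z]$ by the two-sided module $\widehat{\mathscr{O}}^\ell_{C_u}$. By Proposition~\ref{extgroups} we have $\textnormal{Ext}^1_{\mathscr{D}_\hbar}(\widehat{\mathscr{O}}_{C_u},\mathscr{E})\simeq H^1(C^\bullet_P)=\ker\beta/\text{im}\,\alpha$ for $P=zw-u$, so a class is represented by a triple $(\psi_+,\psi_-,\xi)\in(V\oplus V\oplus W)\otimes\widehat{\mathscr{O}}^\ell_{C_u}$ solving
\begin{equation}
(w-B_2)\psi_+ + (B_1-z)\psi_- + I\xi = 0,
\end{equation}
taken modulo the simultaneous shifts $\psi_+\sim\psi_++(B_1-z)a$, $\psi_-\sim\psi_-+(B_2-w)a$, $\xi\sim\xi+Ja$ for $a\in V\otimes\widehat{\mathscr{O}}^\ell_{C_u}$. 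First I would record the left $\mathscr{D}_\hbar$-module structure in the basis $\{1,z^n,w^n\}_{n\ge 1}$: multiplication by $z$ and by $w$ acts as a shift on each of the two half-towers, and the hypothesis $u\notin\hbar\mathbb{Z}$ enters precisely when one inverts the scalars $u-k\hbar$ that appear because the two towers are glued across the constant term, via $z\cdot w^n=(u-(n-1)\hbar)w^{n-1}$ and $w\cdot z^n=(u+n\hbar)z^{n-1}$.

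Next I would gauge-fix to a normal form with $\psi_+\in V$ constant, exactly as in Proposition~\ref{explicitext}: since $(z-B_1)$ is injective on $V\otimes\widehat{\mathscr{O}}^\ell_{C_u}$, the non-constant part of $\psi_+$ can be absorbed into $(B_1-z)a$, reducing $\psi_+$ to a constant modulo the image of $(z-B_1)$. With $\psi_+$ constant the equation then determines $\psi_-$ recursively: the $z$-tower equations fix the $z$-part and the constant term of $\psi_-$, while the $w$-tower equations form a recursion of the shape $\eta_{m+1}^-\propto B_1\eta_m^-+\cdots$ on the $w$-coefficients $\eta_m^-$ of $\psi_-$. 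The one genuinely nontrivial condition is that this $w$-recursion \emph{terminate}, i.e.\ that $\psi_-$ be an honest (finite) element of the module; this single scalar-operator constraint is what becomes $(u-B_2B_1)\psi_0=\xi(B_1,B_2)I$. I would pin down $\psi_0$ as the coefficient of $w$ in $\psi_-$, which is the natural constant satisfying the clean relation; this is consistent with, and can be cross-checked against, the explicit triple written in the proof of Theorem~\ref{thm:fermionpropagator}, where on the open set where $u-B_2B_1$ is invertible one reads off exactly $\psi_0=(u-B_2B_1)^{-1}\xi(B_1,B_2)I$.

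Finally I would show that the assignment $[(\psi_+,\psi_-,\xi)]\mapsto(\xi,\psi_0)$ is a well-defined inclusion of vector bundles onto the stated locus. For injectivity, a class mapping to $(0,0)$ has a representative which, after running the recursion, is forced into $\text{im}\,\alpha$: here Lemma~\ref{lemma:vanishinzw} supplies the uniqueness of the decomposition of an element of $V\otimes\widehat{\mathscr{O}}^\ell_{C_u}$ into $z$-tower, $w$-tower and constant pieces, and a costability argument in the style of Proposition~\ref{maptoGr} (forming a $B_1,B_2$-invariant subspace of $\ker J$ and invoking Lemma~\ref{stabandcostab}) eliminates the residual ambiguity coming from the generalized kernel of $B_1$. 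For surjectivity onto the locus, given $(\xi,\psi_0)$ with $(u-B_2B_1)\psi_0=\xi(B_1,B_2)I$ the constraint is exactly what makes the $w$-recursion terminate, so a finite $\psi_-$, hence a genuine class, exists. Since every step is natural in the ADHM data $(B_1,B_2,I,J)$, the construction globalizes to the universal family over $\widetilde{M}(r)$.

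The hard part will be the bookkeeping in the constraint derivation. The $w$-recursion superficially involves only $B_1$ together with the scalars $u-k\hbar$, yet the final constraint must read $(u-B_2B_1)\psi_0=\xi(B_1,B_2)I$, with $B_2$-powers acting on the $w$-part of $\xi$; producing this clean form requires repeatedly feeding in the moment-map relation $\comm{B_1}{B_2}+IJ=\hbar$ and using $u\notin\hbar\mathbb{Z}$ to invert the $u-k\hbar$ factors, so that the factorially growing coefficients of the naive recursion telescope into the single operator $u-B_2B_1$. A secondary obstacle is the honest treatment of well-definedness: because $\widehat{\mathscr{O}}^\ell_{C_u}$ is infinite in both the $z$- and $w$-directions, unlike the one-sided $\mathbb{C}[z]$ in Proposition~\ref{explicitext}, the constant $\psi_0$ is only unambiguous after the $B_1$-generalized-kernel freedom has been controlled, which is exactly where costability is indispensable.
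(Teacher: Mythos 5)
Your skeleton --- reduce to $H^1(C^\bullet_P)$ via Proposition \ref{extgroups}, gauge-fix the triple $(\psi_+,\psi_-,\xi)$, split the equation with Lemma \ref{lemma:vanishinzw}, and read off the constraint --- is the same as the paper's, but your gauge choice is the wrong one, and the argument genuinely fails at that step. You normalize $\psi_+$ to a \emph{constant}, ``exactly as in Proposition \ref{explicitext}.'' Injectivity of $z-B_1$ is beside the point: what you need is that $\psi_+$ be congruent to a constant modulo $\mathrm{im}(B_1-z)$, and for the $w$-tower of $\psi_+$ this means solving the \emph{upward} recursion $b_{m+1}=(u-(m+1)\hbar)^{-1}(B_1b_m+c_m)$, which yields a polynomial $a^-(w)$ only when its tail lands in the generalized kernel $\ker B_1^\infty$ (a Fitting-lemma argument, and a non-canonical choice --- not injectivity). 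Worse, even granting this, ``$\psi_+$ constant'' is not a complete gauge fixing: for any $b_0\in\ker B_1^\infty$, the element $a=wa^-(w)$ with $b_m=B_1^mb_0/\prod_{j=1}^m(u-j\hbar)$ satisfies $(B_1-z)a=-ub_0\in V$, so it preserves your normal form while shifting $\xi$ by $Jwa^-(w)$ and the $w$-coefficient of $\psi_-$ by $B_2b_0+\cdots$. Concretely, at $n=r=1$, $B_1=0$, $B_2=p$, $I=1$, $J=\hbar$, the transformation $a=wb_0$ changes $\xi$ by $\hbar b_0w\neq 0$: your assignment $[(\psi_+,\psi_-,\xi)]\mapsto(\xi,\psi_0)$ is therefore not well defined anywhere on the locus $\det B_1=0$. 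Your proposed repair via costability cannot close this gap: Lemma \ref{stabandcostab} requires a $B_1$- \emph{and} $B_2$-invariant subspace of $\ker J$, and the residual ambiguity here produces neither (in the example $Jb_0=\hbar b_0\neq0$, so the shift of $\xi$ is honestly nonzero, not an apparent one).

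The correct normal form, which is the paper's, normalizes the \emph{other} field in the $w$-direction: use $a^+(z)$ only to put $\psi_+\in V[w]$ (a polynomial in $w$, not a constant), and use $a^-(w)$ to force the $w$-part of $\psi_-$ to be exactly linear, $\psi_-=\psi_0w+\psi_-(z)$. This gauge is complete: any $a^+\neq0$ adds nonconstant $z$-powers to $\psi_+$, while any $a^-\neq0$ adds a term of $w$-degree $\geq2$ to $\psi_-$, because the normalization of $\psi_-$ runs on the \emph{downward} recursion $b_{m-2}=d_m+B_2b_{m-1}$, which always has a unique polynomial solution with no termination condition. With the quotient disposed of, Lemma \ref{lemma:vanishinzw} splits the single equation into its constant part, $(B_2B_1-u)\psi_0+\xi^+(B_1)I+B_2\xi^-(B_2)I=0$, which is exactly the stated constraint with no recursion or termination analysis at all, plus two equations expressing $\psi_+(w)$ and $\psi_-(z)$ uniquely in terms of $(\xi,\psi_0)$. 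Your ``termination of the $w$-recursion'' derivation is valid only where $\ker B_1^\infty=0$, i.e.\ on the open set $\det B_1\neq0$, and so cannot give the vector-bundle statement over all of $\widetilde{M}(r)$.
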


We define $\xi(B_1, B_2)I$ by writing $\xi(z, w) = \xi^+(z) + w\xi^-(w)$ uniquely as above and defining $\xi(B_1, B_2)I := \xi^+(B_1) I + B_2 \xi^-(B_2)I$, understanding single variable polynomials valued in $W$ acting on $I$ as in Proposition \ref{explicitext}. Note that this proposition also works in families over the instanton moduli space, i.e. we have an exact sequence of vector bundles 
\begin{equation}
\begin{tikzcd}
    0 \arrow[r] & \text{Ext}^1_{\mathscr{D}_\hbar}(\widehat{\mathscr{O}}^\ell_{C_u}, \mathscr{E}) \arrow[r] & \mathscr{W} \otimes \widehat{\mathscr{O}}^\ell_{C_u} \oplus \mathscr{V} \arrow[r] & \mathscr{V} \arrow[r] & 0.
\end{tikzcd}
\end{equation}

\begin{proof}
    As usual, by Proposition \ref{extgroups}, a class in $\text{Ext}^1_{\mathscr{D}_\hbar}(\widehat{\mathscr{O}}^\ell_{C_u}, \mathscr{E})$ is the same as a triple $$(\psi_+(z, w), \psi_-(z, w), \xi(z, w)) \in (V \oplus V \oplus W) \otimes \widehat{\mathscr{O}}^\ell_{C_u}$$ satisfying the equation 
    \begin{equation}
        -(B_2 - w) \psi_+(z, w) + (B_1 - z) \psi_-(z, w) + I \xi(z, w) = 0
    \end{equation}
    and considered modulo 
    \begin{equation}
    \begin{split}
        \psi_+(z, w) & \sim \psi_+(z, w) + (B_1 - z) a(z, w) \\
        \psi_-(z, w) & \sim \psi_-(z, w) + (B_2 - w) a(z, w) \\
        \xi(z, w) & \sim \xi(z, w) + J a(z, w) 
    \end{split}
    \end{equation}
    with $a(z, w) \in V \otimes \widehat{\mathscr{O}}^\ell_{C_u}$. We will imitate the proof of Proposition \ref{explicitext}. The first step is to uniquely decompose 
    \begin{equation}
        a(z, w) = a^+(z) + w a^-(w)
    \end{equation}
    with $a^+(z) \in V[z] \subset V \otimes \widehat{\mathscr{O}}^\ell_{C_u}$, $a^-(w) \in V[w] \subset V \otimes \widehat{\mathscr{O}}^\ell_{C_u}$ as above. Arguing as in the proof of Proposition \ref{explicitext}, for arbitrary $\psi_+(z, w)$ there is a unique $a^+(z)$ such that $\psi_+(z, w) + (B_1 - z) a^+(z) \in V[w] \subset V \otimes \widehat{\mathscr{O}}^\ell_{C_u}$. Moreover, shifting by $w a^-(w)$ preserves this class of choices, because 
    \begin{equation}
        (B_1 - z) w a^-(w) = B_1 wa^-(w) + w\hbar \partial_w a^-(w) - u a^-(w) \in V[w]. 
    \end{equation}
    Then, again arguing as in the proof of Proposition \ref{explicitext}, we see there is a unique $a^-(w)$ such that 
    \begin{equation}
        \psi_-(z, w) + (B_2 - w) w a^-(w) = \psi_0 w + \psi_-(z)
    \end{equation}
    with $\psi_0 \in V \subset V \otimes \widehat{\mathscr{O}}^\ell_{C_u}$ and $\psi_-(z) \in V[z] \subset V \otimes \widehat{\mathscr{O}}^\ell_{C_u}$. Likewise the shift by $a^+(z)$ preserves this class of choices. 
    
    We see that if we assume $\psi_+(z, w) = \psi_+(w) \in V[w]$, $\psi_-(z, w) = \psi_0 w + \psi_-(z)$ with $\psi_0 \in V$, $\psi_-(z) \in V[z]$ we may dispense with the quotient, and we need only solve the equation 
    \begin{equation} \label{eq:zwext}
        -(B_2 - w)\psi_+(w) + (B_1 - z)(\psi_0 w + \psi_-(z)) + I \xi(z, w) = 0. 
    \end{equation}
    Writing 
    \begin{equation}
    \begin{split}
        (B_1 - z) \psi_0 w & = w B_1 \psi_0 - \psi_0 u = - (B_2 - w) B_1 \psi_0 + (B_2 B_1 - u) \psi_0 \\
        I \xi(z, w) & = I \xi^+(z) + w I \xi^-(w) = \xi^+(B_1) I + B_2 \xi^-(B_2) I + (B_1 - z)(\dots) + (B_2 - w)(\dots)
    \end{split}
    \end{equation}
    where the expressions in parentheses are polynomials in $z$ and $w$, Lemma \ref{lemma:vanishinzw} applied to \eqref{eq:zwext} implies 
    \begin{equation} \label{eq:zwextsoln}
    \begin{split}
        (B_2 B_1 - u) \psi_0 + \xi^+(B_1) I + B_2 \xi^-(B_2)I & = 0 \\
        \psi_+(w) + B_1 \psi_0 + \frac{1}{w - B_2} (w I \xi^-(w) - B_2 \xi^-(B_2)I) & = 0 \\
        \psi_-(z) - \frac{1}{z - B_1}(I \xi^+(z) - \xi^+(B_1) I) & = 0.
    \end{split}
    \end{equation}
    The last two equations just say that $\psi_+(w)$ and $\psi_-(z)$ are uniquely determined by a pair $(\xi, \psi_0)$ solving the first equation, which is what we wanted to prove. 
    \end{proof}

    \subsubsection{Functional realizations of $\widehat{\mathscr{O}}^\ell_{C_u}$} \label{subsect:functlzw}
    As a further preliminary for the proof of Proposition \ref{prop:VmaptoGr}, it is useful to describe two functional realizations of $\widehat{\mathscr{O}}^\ell_{C_u}$. Define a left $\mathscr{D}_\hbar$-module 
    \begin{equation}
        \mathscr{O}^{\text{in}} := \mathbb{C}[z, z^{-1}]
    \end{equation}
    in which $z$ acts by multiplication and $w$ acts by $\hbar \partial_z + u z^{-1}$. Likewise define a left $\mathscr{D}_{\hbar}$ module 
    \begin{equation}
        \mathscr{O}^{\text{out}} := \mathbb{C}[w, w^{-1}]
    \end{equation}
    in which $w$ acts by multiplication and $z$ acts by $- \hbar \partial_w + (u + \hbar) w^{-1}$. 

    It is readily verified that the morphisms 
    \begin{equation}
    \begin{split}
        & \phi_z  : \widehat{\mathscr{O}}^\ell_{C_u} \to \mathscr{O}^{\text{in}} \\
        & \phi_w : \widehat{\mathscr{O}}^\ell_{C_u} \to \mathscr{O}^{\text{out}}
    \end{split}
    \end{equation}
    defined by ($\Gamma$ denotes the gamma function)
    \begin{equation}
    \begin{split}
        \phi_z(z^n) & = z^n  \\
        \phi_z(w^n) & = z^{-u/\hbar} (\hbar \partial_z)^n z^{u/\hbar} = \frac{u(u - \hbar) \dots (u - (n - 1)\hbar)}{z^n} = \hbar^n\frac{\Gamma(\frac{u}{\hbar} + 1)}{\Gamma(\frac{u}{\hbar} - n + 1)} z^{-n} \\
        \phi_w(z^n) & = w^{u/\hbar + 1} (-\hbar \partial_w)^n w^{-u/\hbar - 1} = \frac{(u + \hbar) \dots (u + n \hbar)}{w^n} = \hbar^n \frac{\Gamma(\frac{u}{\hbar} + n + 1)}{\Gamma(\frac{u}{\hbar} + 1)} w^{-n} \\
        \phi_w(w^n) & = w^n
    \end{split}
    \end{equation}
    define isomorphisms of $\mathscr{D}_\hbar$-modules as long as $u \notin \hbar \mathbb{Z}$. Note $\phi_w \circ \phi_z^{-1}$ is essentially the Fourier transform.

    \subsubsection{Proof of Proposition \ref{prop:VmaptoGr}} \label{proof:VmaptoGr}
    Now we will give the proof of Proposition \ref{prop:VmaptoGr}, assuming familiarity with the contents of \ref{subsect:explicitextforzw}, \ref{subsect:functlzw}. 

    Using the description of $\mathbb{V}_u$ as in Proposition \ref{extgroups} together with the functional realizations of section \ref{subsect:functlzw}, there is essentially only one morphism 
    \begin{equation}
        \mathbb{V}_u \to \mathscr{O}_{\widetilde{M}(r)} \otimes (\mathscr{H}_{\text{in}} \oplus \mathscr{H}_{\text{out}}) 
    \end{equation}
    that can be written down: it sends $(\psi_+, \psi_-, \xi) \mapsto (f_{\text{in}}(z), f_{\text{out}}(w))$ where 
    \begin{equation}
    \begin{split}
        f_{\text{in}}(z) & = J \frac{1}{z - B_1} \phi_z(\psi_+) + \phi_z(\xi) \in \mathscr{H}_{\text{in}}\\
        f_{\text{out}}(w) & = J \frac{1}{w - B_2} \phi_w( \psi_-) + \phi_w(\xi) \in \mathscr{H}_{\text{out}}
    \end{split}
    \end{equation}
    where we understand $f_{\text{in}}, f_{\text{out}}$ as elements of $\mathscr{H}_{\text{in}}, \mathscr{H}_{\text{out}}$ via expansion at $z \to \infty$, $w \to \infty$. 

    The first assertion that needs to be checked is that, if $u \notin \hbar \mathbb{Z}$, this map is an inclusion of vector bundles. For this, it is useful to pass to the description of $\mathbb{V}_u$ as in Proposition \ref{prop:explicitextforzw}, so that we may assume $\psi_+ = \psi_+(w)$, $\psi_- =  \psi_-(z) + \psi_0 w$, solving equations \eqref{eq:zwextsoln} and not subject to any quotient. 

    To check that we have an inclusion, suppose $(f_{\text{in}}, f_{\text{out}}) = 0$. Then
    \begin{equation} \label{eq:fsvanishzw}
    \begin{split}
        \phi_z( \xi) & = - J \frac{1}{z - B_1} \phi_z(\psi_+(w)) \\
        \phi_w(\xi) & = - J \frac{1}{w - B_2} \phi_w(\psi_0 w + \psi_-(z)). 
    \end{split}
    \end{equation}
    Note that since $\phi_z(\psi_+(w)) = O(1)$ as $z \to \infty$, the RHS of the first line is $O(z^{-1})$ for $z \to \infty$. By uniqueness of the decomposition $\xi = \xi^+(z) + w \xi^-(w)$ and the fact that $\phi_z$ is an isomorphism, this implies that $\xi^+(z) = 0$. Likewise, the RHS of the second line is $O(1)$ for $w \to \infty$, implying that $\xi^-(w) = 0$. Then \eqref{eq:zwextsoln} gives that $\psi_-(z) = 0$, $\psi_+(w) = - B_1 \psi_0$, $(u - B_2 B_1) \psi_0 = 0$, and we have 
    \begin{equation}
    \begin{split}
        J \frac{1}{z - B_1} B_1 \psi_0 & = 0 \\
        J \frac{w}{w - B_2} \psi_0 & = 0
    \end{split}
    \end{equation}
    from \eqref{eq:fsvanishzw} and the vanishing of $\xi$. Expanding in $z^{-1}$ and $w^{-1}$, it follows that $J \psi_0 = JB_1^n \psi_0 = J B_2^n \psi_0 = 0$, so that $S = \text{Span}(\psi_0, B_1^n \psi_0, B_2^n \psi_0)_{n \geq 1} \subseteq \ker J$. $S$ is both $B_1$ invariant and $B_2$ invariant, because for $n \geq 1$:
    \begin{equation}
    \begin{split}
        B_2 B_1^n \psi_0 & = \comm{B_2}{B_1^{n - 1}} B_1 \psi_0 + B_1^{n - 1} B_2 B_1 \psi_0 \\
        & = \sum_{k = 0}^{n - 2} B_1^k \comm{B_2}{B_1} B_1^{n - 2 - k} B_1 \psi_0  + u B_1^{n - 1}\psi_0 \\
        & = - \hbar(n - 1) B_1^{n - 1} \psi_0 + \sum_{k = 0}^{n - 2} B_1^k IJ B_1^{n - 1 - k} \psi_0  + u B_1^{n - 1} \psi_0 \\
        & = (u - (n - 1) \hbar) B_1^{n - 1} \psi_0
    \end{split}
    \end{equation}
    and likewise $B_1 B_2^n \psi_0 = (u + \hbar n) B_2^{n - 1} \psi_0$ by a similar computation. Lemma \ref{stabandcostab} then implies that $S = 0$, so $\psi_0 = 0$ establishing the injectivity. 

    Next we investigate the kernel of the induced projection $\mathbb{V}_u \to \mathscr{O}_{\widetilde{M}(r)} \otimes \mathscr{H}_+$, with $\mathscr{H}_+ = W[z] \oplus w W[w]$. In the description of $\mathbb{V}_u$ from Proposition \ref{prop:explicitextforzw}, this map is identified with $(\xi, \psi_0) \mapsto (\phi_z(\xi^+(z)), \phi_w(w \xi^-(w))$, so the kernel precisely consists of the pairs $(\xi, \psi_0)$ with $\xi = 0$; by Proposition \ref{prop:explicitextforzw}, these are the same as $\psi_0$ solving $(B_2 B_1 - u) \psi_0 = 0$ establishing 
    \begin{equation}
        \ker( \mathbb{V}_u \to \mathscr{O}_{\widetilde{M}(r)} \otimes \mathscr{H}_+) \simeq \ker(u - B_2 B_1). 
    \end{equation}

    Finally we need to characterize $\text{coker}(\mathbb{V}_u \to \mathscr{O}_{\widetilde{M}(r)} \otimes \mathscr{H}_+)$. Note that the assignment 
    \begin{equation}
       \mathscr{H}_+ \ni  (p^+(z), wp^-(w)) \mapsto p^+(B_1) I + B_2p^-(B_2)I \in \text{coker}(u - B_2 B_1)
    \end{equation}
    induces (by Proposition \ref{prop:explicitextforzw}) an injective homomorphism of sheaves on $\widetilde{M}(r)$ $$\text{coker}(\mathbb{V}_u \to \mathscr{O}_{\widetilde{M}(r)}  \otimes \mathscr{H}_+) 
    \longrightarrow \text{coker}(u - B_2 B_1);$$ surjectivity is established by an argument with the stability condition and Lemma \ref{stabilitybasis} essentially identical to the one in the proof of Proposition \ref{maptoGr}. 
\end{appendices}

\newpage
\printbibliography

\end{document}